\definecolor{upmaroon}{rgb}{0.48, 0.07, 0.07}
\definecolor{royalazure}{rgb}{0.0, 0.22, 0.66}
\definecolor{pakistangreen}{rgb}{0.0, 0.4, 0.0}
\theoremstyle{plain}                                
\newtheorem{theorem}{Theorem}[section]
\newtheorem{definition}[theorem]{Definition}
\newtheorem{proposition}[theorem]{Proposition}
\newtheorem{example}[theorem]{Example}   
\newtheorem{lemma}[theorem]{Lemma}
\newtheorem{assumptions}[theorem]{Assumptions}
\newtheorem{remark}[theorem]{Remark}
\setlist[itemize]{noitemsep, topsep=0pt}
\def\iid{iid}
\def\DMLone{DML1}
\def\DMLtwo{DML2}
\def\regsDML{\nohyphens{regsDML}}
\def\regDML{\nohyphens{regDML}}
\newcommand{\eps}{\varepsilon}
\newcommand{\R}{\mathbb{R}} 
\newcommand{\bo}{\boldsymbol{0}}
\newcommand{\one}{\mathds{1}}
\newcommand\independent{\protect\mathpalette{\protect\independenT}{\perp}}
\def\independenT#1#2{\mathrel{\rlap{$#1#2$}\mkern2mu{#1#2}}}
\newcommand{\norm}[1]{\lVert #1 \rVert}
\newcommand{\normbig}[1]{\big\lVert #1 \big\rVert}
\newcommand{\normBig}[1]{\Big\lVert #1 \Big\rVert}
\newcommand{\normbigg}[1]{\bigg\lVert #1 \bigg\rVert}
\newcommand{\normP}[2]{\norm{#1}_{\PP, #2}}
\newcommand{\normPbig}[2]{\normbig{#1}_{\PP, #2}}
\newcommand{\normPBig}[2]{\normBig{#1}_{\PP, #2}}
\newcommand{\normone}[1]{\lvert #1\rvert}
\newcommand{\normonebig}[1]{\big\lvert #1\big\rvert}
\newcommand{\normoneBig}[1]{\Big\lvert #1\Big\rvert}
\newcommand{\normonebigg}[1]{\bigg\lvert #1\bigg\rvert}
\DeclarePairedDelimiter{\ceil}{\lceil}{\rceil}
\DeclareMathOperator*{\argmin}{arg\,min}
\DeclareMathOperator{\E}{\mathbb{E}}
\DeclareMathOperator{\id}{\mathrm{Id}}
\DeclareMathOperator{\Prob}{\mathbb{P}}
\DeclareMathOperator{\Var}{\textrm{Var}}
\newcommand{\Pra}{P_{R_A}}
\newcommand{\Rx}{R_X}
\newcommand{\Ry}{R_Y}
\newcommand{\Ra}{R_A}
\newcommand{\hbRak}{\widehat{\boldsymbol{R}}^{\Ik}_{\boldsymbol{A}}}
\newcommand{\hRaki}{\widehat R_{A,i}^{\Ik}}
\newcommand{\hbRxk}{\widehat{\boldsymbol{R}}^{\Ik}_{\boldsymbol{X}}}
\newcommand{\hbRxtilk}{\widehat{\boldsymbol{R}}^{\Ik}_{\widetilde{\boldsymbol{X}}}}
\newcommand{\hbRxki}{\widehat R_{X,i}^{\Ik}}
\newcommand{\hbRyk}{\widehat{\boldsymbol{R}}^{\Ik}_{\boldsymbol{Y}}}
\newcommand{\hbRytilk}{\widehat{\boldsymbol{R}}^{\Ik}_{\widetilde{\boldsymbol{Y}}}}
\newcommand{\hbRyki}{\widehat R_{Y,i}^{\Ik}}
\newcommand{\hbRyIi}{\widehat R_{Y,i}^{\I}}
\newcommand{\hbRxIi}{\widehat R_{X,i}^{\I}}
\newcommand{\I}{I}
\newcommand{\Ic}{I^c}
\newcommand{\Ik}{I_k}
\newcommand{\Ikc}{I_k^c}
\newcommand{\PiIkcIk}{\Pi_{\hbRak}}
\newcommand{\NN}{N}
\newcommand{\nn}{n}
\newcommand{\KK}{K}
\newcommand{\Lone}{L_1}
\newcommand{\Ltwo}{L_2}
\newcommand{\Lthree}{L_3}
\newcommand{\Lfour}{L_4}
\newcommand{\kk}{k}
\newcommand{\TauN}{\mathcal{T}}
\newcommand{\EpsN}{\mathcal{E}_{\NN}}
\newcommand{\PP}{P}
\newcommand{\indset}[1]{[#1]}
\newcommand{\losstest}{\varphi}
\newcommand{\loss}{\psi}
\newcommand{\lossA}{\psi^A}
\newcommand{\lossB}{\psi^B}
\newcommand{\lossAfull}{\psi^A_{\textrm{full}}}
\newcommand{\lossBfull}{\psi^B_{\textrm{full}}}
\newcommand{\hlossA}{\hat\psi^A}
\newcommand{\hlossB}{\hat\psi^B}
\newcommand{\hlossAfull}{\hat\psi^A_{\textrm{full}}}
\newcommand{\hlossBfull}{\hat\psi^B_{\textrm{full}}}
\newcommand{\lossoverline}{\overline\psi}
\newcommand{\lossoverlinep}{\overline\psi'}
\newcommand{\hlossoverlinep}{\widehat{\overline\psi}'}
\newcommand{\losstilde}{\widetilde\psi}
\newcommand{\lossone}{\psi_1}
\newcommand{\losstwo}{\psi_2}
\newcommand{\lossthree}{\psi_3}
\newcommand{\rNpnumber}{r_{\NN}}
\newcommand{\lambdaNpnumber}{\lambda_{\NN}}
\newcommand{\deltaNnumber}{\delta_{\NN}}
\newcommand{\SIkc}{\{S_i\}_{i\in\Ikc}}
\newcommand{\etazero}{\eta^0}
\newcommand{\hetaIkc}{\hat\eta^{\Ikc}}
\newcommand{\Jzero}{J_0}
\newcommand{\tilJzero}{\tilde\Jzero}
\newcommand{\Jzerop}{\Jzero'}
\newcommand{\Jzerodp}{\Jzero''}
\newcommand{\hJzerok}{\hat J_{k,0}}
\newcommand{\hJzero}{\hat J_0}
\newcommand{\RN}{R_{\NN, \kk}}
\newcommand{\FIk}{\PP_{\Ik}}
\newcommand{\EP}{\E_{\PP}}
\newcommand{\EPN}{\E_{\PPN}}
\newcommand{\PcalN}{\mathcal{P}_{\NN}}
\newcommand{\PPN}{\PP_{\NN}}
\newcommand{\betazero}{\beta_0}
\newcommand{\hbetaN}{\hat\beta}
\newcommand{\hbetaNi}{\hat\beta_{s}}
\newcommand{\hbetaNDMLone}{\hbetaN^{\textrm{\DMLone}}}
\newcommand{\hbetaNDMLtwo}{\hbetaN^{\textrm{\DMLtwo}}}
\newcommand{\hbetaNkDMLone}{\hat\beta^{\Ik}}
\newcommand{\matA}{D_{1}}
\newcommand{\matB}{D_{2}}
\newcommand{\matC}{D_{3}}
\newcommand{\matD}{D_{4}}
\newcommand{\matE}{D_{5}}
\newcommand{\hmatA}{\hat D_{1}}
\newcommand{\hmatB}{\hat D_{2}}
\newcommand{\hmatD}{\hat D_{4}}
\newcommand{\hmatAk}{\hat D_1^k}
\newcommand{\hmatBk}{\hat D_2^k}
\newcommand{\hmatCk}{\hat D_3^k}
\newcommand{\hmatDk}{\hat D_4^k}
\newcommand{\hmatEk}{\hat D_5^k}
\newcommand{\Icalk}{\mathcal{I}_{k}}
\newcommand{\IcalkA}{\mathcal{I}_{k,A}}
\newcommand{\IcalkB}{\mathcal{I}_{k,B}}
\newcommand{\Icalone}{\mathcal{I}_1}
\newcommand{\Icaltwo}{\mathcal{I}_2}
\newcommand{\Ione}{\mathcal{I}_{1}}
\newcommand{\Itwo}{\mathcal{I}_{2}}
\newcommand{\Ithree}{\mathcal{I}_{3}}
\newcommand{\Ifour}{\mathcal{I}_{4}}
\newcommand{\Ifive}{\mathcal{I}_{5}}
\newcommand{\Isix}{\mathcal{I}_{6}}
\newcommand{\Iseven}{\mathcal{I}_{7}}
\newcommand{\Ieight}{\mathcal{I}_{8}}
\newcommand{\Inine}{\mathcal{I}_{9}}
\newcommand{\Iten}{\mathcal{I}_{10}}
\newcommand{\Ieleven}{\mathcal{I}_{11}}
\newcommand{\Itwelve}{\mathcal{I}_{12}}
\newcommand{\Ithirteen}{\mathcal{I}_{13}}
\newcommand{\Ifourteen}{\mathcal{I}_{14}}
\newcommand{\Ififteen}{\mathcal{I}_{15}}
\newcommand{\Isixteen}{\mathcal{I}_{16}}
\newcommand{\CpnormRV}{C_1} 
\newcommand{\CpnormEta}{C_2} 
\newcommand{\Cbetazero}{C_3} 
\newcommand{\CnormLossboth}{C_5} 
\newcommand{\Cbg}{C_4} 
\newcommand{\cone}{c_1}
\newcommand{\ctwo}{c_2}
\newcommand{\cthree}{c_3}
\newcommand{\cfour}{c_4}
\newcommand{\DeltaN}{\Delta_{\NN}}
\newcommand{\DeltaNnumber}{\Delta_{\NN}}
\newcommand{\deltaN}{\delta_{\NN}}
\newcommand{\rhoN}{\rho_{\NN}}
\newcommand{\rhoNtilde}{\tilde\rho_{\NN}}
\newcommand{\X}{\boldsymbol{X}}
\newcommand{\A}{\boldsymbol{A}}
\newcommand{\Y}{\boldsymbol{Y}}
\newcommand{\W}{\boldsymbol{W}}
\newcommand{\Fmatone}{F_1}
\newcommand{\Fmattwo}{F_2}
\newcommand{\Gmatone}{G_1}
\newcommand{\Gmattwo}{G_2}
\newcommand{\Neps}{\NN_{\eps}}
\newcommand{\Ceps}{C_{\eps}}
\newcommand{\Ntilde}{\widetilde\NN}
\newcommand{\Nbar}{\overline\NN}
\newcommand{\hsigma}{\hat\sigma}
\newcommand{\hsigmai}{\hat\sigma_s}
\newcommand{\hsigmaMed}{\hat\sigma^{2,\mathrm{med}}}
\newcommand{\hsigmaMedReg}{\hat\sigma^{2,\mathrm{med}}_{\mathrm{reg}}}
\newcommand{\hbetaMed}{\hat\beta^{\mathrm{med}}}
\newcommand{\hbMedReg}{\hat b^{\mathrm{med}}_{\mathrm{reg}}}
\newcommand{\hgamma}{\hat{\gamma}}
\newcommand{\hgammai}{\hat{\gamma}_{s}}
\newcommand{\hgammap}{\hat{\gamma}'}
\newcommand{\hmu}{\hat{\mu}}
\newcommand{\hgammapi}{\hat{\gamma}_{s}'}
\newcommand{\bg}{b^{\gamma}}
\newcommand{\hbg}{\hat b^{\gamma}}
\newcommand{\hbgi}{\hat b^{\gamma}_{s}}
\newcommand{\hbhgp}{\hat b^{\hgammap}}
\newcommand{\hbhgpi}{\hat b^{\hgammapi}_s}
\newcommand{\hbgN}{\hat b^{\gammaN}}
\newcommand{\bhgp}{ b^{\hgammap}}
\newcommand{\bgN}{b^{\gammaN}}
\newcommand{\gammaN}{{\gamma}_{\NN}}
\newcommand{\GammaN}{{\Xi}_{\NN}}
\newcommand{\CN}{C_{\NN}}
\newcommand{\hbgDMLone}{\hat b^{\gamma, \textrm{\DMLone}}}
\newcommand{\hbgDMLtwo}{\hat b^{\gamma, \textrm{\DMLtwo}}}
\newcommand{\hbgkDMLone}{\hat b^{\gamma}_{\kk}}
\newcommand{\hb}{\hat b}
\newcommand{\bzero}{b^0}
\newcommand{\gX}{g_X}
\newcommand{\gY}{g_Y}
\newcommand{\gA}{g_A}
\newcommand{\gH}{g_H}
\newcommand{\hX}{h_X}
\newcommand{\hY}{h_Y}
\newcommand{\hW}{h_W}
\newcommand{\aX}{a_X}
\newcommand{\aW}{a_W}
\newcommand{\mX}{m_X}
\newcommand{\mA}{m_A}
\newcommand{\mY}{m_Y}
\newcommand{\mU}{m_U}
\newcommand{\mV}{m_V}
\newcommand{\mZ}{m_Z}
\newcommand{\hmX}{\hat m_X}
\newcommand{\hmA}{\hat m_A}
\newcommand{\hmY}{\hat m_Y}
\newcommand{\hmU}{\hat m_U}
\newcommand{\ttt}{u}
\newcommand{\sss}{v}
\newcommand{\Salg}{\mathcal{S}}
\definecolor{springGreen}{rgb}{0, 0.7, 0}
\definecolor{snowGray}{rgb}{0.93, 0.91, 0.91}
\definecolor{darkViolet}{rgb}{0.58, 0, 0.83}
\definecolor{darkOlive}{rgb}{0.64, 0.8, 0.35}
\definecolor{salmon}{rgb}{0.80, 0.51,0.38}
\begin{document}

\title{Regularizing Double Machine Learning in Partially Linear Endogenous Models}  
  
\author{Corinne Emmenegger and Peter B\"uhlmann\\
Seminar for Statistics, ETH Z\"urich}

\maketitle

\begin{abstract}
The linear coefficient in a partially linear model with confounding variables can be estimated using double machine learning (DML). 
However, this DML estimator has a two-stage least squares (TSLS) interpretation and may produce overly wide confidence intervals. 
To address this issue, we propose a
regularization and selection scheme, \emph{\regsDML}, 
which leads to narrower confidence intervals. 
It selects either the TSLS DML estimator or a regularization-only estimator depending on whose estimated variance is smaller.
The regularization-only estimator is tailored to have a low mean squared error. 
The \regsDML\ estimator is fully data driven. 
The \regsDML\ estimator converges at the parametric rate, is asymptotically Gaussian distributed, and asymptotically equivalent to the TSLS DML estimator, but \regsDML\  
exhibits
substantially better finite sample properties.
The \regsDML\ estimator uses the idea of 
k-class estimators, and 
we show how DML and k-class estimation can be combined to estimate the linear coefficient in a partially linear endogenous model. 
Empirical examples demonstrate our methodological and theoretical developments.
Software code for our \regsDML\ method is available in the \textsf{R}-package \texttt{dmlalg}.\end{abstract}

\textbf{Keywords:} 
Double machine learning, endogenous variables, generalized method of moments, instrumental variables, k-class estimation, partially linear model, regularization, semiparametric estimation, two-stage least squares.

\section{Introduction}

Partially linear models (PLMs) combine the flexibility of nonparametric approaches with ease of interpretation of linear models. Allowing for nonparametric terms makes the estimation procedure robust to some model misspecifications. 
A plaguing issue is potential endogeneity. 
For instance, if a treatment is not randomly assigned in a clinical study,
subjects receiving different treatments  differ in other ways than only the
treatment~\citep{Okui2012}.  Another situation where an explanatory variable is correlated with the error term occurs if the explanatory variable is determined simultaneously with the response~\citep{Wooldridge2013}.
In such situations, 
employing estimation methods that do not account for endogeneity can lead to biased estimators~\citep{Fuller1987}.

Let us consider the PLM 
\begin{equation}\label{eq:PLM}
	Y=X^T\betazero + \gY(W) + \hY(H)+\eps_Y. 
\end{equation}
The covariates $X$ and $W$ and the response $Y$ are observed whereas the variable
$H$ is not observed and acts as a potential confounder. 
It can cause endogeneity in the model when it is correlated with $X$, $W$, and $Y$. 
The variable $\eps_Y$ denotes a random error. 
An overview of PLMs is presented in~\citet{Haerdle2000}. Semiparametric methods are summarized in~\citet{Ruppert2003} and \citet{Haerdle2004}, for instance.
\\

\citet{Chernozhukov2018} introduce double machine learning (DML) to estimate the linear coefficient $\betazero$ in a model similar to~\eqref{eq:PLM}. The central ingredients are Neyman orthogonality and sample splitting with cross-fitting.
They allow estimates of so-called 
 nuisance terms to be plugged 
into the estimating equation of $\betazero$. The resulting estimator  converges at the parametric rate $\NN^{-\frac{1}{2}}$, 
with $\NN$ denoting the sample size, 
and is asymptotically Gaussian. 
\\

A common approach to cope with endogeneity uses instrumental variables (IVs). Consider
a random variable $A$ that typically satisfies the assumptions of a conditional instrument~\citep{Pearl2009}. 
The DML procedure first 
adjusts $A$, $X$, and $Y$ for $W$  by regressing 
out $W$ of them. 
Then the residual $Y-\E[Y|W]$ is regressed on $X-\E[X|W]$ using the instrument $A-\E[A|W]$. The population parameter is identified by
\begin{equation}\label{eq:TSLSinterpretation}
	\betazero = \frac{\E\big[(A-\E[A|W])(Y-\E[Y|W])\big]}{\E\big[(A-\E[A|W])(X-\E[X|W])\big]}
\end{equation}
if both $A$ and $X$ are 1-dimensional. The restriction to the 1-dimensional case is only for simplicity at this point. Below, we consider multivariate $A$ and $X$. 
In practice, we insert potentially biased machine learning (ML) estimates of the nuisance parameters $\E[A|W]$, $\E[X|W]$, and $\E[Y|W]$ into this equation for $\betazero$.  
Estimates of these nuisance parameters are typically biased if their complexity is regularized. 
Neyman orthogonal scores and sample splitting allow circumventing empirical process conditions to justify inserting ML estimators of nuisance parameters into estimating equations~\citep{Bickel1982, Chernozhukov2018}. 
\\

Equation~\eqref{eq:TSLSinterpretation} has a two-stage least squares (TSLS) interpretation~\citep{Theil1953a, Theil1953b, Basmann1957, Bowden1985,Angrist1996,Anderson2005}. 
As mentioned above, the residual term $Y-\E[Y|W]$ is regressed on $X-\E[X|W]$ using the instrument $A-\E[A|W]$. 
In entirely linear models, the following findings have been reported 
about TSLS and related procedures.
The TSLS estimator has been observed to 
be highly variable, 
leading to overly wide confidence intervals. 
For instance, although ordinary least squares (OLS) is biased in the presence of endogeneity, 
it has been observed to be less variable~\citep{Wagner1958, Nagar1960, Summers1965, Cragg1967, Lloyd1975}. 
The issue with large or nonexisting variance of TSLS (the order of existing moments of TSLS depends on the degree of overidentification~\citep{Mariano1972, Mariano1982, Mariano2001}) is also coupled with the strength of the instrument~\citep{Bound1995, Staiger1997, Stock2002, Crown2011, AndrewsForthcoming}. 
Reducing the variability 
is sometimes possible by using k-class estimators~\citep{Theil1961, Hill2011, Rothenhausler2018, Jakobsen2020}.

The k-class estimators have been developed for entirely linear models.
The TSLS estimator is a k-class estimator with a fixed value of $k=1$, and \citep{Anderson1986} recommend to not use fixed k-class estimators.
Three particularly well-established k-class estimators are the limited information maximum likelihood (LIML) estimator~\citep{Anderson-Rubin1949, Amemiya1985} and the Fuller(1) and Fuller(4) estimators~\citep{Fuller1977}. 
They have been developed for entirely linear models to overcome some deficiencies of  TSLS.  
If many instruments are present, LIML experiences some optimality properties~\citep{Anderson-Kunitomo-Matsushita2010}.
Furthermore, the normal approximation for the finite sample estimator may be suboptimal for TSLS but useful for LIML~\citep{Anderson-Sawa1979, Anderson-Kunitomo-Sawa1982, Anderson1983}. 
However, LIML has no moments~\cite{Mariano1982, Phillips1984, Phillips1985, Hillier1993}. The Fuller estimators overcome this problem. 
Having no moments can lead to poor squared error performance, 
especially in weak instrument situations~\citep{Hahn2004}.
On the other hand, the Fuller(1) estimator is approximately unbiased and Fuller(4) has particularly low mean squared error (MSE)~\citep{Fuller1977}. 
\citet{Takeuchi-Morimune1985} give further asymptotic optimality results of the Fuller estimators.
\\

We propose a regularization-selection DML method using the idea of 
k-class estimators.
We call our method \emph{regsDML}.
It is tailored to reduce variance and hence 
improve the MSE of the estimator of $\betazero$.
Nevertheless, 
\regsDML\ converges at the parametric rate, and
its coverage of confidence intervals for the linear coefficient $\betazero$ remains 
valid. 
Empirical simulations demonstrate that \regsDML\ typically leads 
to shorter confidence intervals than LIML, Fuller(1), and Fuller(4), while it still 
attains the nominal coverage level.

\subsection{Our Contribution}\label{sect:ourContribution}

Our contribution is twofold. First, we build on 
the work of~\citet{Chernozhukov2018} to estimate $\betazero$ in the 
endogenous PLM~\eqref{eq:PLM} with multidimensional $A$ and $X$ such that its estimator $\hbetaN$ converges at the parametric rate, $\NN^{-\frac{1}{2}}$,
and is asymptotically Gaussian. 
In contrast to~\citet{Chernozhukov2018}, 
we formulate the underlying model as a structural equation model (SEM) and allow $A$ and $X$ to be multidimensional. 
We directly specify an identifiability condition of $\betazero$ instead of giving additional conditional moment restrictions. The SEM may be overidentified in the sense that the dimension of $A$ can exceed the dimension of $X$. 
Overidentification can lead to more efficient estimators~\citep{Amemiya1974, Berndt1974, Hansen1985} and more robust estimators~\citep{Pearl2004}.
Considering SEMs and an identifiability condition allows us to apply DML to more general situations than in~\citet{Chernozhukov2018}. 

Second, we propose a DML method that employs regularization and selection. This method is called \regsDML, 
and we develop it in Section~\ref{sect:regularizedDML}. 
It reduces the potentially excessive estimated standard deviation 
of DML because it 
selects either the TSLS DML estimator or a regularization-only estimator called \regDML\ depending on whose estimated variance is smaller. 
The underlying idea
of the regularization-only estimator \regDML\ 
 is similar to k-class estimation~\citep{Theil1961} and anchor regression~\citep{Rothenhausler2018, Buehlmann2018}. Both k-class estimation and anchor regression are designed for linear models and may require choosing a regularization parameter. Our approach is designed for PLMs, and the regularization parameter is data driven. Recently,~\citet{Jakobsen2020} have proposed a related strategy for linear (structural equation) models; whereas
they rely on testing for choosing the amount of regularization, we tailor our approach to reduce the MSE such that the coverage of confidence intervals for $\betazero$ remains 
valid. 
The \regsDML\ estimator converges at the parametric rate and is asymptotically Gaussian.
 In this sense, and in contrast to~\citet{Jakobsen2020}, 
\regsDML\ focuses on statistical inference beyond point estimation with coverage guarantees 
not only in linear models but also in potentially complex partially linear ones.
The \regsDML\ estimator is asymptotically equivalent to the TSLS-type DML estimator, but \regsDML\ may exhibit substantially better finite sample properties.
Furthermore, our developments show how DML and k-class estimation can be combined to estimate the linear coefficient in an endogenous PLM. 

Our approach allows flexible model specification.
We only require that 
$X$ enters linearly in~\eqref{eq:PLM} and that the other terms are additive.
In particular, the form of the effect of $W$ on $A$ or of $A$ on $W$ is not constrained. This is partly 
similar to TSLS, which is 
robust to model misspecifications in its first stage because  it does not rely on a correct specification of the instrument effect on the covariate~\citep{Robins2005}. 
The detailed assumptions on how the variables $A$, $X$, $W$, $H$, and $Y$ interact are given in Section~\ref{sect:identifiabilityConditionAndDML}: the variable $A$ needs to satisfy an assumption similar to that for a conditional instrument, but there is some flexibility. 
\\

We consider a motivating example to illustrate some of the points mentioned above.  
Figure~\ref{fig:introSEM} gives the SEM we generate data from and its associated causal graph~\citep{Lauritzen1996, Pearl1998, Pearl2009, Pearl2010, Peters2017,Maathuis2019}. 
By convention, we omit error variables in a causal graph if they are mutually independent~\citep{Pearl2009}. 
The variable $A$ is similar to a conditional instrument given $W$.

\begin{figure}[h!]
	\centering
	\caption[]{\label{fig:introSEM}An SEM and its associated causal graph.}
	\begin{tabular}{cc}
	\begin{tabular}{l}
	$\displaystyle 
	\begin{array}{r}
		(\eps_A, \eps_H, \eps_X, \eps_Y)\sim\mathcal{N}_4(\bo,\one)\\	
	\end{array}$
	\\
	$\displaystyle
	\begin{array}{lcl} 
		W &\sim& \pi\cdot\textrm{Unif}([-1, 1])\\
			A &\leftarrow& 3\cdot\tanh(2W) + \eps_A\\
			H &\leftarrow& 2\cdot\sin(W) + \eps_H\\
			X &\leftarrow& -\normone{A}-2\cdot\tanh(W) -H + \eps_X\\
			Y &\leftarrow& X + 0.5W^2 -3\cdot\cos(0.25\pi H)+\eps_Y
	\end{array}$ 
	\end{tabular}
	& 
	\begin{tabular}{c}
          \includegraphics[width=0.35\textwidth]{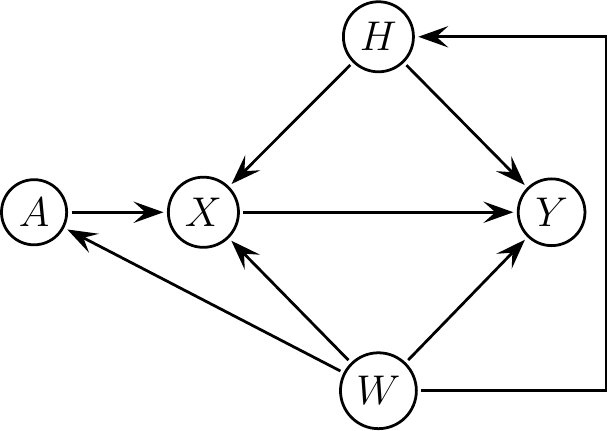}
	\end{tabular}
\end{tabular}
\end{figure}

We simulate $M=1000$ datasets each for a range of sample sizes $\NN$. The nuisance parameters $\E[A|W]$, $\E[X|W]$, and $\E[Y|W]$ are estimated with additive cubic B-splines with $\ceil[\big]{\NN^{\frac{1}{5}}}+2$ degrees of freedom. 
The simulation results are displayed in Figure~\ref{fig:COVERsimulationIntro}. This figure displays the 
coverage, power, and relative length of the  $95\%$ confidence intervals for $\beta_0$ 
using  ``standard'' DML (red) 
and the newly proposed methods
\regDML\ (blue) and \regsDML\ (green). 
The \regDML\ method is a version of \regsDML\ with regularization only but no selection. 
If the blue curve is not visible in Figure~\ref{fig:COVERsimulationIntro}, it coincides with the green curve. The dashed lines in the coverage and power plots indicate $95\%$ confidence regions with respect to uncertainties in the $M$ simulation runs. 

The \regsDML\ method succeeds in producing much narrower confidence intervals than DML 
although it maintains good coverage. 
The power of \regsDML\ is close to $1$ for all considered sample sizes. For small sample sizes, \regsDML\ leads to confidence intervals whose length is around $10\%-20\%$ the length of DML's. 
As the sample size increases, \regsDML\ starts to resemble the behavior of the DML estimator but continues to produce substantially shorter confidence intervals. Thus, the regularization-selection \regsDML\ (and also its version with regularization only) is a highly effective method to increase the power and sharpness of statistical inference whereas keeping the type I error and coverage under control. 

\begin{figure}[h!]
	\centering
	\caption[]{\label{fig:COVERsimulationIntro} 
	The results come from $M =1000$ simulation runs each from the SEM in Figure~\ref{fig:introSEM} for a range of sample sizes $\NN$ and with $\KK=2$ and $\Salg=100$ in Algorithm~\ref{algo:Summary}.
The nuisance functions are estimated with additive splines.  
	The figure displays the coverage of two-sided confidence intervals for $\betazero$, power for two-sided testing of the 
	hypothesis $H_0:\ \betazero = 0$, and scaled lengths of two-sided confidence intervals of DML (red),  \regDML\ (blue), and  \regsDML\  (green), 
	where all results are at level $95\%$.	
	At each $\NN$, the lengths of the confidence intervals are scaled with the median length from DML.
	The shaded regions in the coverage and power plots represent $95\%$ confidence bands with respect to the $M$ simulation runs. 
	The blue and green lines are indistinguishable in the left panel.
	}
	\includegraphics[width=\textwidth]{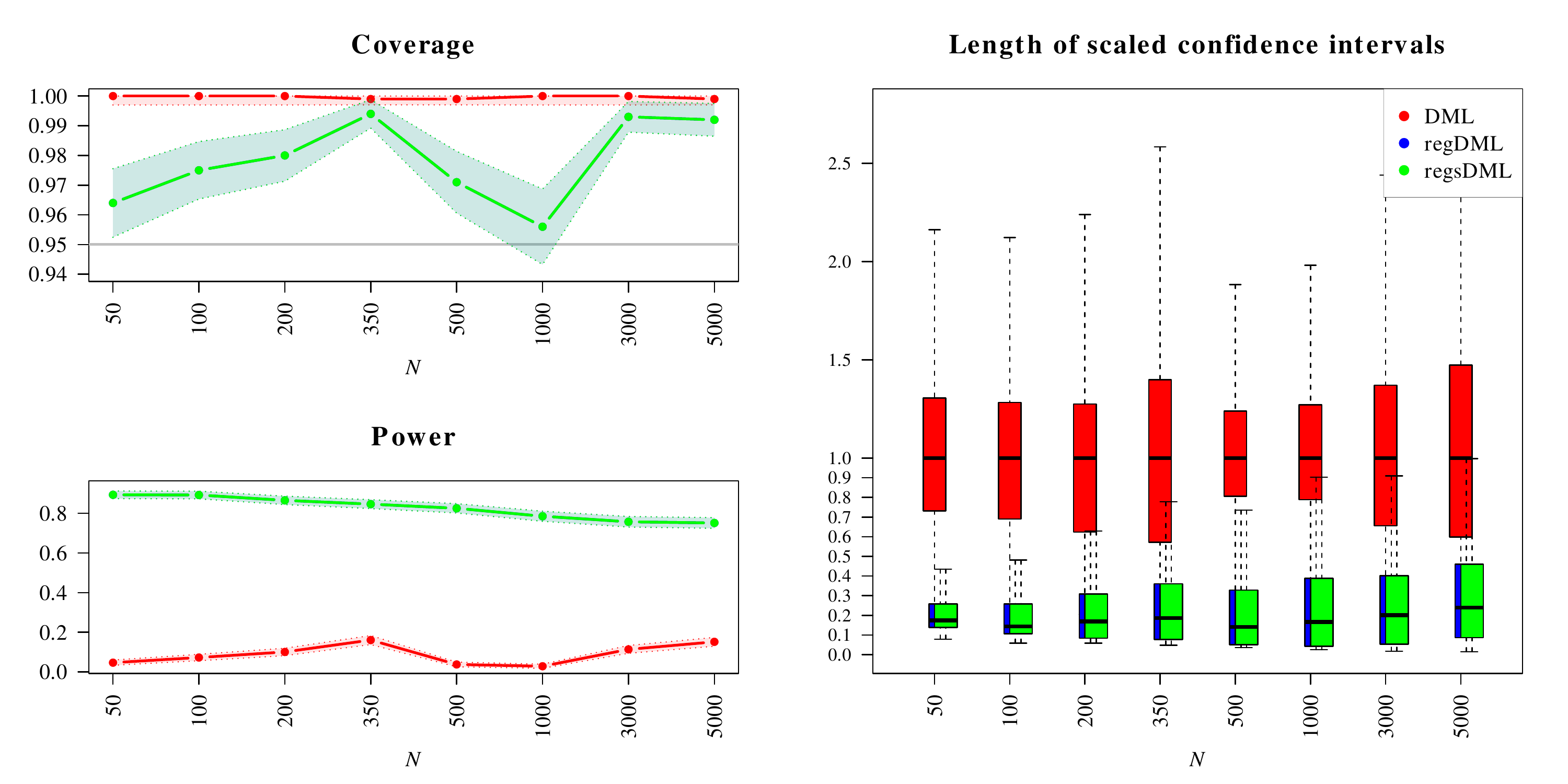}
\end{figure}

Simulation results with $\betazero=0$ in the SEM of Figure~\ref{fig:COVERsimulationIntro} are presented in Figure~\ref{fig:COVERsimulationIntroBeta0} in Section~\ref{sect:additioinalSimulation} in the appendix. 
Further numerical results are given in Section~\ref{sect:simulation}.

\subsection{Additional Literature}

PLMs have received considerable interest. 
\citet{Haerdle2000} present an overview of estimation methods in purely exogenous PLMs, and many references are given there. 
The remaining part of this paragraph refers to literature investigating endogenous PLMs. 
\citet{Ai2003} consider semiparametric estimation with  a sieve estimator. \citet{Ma2006} introduce a parametric model for the latent variable. \citet{Yao2012} considers a heteroskedastic error term and 
 a partialling-out scheme~\citep{Robinson1988, Speckman1988}.
\citet{Florens2012} propose to solve an ill-posed integral equation.
\citet{Su2016} investigate a partially linear dynamic panel data model with fixed effects and lagged variables and consider sieve IV estimators
as well as an approach with solving integral equations. 
\citet{Horowitz2011} compares inference and other properties of nonparametric and parametric  estimation if instruments are employed. 
\\

Combining Neyman orthogonality and sample splitting (with cross-fitting) 
allows a diverse range of estimators and machine learning algorithms to be used to estimate nuisance parameters. 
This procedure has alternatively been considered in~\citet{Newey1994}, \citet{Laan2003}, and~\citet{Chernozhukov2018}.
DML methods have been applied in various situations. \citet{Chen2019} 
consider instrumental variables quantile regression. 
\citet{Liu2020} apply DML in logistic partially linear models. 
\citet{Colangelo2020} employ doubly debiased machine learning methods to a fully nonparametric equation of the response with a continuous treatment. 
\citet{Knaus2020} presents an overview of DML methods in unconfounded models. 
\citet{Farbmacher2020} decompose the causal effect of a binary treatment by a mediation analysis and estimate it by DML. 
\citet{Lewis2020} extend DML to estimate dynamic effects of treatments. 
\citet{Chiang2020} apply DML under multiway clustered sampling environments. 
\citet{Tchetgen2020}  
propose a technique to 
reduce the bias of DML estimators. 

Nonparametric components can be estimated without sample splitting and cross-fitting if the underlying function class satisfies some entropy conditions; 
see for instance~\citet{Geer-Mammen1997}.
Alternatively, \citet{Chen2016} partial out the nonparametric component using a kernel method and employ the generalized method of moments principle~\citep{Hasen1982}. The mentioned entropy regularity conditions limit the complexity of the function class, and ML algorithms do usually not satisfy them. Particularly, these conditions fail to hold if the dimension of the nonparametric variables increases with the sample size~\citep{Chernozhukov2018}. 
\\

Double robustness and orthogonality arguments have also been considered in the following works.
\citet{Okui2012} consider doubly robust estimation of the parametric part. Their estimator is consistent if either the model for the effect of the measured confounders on the outcome or the model of the effect of the measured confounders on the instrument is correctly specified. \citet{Rotnitzky2019} consider doubly robust estimation of scalar parameters where the nuisance functions are $\ell_1$-constrained.
Targeted minimum loss based estimators and G-estimators also feature an orthogonality property; an overview is given in~\citet{DiazOrdaz2019}. 
\\

The literature 
presented in this subsection
is
related to but rather 
distinct from our work with the only exception of~\citet{Chernozhukov2018}. The difference to this 
latter contribution is highlighted in Section~\ref{sect:identifiabilityConditionAndDML} and Section~\ref{sect:strictlyWeakerCondition} in the appendix.
\\

\textit{Outline of the Paper.}
Sections~\ref{sect:identifiabilityConditionAndDML} and~\ref{sect:DML} describe the DML estimator. The former section introduces an identifiability condition, and the latter investigates asymptotic properties. 
Section~\ref{sect:regularizedDML} introduces the regularized
regularization-selection estimator \regDML\ and its regularization-only version \regDML\   
and investigates their asymptotic properties. 
Section~\ref{sect:simulation} presents numerical experiments and an empirical  
real data example. Section~\ref{sect:conclusion} concludes our work. 
Proofs and additional definitions and material are given in the appendix. 
\\

\textit{Notation.}
We denote by $\indset{\NN}$ the set $\{1,2,\ldots,\NN\}$. We add the probability law as a subscript to the probability operator $\Prob$ and the expectation operator $\E$ whenever we want to emphasize the corresponding dependence.
We denote the $L^p(\PP)$ norm by $\normP{\cdot}{p}$ and the Euclidean or  operator
norm  by $\norm{\cdot}$, depending on the context. 
We implicitly assume that given expectations and conditional expectations exist. We denote by $\stackrel{d}{\rightarrow}$ convergence in distribution. 
Furthermore, we denote by $\one_{d\times d}\in\R^{d\times d}$ the $d\times d$ identity matrix and write $\one$ if we do not want to underline its dimension.

\section{An Identifiability Condition and the DML Estimator}\label{sect:identifiabilityConditionAndDML}

Before we introduce \regsDML\ in Section~\ref{sect:regularizedDML}, 
we present our TSLS-type DML estimator of $\betazero$ because 
we require it to formulate \regsDML.
The DML estimator estimates 
the linear coefficient in an endogenous and potentially overidentified PLM where $A$ and $X$ may me multidimensional. 
Our work builds on~\citet{Chernozhukov2018}, but they only consider univariate $A$ and $X$ and restrict conditional moments to identify the linear coefficient. 
We impose an unconditional moment restriction below.
However, our results recover theirs if $A$ and $X$ are univariate 
and the additional conditional moment restrictions are satisfied.

Our PLM is cast as an SEM. The SEM specifies the generating mechanism of the random variables $A$, $W$, $H$, $X$, and $Y$ of dimensions $q$, $\sss$, $r$, $d$, and $1$, respectively. The structural equation of the response is given by
\begin{equation}\label{eq:SEM}
	Y \leftarrow X^T\betazero + \gY(W) + \hY(H) + \eps_Y
\end{equation} 
as in~\eqref{eq:PLM}, where $\betazero\in\R^d$ is a fixed unknown parameter vector, and where the functions 
$\gY$ and $\hY$ are unknown. The variable $H$ is hidden and causes endogeneity. The variable $\eps_Y$ denotes an unobserved error term. 
The model is potentially overidentified in the sense that the dimension of $A$ may exceed the dimension of $X$.
Observe that $A$ does not directly affect the response $Y$ in the sense that it does not appear on the right hand side of~\eqref{eq:SEM}.
The model is required to satisfy an indentifiability condition as in~\eqref{eq:identificationCondition} below.

Econometric models are often presented as a system of simultaneous structural equations.
Full information models consider all equations at once, and limited information models only consider equations of interest~\citep{Anderson1983}.

\subsection{Identifiability Condition}\label{sect:identifiabilityCondition}

An identifiability condition is required to identify $\betazero$ in~\eqref{eq:SEM}.
We define the residual terms
\begin{equation}\label{eq:residuals}
	\Ra:=A-\E[A|W],\quad \Rx:=X-\E[X|W],\quad\textrm{and}\quad \Ry:=Y-\E[Y|W]
\end{equation}
that adjust $A$, $X$, and $Y$ for $W$.
Our DML estimator of $\betazero$ is obtained by performing TSLS of $\Ry$ on $\Rx$ using the instrument $\Ra$. 
This scheme requires the unconditional moment condition 
\begin{equation}\label{eq:identificationCondition}
	\E\big[\Ra(\Ry-\Rx^T\betazero)\big]=\bo
\end{equation}
to identify $\betazero$ in~\eqref{eq:SEM}.
For instance, 
this condition is satisfied if $A$ is independent of both $H$ and $\eps_Y$ given $W$ or if 
$A$ is independent of $H$, $\eps_Y$, and $W$. 
The identifiability condition~\eqref{eq:identificationCondition} is strictly weaker than the conditional moment conditions introduced in~\citet{Chernozhukov2018}; see Section~\ref{sect:strictlyWeakerCondition} in the appendix 
that presents an example where our identifiability condition holds but the conditional moment conditions do not. 
The subsequent theorem asserts identifiability of $\betazero$. 

\begin{theorem}\label{thm:identifiability}
	Let the dimensions $q=\mathrm{dim}(A)$ and $d = \mathrm{dim}(X)$, and assume $q\ge d$.
	Assume furthermore that the matrices $\E[\Rx \Ra^T]$ and $\E[\Ra\Ra^T]$ 
are of full rank, and assume the identifiability condition~\eqref{eq:identificationCondition}. 
	We then have
	\begin{displaymath}
		\betazero = \Big(\E\big[\Rx\Ra^T\big]\E\big[\Ra\Ra^T\big]^{-1}\E\big[\Ra\Rx^T\big] \Big)^{-1}\E\big[\Rx\Ra^T\big]\E\big[\Ra\Ra^T\big]^{-1}\E[\Ra\Ry]. 
	\end{displaymath}
\end{theorem}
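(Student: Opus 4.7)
The plan is to use standard GMM / TSLS algebra: rearrange the moment condition~\eqref{eq:identificationCondition} into a matrix equation in $\betazero$, then left-multiply by the appropriate weighting matrix so that the coefficient matrix in front of $\betazero$ becomes the invertible $d\times d$ matrix appearing in the theorem, and finally invert.

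Concretely, I would first rewrite~\eqref{eq:identificationCondition} by expanding and using linearity of expectation to obtain the normal-equation style identity
\begin{equation*}
\E[\Ra \Ry] \;=\; \E[\Ra\Rx^T]\,\betazero,
\end{equation*}
which is a system of $q$ linear equations in the $d$ unknowns $\betazero$. Because $q\ge d$, this system is (weakly) overdetermined, so I cannot simply invert $\E[\Ra\Rx^T]$. Instead, I left-multiply by the $d\times q$ matrix $\E[\Rx\Ra^T]\,\E[\Ra\Ra^T]^{-1}$, which is well defined since $\E[\Ra\Ra^T]$ is invertible by assumption. This yields
\begin{equation*}
\E[\Rx\Ra^T]\,\E[\Ra\Ra^T]^{-1}\,\E[\Ra\Ry] \;=\; \Big(\E[\Rx\Ra^T]\,\E[\Ra\Ra^T]^{-1}\,\E[\Ra\Rx^T]\Big)\betazero.
\end{equation*}

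The one substantive step is to verify that the $d\times d$ matrix
$M := \E[\Rx\Ra^T]\,\E[\Ra\Ra^T]^{-1}\,\E[\Ra\Rx^T]$
is invertible, so that we may solve for $\betazero$. Since $\E[\Ra\Ra^T]$ is symmetric positive definite (as a second-moment matrix, which is full rank by assumption), it admits a factorization $\E[\Ra\Ra^T]^{-1} = B^T B$ with $B\in\R^{q\times q}$ invertible. Then
\begin{equation*}
M \;=\; \big(B\,\E[\Ra\Rx^T]\big)^{\!T}\,\big(B\,\E[\Ra\Rx^T]\big),
\end{equation*}
so $M$ is positive semidefinite and has rank equal to that of $B\,\E[\Ra\Rx^T]$, which equals $\mathrm{rank}\,\E[\Ra\Rx^T]=d$ by the full-rank hypothesis on $\E[\Rx\Ra^T]$ (using $q\ge d$). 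Hence $M$ is positive definite, and in particular invertible.

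I do not expect any genuine obstacle; the only non-obvious point, as highlighted above, is the invertibility of $M$, which reduces to a simple linear-algebra argument combining the positive definiteness of $\E[\Ra\Ra^T]$ with the assumed full rank of $\E[\Rx\Ra^T]$. Left-multiplying the displayed equation by $M^{-1}$ then yields the stated formula for $\betazero$.
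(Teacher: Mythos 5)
Your proof is correct and follows essentially the same route as the paper's: both reduce the claimed formula to the moment condition $\E[\Ra\Ry]=\E[\Ra\Rx^T]\betazero$ and solve by left-multiplying with $\E[\Rx\Ra^T]\E[\Ra\Ra^T]^{-1}$. The only difference is that you explicitly verify the invertibility of $\E[\Rx\Ra^T]\E[\Ra\Ra^T]^{-1}\E[\Ra\Rx^T]$ via the factorization $\E[\Ra\Ra^T]^{-1}=B^TB$, a detail the paper's one-line proof leaves implicit; your argument for it is sound.
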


Theorem~\ref{thm:identifiability} precludes 
underidentification.
The full rank condition of the matrix $\EP[\Rx \Ra^T]$  expresses that the correlation between $X$ and $A$ is strong enough after regressing out $W$. This is a typical TSLS assumption~\citep{Theil1953a, Theil1953b, Basmann1957, Bowden1985,Angrist1996,Anderson2005}.
The rank assumptions in Theorem~\ref{thm:identifiability} in particular require that $A$, $X$, and $Y$ are not deterministic functions of $W$.

The instrument $A$ instead of $\Ra$ can alternatively identify $\betazero$ in Theorem~\ref{thm:identifiability}. However, this procedure leads to a suboptimal convergence rate of the resulting estimator; see Section~\ref{sect:nonidentifyingProcedures}. 

The identifiability condition~\eqref{eq:identificationCondition} is central to Theorem~\ref{thm:identifiability}. 
Section~\ref{sect:discussionIdentifiabilityCondition} in the appendix presents examples illustrating SEMs where the identifiability condition holds and where it fails to hold.

\subsection{Alternative Interpretations of $\betazero$}\label{sect:alternativeInterpretations}

We present two alternative interpretations of $\betazero$ apart from performing TSLS of $\Ry$ on $\Rx$ using the instrument $\Ra$. 
The second representation will be used to formulate our regularization schemes in Section~\ref{sect:regularizedDML}.
To formulate these alternative representations, we introduce the linear projection operator $\Pra$ on $\Ra$ that maps a random variable $Z$ to its projection
\begin{displaymath}
	\Pra Z := \E\big[Z\Ra^T\big]\E\big[\Ra\Ra^T\big]^{-1}\Ra. 
\end{displaymath}

By Theorem~\ref{thm:identifiability},  the population parameter $\betazero$ solves the TSLS moment equation
\begin{displaymath}
	\bo =\E\big[\Rx\Ra^T\big]\E\big[\Ra\Ra^T\big]^{-1}\E\big[\Ra(\Ry-\Rx^T\betazero)\big]. 
\end{displaymath}
This motivates a generalized method of moments interpretation of $\betazero$ because we have
\begin{displaymath}
	\betazero = \argmin_{\beta\in\R^d}\E[\loss(S;\beta,\etazero)]\E\big[\Ra\Ra^T\big]^{-1}\E\big[\loss^T(S;\beta,\etazero)\big]
\end{displaymath}
for $\loss(S;\beta,\etazero) = \Ra(\Ry-\Rx^T\beta)$, where $\etazero=(\E[A|W], \E[X|W], \E[Y|W])$ denotes the nuisance parameter and $S=(A,W,X,Y)$ denotes the concatenation of the observable variables.

This leads to the second interpretation of $\betazero$. The coefficient $\betazero$ minimizes the squared projection of the residual $\Ry-\Rx^T\beta$ on $\Ra$, namely
\begin{equation}\label{eq:optimizeL2projection}
	\betazero = \argmin_{\beta\in\R^d} \E\Big[\big(\Pra(\Ry-\Rx^T\beta)\big)^2\Big].
\end{equation}
We employ the representation of $\betazero$ in~\eqref{eq:optimizeL2projection} 
to formulate our regularization schemes in Section~\ref{sect:regularizedDML}.

\section{Formulation of the DML Estimator and its Asymptotic Properties}\label{sect:DML}

In this section, we describe how to estimate $\betazero$ using the TSLS-type DML scheme, and we describe the asymptotic properties of this estimator.

Consider $\NN$ \iid\ realizations  $\{S_i=(A_i, X_i,W_i,Y_i)\}_{i\in\indset{\NN}}$ of $S=(A,X,W,Y)$ from the SEM in~\eqref{eq:SEM}. We concatenate the observations of $A$ row-wise to form an $(\NN\times q)$-dimensional matrix  $\A$. Analogously, we construct the matrices $\X\in\R^{\NN\times d}$ and  $\W\in\R^{\NN\times \sss}$
and the vector $\Y\in\R^{\NN}$ containing the respective observations. 

We construct a DML estimator of $\betazero$ as follows. 
First, we split the data into $\KK\ge 2$ disjoint sets $I_1,\ldots,I_{\KK}$. For simplicity, we assume that these sets are of equal cardinality $\nn=\frac{\NN}{\KK}$. In practice, their cardinality might differ due to rounding issues. 
  
For each $\kk\in\indset{\KK}$, we estimate the conditional expectations $\mA^0(W):=\E[A|W]$, $\mX^0(W):=\E[X|W]$, and $\mY^0(W):=\E[Y|W]$, which act as nuisance parameters, with data from $\Ikc$. We call the resulting estimators $\hmA^{\Ikc}$, $\hmX^{\Ikc}$, and $\hmY^{\Ikc}$, respectively. 
Then, the adjusted residual terms $\hRaki:=A_i-\hmA^{\Ikc}(W_i)$, $\hbRxki:=X_i-\hmX^{\Ikc}(W_i)$, and $\hbRyki:=Y_i-\hmY^{\Ikc}(W_i)$ for $i\in\Ik$ are evaluated on $\Ik$, the complement of $\Ikc$. 
We concatenate them row-wise to form the matrices $\hbRak\in\R^{\nn \times q}$ and $\hbRxk\in\R^{\nn\times d}$ and the vector $\hbRyk\in\R^{\nn}$. 

These $\KK$ iterates are assembled to form the DML estimator
\begin{equation}\label{eq:betaDMLtwo}
		\hbetaN := \bigg(\frac{1}{\KK}\sum_{\kk=1}^{\KK}\big(\hbRxk\big)^T\PiIkcIk\hbRxk\bigg)^{-1}\frac{1}{\KK}\sum_{\kk=1}^{\KK}\big(\hbRxk\big)^T\PiIkcIk\hbRyk
\end{equation}
of $\betazero$, 
where
\begin{equation}\label{eq:RaProjection}
	\PiIkcIk := \hbRak
			\Big(\big(\hbRak\big)^T\hbRak \Big)^{-1}
			\big(\hbRak\big)^T
\end{equation}
denotes the orthogonal projection matrix onto the space spanned by the columns of $\hbRak$.

To obtain $\hbetaN$ in~\eqref{eq:betaDMLtwo}, the individual matrices are first averaged before the final matrix is inverted. 
It is also possible to compute $\KK$ individual TSLS estimators on the $\KK$ iterates individually and average these.
Both schemes are asymptotically equivalent. \citet{Chernozhukov2018} call these two schemes
DML2 and DML1, respectively, where DML2 is as in~\eqref{eq:betaDMLtwo}.
The DML1 version of the coefficient estimator is given in the appendix  in Section~\ref{sect:DML1Binfty}. 
The advantage of DML2 over DML1 is that it enhances stability properties of the  estimator. 
To ensure stability of the DML1 estimator, every individual matrix that is inverted needs to be well conditioned. 
Stability of the DML2 estimator is ensured if the average of these matrices is well conditioned. 
\\

The $\KK$ sample splits are random. To reduce the effect of this randomness, we repeat the overall procedure $\Salg$ times and assemble the results as suggested in~\citet{Chernozhukov2018}. This procedure is described in Algorithm~\ref{algo:Summary} in Section~\ref{sect:gammaEst} below. 
\\

The following theorem establishes that $\hbetaN$ converges at the parametric rate and is asymptotically Gaussian. 
\begin{theorem}\label{thm:asymptNormal}
	Consider model~\eqref{eq:SEM}.
	Suppose that Assumption~\ref{assumpt:DMLboth} in the appendix in Section~\ref{sect:proofsOfDML}  holds and consider  $\lossoverline$ given in Definition~\ref{def:lossFunc}  in the appendix in Section~\ref{sect:proofsOfDML}. 
	Then $\hbetaN$ as in~\eqref{eq:betaDMLtwo}
	concentrates in a $\frac{1}{\sqrt{\NN}}$ neighborhood of $\betazero$. It is
	approximately linear and centered Gaussian, namely
	\begin{displaymath}
		\sqrt{\NN}\sigma^{-1}(\hbetaN-\betazero) = \frac{1}{\sqrt{\NN}}\sum_{i=1}^{\NN}\lossoverline( S_i;\betazero, \etazero) + o_{\PP}(1) \stackrel{d}{\rightarrow}\mathcal{N}(0,\one_{d\times d})\quad (\NN\rightarrow\infty),
	\end{displaymath}
	uniformly over the law $\PP$ of $S=(A,W,X,Y)$, 
	and where the variance-covariance matrix $\sigma^2$ is given by
		$\sigma^2 =\Jzero\tilJzero\Jzero^T$
	for the matrices $\tilJzero$ and $\Jzero$ given in Definition~\ref{def:lossFunc} in the appendix.
	\end{theorem}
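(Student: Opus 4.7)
The plan is to follow the standard DML asymptotic analysis, adapted to the TSLS form of $\hbetaN$ in~\eqref{eq:betaDMLtwo}. Using the identifiability condition~\eqref{eq:identificationCondition} I first write the bread--meat decomposition
\begin{displaymath}
\sqrt{\NN}(\hbetaN-\betazero)=\hat J_\NN^{-1}\cdot\sqrt{\NN}\,M_\NN,
\end{displaymath}
with $\hat J_\NN=\frac{1}{\NN}\sum_k(\hbRxk)^T\PiIkcIk\hbRxk$ and $\sqrt{\NN}\,M_\NN=\frac{1}{\sqrt{\NN}}\sum_k(\hbRxk)^T\PiIkcIk(\hbRyk-\hbRxk\betazero)$. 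Expanding $\PiIkcIk$ as in~\eqref{eq:RaProjection} reduces both quantities to products of the fold-wise Gram matrices $\frac{1}{n}(\hbRak)^T\hbRak$, $\frac{1}{n}(\hbRxk)^T\hbRak$, and their $Y$-analogue.

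For the bread, each empirical Gram matrix converges (by the law of large numbers on each fold) to its population analogue computed with the true residuals $\Ra,\Rx,\Ry$, while the $L^2$-consistency of the nuisance estimators in Assumption~\ref{assumpt:DMLboth} absorbs the error of replacing true residuals by DML surrogates. A minimum-eigenvalue bound on $\E[\Ra\Ra^T]$ and on $\Jzero=\E[\Rx\Ra^T]\E[\Ra\Ra^T]^{-1}\E[\Ra\Rx^T]$ (part of the regularity assumptions) then gives $\hat J_\NN^{-1}\to\Jzero^{-1}$ in probability, uniformly over $\PP$.

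The substantive step concerns $\sqrt{\NN}\,M_\NN$. Setting $\Delta_{A,i}=\hmA^{\Ikc}(W_i)-\mA^0(W_i)$ and analogously for $X$ and $Y$, I expand $\hbRaki(\hbRyki-\hbRxki^T\betazero)$ into the oracle score $R_{A,i}(R_{Y,i}-R_{X,i}^T\betazero)$, two cross terms of the form ``true residual times nuisance error,'' and a product of two nuisance errors. The product term is $o_\PP(\NN^{-1/2})$ after averaging, by Cauchy--Schwarz together with the product-rate condition on the nuisance estimators. For each cross term, cross-fitting makes $\Delta_{A,i},\Delta_{X,i},\Delta_{Y,i}$ deterministic functions of $W_i$ once one conditions on the fold $\Ikc$ used to fit the nuisances; because $\E[\Ra\mid W]=0$, $\E[\Rx\mid W]=0$, and $\E[\Ry\mid W]=0$, every cross term has conditional mean zero---this is precisely the Neyman orthogonality implied by the TSLS-type score---and a conditional Chebyshev bound combined with the $L^2$-nuisance rates yields $o_\PP(\NN^{-1/2})$. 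A CLT on the oracle piece, followed by pre-multiplication with $\hat J_\NN^{-1}$, then produces the claimed linear expansion in $\lossoverline$ and the sandwich variance $\sigma^2=\Jzero\tilJzero\Jzero^T$; uniformity over $\PP$ follows via uniform moment bounds and a triangular-array CLT.

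The main obstacle will be the uniform-in-$\PP$ control of the two cross terms in the meat linearization. They vanish in conditional mean only because all three residuals $\Ra,\Rx,\Ry$ satisfy $\E[\cdot\mid W]=0$, so the argument relies on simultaneously exploiting cross-fitting independence, this conditional orthogonality, and the $L^2$-nuisance rates. A secondary technical point is ensuring that the fold-wise matrix $(\hbRak)^T\hbRak/n$ is invertible with high probability uniformly in $\PP$, which follows from the lower eigenvalue bound on $\E[\Ra\Ra^T]$ together with the consistency of its empirical counterpart; analogous arguments cover the invertibility of $\hat J_\NN$.
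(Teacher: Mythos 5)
Your proposal is correct and follows essentially the same route as the paper: the sandwich decomposition with fold-wise Gram-matrix convergence is the paper's Lemma~\ref{lem:D1convInProb} combined with Weyl's inequality and Slutsky, and your explicit oracle/cross/product expansion of the score is exactly the concrete form of the paper's Lemma~\ref{lem:boundRN}, in which the vanishing conditional means of the two cross terms are the Neyman orthogonality statement $f_k'(0)=\bo$ and the conditional mean of the product term is the second-order Taylor remainder controlled by the product-rate condition $\lambdaNpnumber$. The one technical point to watch is that the conditional Chebyshev bound on the cross-term and product-term fluctuations requires the $L^p$-boundedness with $p>4$ of the nuisance errors from Assumption~\ref{assumpt:DMLboth5}, not only the $L^2$ rates (the paper handles this via the H\"older/Markov interpolation in Lemma~\ref{lem:boundRN} reducing to the $L^1$ rate $r_{\NN}$), but this is available under the stated assumptions, so no genuine gap remains.
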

	A similar result to Theorem~\ref{thm:asymptNormal} is presented by~\citet{Chernozhukov2018}. However, their result requires univariate $A$ and $X$, and it imposes conditional moment restrictions instead of the identifiability condition~\eqref{eq:identificationCondition}; see also Section~\ref{sect:strictlyWeakerCondition} in the appendix
	that presents an example where our identifiability condition holds but the conditional moment conditions do not.  
	If $A$ and $X$ are univariate and the respective conditional moment conditions hold, our result coincides with~\citet{Chernozhukov2018}. 
	
	Theorem~\ref{thm:asymptNormal} also holds for the DML1 version of $\hbetaN$ defined in the appendix in Section~\ref{sect:DML1Binfty}.
	Assumption~\ref{assumpt:DMLboth}  specifies regularity conditions and the convergence rate of the machine learners estimating the conditional expectations. 
	The machine learners are required to satisfy the product relations
		\begin{equation}\label{eq:multCondition}
			\begin{array}{l}
			\normP{\mA^0(W)-\hmA^{\Ikc}(W)}{2}^2 
			\ll\NN^{-\frac{1}{2}},\\
			\normP{\mA^0(W)-\hmA^{\Ikc}(W)}{2}\big(\normP{\mY^0(W)-\hmY^{\Ikc}(W)}{2}+\normP{\mX^0(W)-\hmX^{\Ikc}(W)}{2}\big)\ll\NN^{-\frac{1}{2}}
			\end{array}
		\end{equation}
	for $\kk\in\indset{\KK}$, which
        allows us to employ a broad range of ML estimators.         
          For instance, these convergence rates are satisfied by
          $\ell_1$-penalized and related methods in a variety of sparse, high-dimensional linear
        models~\citep{Candes-Tao2007, Bickel2009, Buehlmann2011, Belloni-Chernozhukov2013},  
        forward selection in sparse linear models \citep{Kozbur2020},
        high-dimensional additive models~\citep{Meier-Geer-Buehlmann2009, Koltchinskii-Yuan2010, Yuan-Zhou2016}, 
         or regression trees and random forests \citep{Wager2016, Athey-Tibshirani-Wager2018}.
         Please see~\citet{Chernozhukov2018} for additional references. 
In particular, the rate condition~\eqref{eq:multCondition} is satisfied if the individual ML estimators converge at rate $\NN^{-\frac{1}{4}}$. Therefore, the individual ML estimators are not required to converge at rate $\NN^{-\frac{1}{2}}$. 	
	
The asymptotic variance $\sigma^2$ can be consistently estimated by 
replacing the true $\beta_0$ by $\hat{\beta}$
or its \DMLone\ version.
The nuisance functions are estimated on subsampled datasets, 
and the estimator of $\sigma^2$ is obtained by cross-fitting. 
The formal definition, the consistency result, and its proof 
are given in Definition~\ref{def:lossFunc} and in 
Theorem~\ref{thm:estSD} in the appendix 
in Section~\ref{sect:proofsOfDML}.

For fixed $\PP$, the asymptotic variance-covariance matrix $\sigma^2$ is the same as if the conditional expectations $\mA^0(W)$, $\mX^0(W)$, and $\mY^0(W)$ and hence $\Ra$, $\Rx$, and $\Ry$ were known.

Theorem~\ref{thm:asymptNormal} holds uniformly over laws $\PP$. This uniformity guarantees some robustness of the asymptotic statement~\citep{Chernozhukov2018}.
The dimension $\sss$ of the covariate $W$ may grow as the sample size  increases. Thus, high-dimensional methods can be considered to estimate the conditional expectations $\E[A|W]$, $\E[X|W]$, and $\E[Y|W]$. 
\\

The estimator $\hbetaN$ solves the moment equations
\begin{displaymath}
	\bo=
	\frac{1}{\KK}\sum_{\kk=1}^{\KK}\bigg(\frac{1}{\nn}\sum_{i\in\Ik}\hbRxki \big(\hRaki\big)^T
	\Big(\frac{1}{\nn}\sum_{i\in\Ik}\hRaki\big(\hRaki\big)^T\Big)^{-1}
	\frac{1}{\nn}\sum_{i\in\Ik}\loss(S_i;\hbetaN, \hetaIkc)\bigg),
\end{displaymath} 
where the score function $\loss$ is given by 
\begin{equation}\label{eq:score_psi}
	\loss(S;\beta,\eta) := \big(A-\mA(W)\big)\Big(Y-\mY(W)-\big(X-\mX(W)\big)^T\beta\Big)
\end{equation}
for $\eta=(\mA,\mX,\mY)$, 
and where the estimated nuisance parameter is given by $\hetaIkc=(\hmA^{\Ikc}, \hmX^{\Ikc}, \hmY^{\Ikc})$. Observe that  $\loss(S;\betazero,\etazero)$ with $\etazero=(\mA^0,\mX^0,\mY^0)$ coincides with the term whose expectation is constrained to equal $\bo$ in the identifiability condition~\eqref{eq:identificationCondition}.
The crucial step to prove asymptotic normality of $\sqrt{\NN}(\hbetaN-\betazero)$ is to analyze the asymptotic behavior of $\frac{1}{\sqrt{\nn}}\sum_{i\in\Ik}\loss(S_i;\hbetaN, \hetaIkc)$ for $\kk\in\indset{\KK}$.

Apart from the identifiability condition, the first fundamental requirement to analyze these terms is the ML convergence rates in~\eqref{eq:multCondition}.
Second, we employ sample splitting and cross-fitting. Sample splitting ensures that the data used to estimate the nuisance parameters and the data on which these estimators are evaluated are independent. Cross-fitting enables us to regain full efficiency. 
The third requirement is that the underlying score function $\loss$ in~\eqref{eq:score_psi} is Neyman orthogonal, which we explain next.

Neyman orthogonality ensures that $\loss$ is insensitive to small changes in the nuisance parameter $\eta$ at the true unknown linear coefficient $\betazero$ and the true unknown nuisance parameter $\etazero$. This makes estimation of $\betazero$ robust to inserting biased ML estimators of the nuisance parameter in the estimation equation. The following definition formally introduces this concept. 

\begin{definition}\label{def:Neyman-orth}\citep[Definition 2.1]{Chernozhukov2018}.
	A score $\loss=\loss(S;\beta,\eta)$ is Neyman orthogonal at $(\betazero,\etazero)$
	if the pathwise derivative map
	\begin{displaymath}
		\frac{\partial}{\partial r}\EP\big[\loss\big(S;\betazero,\etazero+r(\eta-\etazero)\big)\big]
	\end{displaymath}
	exists for all $r\in[0,1)$ and nuisance parameters $\eta$ 
	and vanishes 
	at $r=0$. 
\end{definition}
Definition~\ref{def:Neyman-orth} does not entirely coincide with~\citet[Definition 2.1]{Chernozhukov2018} because the latter also 
includes an identifiability condition. We directly assume the identifiability condition~\eqref{eq:identificationCondition}. 

The subsequent proposition states that the score function $\loss$ in~\eqref{eq:score_psi} is indeed Neyman orthogonal. 

\begin{proposition}\label{prop:Neyman_orth}
    The score $\psi$ given in Equation~\eqref{eq:score_psi} is Neyman orthogonal. 
\end{proposition}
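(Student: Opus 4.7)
The plan is to verify Definition~\ref{def:Neyman-orth} by direct differentiation. Write $\eta_r := \etazero + r(\eta - \etazero) = (\mA^0 + r\Delta_A,\ \mX^0 + r\Delta_X,\ \mY^0 + r\Delta_Y)$ using the shorthand $\Delta_A(W) := \mA(W) - \mA^0(W)$ and analogously $\Delta_X$, $\Delta_Y$. Plugging into~\eqref{eq:score_psi} and using linearity of each nuisance slot in $r$ gives
\begin{displaymath}
\loss(S;\betazero,\eta_r) = \big(\Ra - r\Delta_A(W)\big)\Big(\Ry - r\Delta_Y(W) - \big(\Rx - r\Delta_X(W)\big)^T\betazero\Big),
\end{displaymath}
a polynomial in $r$ of degree two. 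Differentiating in $r$ under the expectation (justified by standard integrability) and evaluating at $r=0$ makes the $r^2$ cross term disappear and produces exactly two linear contributions.

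The first contribution, from perturbing $\mA$, equals $-\EP\big[\Delta_A(W)(\Ry - \Rx^T\betazero)\big]$. I would handle this by conditioning on $W$: since $\Delta_A(W)$ is $W$-measurable, the tower property gives
\begin{displaymath}
\EP\big[\Delta_A(W)(\Ry - \Rx^T\betazero)\big] = \EP\Big[\Delta_A(W)\,\EP\big[\Ry - \Rx^T\betazero\,\big|\,W\big]\Big],
\end{displaymath}
and the inner conditional expectation is $0$ by the very definition $\Ry = Y - \EP[Y|W]$ and $\Rx = X - \EP[X|W]$ in~\eqref{eq:residuals}.

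The second contribution, from perturbing $\mY$ and $\mX$, equals $\EP\big[\Ra(-\Delta_Y(W) + \Delta_X(W)^T\betazero)\big]$. Here I apply the tower property the other way, pulling the $W$-measurable factor outside and using $\EP[\Ra|W] = \EP[A|W] - \mA^0(W) = 0$. Both linear terms therefore vanish, which establishes the Neyman orthogonality condition.

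The computation has no real obstacle; the point to emphasize is that the orthogonality is a purely structural consequence of residualizing $A$, $X$, and $Y$ against the same conditioning variable $W$, and does not invoke the identifiability condition~\eqref{eq:identificationCondition} (which, as noted below Definition~\ref{def:Neyman-orth}, is assumed separately rather than folded into the definition as in~\citet{Chernozhukov2018}).
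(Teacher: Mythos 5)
Your proof is correct and follows essentially the same route as the paper: expand the score along the path $\etazero + r(\eta-\etazero)$, differentiate at $r=0$ to isolate the two linear contributions, and kill each one by conditioning on $W$ (using $\EP[\Ry-\Rx^T\betazero\,|\,W]=0$ for the $\mA$-perturbation and $\EP[\Ra\,|\,W]=0$ for the $\mX,\mY$-perturbations). Your closing remark that the argument is structural and independent of the identifiability condition~\eqref{eq:identificationCondition} also matches the paper's own observation following the proposition.
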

We would like to remark that Neyman orthogonality of $\loss$ neither depends on the distribution of $S$ nor on the value of the coefficients $\betazero$ and $\etazero$.
In addition to being Neyman orthogonal, $\loss$ is linear in $\beta$ in the sense that we have
\begin{equation}\label{eq:psi_linear}
		\loss(S;\beta,\eta)
		=\loss^b(S;\eta)-\loss^a(S;\eta)\beta
\end{equation}
for 
\begin{displaymath}	
	\loss^b(S;\eta):=\big(A-\mA(W)\big)\big(Y-\mY(W)\big)
\end{displaymath}
and
\begin{displaymath}
	\loss^a(S;\eta):=\big(A-\mA(W)\big)\big(X-\mX(W)\big)^T. 
\end{displaymath}
This linearity property is also employed in the proof of Theorem~\ref{thm:asymptNormal}.

\subsection{Suboptimal Estimation Procedure}\label{sect:nonidentifyingProcedures}

In general, we cannot employ $A$ as an instrument instead of $\Ra$ 
in our TSLS-type DML estimation procedure.
For simplicity, we assume $\KK=2$ in this 
subsection and consider disjoint index sets $\I$ and $\Ic$ of size $\nn=\frac{\NN}{2}$.
The term
\begin{equation}\label{eq:asymptRAIV}
	\frac{1}{\sqrt{\nn}}\sum_{i\in\I}A_i\big(\hbRyIi -(\hbRxIi)^T\betazero\big)
\end{equation}
can diverge as $\NN\rightarrow\infty$ because $\hmX^{\Ic}$ and $\hmY^{\Ic}$ can be biased estimators of $\mX^0$ and $\mY^0$. This in particular happens if the functions $\mX^0$ and $\mY^0$ are high-dimensional and need to be estimated by regularization techniques; see~\citet{Chernozhukov2018}. Even if sample splitting is employed, the term~\eqref{eq:asymptRAIV} is asymptotically not well behaved because the underlying score function 
\begin{displaymath}
	\varphi(S;\beta,\eta):=A\Big(Y- \mY(W) - \big(X-\mX(W)\big)^T\beta\Big)
\end{displaymath}
is not Neyman orthogonal. The issue is illustrated in Figure~\ref{fig:AvsRa}. 
The SEM used to generate the data is similar to the nonconfounded model used in~\citet[Figure 1]{Chernozhukov2018}. 
The centered and rescaled term $\frac{\hbetaN-\betazero}{\widehat{\Var}(\hbetaN)}$ using $A$ as an instrument is biased whereas it is not if the instrument $\Ra$ is used.
Here, $\widehat{\Var}(\hbetaN)$ denotes the empirically observed variance of $\hbetaN$ with respect to the performed simulation runs.

\begin{figure}[h!]
	\centering
	\caption[]{\label{fig:AvsRa} Histograms of $\frac{\hbetaN-\betazero}{\widehat{\Var}(\hbetaN)}$, 
	where $\widehat{\Var}(\hbetaN)$ denotes the empirically observed variance of $\hbetaN$ with respect to the simulation runs, 
	using $A$ as an instrument in the left plot and using $\Ra$ as an instrument in the right plot. 
	The orange curves represent the density of $\mathcal{N}(0,1)$. The results come from $5000$ simulation runs 
	of sample size $5000$ each
	from the SEM in the appendix in Section~\ref{sect:histogramRavsA} with $\KK=2$. 	
	The conditional expectations are estimated with random forests consisting of $500$ trees that have a minimal node size of $5$.}
	\includegraphics[width=\textwidth]{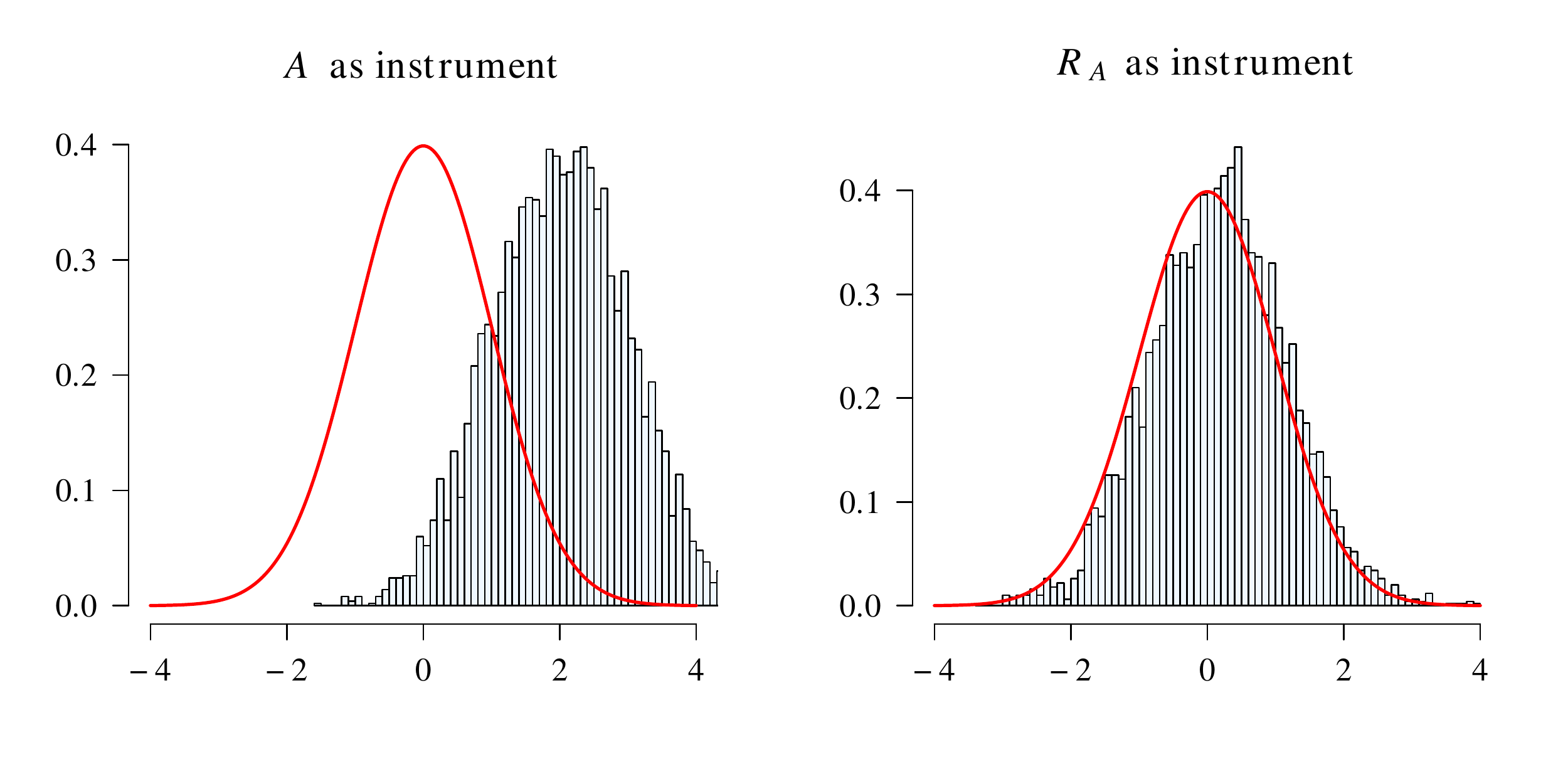}
\end{figure}

\section{Regularizing the DML Estimator: \regDML\ and \regsDML}\label{sect:regularizedDML}

We introduce a regularized estimator, \regsDML, whose estimated standard deviation is 
typically smaller and never worse than  
the one of 
the TSLS-type DML estimator described above. 
Supporting theory and simulations 
illustrate that the associated confidence intervals nevertheless reach 
valid andgood coverage. 
The reg{\bf s}DML estimator {\bf s}elects either the DML estimator or its regularization-only version \regDML, depending on which of the two estimators has a smaller estimated standard deviation. 
\\

Subsequently, we first introduce the regularization-only method \regDML.
The \regDML\ estimator is obtained by regularizing DML 
and choosing a data-dependent regularization parameter. Before we describe the choice of the regularization parameter, we introduce the regularization scheme for fixed regularization parameters.

Given a regularization parameter $\gamma\ge 0$, the population coefficient $\bg$ of the regularization scheme optimizes an objective function similar to the one used in k-class regression~\citep{Theil1961} or anchor regression~\citep{Rothenhausler2018, Buehlmann2018}. 
We established the representation
\begin{displaymath}
	\betazero = \argmin_{\beta\in\R^d} \E\Big[\big(\Pra(\Ry-\Rx^T\beta)\big)^2\Big]
\end{displaymath}
of $\betazero$ in~\eqref{eq:optimizeL2projection}. 
For a regularization parameter $\gamma\ge 0$,
we consider the regularized objective function and corresponding population coefficient
\begin{equation}\label{eq:regularizedObjective}
	\bg := \argmin_{\beta\in\R^d} \E\Big[\big((\id-\Pra)(\Ry-\Rx^T\beta)\big)^2\Big] + \gamma \E\Big[\big(\Pra(\Ry-\Rx^T\beta)\big)^2\Big].
\end{equation}
This regularized objective is form-wise
analogous to the objective function employed in anchor regression. The anchor regression estimator has been reformulated as a k-class estimator by~\citet{Jakobsen2020} for a linear model.

If $\gamma=1$, ordinary least squares regression of $\Ry$ 
on $\Rx$ is performed. If $\gamma=0$, 
we are partialling out or adjusting for
the variable $\Ra$.
If $\gamma=\infty$, we perform
TSLS regression of $\Ry$ on $\Rx$ using the instrument $\Ra$. In this case, $\bg$ coincides with $\betazero$. 
The coefficient $\bg$ interpolates between the OLS coefficient $b^{\gamma=1}$ and the TSLS coefficient $\betazero$ for general choices of $\gamma>1$.
For $\gamma>1$, there is a one to one correspondence between $\bg$ and the k-class estimator (based on $\Ra$, $\Rx$, and $\Ry$) with regularization parameter  $\kappa = \frac{\gamma-1}{\gamma} \in(0, 1)$; see~\citet{Jakobsen2020}.

\subsection{Estimation and Asymptotic Normality}\label{sect:bgammaEst}

In this section, we describe how to estimate $\bg$ in~\eqref{eq:regularizedObjective}
for fixed $\gamma\ge 0$
 using a DML scheme, and we describe the asymptotic properties of this estimator.
We consider the residual matrices $\hbRak\in\R^{\nn \times q}$ and  $\hbRxk\in\R^{\nn\times d}$ and the vector $\hbRyk\in\R^{\nn}$ introduced in Section~\ref{sect:DML} that adjust the data with respect to the nonparametric variables.
The estimator of $\bg$ is given by
\begin{displaymath}
	\hbg :=  \argmin_{b\in\R^d} \frac{1}{\KK}\sum_{\kk=1}^{\KK}\bigg( \normBig{ \big(\one-\PiIkcIk\big)\big(\hbRyk-\big(\hbRxk\big)^Tb\big)}_2^2+\gamma\normBig{ \PiIkcIk(\hbRyk-(\hbRxk)^Tb) }_2^2\bigg),
\end{displaymath} 
where $\PiIkcIk$ is as in~\eqref{eq:RaProjection}.
This estimator can be expressed in closed form by
\begin{equation}\label{eq:regularizedBetaDMLtwo}
	\hbg = \bigg(  \frac{1}{\KK}\sum_{\kk=1}^{\KK} \big(\hbRxtilk\big)^T\hbRxtilk\bigg)^{-1} \frac{1}{\KK}\sum_{\kk=1}^{\KK} \big(\hbRxtilk\big)^T\hbRytilk,
\end{equation}
where
\begin{equation}\label{eq:OLSresiduals}
	\hbRxtilk := \Big(   \one + (\sqrt{\gamma}-1)\PiIkcIk  \Big)\hbRxk 
	\quad\textrm{and}\quad
	\hbRytilk := \Big(   \one + (\sqrt{\gamma}-1)\PiIkcIk  \Big)\hbRyk.
\end{equation}
The computation of $\hbg$ is similar to an OLS scheme where $\hbRytilk$ is regressed on $\hbRxtilk$. 
To obtain $\hbg$, individual matrices are first averaged before the final matrix is inverted. 
It is also possible to directly carry out the $\KK$ OLS regressions of $\hbRytilk$ on $\hbRxtilk$ and average the resulting parameters.
Both schemes are asymptotically equivalent. We call the two schemes DML2 and DML1, respectively. This is analogous to~\citet{Chernozhukov2018}
as already mentioned in Section~\ref{sect:DML}.
The DML1 version is presented in the appendix in Section~\ref{sect:DML1gamma}. 
As mentioned in Section~\ref{sect:DML}, the
advantage of DML2 over DML1 is that it enhances stability properties of the coefficient estimator because the average of matrices needs to be well conditioned  but not every individual matrix. 
\\

\begin{theorem}\label{thm:asymptNormalgamma}
Let $\gamma\ge 0$. 
Suppose that Assumption~\ref{assumpt:DMLboth}  in the appendix in Section~\ref{sect:proofsOfDML}  (same as in Theorem~\ref{thm:asymptNormal}) 
except~\ref{assumpt:DMLboth1}
 holds, 
and consider the quantities
$\sigma^2(\gamma)$ and $\lossoverline$ introduced in Definition~\ref{def:asymptNormalgamma} in the appendix in Section~\ref{sect:proofsRegularizedDML}. 
The estimator $\hbg$ 
concentrates in a $\frac{1}{\sqrt{\NN}}$ neighborhood of $\bg$. It is approximately linear and centered Gaussian, namely
	\begin{displaymath}
		\sqrt{\NN}\sigma^{-1}(\gamma)(\hbg-\bg) = \frac{1}{\sqrt{\NN}}\sum_{i=1}^{\NN}\lossoverline(S_i;\bg, \etazero) + o_{\PP}(1) \stackrel{d}{\rightarrow}\mathcal{N}(0,\one_{d\times d})\quad (\NN\rightarrow\infty),
	\end{displaymath}
	uniformly over laws $\PP$ of $S=(A,W,X,Y)$. 
\end{theorem}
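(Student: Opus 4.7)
The plan is to adapt the proof of Theorem~\ref{thm:asymptNormal} to the regularized score. The key algebraic observation is that $(\one + (\sqrt{\gamma}-1)\PiIkcIk)^2 = \one + (\gamma-1)\PiIkcIk$ because $\PiIkcIk$ is idempotent; substituted into~\eqref{eq:regularizedBetaDMLtwo}, this yields
$$\sqrt{\NN}(\hbg - \bg) = \hmatA^{-1}\sqrt{\NN}(\hmatB - \hmatA\bg),$$
where $\hmatA := \tfrac{1}{\KK}\sum_{\kk=1}^{\KK}\tfrac{1}{\nn}(\hbRxk)^T(\one + (\gamma-1)\PiIkcIk)\hbRxk$ and $\hmatB$ is defined analogously with $\hbRyk$ replacing $\hbRxk$ in the rightmost position. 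Differentiating the objective in~\eqref{eq:regularizedObjective} shows that the population counterparts are $\matA = \EP[\Rx\Rx^T] + (\gamma-1)\EP[\Rx\Ra^T]\EP[\Ra\Ra^T]^{-1}\EP[\Ra\Rx^T]$ and $\matA\bg = \EP[\Rx\Ry] + (\gamma-1)\EP[\Rx\Ra^T]\EP[\Ra\Ra^T]^{-1}\EP[\Ra\Ry]$. Invertibility of $\matA$ for every $\gamma\ge 0$ follows from the full-rank assumption on $\EP[\Ra\Ra^T]$, which is what allows the TSLS identifiability condition of Theorem~\ref{thm:asymptNormal} to be dropped here.

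The argument proceeds in three steps. First, $\hmatA\xrightarrow{\PP}\matA$: decompose $\hmatA$ into its OLS piece $\tfrac{1}{\NN}\sum_{\kk}(\hbRxk)^T\hbRxk$ and its TSLS piece expressed through the three empirical moments $\tfrac{1}{\nn}(\hbRxk)^T\hbRak$, $\tfrac{1}{\nn}(\hbRak)^T\hbRak$, and $\tfrac{1}{\nn}(\hbRak)^T\hbRxk$, and apply the continuous mapping theorem together with the empirical-moment concentration results already established in the proof of Theorem~\ref{thm:asymptNormal}. Second, linearize the numerator by introducing the effective score
$$\psi^\gamma(S;\beta,\eta) := \big((X-\mX(W)) + (\gamma-1)\,\EP[\Rx\Ra^T]\EP[\Ra\Ra^T]^{-1}(A-\mA(W))\big)\big(Y-\mY(W) - (X-\mX(W))^T\beta\big).$$
The population first-order condition gives $\EP[\psi^\gamma(S;\bg,\etazero)] = \bo$, and a short calculation conditioning on $W$ and using $\EP[\Rx\mid W] = \EP[\Ra\mid W] = 0$ shows that $\psi^\gamma$ is Neyman orthogonal at $(\bg,\etazero)$ with respect to all three components of $\eta$. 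Expanding $\PiIkcIk$ in closed form and replacing the inner empirical moment matrices by their population limits via Step 1 then yields
$$\sqrt{\NN}(\hmatB - \hmatA\bg) = \frac{1}{\sqrt{\NN}}\sum_{i=1}^{\NN}\psi^\gamma(S_i;\bg,\etazero) + o_{\PP}(1),$$
with the nuisance-estimation remainder absorbed into $o_{\PP}(1)$ via Neyman orthogonality and the product rate~\eqref{eq:multCondition}, exactly as in the proof of Theorem~\ref{thm:asymptNormal}. Third, a uniform-in-$\PP$ Lindeberg CLT applied to this iid sum together with Slutsky yields the claim, with $\sigma^2(\gamma) = \matA^{-1}\EP[\psi^\gamma(\psi^\gamma)^T]\matA^{-T}$ and $\lossoverline = \sigma^{-1}(\gamma)\matA^{-1}\psi^\gamma$.

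The main novelty compared with Theorem~\ref{thm:asymptNormal} is the construction and verification of the effective score $\psi^\gamma$: one must check that the extra OLS contribution combines with the TSLS term so that Neyman orthogonality is preserved, which is what the conditioning-on-$W$ argument above delivers. The projection $\PiIkcIk$ couples observations in $\Ik$ just as in Theorem~\ref{thm:asymptNormal}, so its splitting into a population projection plus a second-order perturbation controlled by~\eqref{eq:multCondition} transfers identically, and this is where I expect the bulk of the bookkeeping to sit. Once $\psi^\gamma$ and its orthogonality are in hand, Steps~1--3 essentially recycle the arguments of Theorem~\ref{thm:asymptNormal}, so the bulk of the remaining work is careful replication rather than genuinely new technique.
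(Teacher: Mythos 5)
Your overall architecture is the right one and matches the paper's: write $\hbg$ in closed form via $(\one+(\sqrt{\gamma}-1)\PiIkcIk)^2=\one+(\gamma-1)\PiIkcIk$, show the Gram matrix converges to $\matA+(\gamma-1)\matB$, linearize the numerator using Neyman orthogonality of $\losstilde$ and $\loss$ together with the product rates, and finish with a uniform CLT and Slutsky. However, there is a genuine gap in Step 2 that changes the answer: you cannot ``replace the inner empirical moment matrices by their population limits'' inside $\PiIkcIk$ at the $\sqrt{\NN}$ scale. Writing the $(\gamma-1)$-part of the numerator on fold $\kk$ as
\begin{displaymath}
(\gamma-1)\,\sqrt{\nn}\;\widehat M_1\,\widehat M_2^{-1}\,\widehat M_3,
\qquad
\widehat M_1=\tfrac{1}{\nn}\textstyle\sum_{i\in\Ik}\lossone(S_i;\hetaIkc),\quad
\widehat M_2=\tfrac{1}{\nn}\textstyle\sum_{i\in\Ik}\losstwo(S_i;\hetaIkc),\quad
\widehat M_3=\tfrac{1}{\nn}\textstyle\sum_{i\in\Ik}\loss(S_i;\bg,\hetaIkc),
\end{displaymath}
the limit of $\widehat M_3$ is $\EP[\loss(S;\bg,\etazero)]=\EP[\Ra(\Ry-\Rx^T\bg)]$, which is \emph{nonzero} for every finite $\gamma$ with $\bg\neq\betazero$ (only the TSLS moment condition at $\betazero$ makes it vanish). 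Hence the $O_{\PP}(\nn^{-1/2})$ fluctuations $\widehat M_1-\EP[\lossone]$ and $\widehat M_2-\EP[\losstwo]$, after multiplication by $\sqrt{\nn}$ and by the nonzero vector $\matE=\EP[\losstwo]^{-1}\EP[\loss(S;\bg,\etazero)]$, contribute $O_{\PP}(1)$ asymptotically Gaussian terms, not $o_{\PP}(1)$. This is exactly why the paper's influence function $\lossoverlinep$ in Definition~\ref{def:asymptNormalgamma} carries the two extra summands $(\gamma-1)\big(\lossone(\cdot;\etazero)-\EP[\lossone(S;\etazero)]\big)\matE$ and $-(\gamma-1)\matC\big(\losstwo(\cdot;\etazero)-\EP[\losstwo(S;\etazero)]\big)\matE$, obtained from a first-order expansion of $\widehat M_1$ and of $\widehat M_2^{-1}$ around their population values (see~\eqref{eq:combine1LastSummand3}--\eqref{eq:combine1LastSummand}).

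Concretely, your claimed variance $\sigma^2(\gamma)=\matA^{-1}\EP[\psi^\gamma(\psi^\gamma)^T]\matA^{-T}$ with $\psi^\gamma=\losstilde+(\gamma-1)\matC\loss$ is not the asymptotic variance of $\hbg$; it coincides with the correct one only when $\EP[\loss(S;\bg,\etazero)]=\bo$, i.e.\ when $\matE=\bo$. The remaining ingredients of your argument are fine: the first-order condition for $\bg$ does give $\EP[\losstilde(S;\bg,\etazero)]+(\gamma-1)\matC\,\EP[\loss(S;\bg,\etazero)]=\bo$, and both $\losstilde$ and $\loss$ are Neyman orthogonal at any $(\beta,\etazero)$ by the conditioning-on-$W$ computation, so the centering and the control of the nuisance-estimation remainder go through as you describe. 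To repair the proof, keep the empirical fluctuations of $\widehat M_1$ and $\widehat M_2$ in the linearization (a Slutsky/delta-method step for the matrix inverse), which adds the two missing mean-zero iid terms to the influence function and reproduces the paper's $\matD$ and $\sigma^2(\gamma)$.
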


  Theorem~\ref{thm:asymptNormalgamma} also holds for the DML1 version of $\hbg$ defined in the appendix in Section~\ref{sect:DML1gamma}. The influence function is denoted by $\lossoverline$ in both Theorems~\ref{thm:asymptNormal} and~\ref{thm:asymptNormalgamma} but is defined differently.
Assumption~\ref{assumpt:DMLboth} specifies regularity conditions and the convergence rate of the machine learners of the conditional expectations. The machine learners are required to satisfy the product relations
	\begin{displaymath}
			\begin{array}{l}
			\normP{\mA^0(W)-\hmA^{\Ikc}(W)}{2}^2 
			\ll\NN^{-\frac{1}{2}},\\
			\normP{\mX^0(W)-\hmX^{\Ikc}(W)}{2}\big(\normP{\mY^0(W)-\hmY^{\Ikc}(W)}{2}+ \normP{\mX^0(W)-\hmX^{\Ikc}(W)}{2} \big)\ll \NN^{-\frac{1}{2}},\\
			\normP{\mA^0(W)-\hmA^{\Ikc}(W)}{2}\big(\normP{\mY^0(W)-\hmY^{\Ikc}(W)}{2}+\normP{\mX^0(W)-\hmX^{\Ikc}(W)}{2}\big)\ll\NN^{-\frac{1}{2}}
			\end{array}
		\end{displaymath}
	for $\kk\in\indset{\KK}$.
The main difference to Theorem~\ref{thm:asymptNormal} and quantity of interest is the asymptotic variance $\sigma^2(\gamma)$.
It can be consistently estimated with either $\hbg$ or its DML1 version as illustrated in Theorem~\ref{thm:estSDgamma} in the appendix in Section~\ref{sect:proofsRegularizedDML}. Typically, for $\gamma < \infty$, the asymptotic variance $\sigma^2(\gamma)$ is smaller than $\sigma^2$ in Theorem 3.1. Such a variance gain comes at the price of bias because $\hat{b}^{\gamma}$ estimates $b^{\gamma}$ and not the true parameter $\beta_0$.

The proof of Theorem~\ref{thm:asymptNormalgamma} uses Neyman orthogonality of the underlying score function. 
Recall that Neyman orthogonality neither depends on the distribution of $S$ nor on the value of the coefficients $\betazero$ and $\etazero$ as discussed in Section~\ref{sect:DML}.

For fixed $\gamma > 1$, Theorem~\ref{thm:asymptNormalgamma} furthermore implies that 
the k-class estimator corresponding to $\hbg$ converges at the parametric rate and follows a Gaussian distribution asymptotically.

\subsection{Estimating the Regularization Parameter $\gamma$}\label{sect:gammaEst}

For simplicity, we assume $d=1$ in this subsection. The results can be extended to $d>1$. 
\\

Subsequently, we introduce a data-driven method to choose the regularization parameter $\gamma$ in practice. 
This scheme first optimizes the estimated asymptotic MSE of $\hbg$. 
The estimated regularization for the parameter $\gamma$ leads to an estimate of $\betazero$ that 
asymptotically
has the same MSE behavior as the TSLS-type estimator $\hbetaN$ in~\eqref{eq:betaDMLtwo} but may exhibit substantially
better finite sample properties. 
\\

We consider the estimated regularization parameter
\begin{equation}\label{eq:hgamma}
	\hgamma := \argmin_{\gamma\ge 0} \frac{1}{\NN} \hat\sigma^2(\gamma) + \normone{\hbg-\hbetaN}^2.
\end{equation}
It optimizes an estimate of the asymptotic MSE of $\hbg$: 
the term
$\hat\sigma^2(\gamma)$ is the consistent estimator of $\sigma^2(\gamma)$ described in Theorem~\ref{thm:estSDgamma} in the appendix in Section~\ref{sect:proofsRegularizedDML},
and the term $\normone{\hbg-\hbetaN}^2$ is a plug-in estimator of the squared population bias $\normone{\bg - \betazero}^2$.
The estimated regularization parameter $\hgamma$ is random because it depends on the data. 
\\

First, we investigate the bias of the population parameter $\bgN$ for a nonrandom sequence of regularization parameters $\{\gammaN\}_{\NN\ge 1}$ as $\NN\to\infty$. Afterwards, we propose a modified estimator of the regularization parameter whose corresponding parameter estimate is denoted by \regDML, and we introduce the regularization-selection estimator \regsDML.
Finally, we and analyze the asymptotic properties of \regDML\ and \regsDML.  

Let us consider a deterministic sequence $\{\gammaN\}_{\NN\ge 1}$ of regularization parameters. By Proposition~\ref{prop:populationBias} below, the (scaled) population bias $\sqrt{\NN}\normone{\bgN-\betazero}$ vanishes as $\NN\rightarrow\infty$ if $\gammaN$ is of larger order than $\sqrt{\NN}$. 

\begin{proposition}\label{prop:populationBias}
	Suppose that~\ref{assumpt:DMLboth1}, \ref{assumpt:DMLboth3}, and~\ref{assumpt:DMLboth4} of Assumption~\ref{assumpt:DMLboth} in the appendix in Section~\ref{sect:proofsOfDML}  hold 
(subset of the assumptions in Theorem~\ref{thm:asymptNormal}).
	Assume $\{\gammaN\}_{\NN\ge 1}$ is sequence of non-negative real numbers. 
	Then we have 
	\begin{displaymath}
		\sqrt{\NN}\normone{\bgN-\betazero} \rightarrow \begin{cases}0, & \mbox{if }\gammaN \gg\sqrt{\NN}\\
		                                                                          C, & \mbox{if }\gammaN\sim\sqrt{\NN}\\
		                                                                          \infty, &\mbox{if } \gammaN \ll\sqrt{\NN}
		                                                                          \end{cases}
	\end{displaymath}
	as $\NN\rightarrow\infty$ for some non-negative finite real number $C$. 
\end{proposition}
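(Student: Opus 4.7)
The plan is to reduce the proposition to an algebraic identity by writing $\bg$ in closed form, observing that $\bg-\betazero$ has a $\gamma$-free numerator and a linear-in-$\gamma$ denominator, and then reading off the three regimes.

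First I would differentiate the objective in~\eqref{eq:regularizedObjective}. Writing $\Pi$ for the $L^2$ projection $\Pra$ and exploiting orthogonality of $\Pi$ and $\id-\Pi$, setting the derivative in $\beta$ to zero yields
\begin{displaymath}
\bg \;=\; \frac{\E\bigl[(\id-\Pi)\Ry\cdot(\id-\Pi)\Rx\bigr] + \gamma\,\E\bigl[\Pi\Ry\cdot\Pi\Rx\bigr]}{\E\bigl[((\id-\Pi)\Rx)^2\bigr] + \gamma\,\E\bigl[(\Pi\Rx)^2\bigr]} \;=:\; \frac{a+\gamma c}{p+\gamma q}.
\end{displaymath}
Unpacking the definition of $\Pra$ and specialising Theorem~\ref{thm:identifiability} to $d=1$ identifies $\betazero = c/q$. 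The full-rank hypotheses in Assumption~\ref{assumpt:DMLboth} (specifically those ensuring $\E[\Rx\Ra^T]\neq 0$) give $q>0$, while $p\ge 0$ and hence $p+\gamma q>0$ for every $\gamma\ge 0$. Substituting $\betazero = c/q$ and simplifying produces the key identity
\begin{displaymath}
\bg - \betazero \;=\; \frac{qa - cp}{q(p+\gamma q)},
\end{displaymath}
whose numerator is independent of $\gamma$.

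Setting $K := \normone{qa-cp}/q$, the quantity of interest becomes $\sqrt{\NN}\,\normone{\bgN-\betazero} = \sqrt{\NN}\,K/(p+\gammaN q)$, and the three regimes follow by inspection. If $\gammaN \gg \sqrt{\NN}$, then $p+\gammaN q \sim \gammaN q$ and $\sqrt{\NN}/\gammaN\to 0$, giving the first line. If $\gammaN/\sqrt{\NN}\to\lambda\in(0,\infty)$, then $\sqrt{\NN}/(p+\gammaN q)\to 1/(\lambda q)$, yielding the constant $C = K/(\lambda q)$. If $\gammaN\ll\sqrt{\NN}$, then $(p+\gammaN q)/\sqrt{\NN}\to 0$, so $\sqrt{\NN}/(p+\gammaN q)\to\infty$ and the bias diverges provided $K>0$.

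The only point of substance — more of a remark than an obstacle — is that the third case genuinely requires $K>0$, i.e.\ $qa\neq cp$. This is the natural non-degeneracy condition expressing that the OLS and TSLS coefficients on the $W$-adjusted residuals differ; it is precisely the endogeneity phenomenon that motivates the regularization in the first place. When $qa=cp$ the identity above shows $\bg\equiv\betazero$ for all $\gamma$, and all three limits are trivially $0$, so the statement still holds with $C=0$ in a degenerate form. I would flag this briefly but not dwell on it, since the rest of the argument is a direct computation once the closed-form expression for $\bg$ is in hand.
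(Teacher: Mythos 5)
Your argument is correct, and it takes a genuinely more elementary route than the paper's. You exploit the $d=1$ restriction (which is indeed in force in Section~\ref{sect:gammaEst}) to write $\bgN$ as an explicit ratio $(a+\gammaN c)/(p+\gammaN q)$ of affine functions of $\gammaN$, identify $\betazero=c/q$ via Theorem~\ref{thm:identifiability}, and obtain the clean identity $\bgN-\betazero=(qa-cp)/\bigl(q(p+\gammaN q)\bigr)$ with a $\gamma$-free numerator; the three regimes then follow by inspection of $\sqrt{\NN}/(p+\gammaN q)$. The paper instead keeps general $d$, writes $\bgN=\bigl(\Fmattwo+(\gammaN-1)\Gmattwo\bigr)^{-1}\bigl(\Fmatone+(\gammaN-1)\Gmatone\bigr)$, and expands the inverse with a Sherman--Morrison--Woodbury-type identity (citing Henderson) to extract the factor $\sqrt{\NN}/(\gammaN-1)$ multiplying an $O(1)$ term, treating the diverging and bounded $\gammaN$ cases separately. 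Your computation buys transparency and an explicit constant $C=K/(\lambda q)$; the paper's buys the extension to $d>1$. Note that in your middle regime you implicitly read $\gammaN\sim\sqrt{\NN}$ as convergence of the ratio $\gammaN/\sqrt{\NN}$ to a positive limit, which is the standard reading and consistent with how the paper uses it.

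Your closing remark correctly isolates the one substantive caveat: the divergence in the regime $\gammaN\ll\sqrt{\NN}$ requires $K=\normone{qa-cp}/q>0$, i.e.\ that the (residualized) OLS and TSLS coefficients differ, and this is not forced by Assumptions~\ref{assumpt:DMLboth1}, \ref{assumpt:DMLboth3}, and~\ref{assumpt:DMLboth4}. The paper's own proof has exactly the same implicit requirement (its $O(1)$ factor must be bounded away from zero for the third case, and for bounded $\gammaN$ it only records $\normone{\bgN-\betazero}=O(1)$, which does not by itself yield divergence), so this is not a defect of your argument relative to the paper's. Be careful, though, with your phrasing that in the degenerate case $qa=cp$ ``the statement still holds with $C=0$'': in the third regime the proposition asserts the limit is $\infty$, whereas the degenerate limit is $0$, so the third clause genuinely fails there rather than holding in a weaker form. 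Stating the non-degeneracy $qa\neq cp$ as an explicit hypothesis, as you suggest, is the right fix.
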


Theorem~\ref{thm:stochasticOrderGammaN} below shows that the estimated regularization parameter $\hgamma$ is of equal or larger stochastic order than $\sqrt{\NN}$.
If it were not, choosing $\gamma=\infty$ in~\eqref{eq:hgamma}, and hence selecting the TSLS-type estimator $\hbetaN$, would lead to a smaller estimated asymptotic MSE. 

\begin{theorem}\label{thm:stochasticOrderGammaN}
	Let $\gammaN = o(\sqrt{\NN})$, and 
	suppose that Assumption~\ref{assumpt:DMLboth} in the appendix in Section~\ref{sect:proofsOfDML}  holds 
(same as in Theorem~\ref{thm:asymptNormal}). 
We then have
	\begin{displaymath}
		\lim_{\NN\rightarrow\infty} \PP\big(\hsigma^2(\gammaN) + \NN(\hbgN-\hbetaN)^2\le \hsigma^2\big)=0. 
	\end{displaymath}
\end{theorem}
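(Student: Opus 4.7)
The strategy is to show that the left-hand side $\hsigma^2(\gammaN) + \NN(\hbgN-\hbetaN)^2$ diverges in probability while the right-hand side $\hsigma^2$ is bounded in probability.

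First, I invoke Proposition~\ref{prop:populationBias}: since $\gammaN = o(\sqrt{\NN})$, the scaled population bias satisfies $\sqrt{\NN}\normone{\bgN-\betazero}\to\infty$ as $\NN\to\infty$. Simultaneously, Theorem~\ref{thm:asymptNormal} yields $\sqrt{\NN}(\hbetaN-\betazero) = O_{\PP}(1)$, and (see the obstacle below) one obtains $\sqrt{\NN}(\hbgN-\bgN) = O_{\PP}(1)$. A triangle inequality then gives
\begin{displaymath}
\sqrt{\NN}\normone{\hbgN-\hbetaN} \ \ge \ \sqrt{\NN}\normone{\bgN-\betazero} - \sqrt{\NN}\normone{\hbgN-\bgN} - \sqrt{\NN}\normone{\hbetaN-\betazero},
\end{displaymath}
so the divergence of the first term on the right forces $\sqrt{\NN}\normone{\hbgN-\hbetaN}\to\infty$ in probability, and hence $\NN(\hbgN-\hbetaN)^2\to\infty$ in probability.

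Second, since $\hsigma^2(\gammaN)\ge 0$ (it is a variance estimate), the left-hand side diverges in probability. On the other side, by the consistency result for the variance estimator in Theorem~\ref{thm:estSD} (in the appendix), $\hsigma^2$ is consistent for the finite asymptotic variance $\sigma^2$ of $\hbetaN$, hence $\hsigma^2 = O_{\PP}(1)$. The event $\{\hsigma^2(\gammaN) + \NN(\hbgN-\hbetaN)^2 \le \hsigma^2\}$ is then contained in $\{\NN(\hbgN-\hbetaN)^2 \le \hsigma^2\}$, whose probability tends to zero because a divergent quantity cannot be dominated by a tight one; this yields the claim.

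\textbf{Main obstacle.} The step I expect to be delicate is establishing $\sqrt{\NN}(\hbgN - \bgN) = O_{\PP}(1)$ for a \emph{sequence} $\gammaN$ that may diverge, since Theorem~\ref{thm:asymptNormalgamma} is formulated for fixed $\gamma$. Two routes are possible: (i) inspect the proof of Theorem~\ref{thm:asymptNormalgamma} and verify uniformity in $\gamma$ over the range $\{0 \le \gamma \ll \sqrt{\NN}\}$, or preferably (ii) work directly with the closed form~\eqref{eq:regularizedBetaDMLtwo}--\eqref{eq:OLSresiduals}, noting that $\one+(\sqrt{\gamma}-1)\PiIkcIk$ has eigenvalues in $\{1,\sqrt{\gamma}\}$, and then track how the $\sqrt{\gammaN}$ scaling in the transformed residuals $\hbRxtilk,\hbRytilk$ interacts with the ML rates in~\eqref{eq:multCondition} to preserve the $\NN^{-1/2}$ rate. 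Because the objective splits as ``OLS part'' plus $\gamma$ times ``TSLS part,'' and both pieces are individually $\sqrt{\NN}$-consistent under Assumption~\ref{assumpt:DMLboth}, the stochastic order should be retained; one only needs to verify that the required matrix inverses remain stable as $\gammaN\to\infty$, which follows from the full-rank conditions in Theorem~\ref{thm:identifiability}. If uniformity proves problematic, an alternative is to split into cases ($\gammaN$ bounded vs.\ $\gammaN\to\infty$) and handle the unbounded case via a continuity-in-$\gamma$ argument combined with the observation that $\hbg\to\hbetaN$ as $\gamma\to\infty$.
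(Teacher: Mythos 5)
Your proposal is correct and follows essentially the same route as the paper's proof: lower-bound $\sqrt{\NN}\normone{\hbgN-\hbetaN}$ via the reverse triangle inequality by $\sqrt{\NN}\normone{\bgN-\betazero}-O_{\PP}(1)$, let Proposition~\ref{prop:populationBias} drive the divergence, and observe that $\hsigma^2=O_{\PP}(1)$ by Theorem~\ref{thm:estSD}. The one step you flag as the main obstacle, namely $\sqrt{\NN}(\hbgN-\bgN)=O_{\PP}(1)$ for a possibly divergent sequence $\gammaN$, is precisely what the paper establishes in Lemma~\ref{lemma:hbgammaNbounded}, and it does so by the case split you propose (bounded versus divergent $\gammaN$), renormalizing the design matrix by $(\gammaN-1)^{-1}$ in the divergent case and rerunning the CLT argument; your use of $\hsigma^2(\gammaN)\ge 0$ to discard that term outright is a mild simplification that spares you the analogue of Lemma~\ref{lemma:hSigmaGammaNbounded}.
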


If $\hgamma$ is multiplied by a deterministic scalar $a_{\NN}$ that diverges to $+\infty$ at an arbitrarily slow rate as $\NN\rightarrow\infty$, the modified regularization parameter $\hgammap:=a_{\NN}\hgamma$ is of stochastic order larger than $\sqrt{\NN}$. 
By default, we choose $a_{\NN}=\log(\sqrt{\NN})$.
Proposition~\ref{prop:populationBias} is formulated for deterministic regularization parameters, 
but the deterministic statements can be replaced by probabilistic ones. 
Proposition~\ref{prop:populationBias} then implies 
that the population bias term $\normone{\bhgp - \betazero}$ vanishes at rate 
$o_{\PP}(\NN^{-\frac{1}{2}})$. 
Thus, the two quantities $\sqrt{\NN}(\hbhgp-\bhgp)$ and $\sqrt{\NN}(\hbhgp-\betazero)$ are asymptotically equivalent due to Theorem~\ref{thm:gammaHatNormal} below, and we have 
\begin{displaymath}
    \sqrt{\NN}(\hbhgp-\betazero) \approx {\cal N} \big(0,\sigma^2(\hgammap)\big)
\end{displaymath}
whenever $\NN$ is sufficiently large 
(note that asymptotically as $N \to \infty$, the right-hand side has the same limit as described in Theorem~\ref{thm:gammaHatNormal}).

We call $\hbhgp$ the \regDML\ (regularized DML) estimator. 
The regularization-selection estimator $\hbhgp$ selects between DML and \regDML\ based on whose variance estimate is smaller. The ``s'' in \regsDML\ stands for selection. 

\begin{theorem}\label{thm:gammaHatNormal}
Suppose that Assumption~\ref{assumpt:DMLboth} in the appendix in Section~\ref{sect:proofsOfDML}  holds 
(same as in Theorem~\ref{thm:asymptNormal}).
Let $\{a_{j}\}_{j\ge 1}$ be a sequence of deterministic, non-negative real numbers that diverges to $\infty$ as $\NN\rightarrow\infty$. Furthermore, consider $\hgammap=a_{\NN}\hgamma$ as above. Then, we have
\begin{displaymath}
	\sqrt{\NN}\hat\sigma^{-1}(\hgammap)(\hbhgp-\bhgp) 
	= \sqrt{\NN} \sigma^{-1}(\hbetaN-\betazero) + o_{\PP}(1)
\end{displaymath}
uniformly over laws $\PP$ of $S=(A,W,X,Y)$, where $\hsigma(\cdot)$ ist the estimator from Theorem~\ref{thm:estSDgamma} in the appendix, which consistently estimates $\sigma(\cdot)$ from~\ref{thm:asymptNormalgamma}. 
\end{theorem}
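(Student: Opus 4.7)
The plan is to avoid invoking Theorem~\ref{thm:asymptNormalgamma} at the random regularization parameter $\gamma = \hgammap$ (which would demand an extra uniformity argument in $\gamma$) and instead reduce the statement to the already-known expansion of $\hbetaN$ in Theorem~\ref{thm:asymptNormal}. Concretely, I would write
\[
\hbhgp - \bhgp = (\hbhgp - \hbetaN) + (\hbetaN - \betazero) - (\bhgp - \betazero),
\]
multiply through by $\sqrt{\NN}\hsigma^{-1}(\hgammap)$, and show that the first and third summands become $o_{\PP}(1)$, while the middle summand reduces to $\sqrt{\NN}\sigma^{-1}(\hbetaN - \betazero) + o_{\PP}(1)$ by Slutsky. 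Four inputs drive this decomposition.

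Input one: by Theorem~\ref{thm:stochasticOrderGammaN}, the choice $\gamma = \infty$ in the objective of~\eqref{eq:hgamma} cannot be improved upon by any deterministic $\gamma_{\NN} = o(\sqrt{\NN})$, so $\sqrt{\NN}/\hgamma = O_{\PP}(1)$, and consequently $\sqrt{\NN}/\hgammap = a_{\NN}^{-1}\sqrt{\NN}/\hgamma = o_{\PP}(1)$. Input two: Proposition~\ref{prop:populationBias}, in its probabilistic version noted in the text, then gives $\sqrt{\NN}(\bhgp - \betazero) = o_{\PP}(1)$, since the first-order condition for~\eqref{eq:regularizedObjective} yields $b^{\gamma} - \betazero = O(1/\gamma)$ as $\gamma \to \infty$. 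Input three: using the closed-form~\eqref{eq:regularizedBetaDMLtwo} together with the orthogonal-projection identity $(\one + (\sqrt{\gamma}-1)\PiIkcIk)^T(\one + (\sqrt{\gamma}-1)\PiIkcIk) = \one + (\gamma - 1)\PiIkcIk$, one can rewrite
\[
\hbg = \bigg(\tfrac{1}{\KK}\sum_k (\hbRxk)^T\bigl[\PiIkcIk + \tfrac{1}{\gamma-1}\one\bigr]\hbRxk\bigg)^{-1}\tfrac{1}{\KK}\sum_k(\hbRxk)^T\bigl[\PiIkcIk + \tfrac{1}{\gamma-1}\one\bigr]\hbRyk,
\]
and a first-order expansion in $\epsilon = 1/(\gamma-1)$—valid because, by Assumption~\ref{assumpt:DMLboth}, both $\tfrac{1}{\KK}\sum_k(\hbRxk)^T\PiIkcIk\hbRxk$ and $\tfrac{1}{\KK}\sum_k(\hbRxk)^T\hbRxk$ converge in probability to well-behaved matrices—yields $\hbg - \hbetaN = O_{\PP}(1/\gamma)$ as $\gamma \to \infty$, hence $\sqrt{\NN}(\hbhgp - \hbetaN) = O_{\PP}(\sqrt{\NN}/\hgammap) = o_{\PP}(1)$ by input one. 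Input four: the population variance $\sigma^2(\gamma)$ from Definition~\ref{def:asymptNormalgamma} is continuous in $\gamma$ with $\sigma^2(\gamma) \to \sigma^2 > 0$ as $\gamma \to \infty$; combining this with the consistency in Theorem~\ref{thm:estSDgamma} and the same expansion bookkeeping applied to $\hsigma^2(\cdot)$ gives $\hsigma^{-1}(\hgammap) = \sigma^{-1} + o_{\PP}(1)$, which suffices for the final Slutsky step.

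The main obstacle is input three, namely establishing $\hbg - \hbetaN = O_{\PP}(1/\gamma)$ uniformly enough in the law $\PP$ that it can be evaluated at the data-dependent argument $\gamma = \hgammap$. The core of this step is uniform control of the sample matrices: the smallest singular value of $\tfrac{1}{\KK}\sum_k (\hbRxk)^T\PiIkcIk\hbRxk$ must be bounded away from zero in probability (inheriting from the invertibility of $\EP[\Rx \Ra^T]\EP[\Ra\Ra^T]^{-1}\EP[\Ra\Rx^T]$ imposed via Assumption~\ref{assumpt:DMLboth}), and the operator norm of $\tfrac{1}{\KK}\sum_k(\hbRxk)^T\hbRxk$ from above; the standard matrix perturbation identity $(M + \epsilon E)^{-1} = M^{-1} - \epsilon M^{-1} E M^{-1} + O(\epsilon^2)$ then delivers the claimed rate. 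The same uniform matrix control underpins input four, where one must additionally check that the plug-in scores used in $\hsigma^2(\gamma)$ depend on $\gamma$ only through the same $1/(\gamma-1)$ parameter. All remaining uniformity in $\PP$ is inherited from the corresponding uniform statements in Theorems~\ref{thm:asymptNormal}, \ref{thm:asymptNormalgamma}, \ref{thm:stochasticOrderGammaN}, and~\ref{thm:estSDgamma}.
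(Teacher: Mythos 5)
Your proposal is correct and follows essentially the same route as the paper's proof: both rest on rescaling the closed form by $1/(\gamma-1)$ so that the projection $\PiIkcIk$ is perturbed by $\frac{1}{\hgammap-1}\one$, on Theorem~\ref{thm:stochasticOrderGammaN} giving $\sqrt{\NN}/\hgammap=o_{\PP}(1)$, on the probabilistic version of Proposition~\ref{prop:populationBias} for $\sqrt{\NN}(\bhgp-\betazero)=o_{\PP}(1)$, and on consistency of $\hsigma(\hgammap)$ for $\sigma$ followed by Slutsky. The only difference is cosmetic bookkeeping: you route the argument through the three-term decomposition via $\hbetaN$, whereas the paper substitutes directly inside the closed-form expression for $\sqrt{\NN}(\hbhgp-\bhgp)$.
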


Particularly, $\hbhgp$ and $\hbetaN$ are asymptotically equivalent. But $\hbhgp$ may  exhibit substantially better finite sample properties as we demonstrate in the subsequent section. 
Because $\hbhgp$ and $\hbetaN$ are asymptotically equivalent, 
the same result also holds for the selection estimator \regsDML.

The proof of Theorem~\ref{thm:gammaHatNormal} does not depend on the precise construction of $\hgammap$ and only uses that the random regularization parameter is of stochastic order larger than $\sqrt{\NN}$.
Thus,  Theorem~\ref{thm:gammaHatNormal} remains valid if the regularization parameter comes from k-class estimaton and is of the required stochastic order.  
The same stochastic order is also required to show that k-class estimators are asymptotically Gaussian~\citep{Nagar1959, Mariano2001}. 
\\

The $\KK$ sample splits are random. To reduce the effect of this randomness, we repeat the overall procedure $\Salg$ times and assemble the results as suggested in~\citet{Chernozhukov2018}. The assembled parameter estimate is given by the median of the individual parameter estimates; see Steps~\ref{algo:step9} and~\ref{algo:step10} of Algorithm~\ref{algo:Summary}.
The assembled variance  estimate is given by adding a correction term to the individual variances and subsequently taking the median of these corrected terms. The correction term 
measures the variability due to sample spitting across $s \in \indset{\Salg}$. 

It is possible that the assembled variance of \regDML\ is larger than the assembled variance of DML. In such a case, we do not use the \regDML\ estimator and select the DML estimator instead to ensure that the final estimator of $\betazero$ does not experience a larger estimated variance than DML. This is the  \regsDML\ scheme.
A summary of this procedure is given in Algorithm~\ref{algo:Summary}. 
\\

\begin{algorithm}[h!]
	\SetKwInOut{Input}{Input}
   	\SetKwInOut{Output}{Output}

 \Input{$\NN$ \iid\ realizations from the SEM~\eqref{eq:SEM}, a natural number $\Salg$,  a regularization parameter grid $\{\gamma_i\}_{i\in\indset{M}}$ for some natural number $M$, a non-negative diverging sequence $\{a_n\}_{n\ge 1}$.}
 \Output{An estimator of $\betazero$ in~\eqref{eq:SEM} together with its estimated asymptotic variance. }
 
 \For{$s\in\indset{\Salg}$}
 {
 Compute $\hbetaNi=\hbetaN$ and $\hsigmai^2=\hsigma^2$.
 
 Compute $\hat{b}_{s}^{\gamma_i}=\hat{b}^{\gamma_i}$ and $\hsigmai^2(\gamma_i)=\hsigma^2(\gamma_i)$ for $i\in\indset{M}$. 

Choose $\hgammai=\argmin_{\gamma\in\{\gamma_i\}_{i\in\indset{M}}} \big(\frac{1}{\NN}\hsigmai^2(\gamma) + \normone{\hbgi - \hbetaNi}^2\big)$ and let $\hgammapi=a_{\NN}\hgammai$.

Compute $\hbhgpi=\hat{b}^{\hgammapi}$ and $\hsigmai^2(\hgammapi)=\hsigma^2(\hgammapi)$. 
 }
 
 Compute $\hbetaMed = \mathrm{median}_{s\in\indset{\Salg}}(\hbetaNi)$.
 
 Compute $\hbMedReg = \mathrm{median}_{s\in\indset{\Salg}}(\hbhgpi)$.
 
 Compute $\hsigmaMed=\mathrm{median}_{s\in\indset{\Salg}}\big(\hsigmai^2 + (\hbetaNi-\hbetaMed)^2\big)$. \label{algo:step9}
 
 Compute $\hsigmaMedReg=\mathrm{median}_{s\in\indset{\Salg}}\big(\hsigmai^2(\hgammapi) + (\hbhgpi-\hbMedReg)^2\big)$.\label{algo:step10}

\eIf{$\hsigmaMedReg<\hsigmaMed$
      }{Take the parameter estimate $\hbMedReg$ together with its associated estimated asymptotic variance $\frac{1}{\NN}\hsigmaMedReg$.
      }{Take the parameter estimate $\hbetaMed$ together with its associated estimated asymptotic variance $\frac{1}{\NN}\hsigmaMed$.
      }

 \caption{\regsDML\ in a PLM with confounding variables.}\label{algo:Summary}
\end{algorithm}

\section{Numerical Experiments}\label{sect:simulation}

This section illustrates the performance of the DML, \regDML, and \regsDML\ estimators in a  simulation study and for an empirical dataset. 
Our implementation is available in the \textsf{R}-package \texttt{dmlalg}~\citep{dmlalg}. 
We employ the \DMLtwo\ method  
and $\KK=2$ and $\Salg=100$ in Algorithm~\ref{algo:Summary}. 
Furthermore, we compare our estimation schemes with 
the following three k-class estimators: LIML, Fuller(1), and Fuller(4). 
On each of the $\KK$ sample splits, we compute the regularization parameter of the respective k-class estimation procedure and average them. Then, we compute the corresponding $\gamma$-value and proceed as for the other regularized estimators according to Algorithm~\ref{algo:Summary}. 

The first example in Section~\ref{sect:exampleForest}  considers an overidentified model in which the dimension of $A$ is
larger than the dimension of $X$. 
The conditional expectations acting as nuisance parameters are estimated with random forests. 
The second example in Section~\ref{sect:empirical} considers justidentified real-world data. The conditional expectations are also estimated with random forests. 

An example where the conditional expectations are estimated with splines is given in Section~\ref{sect:ourContribution}. Additional empirical results are provided in the appendix in Sections~\ref{sect:additioinalSimulation}, \ref{sect:strong-weak}, and~\ref{sect:counterexamples}.
The latter section considers examples where DML, \regDML, and \regsDML\ do not work well in finite sample situations: we follow the NCP (No Cherry Picking) guideline~\citep{Buhlmann2018} to possibly enhance further insights into the finite sample behavior.
Section~\ref{sect:strong-weak} in the appendix
 presents examples where the link $A\rightarrow X$ is weak and examples illustrating the bias-variance tradeoff of the respective estimated quantities as a function of $\gamma$.

\subsection{Simulation Example with Random Forests}\label{sect:exampleForest}

We generate data from the SEM in Figure~\ref{fig:simulation2}. 
This SEM 
satisfies the identifiability condition~\eqref{eq:identificationCondition} because $A_1$ and $A_2$ are independent of $H$ given $W_1$ and $W_2$; a proof is given in the appendix in Section~\ref{appendix:proofRandomForestSEM}. 
The model is overidentified because the dimension of $A=(A_1,A_2)$ is larger than the dimension of $X$. The variable $A_1$ directly influences $A_2$ that in turn directly affects $W_1$. Both $W_1$ and $W_2$ directly influence $H$. Both $A_1$ and $A_2$ directly influence $X$. The variable $A_1$ is a source node. 
\\

\begin{figure}[h!]
	\centering
	\caption[]{\label{fig:simulation2}An SEM and its associated causal graph.}
	\begin{tabular}{cc}
	\begin{tabular}{l}
	$\displaystyle 
	\begin{array}{r}
		(\eps_{A_1}, \eps_{A_2,} \eps_{W_1}, \eps_{W_2},\eps_H, \eps_X, \eps_Y)\sim\mathcal{N}_7(\bo,\one)\\	
	\end{array}$
	\\
	$\displaystyle
	\begin{array}{lcl} 
		A_1 &\leftarrow& \one_{\{\eps_{A_1}\le 0\}}\\
		A_2 &\leftarrow& -4A_1+\eps_{A_2}\\
		W_1 &\leftarrow& 2A_2+\eps_{W_1}\\
		W_2&\leftarrow& \eps_{W_2}\\
		H &\leftarrow& 2\one_{\{\sin(\pi W_1)\cdot\tanh(W_2) \ge 0\}}+ \eps_H\\
		X&\leftarrow& 1.5A_1-0.5A_2+\tanh(H) \\
		&&\quad - 2\one_{\{W_1\ge 0\}}\one_{\{W_2\le 0\}} + \eps_X\\
		Y &\leftarrow&  X  + \one_{\{W_2\le 0\}} + \sin(\pi H)+ \eps_Y
	\end{array}$ 
	\end{tabular}
	& 
	\begin{tabular}{c}
          \includegraphics[width=0.35\textwidth]{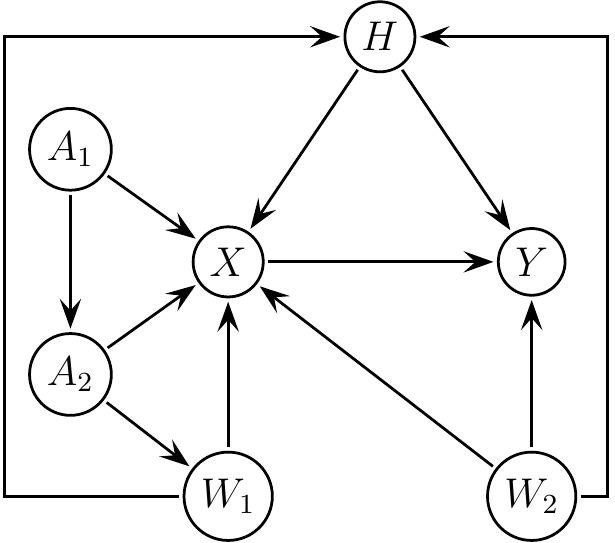}
	\end{tabular}
\end{tabular}
\end{figure} 

We simulate $M=1000$ datasets each from the SEM in Figure~\ref{fig:simulation2} for a range of sample sizes. 
For every dataset, we compute a parameter estimate and an associated confidence interval with DML, \regDML, and \regsDML. We choose $\KK=2$ and $\Salg=100$ in Algorithm~\ref{algo:Summary} and estimate the conditional expectations with random forests consisting of $500$ trees that have a minimal node size of $5$. 
\\

Figure~\ref{fig:COVERsimulation2} illustrates our findings. It gives  the coverage, power, and relative length of the $95\%$ confidence intervals for a range of sample sizes $\NN$ of the three methods. The blue and green curves correspond to \regDML\ and \regsDML, respectively.
If the blue curve is not visible in Figure~\ref{fig:COVERsimulation2}, it coincides with the green one. The two regularization methods perform similarly because
regularization can considerably improve DML. 
The red curves correspond to DML. 
If the red curves are not visible, they coincide with LIML, whose results are displayed in orange. 
The Fuller(1) and Fuller(4) estimators correspond to purple and cyan, respectively. 

The top left plot in Figure~\ref{fig:COVERsimulation2} displays the coverages as interconnected dots. The dashed lines represent $95\%$ confidence regions of the coverages. These confidence regions are computed with respect to uncertainties in the $M$ simulation runs.
No coverage region falls below the nominal $95\%$ level that is marked by the gray line. 

The bottom eft plot in Figure~\ref{fig:COVERsimulation2} shows that the power of DML, LIML, and Fuller(1) is lower for small sample sizes and increases gradually. 
The power of the other regularization methods remains approximately $1$.
The dashed lines represent $95\%$ confidence regions that are computed with respect to uncertainties in the $M$ simulation runs.

The right plot in Figure~\ref{fig:COVERsimulation2} displays boxplots of the scaled lengths of the confidence intervals. 
For each $\NN$, the  confidence interval lengths of all methods are divided by the median confidence interval lengths of DML. 
The length of the \regsDML\ confidence intervals is around $50\%-80\%$ the length of DML's.
Nevertheless, the coverage of \regsDML\ remains around $95\%$. 
The LIML, Fuller(1), and Fuller(4) confidence intervals are considerably longer than \regsDML's. 
Although the confidence intervals  of \regsDML\ are the shortest of all considered methods, its coverage remains valid. 

\begin{figure}[h!]
	\centering
	\caption[]{\label{fig:COVERsimulation2} 
	The results come from $M=1000$ simulation runs each from the SEM in Figure~\ref{fig:simulation2} for a range of sample sizes $\NN$ and with $\KK=2$ and $\Salg=100$ in Algorithm~\ref{algo:Summary}. 
	The nuisance functions are estimated with random forests. 
	The figure displays the coverage of two-sided confidence intervals for $\betazero$, power for two-sided testing of the 
	hypothesis $H_0:\ \betazero = 0$, and scaled lengths of two-sided confidence intervals of DML (red),  \regDML\ (blue),  \regsDML\  (green), 
	LIML (orange), Fuller(1) (purple), and Fuller(4) (cyan), 
	where all results are at level $95\%$.
	At each $\NN$, the lengths of the confidence intervals are scaled with the median length from DML. 
	The shaded regions in the coverage and the power plots represent $95\%$ confidence bands with respect to the $M$ simulation runs.
	The blue and green lines as 
	well as the red and orange ones 
	are indistinguishable in the left panel.
	}
	\includegraphics[width=\textwidth]{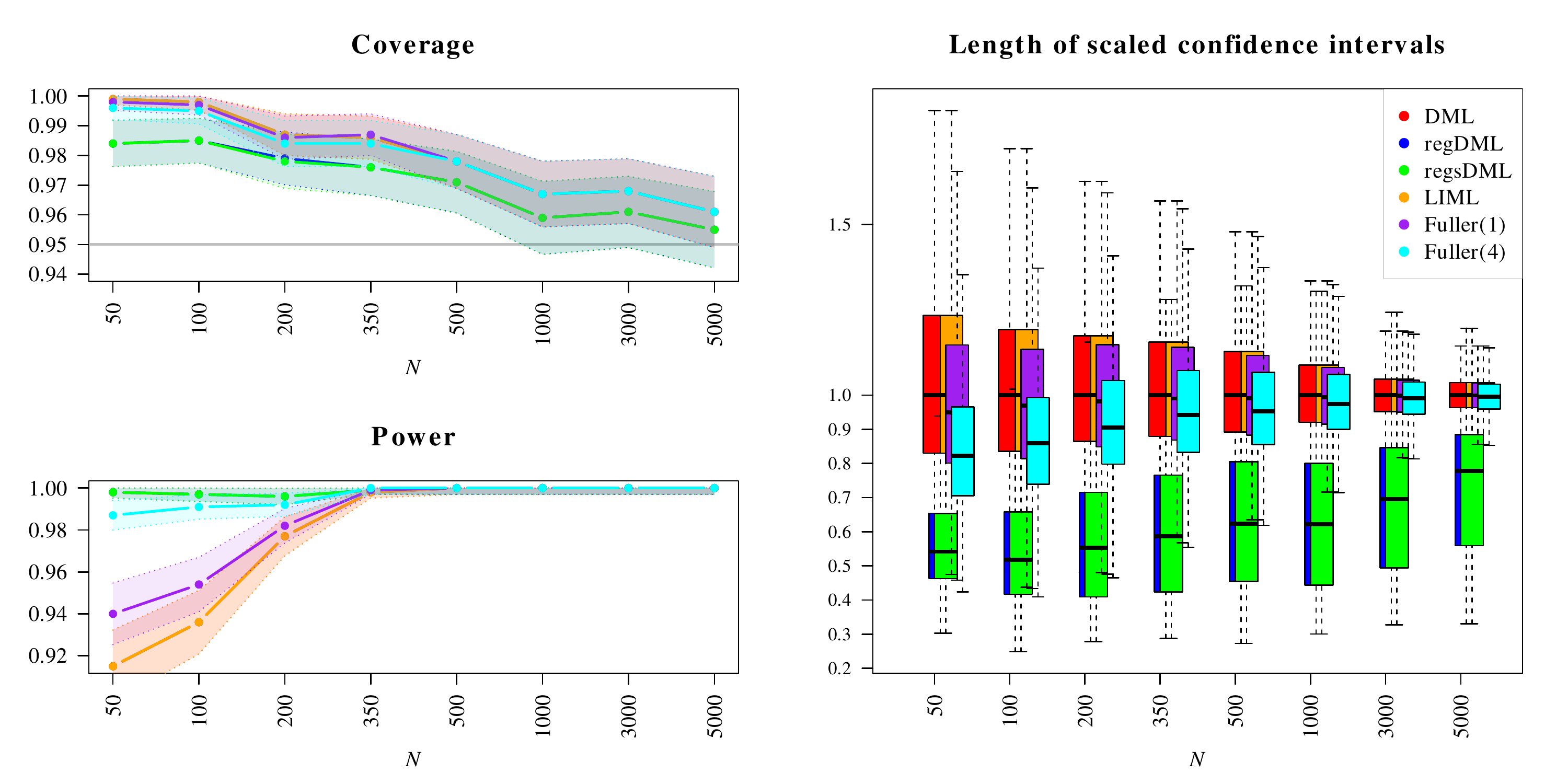}
\end{figure}

Simulation results with $\betazero=0$ in the SEM in Figure~\ref{fig:simulation2} are presented in Figure~\ref{fig:COVERsimulation2Beta0} in the appendix in Section~\ref{sect:additioinalSimulation}.

\subsection{Real Data Example}\label{sect:empirical}

We apply the DML and \regsDML\ methods to a real dataset. 
We estimate the linear effect $\betazero$ of institutions on economic performance following the work of~\citet{Acemoglu2001} and~\citet{Chernozhukov2018}. Countries with better institutions achieve a greater level of income per capita, and wealthy economies can afford better institutions. This may cause simultaneity. To overcome it, mortality rates of the first European settlers in colonies are considered as a source of exogenous variation in institutions. 
For further details, we refer to~\citet{Acemoglu2001} and~\cite{Chernozhukov2018}. 
The data is available in the \textsf{R}-package hdm~\citep{hdm2016} and is called AJR. In our notation, the response $Y$ is the GDP, the covariate $X$ the average protection against expropriation risk, the variable $A$ the logarithm of settler mortality, and the covariate $W$ consists of the latitude, the squared latitude, and the binary factors Africa, Asia, North America, and South America. 
That is, we adjust nonparametrically for the latitude and geographic information. 
\\

We choose $\KK=2$ and $\Salg=100$ in Algorithm~\ref{algo:Summary} and compute the conditional expectations with random forests with $1000$ trees that have a minimal node size of $5$. 
The estimation results are displayed in Table~\ref{tab:empirical}. 
This table gives the estimated linear coefficient, its standard deviation, and a confidence interval for $\betazero$ for DML and \regsDML. 
The coefficient estimate of DML  is not significant because the respective confidence interval includes $0$. The  \regsDML\ estimate  is significant because it has a smaller standard deviation than the DML estimate. Note that the coefficient estimate of \regsDML\ falls within the DML confidence interval. 
\\

\begingroup
\begin{table}
\centering
\setlength{\tabcolsep}{10pt} 
\renewcommand{\arraystretch}{1.25} 
\begin{tabular}{| l | c | c | c | }
\hline
& Estimate of $\betazero$ & Standard error & Confidence interval for $\betazero$\\
\hline
DML & $0.739$ & $0.459$ & $[-0.161,  1.639]$\\
\hline
\regsDML & $0.688$ & $0.229$ & $[0.239, 1.136]$\\
\hline
\end{tabular}
\caption{\label{tab:empirical}Coefficient estimate, its standard error, and a confidence interval with \regsDML\ and DML on the AJR dataset, where  $\KK=2$ and $\Salg=100$ in Algorithm~\ref{algo:Summary}, and where the conditional expectations are estimated with random forests consisting of $1000$ trees that have a minimal node size of $5$.}
\end{table}
\endgroup

The AJR dataset has also been analyzed in~\citet{Chernozhukov2018}. They also estimate  conditional expectations with random forests consisting of $1000$ trees that have a minimal node size of $5$ but implicitly assume 
an additional homoscedasticity condition for the errors $\Ry-\Rx^T\betazero$; see~\citet{gitChernozhukov}.
Such a homoscedastic error assumption is questionable though.
Their procedure leads to a smaller estimate of the standard deviation of DML than what we obtain.

\section{Conclusion}\label{sect:conclusion}

We  extended and regularized double machine learning (DML) in potentially overidentified partially linear models (PLMs) with hidden variables. 
Our goal was to estimate the linear coefficient $\betazero$ of the PLM.
Hidden variables confound the observables, which can cause endogeneity.
For instance, a clinical study may experience an endogeneity issue  if a treatment is not randomly assigned and subjects receiving different treatments differ in other ways than the treatment~\citep{Okui2012}. 
In such situations, employing estimation methods that do not account for endogeneity lead to biased estimators~\citep{Fuller1987}. 
\\

Our contribution was twofold. First, we formulated the 
PLM as a structural equation model (SEM) and imposed an identifiability condition on it to recover the population parameter $\betazero$. 
We estimated $\betazero$ using DML
similarly to~\citet{Chernozhukov2018}. However, our setting is more general than the one considered in~\citet{Chernozhukov2018}
because we allow the predictors to be multivariate, and we impose a moment condition instead of restricting conditional moments. 
The DML estimation procedure allows biased estimators of additional nuisance functions to be plugged into the estimating equation of $\betazero$.
The resulting estimator of $\betazero$ is asymptotically Gaussian and converges at the parametric rate of $\NN^{-\frac{1}{2}}$. 
However, DML has a two-stage least squares (TSLS) interpretation and may therefore 
lead to overly wide 
confidence intervals. 
 
Second, we proposed a regularization-only DML scheme, \regDML, and a regularization-selection DML scheme, \regsDML.
The latter has shorter confidence intervals by construction because it 
selects between DML and \regDML\ depending on whose estimated standard deviation is smaller. 
Although \regsDML\ and plain DML are asymptotically equivalent,
 \regsDML\ leads to drastically shorter confidence intervals  for finite sample sizes. Nevertheless, coverage guarantees for $\betazero$ remain.
The \regDML\ estimator is similar to k-class estimation~\citep{Theil1961} and anchor regression~\citep{Rothenhausler2018, Buehlmann2018, Jakobsen2020} but allows potentially complex partially linear models and chooses a data-driven regularization parameter. 
\\

Empirical examples demonstrated our methodological and theoretical developments. The results showed that \regsDML\ is a highly effective method to increase the power and sharpness of statistical inference. 
The DML estimator has a TSLS interpretation. Therefore, if the confounding is strong, the DML estimator leads to overly wide confidence intervals and can be substantially biased. In such a case, \regsDML\ drastically reduces the width of the confidence intervals but may inherit additional bias from DML. This effect can be particularly pronounced for small sample sizes.
Section~\ref{sect:counterexamples} in the appendix presents 
examples with strong and reduced confounding and demonstrates the coverage behavior of DML and \regsDML. 
Section~\ref{sect:strong-weak} in the appendix analyzes the performance of our methods if the strength of the link $A\rightarrow X$ varies, and investigates the bias-variance tradeoff of the respective estimated quantities for different values of the regularization parameter. 

Although a wide range of machine learners can be employed to estimate the nuisance functions, we observed that additive splines can estimate more precise results than random forests if the underlying structure is additive in good approximation. 
This effect is particularly pronounced if the sample size is small. 
If such a finding is to be expected, it may be worthwhile to use structured models rather than ``general'' machine learning algorithms, especially with small or moderate sample size. 
Our \regsDML\ methodology can be used with the implementation that
is available in the \textsf{R}-package \texttt{dmlalg}~\citep{dmlalg}.

\section*{Acknowledgements}

We thank Matthias L\"{o}ffler for constructive comments. 
\\

This project has received funding from the European Research Council (ERC) under the European Union’s Horizon 2020 research and innovation programme (grant agreement No. 786461).

\phantomsection
\addcontentsline{toc}{section}{References}
\bibliography{references}

\begin{thebibliography}{99}
\providecommand{\natexlab}[1]{#1}
\providecommand{\url}[1]{\texttt{#1}}
\expandafter\ifx\csname urlstyle\endcsname\relax
  \providecommand{\doi}[1]{doi: #1}\else
  \providecommand{\doi}{doi: \begingroup \urlstyle{rm}\Url}\fi

\bibitem[Acemoglu et~al.(2001)Acemoglu, Johnson, and Robinson]{Acemoglu2001}
D.~Acemoglu, S.~Johnson, and J.~A. Robinson.
\newblock The colonial origins of comparative development: An empirical
  investigation.
\newblock \emph{The American Economic Review}, 91\penalty0 (5):\penalty0
  1369--1401, 2001.

\bibitem[Ai and Chen(2003)]{Ai2003}
C.~Ai and X.~Chen.
\newblock Efficient estimation of models with conditional moment restrictions
  containing unknown functions.
\newblock \emph{Econometrica}, 71\penalty0 (6):\penalty0 1795--1843, 2003.

\bibitem[Amemiya(1974)]{Amemiya1974}
T.~Amemiya.
\newblock The nonlinear two-stage least-squares estimator.
\newblock \emph{Journal of Econometrics}, 2\penalty0 (2):\penalty0 105--110,
  1974.

\bibitem[Amemiya(1985)]{Amemiya1985}
T.~Amemiya.
\newblock \emph{Advanced Econometrics}.
\newblock Harvard University Press, Cambridge, Massachusetts, 1985.

\bibitem[Anderson et~al.(2010)Anderson, Kunitomo, and
  Matsushita]{Anderson-Kunitomo-Matsushita2010}
T.~Anderson, N.~Kunitomo, and Y.~Matsushita.
\newblock On the asymptotic optimality of the liml estimator with possibly many
  instruments.
\newblock \emph{Journal of Econometrics}, 157\penalty0 (2):\penalty0 191--204,
  2010.

\bibitem[Anderson(1983)]{Anderson1983}
T.~W. Anderson.
\newblock Some recent developments on the distributions of single-equation
  estimators.
\newblock In A.~Deaton, D.~McFadden, and H.~Sonnenschein, editors,
  \emph{Advances in econometrics}, Econometric Society Monographs in
  Quantitative Economics, chapter~4, pages 109--122. Cambridge University
  Press, Cambridge, 1983.

\bibitem[Anderson(2005)]{Anderson2005}
T.~W. Anderson.
\newblock Origins of the limited information maximum likelihood and two-stage
  least squares estimators.
\newblock \emph{Journal of Econometrics}, 127\penalty0 (1):\penalty0 1--16,
  2005.

\bibitem[Anderson and Rubin(1949)]{Anderson-Rubin1949}
T.~W. Anderson and H.~Rubin.
\newblock Estimation of the parameters of a single equation in a complete
  system of stochastic equations.
\newblock \emph{The Annals of Mathematical Statistics}, 20\penalty0
  (1):\penalty0 46--63, 1949.

\bibitem[Anderson and Sawa(1979)]{Anderson-Sawa1979}
T.~W. Anderson and T.~Sawa.
\newblock Evaluation of the distribution function of the two-stage least
  squares estimate.
\newblock \emph{Econometrica}, 47\penalty0 (1):\penalty0 163--182, 1979.

\bibitem[Anderson et~al.(1982)Anderson, Kunitomo, and
  Sawa]{Anderson-Kunitomo-Sawa1982}
T.~W. Anderson, N.~Kunitomo, and T.~Sawa.
\newblock Evaluation of the distribution function of the limited information
  maximum likelihood estimator.
\newblock \emph{Econometrica}, 50\penalty0 (4):\penalty0 1009--1027, 1982.

\bibitem[Anderson et~al.(1986)Anderson, Kunitomo, and Morimune]{Anderson1986}
T.~W. Anderson, N.~Kunitomo, and K.~Morimune.
\newblock Comparing single-equation estimators in a simultaneous equation
  system.
\newblock \emph{Econometric Theory}, 2\penalty0 (1):\penalty0 1--32, 1986.

\bibitem[Andrews et~al.(2019)Andrews, Stock, and Sun]{AndrewsForthcoming}
I.~Andrews, J.~Stock, and L.~Sun.
\newblock Weak instruments in {IV} regression: Theory and practice.
\newblock \emph{Annual Review of Economics}, 11:\penalty0 727--753, 2019.

\bibitem[Angrist et~al.(1996)Angrist, Imbens, and Rubin]{Angrist1996}
J.~D. Angrist, G.~W. Imbens, and D.~B. Rubin.
\newblock Identification of causal effects using instrumental variables.
\newblock \emph{Journal of the American Statistical Association}, 91\penalty0
  (434):\penalty0 444--455, 1996.

\bibitem[Athey et~al.(2019)Athey, Tibshirani, and
  Wager]{Athey-Tibshirani-Wager2018}
S.~Athey, J.~Tibshirani, and S.~Wager.
\newblock Generalized random forests.
\newblock \emph{The Annals of Statistics}, 47\penalty0 (2):\penalty0
  1148--1178, 2019.

\bibitem[Bang and Robins(2005)]{Robins2005}
H.~Bang and J.~M. Robins.
\newblock Doubly robust estimation in missing data and causal inference models.
\newblock \emph{Biometrics}, 61\penalty0 (4):\penalty0 962--972, 2005.

\bibitem[Basmann(1957)]{Basmann1957}
R.~L. Basmann.
\newblock A generalized classical method of linear estimation of coefficients
  in a structural equation.
\newblock \emph{Econometrica}, 25\penalty0 (1):\penalty0 77--83, 1957.

\bibitem[Belloni and Chernozhukov(2013)]{Belloni-Chernozhukov2013}
A.~Belloni and V.~Chernozhukov.
\newblock Least squares after model selection in high-dimensional sparse
  models.
\newblock \emph{Bernoulli}, 19\penalty0 (2):\penalty0 521--547, 2013.

\bibitem[Berndt et~al.(1974)Berndt, Hall, Hall, and Hausman]{Berndt1974}
E.~R. Berndt, B.~H. Hall, R.~E. Hall, and J.~A. Hausman.
\newblock Estimation and inference in nonlinear structural models.
\newblock \emph{Annals of Economic and Social Measurement}, 3\penalty0
  (4):\penalty0 653--665, 1974.

\bibitem[Bickel(1982)]{Bickel1982}
P.~J. Bickel.
\newblock On adaptive estimation.
\newblock \emph{The Annals of Statistics}, 10\penalty0 (3):\penalty0 647--671,
  1982.

\bibitem[Bickel et~al.(2009)Bickel, Ritov, and Tsybakov]{Bickel2009}
P.~J. Bickel, Y.~Ritov, and A.~B. Tsybakov.
\newblock Simultaneous analysis of lasso and dantzig selector.
\newblock \emph{The Annals of Statistics}, 37\penalty0 (4):\penalty0
  1705--1732, 2009.

\bibitem[Bound et~al.(1995)Bound, Jaeger, and Baker]{Bound1995}
J.~Bound, D.~A. Jaeger, and R.~M. Baker.
\newblock Problems with instrumental variables estimation when the correlation
  between the instruments and the endogenous explanatory variable is weak.
\newblock \emph{Journal of the American Statistical Association}, 90\penalty0
  (430):\penalty0 443--450, 1995.

\bibitem[Bowden and Turkington(1985)]{Bowden1985}
R.~J. Bowden and D.~A. Turkington.
\newblock \emph{Instrumental variables}.
\newblock Econometric Society Monographs. Cambridge University Press,
  Cambridge, 1985.

\bibitem[B\"{u}hlmann(2020)]{Buehlmann2018}
P.~B\"{u}hlmann.
\newblock Invariance, causality and robustness.
\newblock \emph{Statistical Science}, 35\penalty0 (3):\penalty0 404--426, 2020.

\bibitem[B\"{u}hlmann and van~de Geer(2011)]{Buehlmann2011}
P.~B\"{u}hlmann and S.~van~de Geer.
\newblock \emph{Statistics for High-Dimensional Data: Methods, Theory and
  Applications}.
\newblock Springer Series in Statistics. Springer, Heidelberg, 2011.

\bibitem[B\"{u}hlmann and {van de Geer}(2018)]{Buhlmann2018}
P.~B\"{u}hlmann and S.~{van de Geer}.
\newblock Statistics for big data: A perspective.
\newblock \emph{Statistics \& Probability Letters}, 136:\penalty0 37--41, 2018.

\bibitem[Candes and Tao(2007)]{Candes-Tao2007}
E.~Candes and T.~Tao.
\newblock The dantzig selector: Statistical estimation when $p$ is much larger
  than $n$.
\newblock \emph{The Annals of Statistics}, 35\penalty0 (6):\penalty0
  2313--2351, 2007.

\bibitem[Chen et~al.(2016)Chen, Liang, and Zhou]{Chen2016}
B.~Chen, H.~Liang, and Y.~Zhou.
\newblock {GMM} estimation in partial linear models with endogenous covariates
  causing an over-identified problem.
\newblock \emph{Communications in Statistics - Theory and Methods}, 45\penalty0
  (11):\penalty0 3168--3184, 2016.

\bibitem[Chen et~al.(2021)Chen, Huang, and Tien]{Chen2019}
J.~Chen, C.-H. Huang, and J.-J. Tien.
\newblock Debiased/double machine learning for instrumental variable quantile
  regressions.
\newblock \emph{Econometrics}, 9\penalty0 (2), 2021.

\bibitem[Chernozhukov et~al.(2016)Chernozhukov, Hansen, and Spindler]{hdm2016}
V.~Chernozhukov, C.~Hansen, and M.~Spindler.
\newblock {hdm}: High-dimensional metrics.
\newblock \emph{\textsf{R} Journal}, 8\penalty0 (2):\penalty0 185--199, 2016.

\bibitem[Chernozhukov et~al.(2017)Chernozhukov, Chetverikov, Demirer, Duflo,
  Newey, and Robins]{gitChernozhukov}
V.~Chernozhukov, D.~Chetverikov, M.~Demirer, E.~Duflo, W.~Newey, and J.~Robins.
\newblock Repo for the paper ``double/debiased machine learning for treatment
  and structural parameters''.
\newblock \url{https://github.com/VC2015/DMLonGitHub}, 2017.
\newblock Accessed: September 23, 2020.

\bibitem[Chernozhukov et~al.(2018)Chernozhukov, Chetverikov, Demirer, Duflo,
  Hansen, Newey, and Robins]{Chernozhukov2018}
V.~Chernozhukov, D.~Chetverikov, M.~Demirer, E.~Duflo, C.~Hansen, W.~Newey, and
  J.~Robins.
\newblock Double/debiased machine learning for treatment and structural
  parameters.
\newblock \emph{The Econometrics Journal}, 21\penalty0 (1):\penalty0 C1--C68,
  2018.

\bibitem[Chiang et~al.(2021)Chiang, Kato, Ma, and Sasaki]{Chiang2020}
H.~D. Chiang, K.~Kato, Y.~Ma, and Y.~Sasaki.
\newblock Multiway cluster robust double/debiased machine learning.
\newblock \emph{Journal of Business \& Economic Statistics}, 0\penalty0
  (0):\penalty0 1--11, 2021.

\bibitem[Colangelo and Lee(2020)]{Colangelo2020}
K.~Colangelo and Y.-Y. Lee.
\newblock Double debiased machine learning nonparametric inference with
  continuous treatments, 2020.
\newblock Preprint arXiv:2004.03036.

\bibitem[Cragg(1967)]{Cragg1967}
J.~G. Cragg.
\newblock On the relative small-sample properties of several
  structural-equation estimators.
\newblock \emph{Econometrica}, 35\penalty0 (1):\penalty0 89--110, 1967.

\bibitem[Crown et~al.(2011)Crown, Henk, and Vanness]{Crown2011}
W.~H. Crown, H.~J. Henk, and D.~J. Vanness.
\newblock Some cautions on the use of instrumental variables estimators in
  outcomes research: How bias in instrumental variables estimators is affected
  by instrument strength, instrument contamination, and sample size.
\newblock \emph{Value in Health}, 14\penalty0 (8):\penalty0 1078--1084, 2011.

\bibitem[Cui and {Tchetgen Tchetgen}(2020)]{Tchetgen2020}
Y.~Cui and E.~{Tchetgen Tchetgen}.
\newblock Selective machine learning of doubly robust functionals, 2020.
\newblock Preprint arXiv:1911.02029.

\bibitem[DasGupta(2008)]{DasGupta2008}
A.~DasGupta.
\newblock \emph{Asymptotic theory of statistics and probability}.
\newblock Springer Texts in Statistics. Springer, New York, 2008.

\bibitem[DiazOrdaz et~al.(2019)DiazOrdaz, Daniel, and Kreif]{DiazOrdaz2019}
K.~DiazOrdaz, R.~Daniel, and N.~Kreif.
\newblock Data-adaptive doubly robust instrumental variable methods for
  treatment effect heterogeneity, 2019.
\newblock Preprint arXiv:1802.02821.

\bibitem[Durrett(2010)]{Durrett2010}
R.~Durrett.
\newblock \emph{Probability: Theory and examples}.
\newblock Cambridge Series in Statistical and Probabilistic Mathematics.
  Cambridge University Press, Cambridge, 4 edition, 2010.

\bibitem[Emmenegger(2021)]{dmlalg}
C.~Emmenegger.
\newblock \emph{{dmlalg}: Double machine learning algorithms}, 2021.
\newblock URL \url{https://cran.r-project.org/web/packages/dmlalg/index.html}.
\newblock \textsf{R}-package available on CRAN.

\bibitem[Farbmacher et~al.(2020)Farbmacher, Huber, Laff\'{e}rs, Langen, and
  Spindler]{Farbmacher2020}
H.~Farbmacher, M.~Huber, L.~Laff\'{e}rs, H.~Langen, and M.~Spindler.
\newblock Causal mediation analysis with double machine learning, 2020.
\newblock Preprint arXiv:2002.12710.

\bibitem[Florens et~al.(2012)Florens, Johannes, and Van~Bellegem]{Florens2012}
J.-P. Florens, J.~Johannes, and S.~Van~Bellegem.
\newblock Instrumental regression in partially linear models.
\newblock \emph{The Econometrics Journal}, 15\penalty0 (2):\penalty0 304--324,
  2012.

\bibitem[Fuller(1977)]{Fuller1977}
W.~A. Fuller.
\newblock Some properties of a modification of the limited information
  estimator.
\newblock \emph{Econometrica}, 45\penalty0 (4):\penalty0 939--53, 1977.

\bibitem[Fuller(1987)]{Fuller1987}
W.~A. Fuller.
\newblock \emph{Measurement error models}.
\newblock Wiley series in probability and mathematical statistics. John Wiley
  \& Sons, New York, 1987.

\bibitem[Hahn et~al.(2004)Hahn, Hausman, and Kuersteiner]{Hahn2004}
J.~Hahn, J.~Hausman, and G.~Kuersteiner.
\newblock Estimation with weak instruments: Accuracy of higher-order bias and
  mse approximations.
\newblock \emph{The Econometrics Journal}, 7\penalty0 (1):\penalty0 272--306,
  2004.

\bibitem[Hansen(1982)]{Hasen1982}
L.~P. Hansen.
\newblock Large sample properties of generalized method of moments estimators.
\newblock \emph{Econometrica}, 50\penalty0 (4):\penalty0 1029--1054, 1982.

\bibitem[Hansen(1985)]{Hansen1985}
L.~P. Hansen.
\newblock A method for calculating bounds on the asymptotic covariance matrices
  of generalized method of moments estimators.
\newblock \emph{Journal of Econometrics}, 30\penalty0 (1):\penalty0 203--238,
  1985.

\bibitem[H\"{a}rdle et~al.(2000)H\"{a}rdle, Liang, and Gao]{Haerdle2000}
W.~H\"{a}rdle, H.~Liang, and J.~Gao.
\newblock \emph{Partially linear models}.
\newblock Contributions to Statistics. Springer, Berlin Heidelberg, 2000.

\bibitem[H\"{a}rdle et~al.(2004)H\"{a}rdle, M\"{u}ller, Sperlich, and
  Werwatz]{Haerdle2004}
W.~H\"{a}rdle, M.~M\"{u}ller, S.~Sperlich, and A.~Werwatz.
\newblock \emph{Nonparametric and semiparametric models}.
\newblock Springer series in statistics. Springer, Berlin, 2004.

\bibitem[Henderson and Searle(1981)]{Henderson1981}
H.~V. Henderson and S.~R. Searle.
\newblock On deriving the inverse of a sum of matrices.
\newblock \emph{SIAM Review}, 23\penalty0 (1):\penalty0 53--60, 1981.

\bibitem[Hill et~al.(2011)Hill, Griffiths, and Lim]{Hill2011}
R.~C. Hill, W.~E. Griffiths, and G.~C. Lim.
\newblock \emph{Principles of econometrics}.
\newblock Wiley, Hoboken, New Jersey, 4 edition, 2011.

\bibitem[Hillier and Skeels(1993)]{Hillier1993}
G.~H. Hillier and C.~L. Skeels.
\newblock Some further exact results for structural equation estimators.
\newblock In P.~C.~B. Phillips, editor, \emph{Models, Methods and Applications
  of Econometrics: essays in Honor of A. R. Bergstroms}, pages 117--139.
  Blackwell, Cambridge, Massachusetts, 1993.

\bibitem[Horowitz(2011)]{Horowitz2011}
J.~L. Horowitz.
\newblock Applied nonparametric instrumental variables estimation.
\newblock \emph{Econometrica}, 79\penalty0 (2):\penalty0 347--394, 2011.

\bibitem[Jakobsen and Peters(2020)]{Jakobsen2020}
M.~E. Jakobsen and J.~Peters.
\newblock Distributional robustness of {K}-class estimators and the {PULSE},
  2020.
\newblock Preprint arXiv:2005.03353.

\bibitem[Knaus(2020)]{Knaus2020}
M.~C. Knaus.
\newblock Double machine learning based program evaluation under
  unconfoundedness, 2020.
\newblock Preprint arXiv:2003.03191.

\bibitem[Koltchinskii and Yuan(2010)]{Koltchinskii-Yuan2010}
V.~Koltchinskii and M.~Yuan.
\newblock Sparsity in multiple kernel learning.
\newblock \emph{The Annals of Statistics}, 38\penalty0 (6):\penalty0
  3660--3695, 2010.

\bibitem[Kozbur(2020)]{Kozbur2020}
D.~Kozbur.
\newblock Analysis of testing-based forward model selection.
\newblock \emph{Econometrica}, 88\penalty0 (5):\penalty0 2147--2173, 2020.

\bibitem[Lattimore and Szepesv\'{a}ri(2020)]{Lattimore2020}
T.~Lattimore and C.~Szepesv\'{a}ri.
\newblock \emph{Bandit algorithms}.
\newblock Cambridge University Press, Cambridge, 2020.

\bibitem[Lauritzen(1996)]{Lauritzen1996}
S.~L. Lauritzen.
\newblock \emph{Graphical models}.
\newblock Oxford statistical science series. Clarendon Press, Oxford, 1996.

\bibitem[Lewis and Syrgkanis(2020)]{Lewis2020}
G.~Lewis and V.~Syrgkanis.
\newblock Double/debiased machine learning for dynamic treatment effects, 2020.
\newblock Preprint arXiv:2002.07285.

\bibitem[Liu et~al.(2021)Liu, Zhang, and Zhou]{Liu2020}
M.~Liu, Y.~Zhang, and D.~Zhou.
\newblock Double/debiased machine learning for logistic partially linear model.
\newblock \emph{The Econometrics Journal}, 2021.

\bibitem[Lloyd(1975)]{Lloyd1975}
W.~P. Lloyd.
\newblock A note on the use of the two-stage least squares estimator in
  financial models.
\newblock \emph{The Journal of Financial and Quantitative Analysis},
  10\penalty0 (1):\penalty0 143--149, 1975.

\bibitem[Ma and Carroll(2006)]{Ma2006}
Y.~Ma and R.~J. Carroll.
\newblock Locally efficient estimators for semiparametric models with
  measurement error.
\newblock \emph{Journal of the American Statistical Association}, 101\penalty0
  (476):\penalty0 1465--1474, 2006.

\bibitem[Maathuis et~al.(2019)Maathuis, Drton, Lauritzen, and
  Wainwright]{Maathuis2019}
M.~Maathuis, M.~Drton, S.~Lauritzen, and M.~Wainwright, editors.
\newblock \emph{Handbook of graphical models}.
\newblock Handbooks of Modern Statistical Methods. Chapman \& Hall/CRC, Boca
  Raton, FL, 2019.

\bibitem[Mammen and {van de Geer}(1997)]{Geer-Mammen1997}
E.~Mammen and S.~{van de Geer}.
\newblock Penalized quasi-likelihood estimation in partial linear models.
\newblock \emph{The Annals of Statistics}, 25\penalty0 (3):\penalty0
  1014--1035, 1997.

\bibitem[Mariano(1972)]{Mariano1972}
R.~S. Mariano.
\newblock The existence of moments of the ordinary least squares and two-stage
  least squares estimators.
\newblock \emph{Econometrica}, 40\penalty0 (4):\penalty0 643--652, 1972.

\bibitem[Mariano(1982)]{Mariano1982}
R.~S. Mariano.
\newblock Analytical small-sample distribution theory in econometrics: The
  simultaneous-equations case.
\newblock \emph{International Economic Review}, 23\penalty0 (3):\penalty0
  503--533, 1982.

\bibitem[Mariano(2003)]{Mariano2001}
R.~S. Mariano.
\newblock \emph{Simultaneous Equation Model Estimators: Statistical Properties
  and Practical Implications}, chapter~6, pages 122--141.
\newblock John Wiley \& Sons, Ltd, 2003.

\bibitem[Meier et~al.(2009)Meier, van~de Geer, and
  B\"{u}hlmann]{Meier-Geer-Buehlmann2009}
L.~Meier, S.~van~de Geer, and P.~B\"{u}hlmann.
\newblock High-dimensional additive modeling.
\newblock \emph{The Annals of Statistics}, 37\penalty0 (6B):\penalty0
  3779--3821, 2009.

\bibitem[Nagar(1959)]{Nagar1959}
A.~L. Nagar.
\newblock The bias and moment matrix of the general k-class estimators of the
  parameters in simultaneous equations.
\newblock \emph{Econometrica}, 27\penalty0 (4):\penalty0 575--595, 1959.

\bibitem[Nagar(1960)]{Nagar1960}
A.~L. Nagar.
\newblock A monte carlo study of alternative simultaneous equation estimators.
\newblock \emph{Econometrica}, 28\penalty0 (3):\penalty0 573--590, 1960.

\bibitem[Newey and McFadden(1994)]{Newey1994}
W.~K. Newey and D.~McFadden.
\newblock Large sample estimation and hypothesis testing.
\newblock In \emph{Handbook of Econometrics}, volume~4, chapter~36, pages
  2111--2245. Elsevier Science, 1994.

\bibitem[Okui et~al.(2012)Okui, Small, Tan, and Robins]{Okui2012}
R.~Okui, D.~S. Small, Z.~Tan, and J.~M. Robins.
\newblock Doubly robust instrumental variable regression.
\newblock \emph{Statistica Sinica}, 22\penalty0 (1):\penalty0 173--205, 2012.

\bibitem[Pearl(1998)]{Pearl1998}
J.~Pearl.
\newblock Graphs, causality, and structural equation models.
\newblock \emph{Sociological Methods \& Research}, 27\penalty0 (2):\penalty0
  226--284, 1998.

\bibitem[Pearl(2004)]{Pearl2004}
J.~Pearl.
\newblock Robustness of causal claims.
\newblock In \emph{Proceedings of the 20th {Conference on Uncertainty in
  Artificial Intelligence}}, UAI ’04, pages 446--453, Arlington, Virginia,
  USA, 2004. AUAI Press.

\bibitem[Pearl(2009)]{Pearl2009}
J.~Pearl.
\newblock \emph{Causality: Models, reasoning, and inference}.
\newblock Cambridge University Press, Cambridge, 2 edition, 2009.

\bibitem[Pearl(2010)]{Pearl2010}
J.~Pearl.
\newblock An introduction to causal inference.
\newblock \emph{The International Journal of Biostatistics}, 6\penalty0
  (2):\penalty0 \ Article 7, 2010.

\bibitem[Peters et~al.(2017)Peters, Janzing, and Sch\"{o}lkopf]{Peters2017}
J.~Peters, D.~Janzing, and B.~Sch\"{o}lkopf.
\newblock \emph{Elements of causal inference: Foundations and learning
  algorithms}.
\newblock Adaptive computation and machine learning. The MIT Press, Cambridge,
  MA, 2017.

\bibitem[Phillips(1984)]{Phillips1984}
P.~C.~B. Phillips.
\newblock The exact distribution of liml: I.
\newblock \emph{International Economic Review}, 25\penalty0 (1):\penalty0
  249--261, 1984.

\bibitem[Phillips(1985)]{Phillips1985}
P.~C.~B. Phillips.
\newblock The exact distribution of liml: Ii.
\newblock \emph{International Economic Review}, 26\penalty0 (1):\penalty0
  21--36, 1985.

\bibitem[Robinson(1988)]{Robinson1988}
P.~M. Robinson.
\newblock Root-{$N$}-consistent semiparametric regression.
\newblock \emph{Econometrica}, 56\penalty0 (4):\penalty0 931--954, 1988.

\bibitem[Rothenh\"{a}usler et~al.(2021)Rothenh\"{a}usler, Meinshausen,
  B\"{u}hlmann, and Peters]{Rothenhausler2018}
D.~Rothenh\"{a}usler, N.~Meinshausen, P.~B\"{u}hlmann, and J.~Peters.
\newblock Anchor regression: Heterogeneous data meet causality.
\newblock \emph{Journal of the Royal Statistical Society: Series B (Statistical
  Methodology)}, 83\penalty0 (2):\penalty0 215--246, 2021.

\bibitem[Ruppert et~al.(2003)Ruppert, Wand, and Carroll]{Ruppert2003}
D.~Ruppert, M.~P. Wand, and R.~J. Carroll.
\newblock \emph{Semiparametric regression}, volume~12 of \emph{Cambridge series
  in statistical and probabilistic mathematics}.
\newblock Cambridge University Press, Cambridge, 2003.

\bibitem[Smucler et~al.(2019)Smucler, Rotnitzky, and Robins]{Rotnitzky2019}
E.~Smucler, A.~Rotnitzky, and J.~M. Robins.
\newblock A unifying approach for doubly-robust $\ell_1$ regularized estimation
  of causal contrasts, 2019.
\newblock Preprint arXiv:1904.03737.

\bibitem[Speckman(1988)]{Speckman1988}
P.~Speckman.
\newblock Kernel smoothing in partial linear models.
\newblock \emph{Journal of the Royal Statistical Society. Series B
  (Methodological)}, 50\penalty0 (3):\penalty0 413--436, 1988.

\bibitem[Staiger and Stock(1997)]{Staiger1997}
D.~Staiger and J.~H. Stock.
\newblock Instrumental variables regression with weak instruments.
\newblock \emph{Econometrica}, 65\penalty0 (3):\penalty0 557--586, 1997.

\bibitem[Stock et~al.(2002)Stock, Wright, and Yogo]{Stock2002}
J.~H. Stock, J.~H. Wright, and M.~Yogo.
\newblock A survey of weak instruments and weak identification in generalized
  method of moments.
\newblock \emph{Journal of Business and Economic Statistics}, 20:\penalty0
  518--529, 2002.

\bibitem[Su and Zhang(2016)]{Su2016}
L.~Su and Y.~Zhang.
\newblock Semiparametric estimation of partially linear dynamic panel data
  models with fixed effects.
\newblock In G.~Gonz\'{a}lez-Rivera, R.~C. Hill, and T.-H. Lee, editors,
  \emph{Essays in Honor of Aman Ullah}, volume~36 of \emph{Advances in
  Econometrics}, pages 137--204. Emerald Group Publishing Limited, Howard
  House, Wagon Lane, Bingley BD16 1WA, UK, 1 edition, 2016.

\bibitem[Summers(1965)]{Summers1965}
R.~Summers.
\newblock A capital intensive approach to the small sample properties of
  various simultaneous equation estimators.
\newblock \emph{Econometrica}, 33\penalty0 (1):\penalty0 1--41, 1965.

\bibitem[Takeuchi and Morimune(1985)]{Takeuchi-Morimune1985}
K.~Takeuchi and K.~Morimune.
\newblock Third-order efficiency of the extended maximum likelihood estimators
  in a simultaneous equation system.
\newblock \emph{Econometrica}, 53\penalty0 (1):\penalty0 177--200, 1985.

\bibitem[Theil(1953{\natexlab{a}})]{Theil1953a}
H.~Theil.
\newblock Repeated least-squares applied to complete equation systems.
\newblock \emph{Central Planning Bureau, The Hague}, 1953{\natexlab{a}}.
\newblock Mimeographed memorandum.

\bibitem[Theil(1953{\natexlab{b}})]{Theil1953b}
H.~Theil.
\newblock Estimation and simultaneous correlation in complete equation systems.
\newblock \emph{Central Planning Bureau, The Hague}, 1953{\natexlab{b}}.
\newblock Mimeographed memorandum.

\bibitem[Theil(1961)]{Theil1961}
H.~Theil.
\newblock \emph{Economic forecasts and policy}, volume~15 of
  \emph{Contributions to economic analysis}.
\newblock North-Holland Publishing Company, Amsterdam, 2 edition, 1961.

\bibitem[van~der Laan and Robins(2003)]{Laan2003}
M.~J. van~der Laan and J.~M. Robins.
\newblock \emph{Unified methods for censored longitudinal data and causality}.
\newblock Springer series in statistics. Springer, New York, 2003.

\bibitem[Wager and Walther(2016)]{Wager2016}
S.~Wager and G.~Walther.
\newblock Adaptive concentration of regression trees, with application to
  random forests, 2016.
\newblock Preprint arXiv:1503.06388.

\bibitem[Wagner(1958)]{Wagner1958}
H.~M. Wagner.
\newblock A monte carlo study of estimates of simultaneous linear structural
  equations.
\newblock \emph{Econometrica}, 26\penalty0 (1):\penalty0 117--133, 1958.

\bibitem[Wooldridge(2013)]{Wooldridge2013}
J.~M. Wooldridge.
\newblock \emph{Introductory econometrics: A modern approach}.
\newblock South-Western Cengage Learning, Mason, OH, 5 edition, 2013.

\bibitem[Yao(2012)]{Yao2012}
F.~Yao.
\newblock Efficient semiparametric instrumental variable estimation under
  conditional heteroskedasticity.
\newblock \emph{Journal of Quantitative Economics}, 10\penalty0 (1):\penalty0
  32--55, 2012.

\bibitem[Yuan and Zhou(2016)]{Yuan-Zhou2016}
M.~Yuan and D.-X. Zhou.
\newblock Minimax optimal rates of estimation in high-dimensional additive
  models.
\newblock \emph{The Annals of Statistics}, 44\penalty0 (6):\penalty0
  2564--2593, 2016.

\end{thebibliography}

\begin{appendices}

\section{An Example where the Identifiability Condition~\eqref{eq:identificationCondition} holds, but Conditional Moment Requirements do not}\label{sect:strictlyWeakerCondition}

This section presents an SEM where our identifiability condition~\eqref{eq:identificationCondition} holds, but where the conditional moment requirements of~\citet{Chernozhukov2018} do not. 

We assume the model
\begin{displaymath}
	Y \leftarrow X^T\betazero + \gY(W) + \hY(H) + \eps_Y
\end{displaymath}
given 
in~\eqref{eq:SEM} and the identifiability condition $\EP[\Ra(\Ry-\Rx^T\betazero)]=\bo$ given in~\eqref{eq:identificationCondition}. \citet{Chernozhukov2018} assume the model 
\begin{equation}\label{eq:ChernozhukovModel}
	Y=X^T\betazero + \gY(W) + U, \quad A=\gA(W)+V
\end{equation}
for unknown functions $\gY$ and $\gA$ and impose the conditional moment restrictions 
\begin{equation}\label{eq:ChernozhukovMoment}
	\E[U|A,W]=0 \quad\textrm{and}\quad \E[V|W]=\bo
\end{equation}
on the error terms.
Their model is implicitly assumed to be justidentified:  
the dimensions of $A$ and $X$ are implicitly assumed to be  equal. 

Model~\eqref{eq:ChernozhukovModel} and the conditional moment restrictions~\eqref{eq:ChernozhukovMoment} imply the identifiability condition~\eqref{eq:identificationCondition} due to
\begin{displaymath}
		\E\big[\Ra(\Ry-\Rx^T\betazero)\big] = \E\big[\big(A-\gA(W)\big)U\big] 
		= \E\big[\big(A-\gA(W)\big)\E[U|A, W]\big] = \bo. 
\end{displaymath}

However, the reverse direction does not hold. A counterexample is presented in Figure~\ref{fig:ChernozhukovCounterexample} where $W$ directly affects $H$. 
This SEM satisfies the identifiability condition~\eqref{eq:identificationCondition} because $A$ is independent of $H$ conditional on $W$, but it does not satisfy $\E[U|W,A]=0$ because we have
\begin{displaymath}
	\E[U|A,W] = \E[H+\eps_Y|A,W] = \E[H|W] = \E[W+\eps_H|W] = W
\end{displaymath}
due to $A\independent H|W$ and $(\eps_Y, \eps_H)\independent(W,A)$.
We have $A\independent H|W$ because all paths from $A$ to $H$ are blocked by $W$. The path $A\rightarrow X\leftarrow H$ is blocked by the empty set because $X$ is a collider on this path. The path $A\rightarrow X\rightarrow Y\leftarrow H$ is blocked by the empty set because $Y$ is a collider on this path. The path $A\rightarrow X\rightarrow Y\leftarrow W\rightarrow H$ is blocked by $W$. The paths $A\rightarrow X\rightarrow W\rightarrow Y\leftarrow H$ and $A\rightarrow X\rightarrow W\rightarrow H$ are also blocked by $W$. 

\begin{figure}[h!]
	\centering
	\caption[]{\label{fig:ChernozhukovCounterexample}An SEM and its associated causal graph.}
	\begin{tabular}{cc}
	\begin{tabular}{l}
	$\displaystyle 
	\begin{array}{r}
		(\eps_A, \eps_W,\eps_H, \eps_X, \eps_Y)\sim\mathcal{N}_5(\bo,\one)\\	
	\end{array}$
	\\
	$\displaystyle
	\begin{array}{lcl} 
		A &\leftarrow& \eps_A\\
		W &\leftarrow& \eps_W\\
		H &\leftarrow& W + \eps_H\\
		X &\leftarrow& A+W + H+\eps_X\\
		Y &\leftarrow& X+ W +H+\eps_Y
	\end{array}$ 
	\end{tabular}
	& 
	\begin{tabular}{c}
          \includegraphics[width=0.35\textwidth]{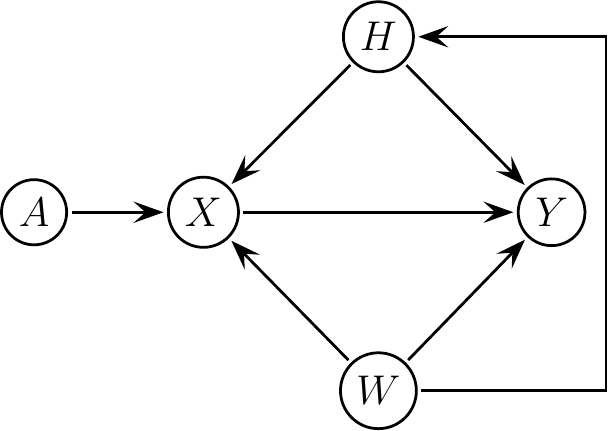}
	\end{tabular}
\end{tabular}
\end{figure}

\section{DML1 Estimators}

The DML1 estimators are less preferred than the DML2 estimators we proposed to use in the main text, but for completeness we provide the definitions in this section.

\subsection{DML1 Estimator of $\betazero$}\label{sect:DML1Binfty}

The \DMLone\ estimator of $\betazero$ is given by
\begin{displaymath}
	\hbetaNDMLone:=\frac{1}{\KK}\sum_{\kk=1}^{\KK}\hbetaNkDMLone, 
\end{displaymath}
where
\begin{equation}\label{eq:hbetaNkDMLone}
		\hbetaNkDMLone := \Big( \big(\hbRxk\big)^T
			\PiIkcIk
			\hbRxk\Big)^{-1}
			\big(\hbRxk\big)^T
			\PiIkcIk
			\hbRyk,
\end{equation}
and where we recall the projection matrix 
$	\PiIkcIk = \hbRak
			\big((\hbRak)^T\hbRak \big)^{-1}
			(\hbRak)^T
$ 
defined in~\eqref{eq:RaProjection}.
The estimator $\hbetaNkDMLone$ 
is  the TSLS estimator of $\hbRyk$ on $\hbRxk$ using the instrument $\hbRak$.

\subsection{DML1 estimator of $\bg$}\label{sect:DML1gamma}

The \DMLone\ estimator of $\bg$ is given by 
\begin{equation}\label{eq:hbg}
	\hbgDMLone := \frac{1}{\KK}\sum_{\kk=1}^{\KK} \hbgkDMLone,
\end{equation}
where 
\begin{displaymath}
	\hbgkDMLone := \argmin_{b\in\R^d} \bigg(\normBig{ \big(\one-\PiIkcIk\big)\big(\hbRyk-\big(\hbRxk\big)^Tb\big)}_2^2+\gamma\normBig{ \PiIkcIk\big(\hbRyk-\big(\hbRxk\big)^Tb\big) }_2^2\bigg).
\end{displaymath} 
This estimator can be expressed in closed form by
\begin{displaymath}
	\hbgkDMLone = \Big(  \big(\hbRxtilk\big)^T\hbRxtilk\Big)^{-1} \big(\hbRxtilk\big)^T\hbRytilk,
\end{displaymath}
where we recall the notation
\begin{displaymath}
	\hbRxtilk = \Big(   \one + (\sqrt{\gamma}-1)\PiIkcIk  \Big)\hbRxk 
	\quad\textrm{and}\quad
	\hbRytilk = \Big(   \one + (\sqrt{\gamma}-1)\PiIkcIk  \Big)\hbRyk
\end{displaymath}
as in~\eqref{eq:OLSresiduals}. 
The computation of $\hbgkDMLone$ is an OLS scheme where $\hbRytilk$ is regressed on $\hbRxtilk$.

\section{SEM of Figure~\ref{fig:AvsRa}}\label{sect:histogramRavsA}

The data from the simulation displayed in Figure~\ref{fig:AvsRa} come from the following SEM. Let the dimension of $W$ be $\sss=20$. Let $R$ be the upper triangular matrix of the  Cholesky decomposition of the Toeplitz matrix whose first row is given by $(1, 0.7, 0.7^2,\ldots,0.7^{19})$. The SEM we consider is given by

\begin{displaymath}
\begin{array}{l}
\begin{array}{r}
		(\eps_A, \eps_W,\eps_H, \eps_X, \eps_Y)\sim\mathcal{N}_{24}(\bo,\one)\\	
	\end{array}
	\\
	\begin{array}{lcl} 
		H &\leftarrow& \eps_H\\
		W &\leftarrow& \eps_W R\\
		A &\leftarrow& \frac{e^{W_1}}{1+e^{W_1}} +W_2 +W_3+ \eps_A\\
		X &\leftarrow& 2A + W_1 + 0.25\cdot\frac{e^{W_3}}{1+e^{W_3}} + H + \eps_X\\
		Y &\leftarrow& X +\frac{e^{W_1}}{1+e^{W_1}}+0.25W_3+H+\eps_Y.
	\end{array}
\end{array}
\end{displaymath}

\section{Additional Numerical Results}\label{sect:additioinalSimulation}

If we say in this section that the nuisance parameters are estimated with additive splines, they are estimated with additive cubic B-splines with $\ceil[\big]{\NN^{\frac{1}{5}}}+2$ degrees of freedom, where $\NN$ denotes the sample size of the data.

If we say in this section that the nuisance parameters are estimated with random forests, they are estimated with random forests consisting of $500$ trees that have a minimal node size of $5$. 
\\

Figure~\ref{fig:COVERsimulationIntroBeta0} and~\ref{fig:COVERsimulation2Beta0} illustrate the simulation results with $\betazero=0$ of the examples presented in Figure~\ref{fig:COVERsimulationIntro} and~\ref{fig:COVERsimulation2} in Sections~\ref{sect:ourContribution} and~\ref{sect:exampleForest}, respectively. The coverage and  length of the scaled confidence intervals are similar to the results obtained for $\betazero\neq 0$. Instead of the power as in Figure~\ref{fig:COVERsimulationIntro} and~\ref{fig:COVERsimulation2}, Figure~\ref{fig:COVERsimulationIntroBeta0} and~\ref{fig:COVERsimulation2Beta0} illustrate the type I error. 
\\

In Figure~\ref{fig:COVERsimulationIntroBeta0}, DML achieves a type I error of $0$ or close to $0$ over all sample sizes considered. The \regsDML\ method achieves a type I error that is closer to the gray line indicating the $5\%$ level. 
The dashed lines represent $95\%$ confidence regions. 
The type I error of \regsDML\ is higher than the type I error of DML because the \regsDML\ confidence intervals are considerably shorter than the DML ones. The right plot in Figure~\ref{fig:COVERsimulationIntroBeta0} indicates that the lengths of the confidence intervals of \regsDML\  is around $10\%-30\%$ the length of DML's.
Although \regsDML\ greatly reduces the confidence interval length, the type I error confidence bands include the $5\%$ level or are below it. 
This means that although \regsDML\ is a regularized version of DML, it does not incur an overlarge bias. 
\\

In Figure~\ref{fig:COVERsimulation2Beta0}, the type I errors of both DML and \regsDML\ are similar. 
The $95\%$ confidence regions of both estimators include the $5\%$ level or are below it.
The $95\%$ confidence regions of the levels are represented by dashed lines. 
These confidence regions of both DML and \regsDML\ contain the $5\%$ level or are below it.
The right plot in Figure~\ref{fig:COVERsimulation2Beta0} illustrates that the \regsDML\ confidence intervals are 
 around $50\%-80\%$ the length of DML's.
Nevertheless, its type I error does not exceed the $95\%$ level.

\begin{figure}[h!]
	\centering
	\caption[]{\label{fig:COVERsimulationIntroBeta0} 
	The results come from $M=1000$ simulation runs each from the SEM in Figure~\ref{fig:introSEM}  with $\betazero=0$ for a range of sample sizes $\NN$ and with $\KK=2$ and $\Salg=100$ in Algorithm~\ref{algo:Summary}. 
	The nuisance functions are estimated with additive splines.
	The figure displays the coverage of two-sided confidence intervals for $\betazero$, type I error for two-sided testing of the 
	hypothesis $H_0:\ \betazero = 0$, and scaled lengths of two-sided confidence intervals of DML (red),  \regDML\ (blue),  \regsDML\  (green), 
	LIML (orange), Fuller(1) (purple), and Fuller(4) (cyan), 
	where all results are at level $95\%$.
		At each sample size $\NN$, the lengths of the confidence intervals are scaled with the median length from DML. 
	The shaded regions in the coverage and the type I error plots represent $95\%$ confidence bands with respect to the $M$ simulation runs. 
	The blue and green lines as 
	well as the red and orange ones 
	are indistinguishable in the left panel.
	}
	\includegraphics[width=\textwidth]{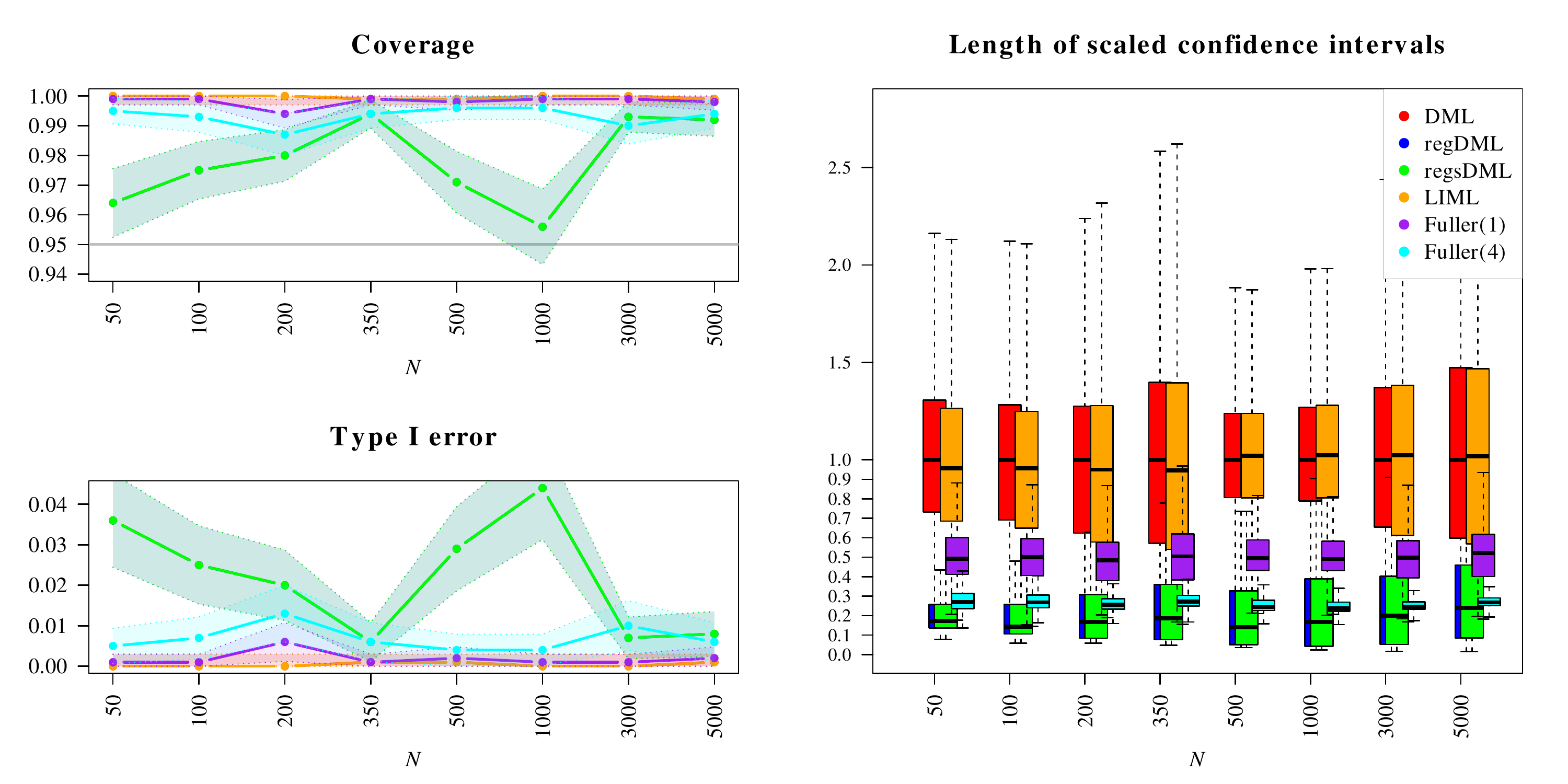}
\end{figure}

\begin{figure}[h!]
	\centering
	\caption[]{\label{fig:COVERsimulation2Beta0} 
	The results come from $M=1000$ simulation runs from the SEM in Figure~\ref{fig:simulation2} with $\betazero=0$ for a range of sample sizes $\NN$ and with $\KK=2$ and $\Salg=100$ in Algorithm~\ref{algo:Summary}. 
	The nuisance functions are estimated with random forests. 
	The figure displays the coverage of two-sided confidence intervals for $\betazero$, type I error for two-sided testing of the 
	hypothesis $H_0:\ \betazero = 0$, and scaled lengths of two-sided confidence intervals of DML (red),  \regDML\ (blue),  \regsDML\  (green), 
	LIML (orange), Fuller(1) (purple), and Fuller(4) (cyan), 
	where all results are at level $95\%$.
	At each sample size $\NN$, the lengths of the confidence intervals are scaled with the median length from DML. 
	The shaded regions in the coverage and the type I error plots represent $95\%$ confidence bands with respect to the $M$ simulation runs.  
	The blue and green lines as 
	well as the red and orange ones 
	are indistinguishable in the left panel.
	}
	\includegraphics[width=\textwidth]{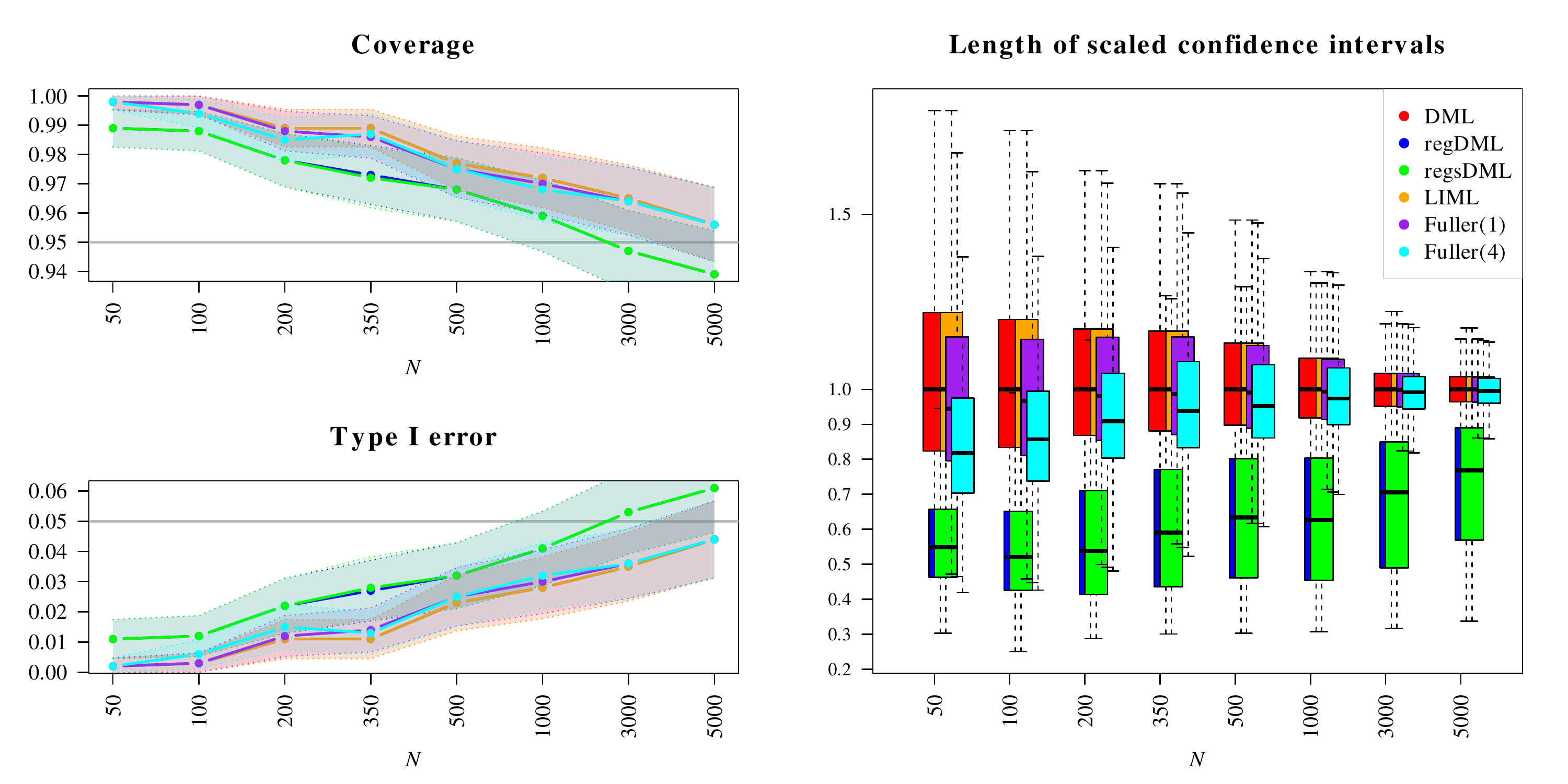}
\end{figure}

\section{Weak $A\rightarrow X$ and Bias-Variance Tradeoff}\label{sect:strong-weak}

First, we analyze the behavior of our methods for varying strength of $A$ on $X$. 
For $N=200$, we consider the coverage and length of the confidence intervals for 
varying strength from $A$ to $X$ for the same settings as in Figure~\ref{fig:COVERsimulationIntro} and~\ref{fig:COVERsimulation2}.

Figure~\ref{fig:weak-spline} illustrates the results for data from the SEM from Figure~\ref{fig:COVERsimulationIntro}. 
We vary the strength of the direct link $A\rightarrow X$ and denote it by $\alpha$ in Figure~\ref{fig:weak-spline}. 
Figure~\ref{fig:weak-forest} illustrates the results for data from the SEM from Figure~\ref{fig:COVERsimulation2}. 
We leave the link $A_2\rightarrow X$ as it is and only vary the strength of the direct link $A_1\rightarrow X$, which we denote by $\alpha$ in Figure~\ref{fig:weak-forest}. 
In both Figure~\ref{fig:weak-spline} and~\ref{fig:weak-forest}, the coverage remains high for all considered methods. If $\alpha$ becomes larger, the confidence intervals become shorter, which leads to a coverage that is closer to the nominal $95\%$ level, especially in Figure~\ref{fig:weak-forest}. The \regsDML\ method yields the shortest confidence intervals in both figures. \\

\begin{figure}[h!]
	\centering
	\caption[]{\label{fig:weak-spline} 
	Same setting as in Figure~\ref{fig:COVERsimulationIntro}, but with $N=200$ only.
	The strength of the direct link $A\rightarrow X$ varies and is denoted by $\alpha$. 
	}
	\includegraphics[width=\textwidth]{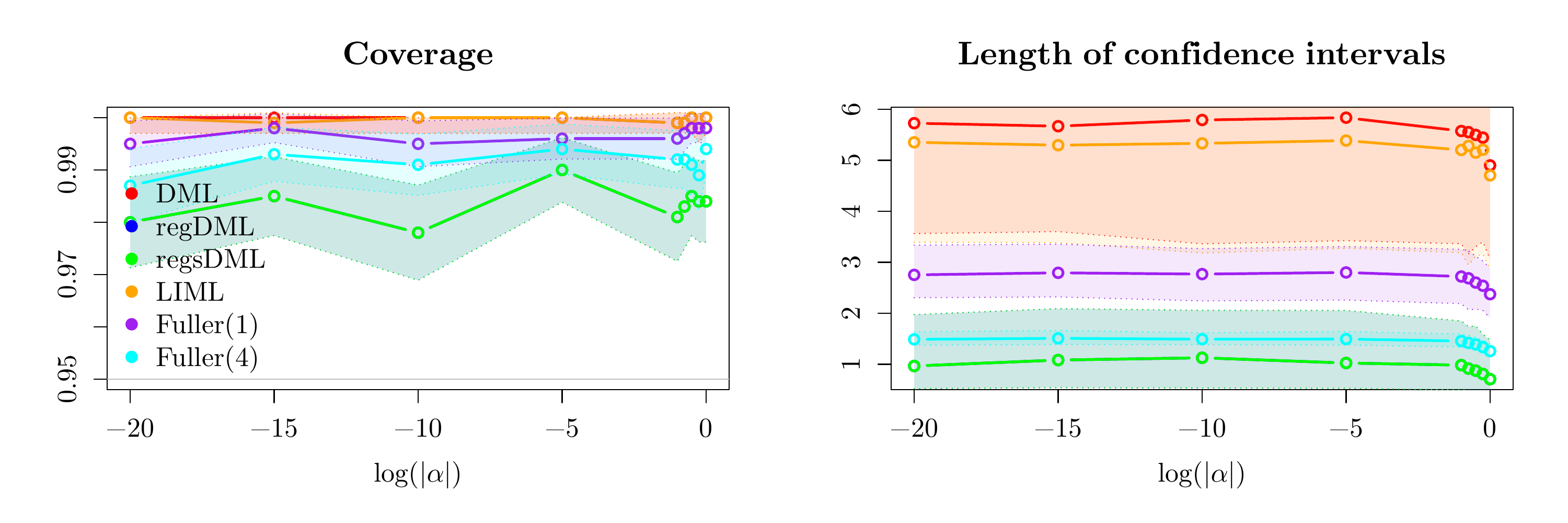}
\end{figure}

\begin{figure}[h!]
	\centering
	\caption[]{\label{fig:weak-forest} 
	Same setting as in Figure~\ref{fig:COVERsimulation2}, but with $N=200$ only.
	The strength of the direct link $A_1\rightarrow X$ varies and is denoted by $\alpha$. 
	}
	\includegraphics[width=\textwidth]{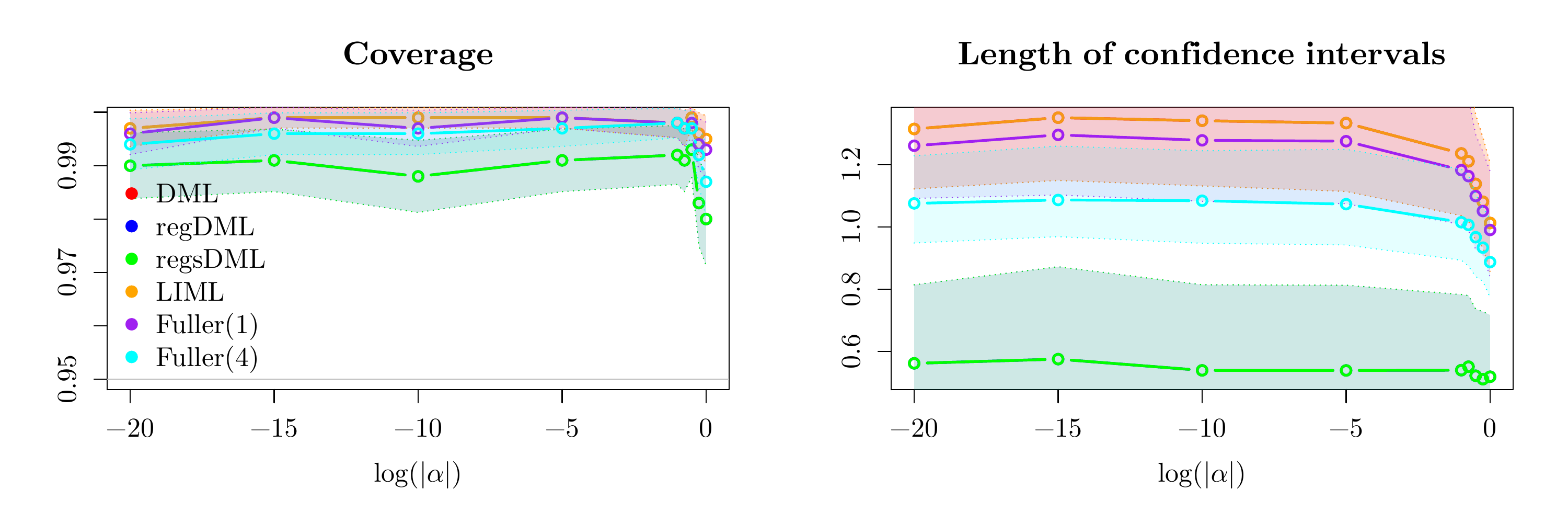}
\end{figure}

Second, we analyze the bias-variance tradeoff of the respective estimated quantities of the regularized methods. 
We again choose the sample size $N=200$ and consider the same settings as in Figure~\ref{fig:COVERsimulationIntro} and~\ref{fig:COVERsimulation2}. 
The results are summarized in Figure~\ref{fig:bias-var-spline} and~\ref{fig:bias-var-forest}
that display the estimated MSE, estimated variance, and estimated squared bias as used in Equation~\eqref{eq:hgamma}.
The MSE in both figures is mainly driven by the variance, and \regsDML\ achieves 
a considerable variance reduction compared to the TSLS-type DML estimator. 

\begin{figure}[h!]
	\centering
	\caption[]{\label{fig:bias-var-spline} 
	Estimated MSE, estimated variance, and estimated squared bias as used in Equation~\eqref{eq:hgamma} for the 
	same setting as in Figure~\ref{fig:COVERsimulationIntro}, but with $N=200$ only.
	 The black solid line displays the median of the respective quantity over the considered range of $\gamma$-values for $\hbg$. The yellow area marks the observed $25\%$ and $75\%$ quantiles. 
	 All methods incorporate an additional variance adjustment from the $\Salg$ 
	 repetitions according to Algorithm~\ref{algo:Summary}. 
	 Boxplots illustrate the performance of the TSLS and the regularized methods. 
	 The position of the boxplots is not linked to the $\gamma$-values on the $x$-axis.  
	}
	\includegraphics[width=\textwidth]{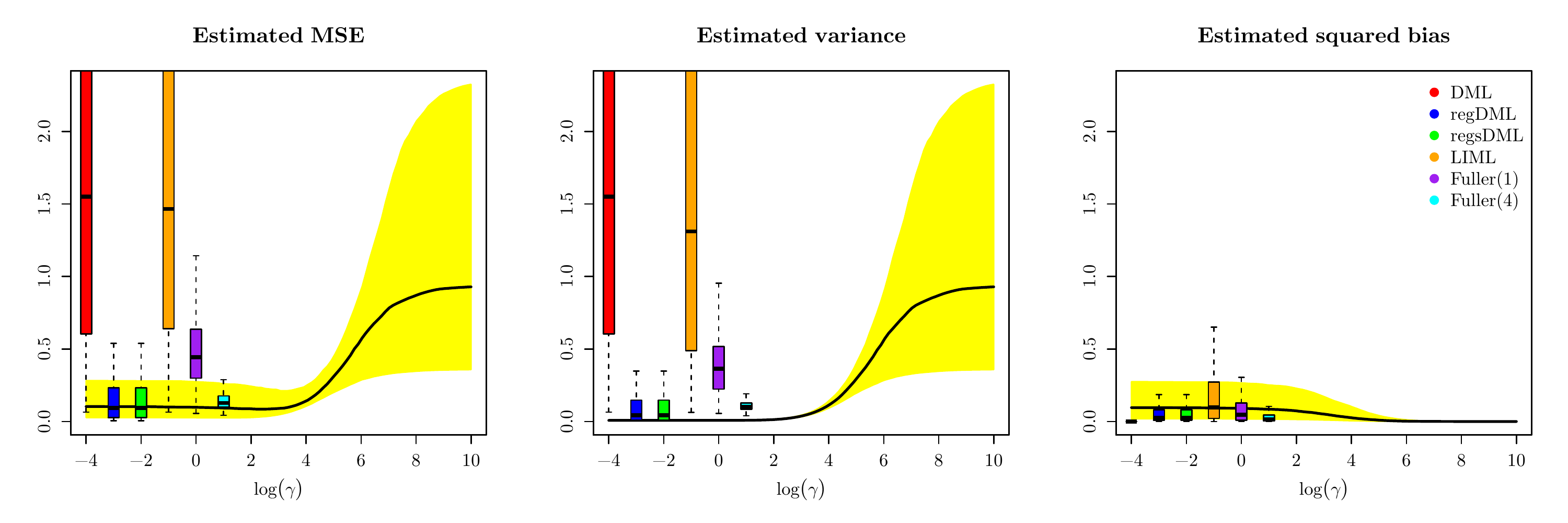}
\end{figure}

\begin{figure}[h!]
	\centering
	\caption[]{\label{fig:bias-var-forest} 
	Estimated MSE, estimated variance, and estimated squared bias as used in Equation~\eqref{eq:hgamma} for the 
	same setting as in Figure~\ref{fig:COVERsimulation2}, but with $N=200$ only.
	The black solid line displays the median of the respective quantity over the considered range of $\gamma$-values for $\hbg$. The yellow area marks the observed $25\%$ and $75\%$ quantiles. 
	All methods incorporate an additional variance adjustment from the $\Salg$ 
	 repetitions according to Algorithm~\ref{algo:Summary}. 
	 Boxplots illustrate the performance of the TSLS and the regularized methods. 
	 The position of the boxplots is not linked to the $\gamma$-values on the $x$-axis. 
	}
	\includegraphics[width=\textwidth]{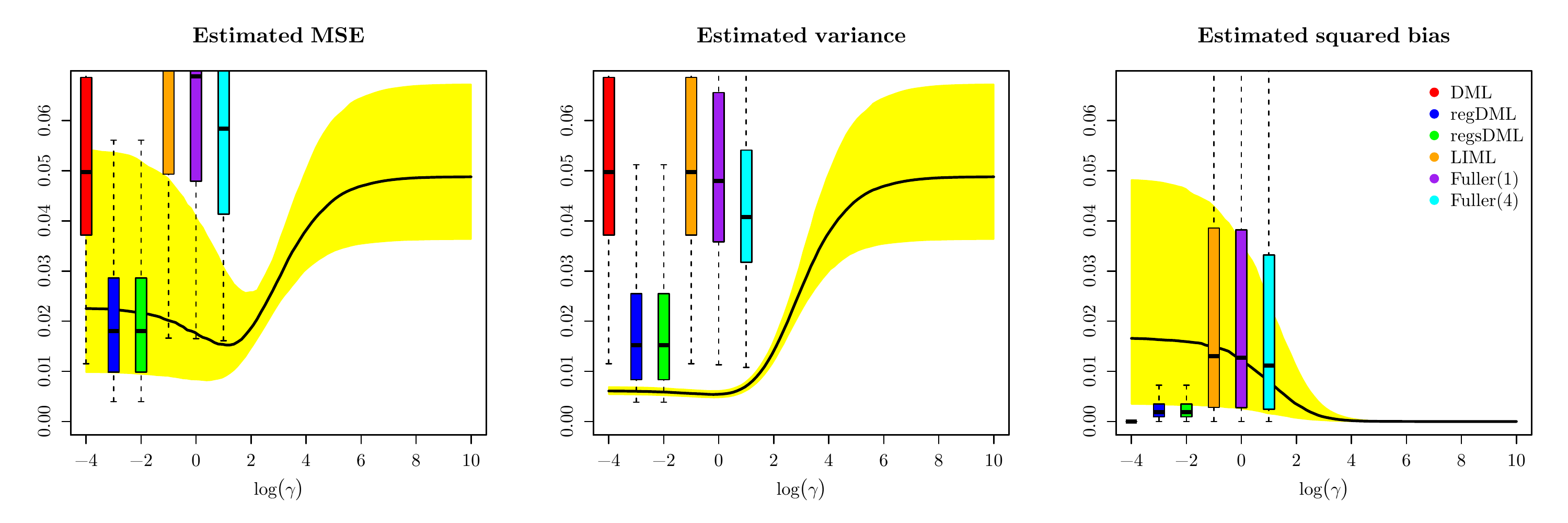}
\end{figure}

\section{Confounding and its Mitigation}\label{sect:counterexamples}
If we say in this section that the nuisance parameters are estimated with additive splines, they are estimated with additive cubic B-splines with $\ceil[\big]{\NN^{\frac{1}{5}}}+2$ degrees of freedom, where $\NN$ denotes the sample size of the data.

If we say in this section that the nuisance parameters are estimated with random forests, they are estimated with random forests consisting of $500$ trees that have a minimal node size of $5$. 
\\

We consider models where the DML and the \regsDML\ methods do not work well in terms of coverage of $\betazero$. We present possible explanations of these failures and illustrate model changes to overcome them. 
The first model in Section~\ref{sect:strongConfounding} features a strong confounding effect $H\rightarrow X$, the second model in Section~\ref{sect:noisyH} features an effect with noise in $W\rightarrow H$, and the third model in Section~\ref{sect:noisyW} features an effect with noise in $H\rightarrow W$.

\subsection{Strong Confounding Effect $H\rightarrow X$}\label{sect:strongConfounding}

If the hidden variable $H$ is strongly confounded with $X$, the resulting TSLS-type DML estimator can be substantially biased depending on the choice of functions in the model. 
 If the estimated variances are not large enough, the coverage of the resulting confidence intervals for $\betazero$ can be too low. This issue is illustrated in Figure~\ref{fig:COVERstrong-weak_strongconfounding}. 
 
 The \regsDML\ estimator mimics the bias behavior of  DML  because the DML estimator is used as a replacement of $\betazero$ in the MSE objective function that defines the estimated regularization parameter of \regDML\ in~\eqref{eq:hgamma}. 
The confidence intervals of \regsDML\  are shorter than the DML ones, but both are computed with a similarly biased coefficient estimate of $\betazero$. Therefore, the coverage of the confidence intervals of \regsDML\  is even worse than the one of DML. 

The coverages of both DML and \regsDML\ are considerably improved if the confounding strength is reduced; see Figure~\ref{fig:COVERstrong-weak_weakconfounding}. 

\begin{figure}[h!]
	\centering
	\caption[]{\label{fig:strongweak}An SEM and its associated causal graph.}
	\begin{tabular}{cc}
	\begin{tabular}{l}
	$\displaystyle 
	\begin{array}{r}
		(\eps_A, \eps_W,\eps_H, \eps_X, \eps_Y)\sim\mathcal{N}_5(\bo,\one)\\	
	\end{array}$
	\\
	$\displaystyle
	\begin{array}{lcl} 
		A &\leftarrow&\eps_A\\
		W &\leftarrow& \eps_W \\
			H &\leftarrow&\eps_H\\
			X &\leftarrow& A + W +\chi H+ 0.25 \eps_X\\
			Y &\leftarrow& \betazero X+W+H+0.25\eps_Y
	\end{array}$ 
	\end{tabular}
	& 
	\begin{tabular}{c}
          \includegraphics[width=0.35\textwidth]{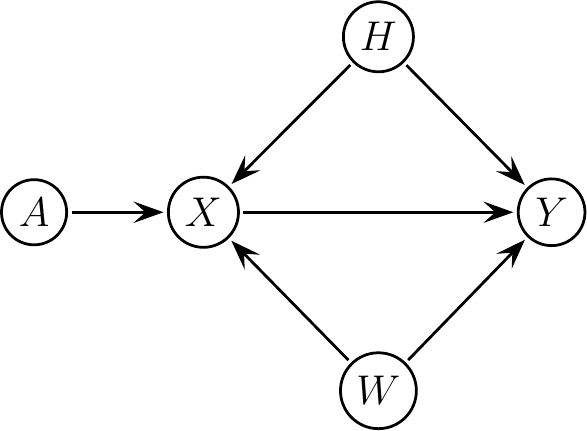}
	\end{tabular}
\end{tabular}
\end{figure}

\begin{figure}[h!]
	\centering
	\caption[]{\label{fig:COVERstrong-weak_strongconfounding} 
	The results come from $M=1000$ simulation runs from the SEM in Figure~\ref{fig:strongweak} with $\chi=15$ and $\betazero=0$ for a range of sample sizes $\NN$ and with $\KK=2$ and $\Salg=100$ in Algorithm~\ref{algo:Summary}. 
	The nuisance functions are estimated with additive splines. 
	The figure displays the coverage of two-sided confidence intervals for $\betazero$, type I error for two-sided testing of the 
	hypothesis $H_0:\ \betazero = 0$, and scaled lengths of two-sided confidence intervals of DML (red),  \regDML\ (blue),  \regsDML\  (green), 
	LIML (orange), Fuller(1) (purple), and Fuller(4) (cyan), 
	where all results are at level $95\%$.
	At each sample size $\NN$, the lengths of the confidence intervals are scaled with the median length from DML. 
	The shaded regions in the coverage and the type I error plots represent $95\%$ confidence bands with respect to the $M$ simulation runs. 
	The blue and green lines are indistinguishable in the left panel.
	}
	\includegraphics[width=\textwidth]{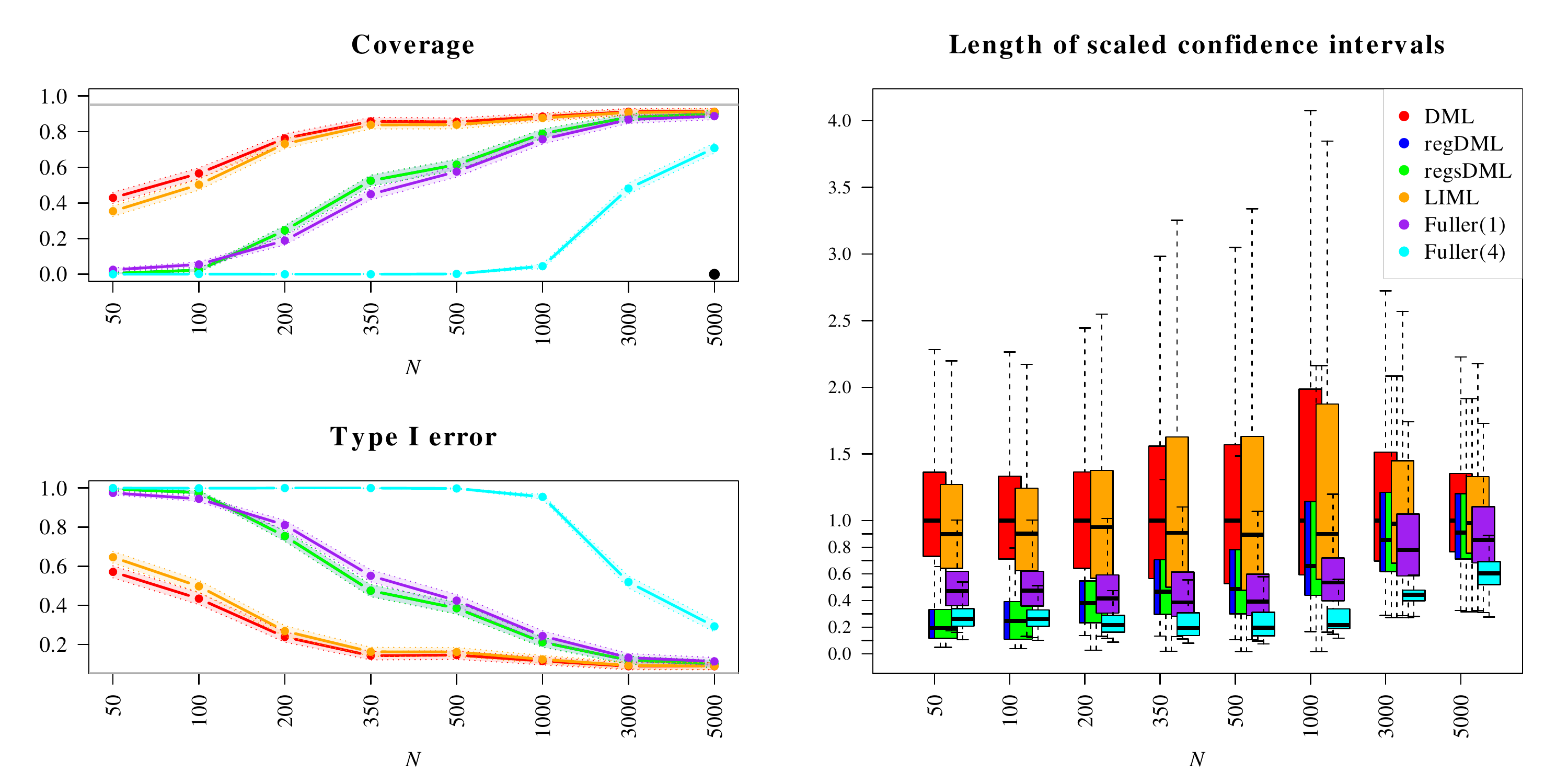}
\end{figure}

\begin{figure}[h!]
	\centering
	\caption[]{\label{fig:COVERstrong-weak_weakconfounding} 
	The results come from $M=1000$ simulation runs from the SEM in Figure~\ref{fig:strongweak} with $\chi=1$ and $\betazero=0$ for a range of sample sizes $\NN$ and with $\KK=2$ and $\Salg=100$ in Algorithm~\ref{algo:Summary}. 
	The nuisance functions are estimated with additive splines. 
	The figure displays the coverage of two-sided confidence intervals for $\betazero$, type I error for two-sided testing of the 
	hypothesis $H_0:\ \betazero = 0$, and scaled lengths of two-sided confidence intervals of DML (red),  \regDML\ (blue),  \regsDML\  (green), 
	LIML (orange), Fuller(1) (purple), and Fuller(4) (cyan), 
	where all results are at level $95\%$.
	At each sample size $\NN$, the lengths of the confidence intervals are scaled with the median length from DML. 
	The shaded regions in the coverage and the type I error plots represent $95\%$ confidence bands with respect to the $M$ simulation runs. 
	The blue and green lines are indistinguishable in the left panel.
	}
	\includegraphics[width=\textwidth]{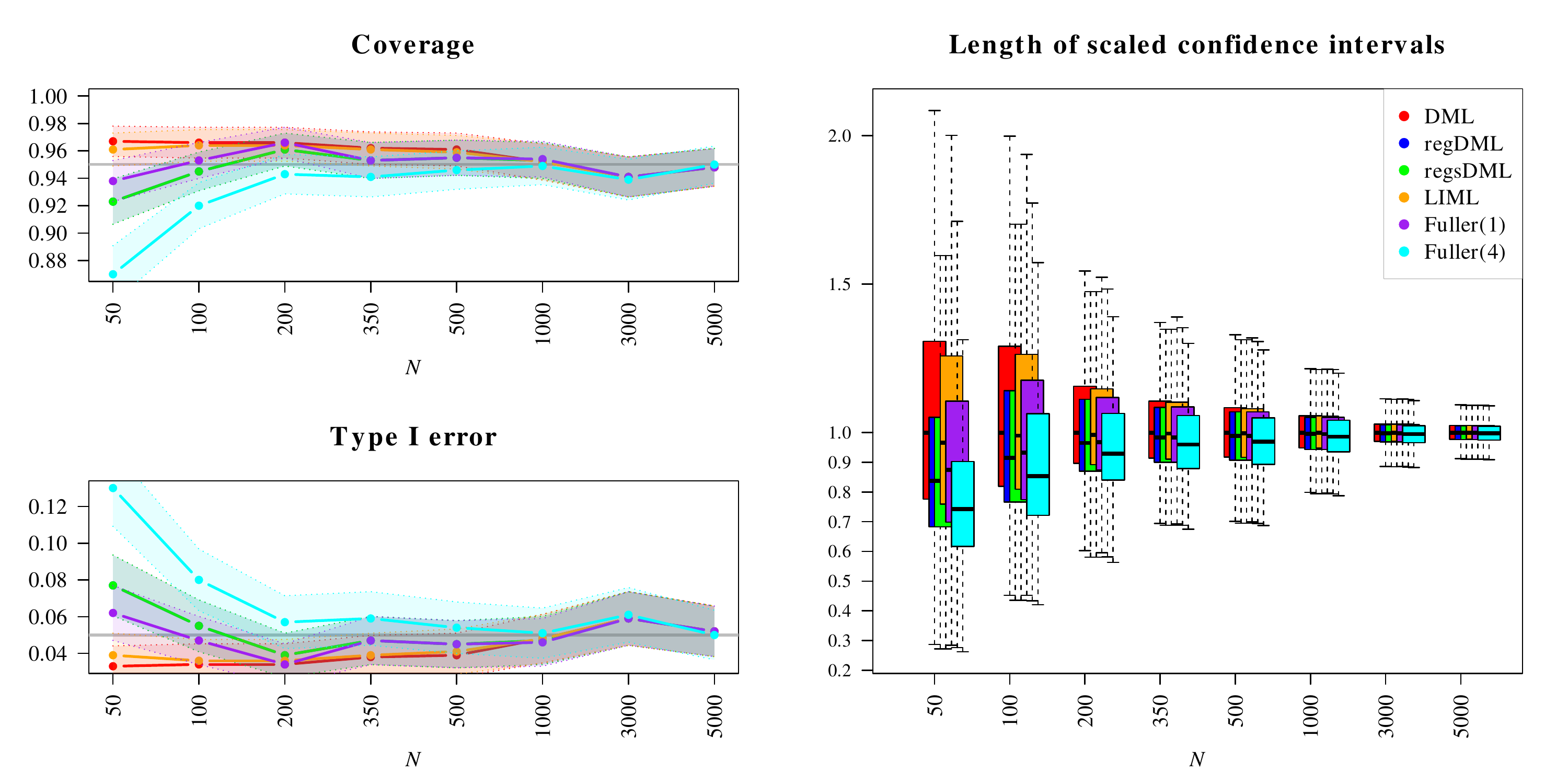}
\end{figure}

\subsection{Noise in  $W\rightarrow H$}\label{sect:noisyH}

The variable $W$ may have a direct effect on $H$. If this link is strong enough with respect to the additional noise $\eps_H$ of $H$, it is possible to obtain some information of $H$ by observing $W$.   This can reduce the overall level of confounding present depending on the choice of functions in the model. 

Simulation results where $W$ explains only part of the variation in $H$ are presented in Figure~\ref{fig:COVERWH-noise}. The confidence intervals of both DML and \regsDML\ do not attain a $95\%$ coverage for small sample sizes $\NN$. The situation can  be considerably improved by reducing the variation of $H$ that is not explained by $W$; see Figure~\ref{fig:COVERWH-nonnoise}.

\begin{figure}[h!]
	\centering
	\caption[]{\label{fig:WH}An SEM and its associated causal graph.}
	\begin{tabular}{cc}
	\begin{tabular}{l}
	$\displaystyle 
	\begin{array}{r}
		(\eps_A, \eps_W,\eps_H, \eps_X, \eps_Y)\sim\mathcal{N}_5(\bo,\one)\\	
	\end{array}$
	\\
	$\displaystyle
	\begin{array}{lcl} 
		A &\leftarrow&\eps_A\\
		W &\leftarrow& \eps_W \\
			H &\leftarrow&W+\kappa\eps_H\\
			X &\leftarrow& 0.5A + 3\tanh(2W)+1.5 H + 0.25 \eps_X\\
			Y &\leftarrow& \betazero X-\tanh(W)+H+0.25\eps_Y
	\end{array}$ 
	\end{tabular}
	& 
	\begin{tabular}{c}
          \includegraphics[width=0.35\textwidth]{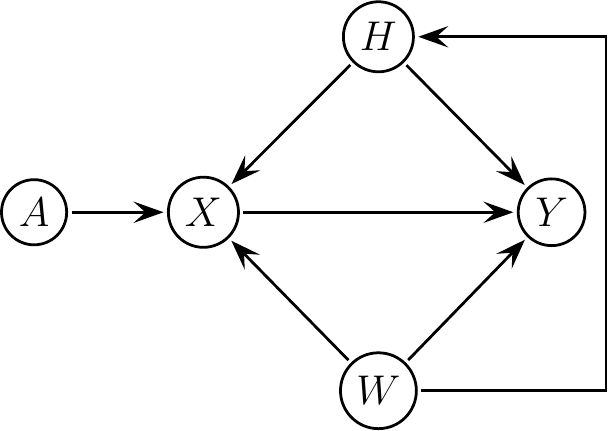}
	\end{tabular}
\end{tabular}
\end{figure}

\begin{figure}[h!]
	\centering
	\caption[]{\label{fig:COVERWH-noise} 
	The results come from $M=1000$ simulation runs from the SEM in Figure~\ref{fig:WH} with $\kappa=2$ and $\betazero=0$ for a range of sample sizes $\NN$ and with $\KK=2$ and $\Salg=100$ in Algorithm~\ref{algo:Summary}. 
	The nuisance functions are estimated with additive splines. 
	The figure displays the coverage of two-sided confidence intervals for $\betazero$, type I error for two-sided testing of the 
	hypothesis $H_0:\ \betazero = 0$, and scaled lengths of two-sided confidence intervals of DML (red),  \regDML\ (blue),  \regsDML\  (green), 
	LIML (orange), Fuller(1) (purple), and Fuller(4) (cyan), 
	where all results are at level $95\%$.
	At each sample size $\NN$, the lengths of the confidence intervals are scaled with the median length from DML. 
		The shaded regions in the coverage and the type I error plots represent $95\%$ confidence bands with respect to the $M$ simulation runs. 
	The blue and green lines are indistinguishable in the left panel.
}
	\includegraphics[width=\textwidth]{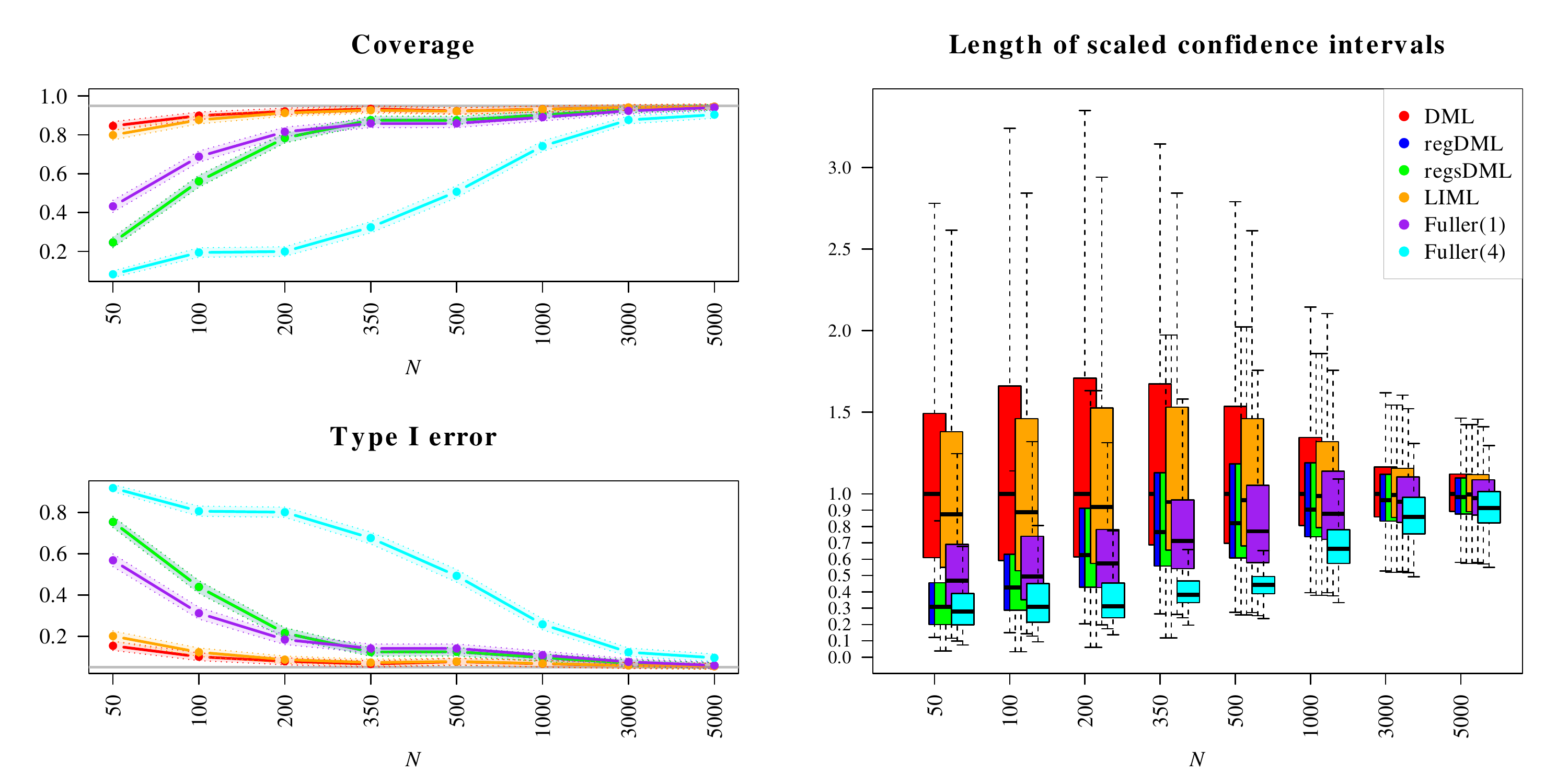}
\end{figure}

\begin{figure}[h!]
	\centering
	\caption[]{\label{fig:COVERWH-nonnoise} 
	The results come from $M=1000$ simulation runs from the SEM in Figure~\ref{fig:WH} with $\kappa=0.25$ and $\betazero=0$ for a range of sample sizes $\NN$ and with $\KK=2$ and $\Salg=100$ in Algorithm~\ref{algo:Summary}.
	The figure displays the coverage of two-sided confidence intervals for $\betazero$, type I error for two-sided testing of the 
	hypothesis $H_0:\ \betazero = 0$, and scaled lengths of two-sided confidence intervals of DML (red),  \regDML\ (blue),  \regsDML\  (green), 
	LIML (orange), Fuller(1) (purple), and Fuller(4) (cyan),
	where all results are at level $95\%$, and
	where the nuisance functions are estimated with additive splines. 
	At each sample size $\NN$, the lengths of the confidence intervals are scaled with the median length from DML. 
		The shaded regions in the coverage and the type I error plots represent $95\%$ confidence bands with respect to the $M$ simulation runs. 	 
	The blue and green lines are  indistinguishable in the left panel.
	}
	\includegraphics[width=\textwidth]{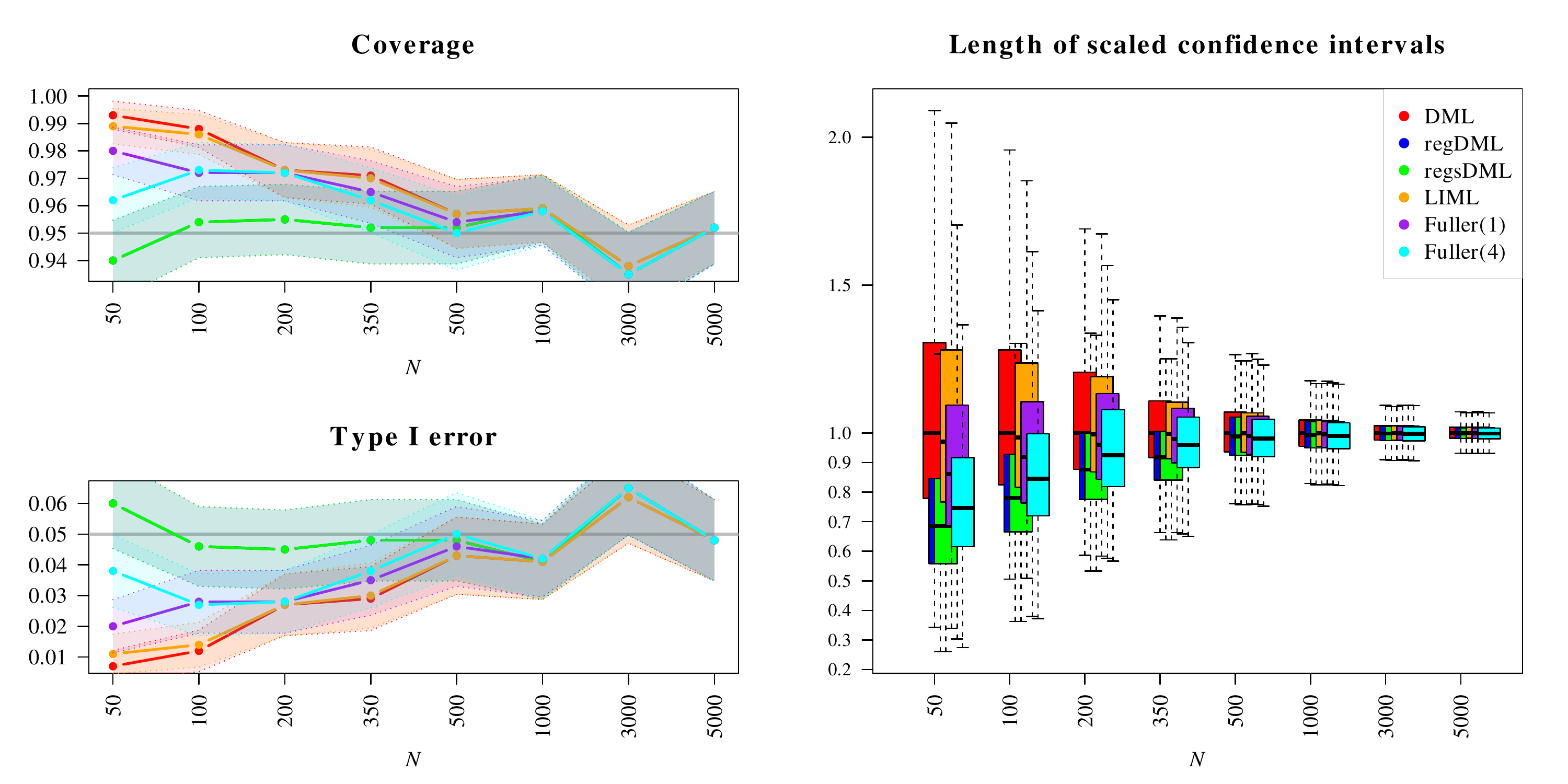}
\end{figure}

\subsection{Noise in $H\rightarrow W$}\label{sect:noisyW}

The variable $H$ may have a direct effect on $W$. If this link is strong enough with respect to the additional noise $\eps_W$ of $W$, it is possible to obtain some information of $H$ by observing $W$ similarly to Section~\ref{sect:noisyH}. 
The results again depend on the choice of functions in the model.

Figure~\ref{fig:COVERHW-bad} presents  simulation results where $H$ explains only little variation of $W$ compared with $\eps_W$. The confidence intervals of \regsDML\ do not attain a $95\%$ coverage for small sample sizes $\NN$ because the estimator inherits additional bias from DML. The situation can be improved by reducing the variation of $W$ that is not explained by $H$; see Figure~\ref{fig:COVERHW-good}. 

\begin{figure}[h!]
	\centering
	\caption[]{\label{fig:HW}An SEM and its associated causal graph.}
	\begin{tabular}{cc}
	\begin{tabular}{l}
	$\displaystyle 
	\begin{array}{r}
		(\eps_H, \eps_W,\eps_A, \eps_X, \eps_Y)\sim\mathcal{N}_5(\bo,\one)\\	
	\end{array}$
	\\
	$\displaystyle
	\begin{array}{lcl} 
			H &\leftarrow&\eps_H\\
			W &\leftarrow& 2H+\kappa \eps_W \\
			A &\leftarrow& e^{-0.5W}+0.5\eps_A\\
			X &\leftarrow& -A -0.1W^3
			-0.2W^2+0.4W\\
			&&\quad +\frac{7}{1+e^{-4H}}+0.25 \eps_X\\
			Y &\leftarrow& \betazero X+0.5W+0.5H+\eps_Y
	\end{array}$ 
	\end{tabular}
	& 
	\begin{tabular}{c}
          \includegraphics[width=0.35\textwidth]{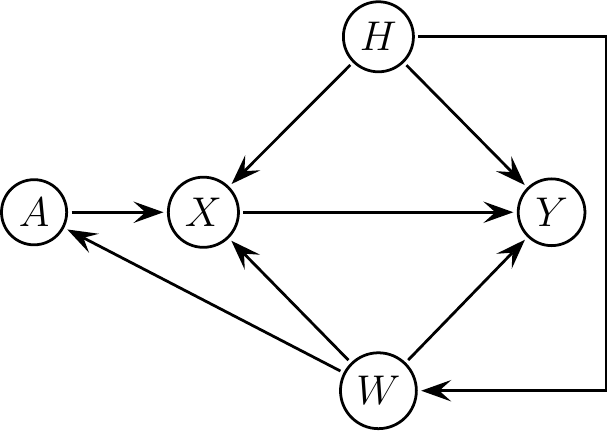}
	\end{tabular}
\end{tabular}
\end{figure}

\begin{figure}[h!]
	\centering
	\caption[]{\label{fig:COVERHW-bad} 
	The results come from $M=1000$ simulation runs from the SEM in Figure~\ref{fig:HW} with $\kappa=1$ and $\betazero=0$ for a range of sample sizes $\NN$ and with $\KK=2$ and $\Salg=100$ in Algorithm~\ref{algo:Summary}. 
	The nuisance functions are estimated with additive splines. 
	The figure displays the coverage of two-sided confidence intervals for $\betazero$, type I error for two-sided testing of the 
	hypothesis $H_0:\ \betazero = 0$, and scaled lengths of two-sided confidence intervals of DML (red),  \regDML\ (blue),  \regsDML\  (green), 
	LIML (orange), Fuller(1) (purple), and Fuller(4) (cyan), 
	where all results are at level $95\%$.
	At each sample size $\NN$, the lengths of the confidence intervals are scaled with the median length from DML. 
	The shaded regions in the coverage and the type I error plots represent $95\%$ confidence bands with respect to the $M$ simulation runs.
	The blue and green lines are indistinguishable in the left panel.
	}
	\includegraphics[width=\textwidth]{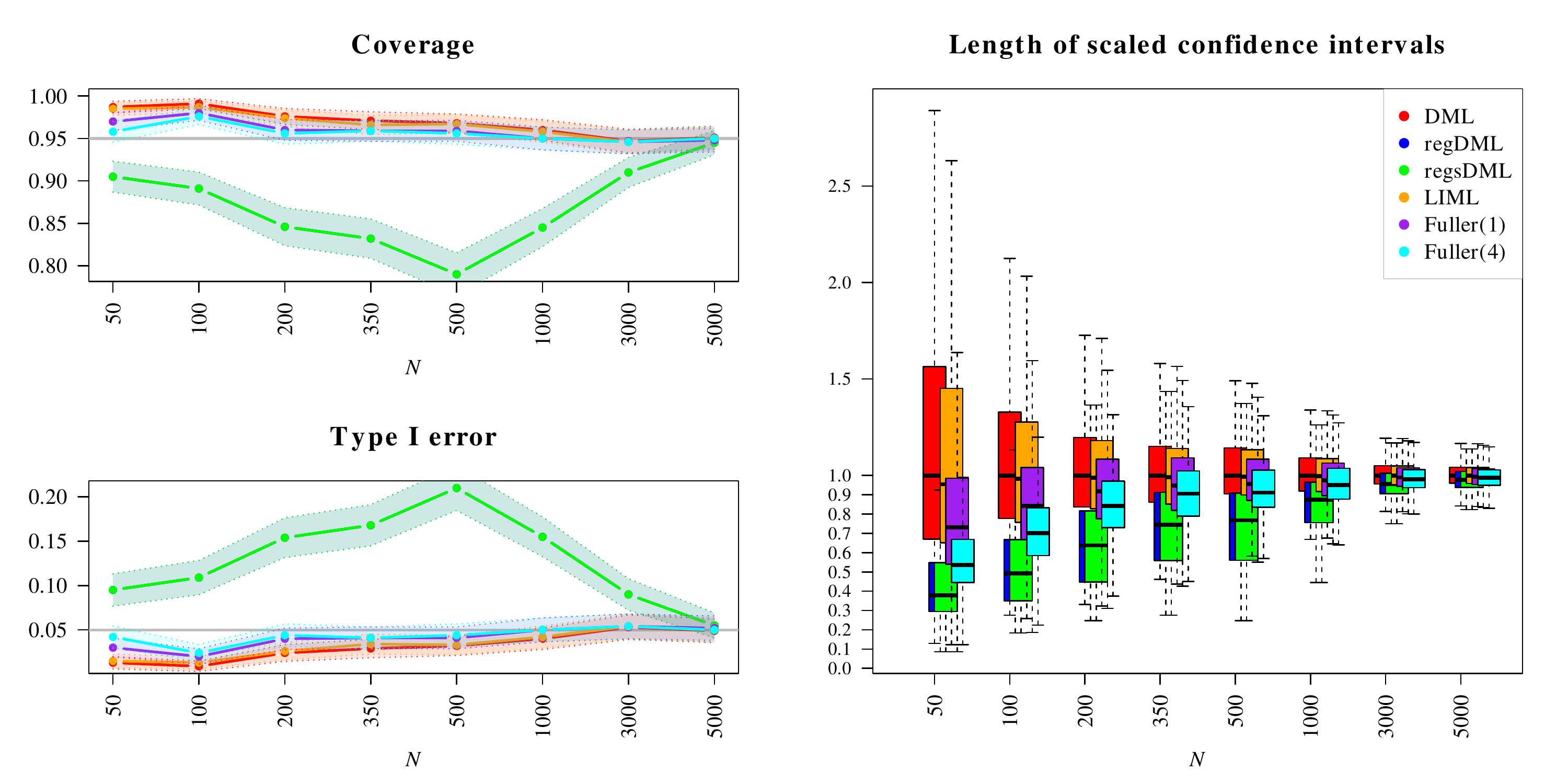}
\end{figure}

\begin{figure}[h!]
	\centering
	\caption[]{\label{fig:COVERHW-good} 
	The results come from $M=1000$ simulation runs from the SEM in Figure~\ref{fig:HW} with $\kappa=0.25$ and $\betazero=0$ for a range of sample sizes $\NN$ and with $\KK=2$ and $\Salg=100$ in Algorithm~\ref{algo:Summary}. 
	The nuisance functions are estimated with additive splines.
	The figure displays the coverage of two-sided confidence intervals for $\betazero$, type I error for two-sided testing of the 
	hypothesis $H_0:\ \betazero = 0$, and scaled lengths of two-sided confidence intervals of DML (red),  \regDML\ (blue),  \regsDML\  (green), 
	LIML (orange), Fuller(1) (purple), and Fuller(4) (cyan), 
	where all results are at level $95\%$.
	  At each sample size $\NN$, the lengths of the confidence intervals are scaled with the median length from DML. 
		The shaded regions in the coverage and the type I error plots represent $95\%$ confidence bands with respect to the $M$ simulation runs.
	The blue and green lines are indistinguishable in the left panel.
	}
	\includegraphics[width=\textwidth]{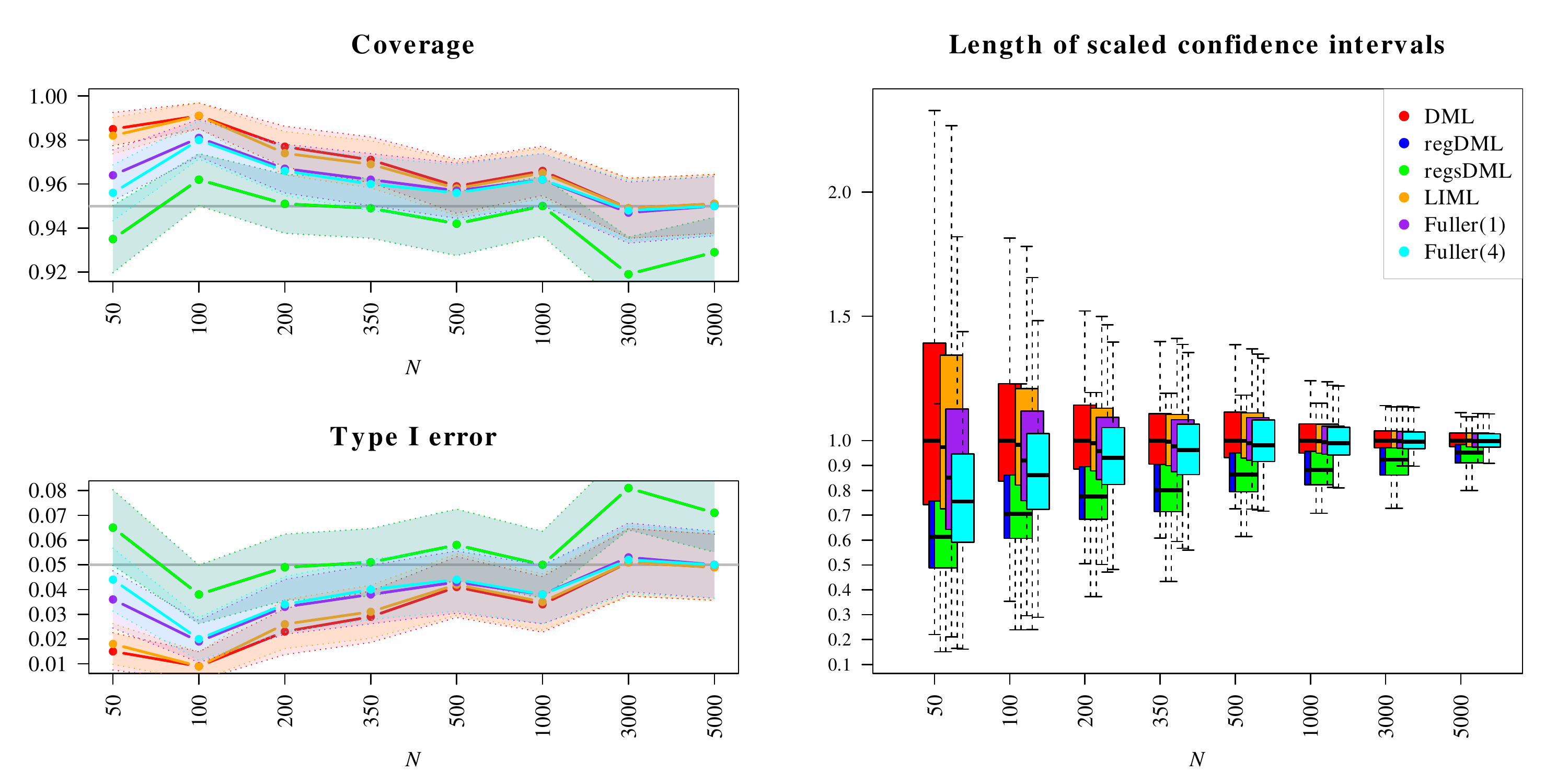}
\end{figure}

\section{Examples where the identifiability condition~\eqref{eq:identificationCondition} does and does not hold}\label{sect:discussionIdentifiabilityCondition}

The following examples illustrate SEMs where the identifiability condition~\eqref{eq:identificationCondition} holds and where it fails to hold. We argue using causal graphs; see~\citet{Lauritzen1996, Pearl1998, Pearl2009, Pearl2010, Peters2017,Maathuis2019}. 
By convention, we omit error variables in a causal graph if they are assumed to be mutually independent~\citep{Pearl2009}. 

\begin{example}\label{example:initialA}
Consider the SEM of the 1-dimensional variables $A$, $W$, $H$, $X$, and $Y$ and its associated causal graph given in Figure~\ref{fig:exampleA}, where $\betazero$ is a fixed unknown parameter, and where $\aW$, $\aX$, $\gY$, $\gH$, $\hX$, and $\hY$ are some appropriate functions. 
The variable $A$ directly influences $W$, and $W$ directly influences the hidden variable $H$. 
The variable $A$ is independent of $H$ given $W$ because every path from $A$ to $H$ is blocked by $W$; a proof is given in the appendix in Section~\ref{appendix:proofsOfSect2}. 

\begin{figure}[ht]
	\centering
	\caption[]{\label{fig:exampleA}An SEM satisfying the identifiability condition~\eqref{eq:identificationCondition} and its associated causal graph as in Example~\ref{example:initialA}. 
	}
	\begin{tabular}{cc}
	\begin{tabular}{l}
	$\displaystyle 
	\begin{array}{r}
		\eps_A, \eps_W,\eps_H, \eps_X, \eps_Y	
	\end{array}$
	\\
	$\displaystyle \begin{array}{lcl} 
		A &\leftarrow& \eps_A\\
		W &\leftarrow& \aW(A) +\eps_W\\
		H &\leftarrow& \gH(W) + \eps_H\\
		X &\leftarrow&  \aX(A) + \hX(H) + \eps_X\\
		Y &\leftarrow& \betazero X + \gY(W) + \hY(H) + \eps_Y 
	\end{array}$ 
	\end{tabular}
	& 
	\begin{tabular}{c}
      \includegraphics[width=0.35\textwidth]{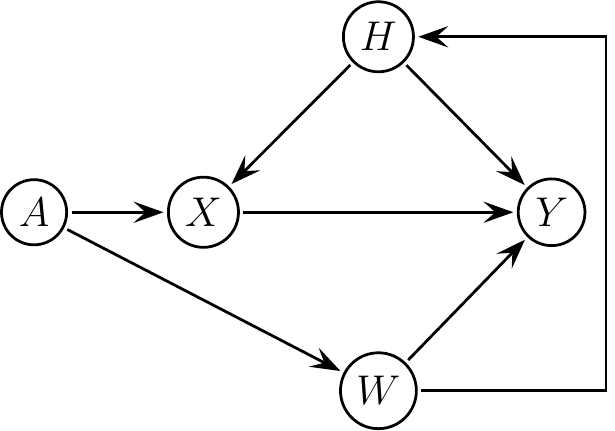}
	\end{tabular}
\end{tabular}
\end{figure}
\end{example}
\begin{proof}[Proof of Example~\ref{example:initialA}]
The path $A\rightarrow X\leftarrow H$ is blocked by the empty set because $X$ is a collider on this path. The paths $A\rightarrow \cdots\rightarrow Y\leftarrow H$ are blocked by the empty set because $Y$ is a collider on these paths. The path $A\rightarrow W\rightarrow H$ is blocked by $W$.
\end{proof}

The variable $A$ is exogenous in  Example~\ref{example:initialA}. In general, this is no requirement; see  Example~\ref{example:initialB}.

\begin{example}\label{example:initialB}
Consider the SEM of the 1-dimensional variables $H$, $W$, $A$, $X$, and $Y$ and its associated causal graph given in Figure~\ref{fig:exampleB}, where $\betazero$ is a fixed unknown parameter, and where $\aX$, $\gA$, $\gX$, $\gY$, $\hX$, $\hW$, and $\hY$ are some appropriate functions. 
The variable $A$ is not a source node. The hidden variable $H$ directly influences $W$, and $W$ directly influences $A$. 
The variable $A$ is independent of $H$ given $W$ because every path from $A$ to $H$ is blocked by $W$; a proof is given in the appendix in Section~\ref{appendix:proofsOfSect2}. 

\begin{figure}[ht]
	\centering
	\caption[]{\label{fig:exampleB}An SEM satisfying the identifiability condition~\eqref{eq:identificationCondition} and its associated causal graph as in Example~\ref{example:initialB}.
	}
	\begin{tabular}{cc}
	\begin{tabular}{l}
	$\displaystyle 
	\begin{array}{r}
		\eps_H,\eps_W,\eps_A,  \eps_X, \eps_Y	
	\end{array}$
	\\
	$\displaystyle \begin{array}{lcl} 
		H &\leftarrow& \eps_H\\
		W &\leftarrow& \hW(H) + \eps_W\\
		A &\leftarrow& \gA(W)+\eps_A\\
		X &\leftarrow& \aX(A) +\gX(W) +  \hX(H) + \eps_X\\
		Y &\leftarrow& \betazero X + \gY(W) + \hY(H) + \eps_Y 
	\end{array}$ 
	\end{tabular}
	& 
	\begin{tabular}{c}
      \includegraphics[width=0.35\textwidth]{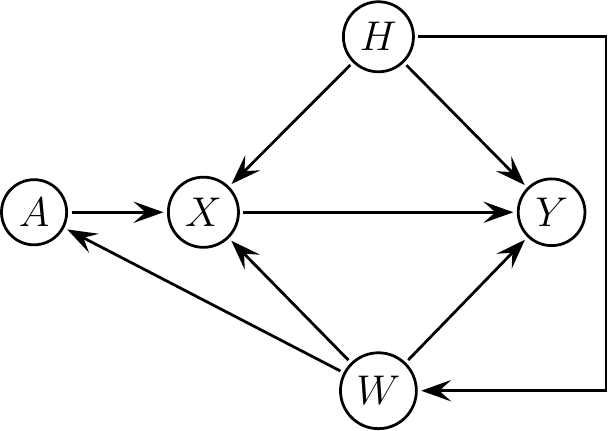}
	\end{tabular}
\end{tabular}
\end{figure}
\end{example}
\begin{proof}[Proof of Example~\ref{example:initialB}]
The path $A\rightarrow X\leftarrow H$ is blocked by the empty set because $X$ is a collider on this path. The paths $A\rightarrow X\rightarrow\cdots\rightarrow Y\leftarrow H$ are blocked by the empty set because $Y$ is a collider on these paths. 
The paths $A\leftarrow W \rightarrow Y \leftarrow X \leftarrow H$,
 $A\leftarrow W\leftarrow H$, and $A\rightarrow X\leftarrow W\leftarrow H$ are blocked by $W$. 
The path $A\leftarrow W\rightarrow Y\leftarrow H$ is blocked by $W$ or alternatively by the empty set because $Y$ is a collider on this path. 
The path $A\leftarrow W\rightarrow X\leftarrow H$ is blocked by $W$ or alternatively by the empty set because $X$ is a collider on this path. 
\end{proof}

Identifiability of $\betazero$ is not guaranteed if  $A$ and $H$ are independent. 
An illustration is given in Example~\ref{example:initialC}. Considering the instrument $A$ instead of $\Ra$ in Theorem~\ref{thm:identifiability} cannot solve the issue. In such a situation, stronger structural assumptions are required. 

\begin{example}\label{example:initialC}
Consider the SEM of the 1-dimensional variables $H$, $A$, $W$, $X$, and $Y$ and its associated causal graph given in Figure~\ref{fig:exampleC}, where $\betazero$ is a fixed unknown parameter. 
Although $A$ and $H$ are independent, the identifiability condition~\eqref{eq:identificationCondition} does not hold; a proof is given in the appendix in Section~\ref{appendix:proofsOfSect2}. 

\begin{figure}[ht]
	\centering
	\caption[]{\label{fig:exampleC}An SEM not satisfying the identifiability condition~\eqref{eq:identificationCondition} together with its associated causal graph as in Example~\ref{example:initialC} 
	}
	\begin{tabular}{cc}
	\begin{tabular}{l}
	$\displaystyle 
	\begin{array}{r}
		(\eps_H, \eps_A, \eps_W, \eps_X, \eps_Y)\sim\mathcal{N}_5(0,\one)
	\end{array}$
	\\
	$\displaystyle \begin{array}{lcl} 
		H &\leftarrow& \eps_H\\
		A &\leftarrow& \eps_A\\
		W &\leftarrow& A+H + \eps_W\\
		X &\leftarrow& A + W + H + \eps_X\\
		Y &\leftarrow& \betazero X + W + H + \eps_Y 
	\end{array}$ 
	\end{tabular}
	& 
	\begin{tabular}{c}
      \includegraphics[width=0.35\textwidth]{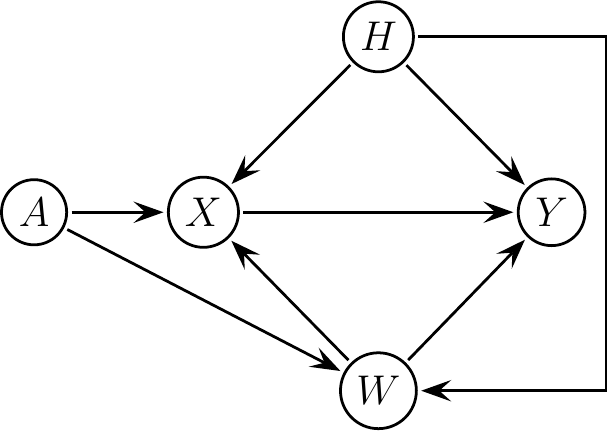}
	\end{tabular}
\end{tabular}
\end{figure}
\end{example}
\begin{proof}[Proof of Example~\ref{example:initialC}]
The two random variables $A$ and $H$ are independent because the path $A\rightarrow W\leftarrow H$ is not blocked by $W$.  Indeed, $W$ is a collider on this path. 

All random variables are 1-dimensional. Therefore, the representation of $\betazero$ in Theorem~\ref{thm:identifiability} is equivalent to the identifiability condition
\begin{displaymath}
	\E[\Ra(\Ry-\Rx\betazero)] =0
\end{displaymath}
in Equation~\eqref{eq:identificationCondition}. However, the identifiability condition does not hold in the present situation. We have
\begin{displaymath}
		\begin{array}{cl}
			&\E[\Ra(\Ry-\Rx\betazero)]\\
			=&\E[\Ra\big(H+ \eps_Y-\E[ H+\eps_Y|W]\big)\big]\\
			=&\E\big[\Ra\big(H-\E[ H|W]\big)\big]
		\end{array}
\end{displaymath}
because $\eps_Y$ is independent of $A$ and $W$ and centered. By the tower property for conditional expectations, we have
\begin{displaymath}
\begin{array}{cl}
	\E[\Ra(\Ry-\Rx\betazero)] = \E\big[AH-A\E[ H|W]\big].
	\end{array}
\end{displaymath}
Because $A$ and $H$ are independent and centered, we have $\E[AH]=0$. Moreover, we have  $H\sim\mathcal{N}(0,1)$, $W\sim\mathcal{N}(0,3)$, and $(W|H=h)\sim\mathcal{N}(h,2)$. The conditional distribution of $H|W=w$ can be obtained by applying Bayes' theorem and is given by $\mathcal{N}(\frac{1}{3}w, \frac{2}{3})$. Hence, we have $\E[H|W]=\frac{1}{3}W$ and 
\begin{displaymath}
	\E\big[ A\E[H|W] \big] = \frac{1}{3}\E[AW] = \frac{1}{3}\E\big[ A^2\big]=\frac{1}{3}\neq 0
\end{displaymath}
because $A$ is independent of $H$ and $\eps_W$. Therefore, we have $\E[\Ra(\Ry-\Rx\betazero)]\neq 0$ and $\betazero$ cannot be represented as in Theorem~\ref{thm:identifiability}.
\end{proof}

	\section{Proofs of Section~\ref{sect:identifiabilityConditionAndDML}}\label{appendix:proofsOfSect2}

\begin{proof}[Proof of Theorem~\ref{thm:identifiability}]
	To prove the theorem, 
	we need to verify 
	\begin{displaymath}
		\betazero= \Big(\E\big[\Rx\Ra^T\big]\E\big[\Ra\Ra^T\big]^{-1}\E\big[\Ra\Rx^T\big] \Big)^{-1}\E\big[\Rx\Ra^T\big]\E\big[\Ra\Ra^T\big]^{-1}\E[\Ra\Ry].	
	\end{displaymath}
	This statement is equivalent to
	\begin{displaymath}
		\bo = \E\big[\Rx\Ra^T\big]\E\big[\Ra\Ra^T\big]^{-1}\E\big[\Ra\big(\Ry-\Rx^T\betazero)\big].
	\end{displaymath}
	This last statement holds because $\E[\Ra(\Ry-\Rx^T\betazero)]$ equals $\bo$ due to the identifiability condition~\eqref{eq:identificationCondition}.
\end{proof}

\section{Proofs of Section~\ref{sect:DML}}\label{sect:proofsOfDML}

We denote by $\norm{\cdot}$ either the Euclidean norm for a vector or the operator norm for a matrix.

\begin{proof}[Proof of Proposition~\ref{prop:Neyman_orth}]
   	We have
	\begin{displaymath}
	\begin{array}{cl}
		&\frac{\partial}{\partial r}\Big|_{r=0} \EP\big[\loss\big(S;\betazero,\etazero+r(\eta-\etazero)\big)\big]\\
			=&\frac{\partial}{\partial r}\Big|_{r=0}  \EP\bigg[ \Big(A-\mA^0(W)-r\big(\mA(W)-\mA^0(W)\big)\Big)\\
			&\quad\quad\quad\cdot
			\bigg(Y-\mY^0(W)-r\big(\mY(W)-\mY^0(W)\big)\\
			&\quad\quad\quad\quad\quad-\Big( X-\mX^0(W)-r\big(\mX(W)-\mX^0(W)\big) \Big)^T\betazero\bigg) \bigg]\\
			=&  \EP\Big[ -\big(m_A(W)-\mA^0(W)\big)\Big(Y-\mY^0(W)-\big( X-\mX^0(W) \big)^T\betazero\Big) \\
			&\quad\quad\quad\quad +
			\big(A-\mA^0(W)\big)\Big(-\big(\mY(W)-\mY^0(W)\big)+\big( \mX(W)-\mX^0(W) \big)^T\betazero\Big)  \Big]. 
	\end{array}
\end{displaymath}
Subsequently, we show that both terms
\begin{equation}\label{eq:Neyman_orth_term1}
	\EP\Big[ \big(m_A(W)-\mA^0(W)\big)\Big(Y-\mY^0(W)-\big( X-\mX^0(W) \big)^T\betazero\Big)\Big]
\end{equation}
and
\begin{equation}\label{eq:Neyman_orth_term2}
	\EP\Big[  \big(A-\mA^0(W)\big)\Big(-\big(\mY(W)-\mY^0(W)\big)+\big( \mX(W)-\mX^0(W) \big)^T\betazero\Big)\Big]
\end{equation}
are equal to $\bo$. 
We first consider the term~\eqref{eq:Neyman_orth_term1}. Recall the notations $\mY^0(W)=\EP[Y|W]$ and $\mX^0(W)=\EP[X|W]$.  
We have
\begin{displaymath}
	\begin{array}{cl}
	&\EP\Big[ \big(m_A(W)-\mA^0(W)\big)\Big(Y-\mY^0(W)-\big( X-\mX^0(W) \big)^T\betazero\Big)\Big]\\
	=& \EP\Big[ \big(m_A(W)-\mA^0(W)\big)\EP\big[Y-\EP[Y|W]-( X-\EP[X|W] )^T\betazero\big|W\big]\Big]\\
	=&\bo. 
	\end{array}
\end{displaymath}
Next, we verify that the term given in~\eqref{eq:Neyman_orth_term2} vanishes. Recall the notation $\mA^0(W)=\EP[A|W]$.  
We have
\begin{displaymath}
	\begin{array}{cl}
		&\EP\Big[  \big(A-\mA^0(W)\big)\Big(-\big(\mY(W)-\mY^0(W)\big)+\big( \mX(W)-\mX^0(W) \big)^T\betazero\Big)\Big]\\
		=& \EP\Big[  \EP\big[A-\E[A|W]\big|W\big]\Big(-\big(\mY(W)-\mY^0(W)\big)+\big( \mX(W)-\mX^0(W) \big)^T\betazero\Big)\Big]\\
		=&\bo.
	\end{array}
\end{displaymath}
	Because both terms~\eqref{eq:Neyman_orth_term1} and~\eqref{eq:Neyman_orth_term2} vanish, we conclude
	\begin{displaymath}
		\frac{\partial}{\partial r}\Big|_{r=0} \EP\big[\loss\big(S;\betazero,\etazero+r(\eta-\etazero)\big)\big]=\bo. 
	\end{displaymath}
\end{proof}	

\begin{definition}\label{def:lossFunc}
Consider a set $\TauN$ of nuisance functions. 
For $S=(A,X,W,Y)$, an element $\eta=(\mA,\mX,\mY)\in \TauN$, and $\beta\in\R^d$, we introduce the score functions
\begin{equation}\label{eq:score_psitilde}
	\losstilde(S,\beta,\eta):= \big(X-\mX(W)\big)\Big(Y-\mY(W) - \big(X-\mX(W)\big)^T\beta\Big),
\end{equation}
and 
\begin{displaymath}
	\begin{array}{rcl}
	\lossone(S,\eta)&:=& \big(X-\mX(W)\big)\big(A-\mA(W)\big)^T,\\
	\losstwo(S,\eta)&:=& \big(A-\mA(W)\big)\big(A-\mA(W)\big)^T,\\
	\lossthree(S,\eta)&:=& \big(X-\mX(W)\big)\big(X-\mX(W)\big)^T. 
	\end{array}
\end{displaymath}
	Furthermore, let the matrices
\begin{displaymath}
	\begin{array}{rcl}
		\matA &:=& \EP[\lossthree(S;\etazero)],\\
		\matB &:=& \EP[\lossone(S;\etazero)] \EP[\losstwo(S;\etazero)]^{-1} \EP\big[\lossone^T(S;\etazero)\big],\\
		\matC &:=& \EP[\lossone(S;\etazero)] \EP[\losstwo(S;\etazero)]^{-1},\\
		\matE&:=& \EP[\losstwo(S;\etazero)]^{-1}\EP[\loss(S;\bg,\etazero)], \\
		\Jzero &:=& \matB^{-1}\matC,\\
		\tilJzero &:=& \EP\big[\loss(S;\betazero,\etazero)\loss^T(S;\betazero,\etazero)\big]=\E\big[\Ra\Ra^T(\Ry-\Rx^T\betazero)^2\big],\\
		\Jzerodp &:=&\EP[\Ra\Ra^T],\\
		\Jzerop &:=& \EP\big[\Rx(\Ra)^T\big](\Jzerodp)^{-1}\EP\big[\Ra(\Rx)^T\big]
	\end{array}
\end{displaymath}
and the variance-covariance matrix $\sigma^2 :=\Jzero\tilJzero\Jzero^T$. Moreover, let the score function 
\begin{displaymath}
	\lossoverline(\cdot;\betazero,\etazero):=\sigma^{-1}\tilJzero^{-\frac{1}{2}}\loss(\cdot;\betazero,\etazero).
\end{displaymath} 
\end{definition}

\begin{definition}\label{def:statisticalRates}
Let $\gamma\ge 0$. 
Consider a realization set $\TauN$ of nuisance functions. Define the statistical rates 
	\begin{displaymath}
		\rNpnumber^4 := \max_{\substack{S=(U,V,W,Z)\in\{A, X, Y\}^2\times\{W\}\times\{A,X,Y\}, \\\bzero\in\{\bg,\betazero, \bo\}}}
		\sup_{\eta\in\TauN} \EP[ \norm{\loss(S;\bzero,\eta)- \loss(S;\bzero,\etazero)} ],
	\end{displaymath}
	\begin{displaymath}
		\lambdaNpnumber := \max_{\substack{\losstest\in\{\loss,\losstilde,\losstwo\}^, \\\bzero\in\{\bg,\betazero, \bo\}}}
		\sup_{r\in(0,1), \eta\in\TauN} \normbig{ \partial_r^2\EP\big[\losstest\big(S;\bzero,\etazero+r(\eta-\etazero)\big)\big] },
	\end{displaymath}
	where we interpret $\losstwo\big(S;\bzero,\etazero+r(\eta-\etazero)\big)$ as $\losstwo\big(S;\etazero+r(\eta-\etazero)\big)$ in the definition of $\lambdaNpnumber$.
\end{definition}

\begin{remark}
We would like to remark that the  respective definition of the statistical rate $\rNpnumber$ given in~\citet{Chernozhukov2018}  involves  the $L_2$-norm of $\loss(S;\bzero,\eta)- \loss(S;\bzero,\etazero)$  instead of its $L_1$-norm. However, it is essential to employ the $L_1$-norm  to ensure that Assumption~\ref{assumpt:DMLboth5} can constrain the $L_2$-norm of the estimation errors incurred by the ML estimators of the nuisance parameters. Thus, we do not have to constrain their higher order errors to employ H{\"o}lder's inequality in Lemma~\ref{lem:boundRN}. 
\end{remark}

\begin{definition}\label{def:asymptNormal}
Let the nonrandom numbers
\begin{displaymath}
\rhoN := \rNpnumber + \NN^{\frac{1}{2}}\lambdaNpnumber \quad\textrm{and}\quad
\rhoNtilde := \NN^{\max\big\{\frac{4}{p}-1, -\frac{1}{2}\big\}}+\rNpnumber. 
\end{displaymath}
\end{definition}

If not stated otherwise, we assume the following Assumption~\ref{assumpt:DMLboth}
in all the results presented in the appendix. 

\begin{assumptions}\label{assumpt:DMLboth}
Let $\gamma\ge 0$. 
	Let $\KK\ge 2$ be a fixed integer independent of $\NN$. We assume that $\NN\ge\KK$ holds. 
Let $\{\deltaN\}_{\NN\ge \KK}$ and $\{\DeltaN\}_{\NN\ge \KK}$ be two sequences of positive numbers that converge to zero, where $\deltaN^{\frac{1}{4}}\ge \NN^{-\frac{1}{2}}$ holds. 
Let $\{\PcalN\}_{\NN\ge 1}$ be a sequence of sets of probability distributions $\PP$ of the quadruple $S=(A,W,X,Y)$.
	
	Let $p>4$. 
	For all $\NN$, for all $\PP\in\PcalN$, 
	consider a nuisance function realization sets $\TauN$ such that the following conditions hold: 
	\begin{enumerate}[label={\theassumptions.\arabic*}]
		\item\label{assumpt:DMLboth1}
			We have an SEM given by~\eqref{eq:SEM} that satisfies the identifiability conditon~\eqref{eq:identificationCondition}. 
		\item\label{assumpt:DMLboth2}
			There exists a finite real constant $\CpnormRV$ satisfying $\normP{A}{p}+\normP{X}{p}+\normP{Y}{p}\le \CpnormRV$.
		\item\label{assumpt:DMLboth3}
			The matrix $\EP[\Rx\Ra^T]\in\R^{d\times q}$ has full rank  $d$. This in particular requires $q\ge d$.  
		The matrices $\matA\in\R^{d\times d}$ and  $\Jzerodp \in\R^{q\times q}$ are invertible.
		Furthermore, the smallest and largest singular values of the symmetric matrices $\Jzerodp$
	and $\Jzerop$ 
	are bounded away from $0$ by $\cone>0$ and are bounded away from $+\infty$ by $\ctwo<\infty$. 	 
		\item\label{assumpt:DMLboth4}
		The symmetric matrices $\tilJzero$, $\matA+(\gamma-1)\matB$, and $\matD$ are invertible, where $\matD$ is introduced in Definition~\ref{def:asymptNormalgamma} in the appendix in Section~\ref{sect:proofsRegularizedDML}.
		The smallest and largest singular values of these matrices are bounded away from $0$ by $\cthree$ and are bounded away from $+\infty$ by $\cfour$. 
		\item\label{assumpt:DMLboth5}
		The set $\TauN$ consists of $\PP$-integrable functions $\eta=(\mA,\mX,\mY)$ whose $p$th moment exists and it contains $\etazero$.  There exists a finite real constant $\CpnormEta$ such that
		\begin{displaymath}
			\begin{array}{l}
			\normP{\etazero-\eta}{p}\le \CpnormEta, 
			\quad
			\normP{\etazero-\eta}{2}\le \deltaNnumber, \quad
			\normP{\mA^0(W)-\mA(W)}{2}^2
			\le\deltaNnumber\NN^{-\frac{1}{2}},\\
			\normP{\mX^0(W)-\mX(W)}{2}\big(\normP{\mY^0(W)-\mY(W)}{2} + \normP{\mX^0(W)-\mX(W)}{2}\big)
			\le\deltaNnumber\NN^{-\frac{1}{2}},\\
			\normP{\mA^0(W)-\mA(W)}{2}\big(\normP{\mY^0(W)-\mY(W)}{2}+\normP{\mX^0(W)-\mX(W)}{2}\big)\le\deltaNnumber\NN^{-\frac{1}{2}}
			\end{array}
		\end{displaymath}
		hold for all elements $\eta$ of $\TauN$. 
		Given  a partition $I_1,\ldots,I_{\KK}$ of $\indset{\NN}$ where each $I_{\kk}$ is of size $\nn=\frac{\NN}{\KK}$, 
		for all $\kk\in\indset{\KK}$, the nuisance parameter estimate $\hetaIkc=\hetaIkc(\SIkc)$ satisfies 
		\begin{displaymath}
			\begin{array}{l}
			\normP{\etazero-\hetaIkc}{p}\le \CpnormEta,
			\quad
			\normP{\etazero-\hetaIkc}{2}\le \deltaNnumber, \quad
			\normP{\mA^0(W)-\hmA^{\Ikc}(W)}{2}^2 
			\le\deltaNnumber\NN^{-\frac{1}{2}},\\
			\normP{\mX^0(W)-\hmX^{\Ikc}(W)}{2}\big(\normP{\mY^0(W)-\hmY^{\Ikc}(W)}{2}+ \normP{\mX^0(W)-\hmX^{\Ikc}(W)}{2} \big)\le\deltaNnumber\NN^{-\frac{1}{2}},\\
			\normP{\mA^0(W)-\hmA^{\Ikc}(W)}{2}\big(\normP{\mY^0(W)-\hmY^{\Ikc}(W)}{2}+\normP{\mX^0(W)-\hmX^{\Ikc}(W)}{2}\big) \le\deltaNnumber\NN^{-\frac{1}{2}}
			\end{array}
		\end{displaymath}
		with $\PP$-probability no less than $1-\DeltaNnumber$.
		Denote by $\EpsN$ the event that $\hetaIkc=\hetaIkc(\SIkc)$ belongs to $\TauN$ and assume that this event  holds with $\PP$-probability no less than $1-\DeltaNnumber$. 
	\end{enumerate}
\end{assumptions}

For instance, 
invertibility of the square matrices $\EP[\Ra\Ra^T]$ and $\tilJzero$ is satisfied if $\eps_Y$ is independent of both $A$ and $W$ and has a strictly positive variance. 

\begin{remark}
	It is possible to drop some of the assumptions in Assumption~\ref{assumpt:DMLboth} if we are interested in proving the results about DML only. 
	The full assumption is required to prove the results about both DML and \regDML. 
\end{remark}

We assume Assumption~\ref{assumpt:DMLboth} throughout. 

	\begin{lemma}\label{lem:boundPnorm1}
	Let $\ttt\ge 1$. 
	Consider a $t$-dimensional random variable $Z$. Denote the joint law of $Z$ and $W$ by $\PP$. Then we have 
	\begin{displaymath}
		\normP{Z-\EP[Z|W]}{{\ttt}}\le 2\normP{Z}{\ttt}. 
	\end{displaymath}
\end{lemma}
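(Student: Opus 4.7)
The plan is to apply two standard facts: (i) the triangle inequality for the $L^{\ttt}(\PP)$ norm, and (ii) the contraction property of conditional expectation on $L^{\ttt}$ for $\ttt \geq 1$. Specifically, I would first bound
\begin{displaymath}
	\normP{Z-\EP[Z|W]}{\ttt} \leq \normP{Z}{\ttt} + \normP{\EP[Z|W]}{\ttt}
\end{displaymath}
by Minkowski's inequality (applied componentwise if $Z$ is vector-valued, together with the equivalence of norms on $\R^{\ttt}$, or more cleanly by using the Euclidean norm inside the $L^{\ttt}$ norm, which is still subadditive).

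Next, I would show $\normP{\EP[Z|W]}{\ttt} \leq \normP{Z}{\ttt}$. For scalar $Z$ this follows from the conditional Jensen's inequality applied to the convex function $x \mapsto |x|^{\ttt}$: we get $|\EP[Z|W]|^{\ttt} \leq \EP[|Z|^{\ttt}|W]$, and then taking expectations and using the tower property yields $\EP[|\EP[Z|W]|^{\ttt}] \leq \EP[|Z|^{\ttt}]$. For vector-valued $Z$, the same reasoning applies to $\|Z\|$, using that $x \mapsto \|x\|^{\ttt}$ is convex on $\R^t$; combining with $\|\EP[Z|W]\| \leq \EP[\|Z\| \mid W]$ (a version of conditional Jensen for the convex function $\|\cdot\|$) gives the same contraction.

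Combining the two bounds yields $\normP{Z-\EP[Z|W]}{\ttt} \leq 2\normP{Z}{\ttt}$, as claimed. There is no real obstacle here; the only mild subtlety is making sure the vector-valued case is handled correctly, but once we interpret $\normP{\cdot}{\ttt}$ as the $L^{\ttt}$ norm of the Euclidean norm, conditional Jensen applied to $\|\cdot\|^{\ttt}$ (or equivalently applied coordinatewise to $|\cdot|^{\ttt}$ with a subsequent use of the triangle inequality) handles it cleanly.
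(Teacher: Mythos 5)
Your proposal is correct and follows essentially the same route as the paper: the triangle inequality combined with the $L^{\ttt}$-contraction of conditional expectation, proved via conditional Jensen applied to the convex function $x\mapsto\norm{x}^{\ttt}$. The paper's proof is exactly this argument, so there is nothing to add.
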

\begin{proof}[Proof of Lemma~\ref{lem:boundPnorm1}]
	Because the Euclidean norm to the $\ttt$th power is convex for ${\ttt}\ge 1$, we have
	\begin{displaymath}
		\normP{\EP[Z|W]}{\ttt}^{\ttt} = \EP\big[\norm{\EP[Z|W]}^{\ttt}\big]\le \EP\big[ \EP[ \norm{Z}^{\ttt}|W ] \big] = \EP[\norm{Z}^{\ttt}] = \normP{Z}{{\ttt}}^{\ttt}
	\end{displaymath}
	by Jensen's inequality. We hence have
	\begin{displaymath}
		\begin{array}{rcl}
			\normP{Z-\EP[Z|W]}{{\ttt}}\le\normP{Z}{{\ttt}} + \normP{\EP[Z|W]}{{\ttt}}\le 2\normP{Z}{{\ttt}}
		\end{array}
	\end{displaymath}
	by the triangle inequality. 
\end{proof}

\begin{lemma}\label{lem:boundPnorm2}
	Consider a $t$-dimensional random variable $Z$. 
	Denote the joint law of $Z$ and $W$ by $\PP$. Then we have 
	\begin{displaymath}
		\normbig{\EP\big[ZZ^T-\EP[Z|W]\EP[Z^T|W]\big]}\le 2 \normP{Z}{2}^2. 
	\end{displaymath}
\end{lemma}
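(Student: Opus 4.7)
The plan is to split the matrix inside the expectation into two pieces by the triangle inequality for the operator norm, and then bound each piece separately by $\normP{Z}{2}^2$. Writing $M := \EP[ZZ^T - \EP[Z|W]\EP[Z^T|W]]$, we immediately get $\norm{M} \le \norm{\EP[ZZ^T]} + \norm{\EP[\EP[Z|W]\EP[Z^T|W]]}$ from the triangle inequality applied after moving the norm inside the sum of expectations.

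For the first term, I would use that for any unit vector $v\in\R^t$, Cauchy--Schwarz gives
\begin{displaymath}
    \norm{\EP[ZZ^T]v} = \norm{\EP[Z(Z^Tv)]} \le \EP\big[\norm{Z}\,\normone{Z^Tv}\big] \le \EP[\norm{Z}^2]\,\norm{v} = \normP{Z}{2}^2,
\end{displaymath}
so that $\norm{\EP[ZZ^T]}\le \normP{Z}{2}^2$.
For the second term, the same Cauchy--Schwarz argument applied to $\EP[Z|W]$ in place of $Z$ gives $\norm{\EP[\EP[Z|W]\EP[Z^T|W]]}\le \EP[\norm{\EP[Z|W]}^2]$. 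Then the conditional Jensen inequality for the convex map $z\mapsto \norm{z}^2$ (as already used in the proof of Lemma~\ref{lem:boundPnorm1}) yields $\EP[\norm{\EP[Z|W]}^2]\le \EP[\EP[\norm{Z}^2\mid W]] = \normP{Z}{2}^2$.

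Combining the two bounds via the triangle inequality gives exactly $\norm{M}\le 2\normP{Z}{2}^2$, as required. There is no real obstacle here; the only thing to be slightly careful about is that the first inequality is the operator-norm triangle inequality applied to $\EP[ZZ^T]-\EP[\EP[Z|W]\EP[Z^T|W]]$ (which loses the cancellation $\EP[ZZ^T-\EP[Z|W]\EP[Z^T|W]]=\EP[(Z-\EP[Z|W])(Z-\EP[Z|W])^T]$ available by the tower property), but this is what produces the factor $2$ in the statement and matches the constant claimed in the lemma.
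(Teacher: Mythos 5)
Your proposal is correct and follows essentially the same route as the paper: the factor $2$ comes from the triangle inequality applied to the two outer-product terms, each of which is bounded by $\normP{Z}{2}^2$ via Cauchy--Schwarz (for the operator norm of an expected outer product) and the conditional Jensen inequality for $\norm{\cdot}^2$. The only cosmetic difference is that the paper first pulls the norm inside the expectation by convexity and then argues pointwise, whereas you bound the operator norms of the two expectations directly with a unit-vector argument; your closing remark that the tower-property cancellation is deliberately forgone is also consistent with how the paper obtains the constant $2$.
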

\begin{proof}[Proof of Lemma~\ref{lem:boundPnorm2}]
	Because the Euclidean norm is convex, we have 
	\begin{displaymath}
		\begin{array}{rcl}
		\normbig{\EP\big[ZZ^T-\EP[Z|W]\EP[Z^T|W]\big]} 
		&\le& \EP\big[\norm{ZZ^T}+\norm{\EP[Z|W]\EP[Z^T|W]}\big]\\
		&\le& \EP\big[\norm{Z}^2+\norm{\EP[Z|W]}^2 \big]\\
		\end{array}
	\end{displaymath}
	by Jensen's inequality, the triangle inequality and the Cauchy--Schwarz inequality. 
	Because the squared Euclidean norm is convex, we have 
	\begin{displaymath}
		\norm{\EP[Z|W]}^2 \le \EP\big[\norm{Z}^2\big|W\big]
	\end{displaymath}
	by Jensen's inequality. Therefore, we have
	\begin{displaymath}
		\begin{array}{rcl}
		\normbig{\EP\big[ZZ^T-\EP[Z|W]\EP[Z^T|W]\big]} 
		&\le& \EP\big[\norm{Z}^2+\norm{\EP[Z|W]}^2 \big]\\
		&\le& \EP\big[\norm{Z}^2+\EP[\norm{Z}^2|W] \big] \\
		&=& 2\normP{Z}{2}^2. 
		\end{array}
	\end{displaymath}
\end{proof}

\begin{lemma}\label{lem:boundPnorm4}
	Consider a $t_1$-dimensional random variable $Z_1$ and a $t_2$-dimensional  random variable $Z_2$. 
	Denote the joint law of $Z_1$, $Z_2$, and $W$ by $\PP$. Then we have 
	\begin{displaymath}
		\normbig{\EP\big[(Z_1-\EP[Z_1|W])(Z_2-\EP[Z_2|W])^T\big]}^2\le \normP{Z_1}{2}^2\normP{Z_2}{2}^2.
	\end{displaymath}
\end{lemma}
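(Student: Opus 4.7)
The plan is to reduce the matrix inequality to a scalar one by using the variational characterization of the operator norm, then apply the Cauchy--Schwarz inequality, and finally exploit the fact that subtracting the conditional mean reduces $L^2$-norm.

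First I would write $\tilde Z_i := Z_i - \EP[Z_i|W]$ for $i \in \{1,2\}$ and recall that the operator norm satisfies
\begin{displaymath}
  \normbig{\EP[\tilde Z_1 \tilde Z_2^T]} = \sup_{\norm{u}=\norm{v}=1} u^T \EP[\tilde Z_1 \tilde Z_2^T] v,
\end{displaymath}
where the supremum is over $u\in\R^{t_1}$ and $v\in\R^{t_2}$. For fixed unit $u,v$, linearity of expectation gives $u^T \EP[\tilde Z_1 \tilde Z_2^T] v = \EP[(u^T \tilde Z_1)(v^T \tilde Z_2)]$, and the scalar Cauchy--Schwarz inequality then bounds this by $\sqrt{\EP[(u^T \tilde Z_1)^2]\,\EP[(v^T \tilde Z_2)^2]}$. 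Using $|u^T \tilde Z_1| \le \norm{u}\,\norm{\tilde Z_1} = \norm{\tilde Z_1}$ and analogously for $v$, this is at most $\normP{\tilde Z_1}{2}\,\normP{\tilde Z_2}{2}$.

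The remaining ingredient is the $L^2$-contractivity of the residual, namely $\normP{\tilde Z_i}{2} \le \normP{Z_i}{2}$. I would obtain this from the Pythagorean identity
\begin{displaymath}
  \EP[\norm{Z_i}^2] = \EP[\norm{Z_i-\EP[Z_i|W]}^2] + \EP[\norm{\EP[Z_i|W]}^2],
\end{displaymath}
which follows from the tower property applied to the cross term $\EP[(Z_i-\EP[Z_i|W])^T \EP[Z_i|W]] = 0$. Dropping the non-negative second summand yields the desired inequality. Combining the two bounds and squaring gives the lemma.

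I do not expect any serious obstacle here; the only subtle point is distinguishing this argument from the one used for Lemma~\ref{lem:boundPnorm2}, where a triangle-inequality approach loses a factor of $2$. The improvement to a sharp constant comes precisely from applying Cauchy--Schwarz before passing to the $L^2$-norm, combined with the orthogonality of residuals, rather than bounding each term of $Z_i - \EP[Z_i|W]$ separately.
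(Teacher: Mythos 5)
Your proposal is correct and follows essentially the same route as the paper: a Cauchy--Schwarz bound giving $\normbig{\EP[\tilde Z_1\tilde Z_2^T]}^2\le\EP[\norm{\tilde Z_1}^2]\,\EP[\norm{\tilde Z_2}^2]$ (the paper states this directly; you unpack it via the variational characterization of the operator norm), followed by the contraction $\EP[\norm{Z_i-\EP[Z_i|W]}^2]\le\normP{Z_i}{2}^2$, which the paper justifies by the MSE-minimizing property of conditional expectation and you justify by the equivalent Pythagorean identity. No gaps.
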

\begin{proof}[Proof of Lemma~\ref{lem:boundPnorm4}]
	By the  Cauchy--Schwarz inequality, we have 
	\begin{displaymath}
	\begin{array}{cl}
		&\normbig{\EP\big[(Z_1-\EP[Z_1|W])(Z_2-\EP[Z_2|W])^T\big]}^2\\
		\le& \EP\big[  \norm{(Z_1-\EP[Z_1|W])}^2\big]\EP\big[\norm{(Z_2-\EP[Z_2|W])}^2 \big]. 
	\end{array}
	\end{displaymath}
	Because the conditional expectation minimizes the mean squared error~\citep[Theorem 5.1.8]{Durrett2010}, we have
	\begin{displaymath}
		 \EP\big[  \norm{(Z_1-\EP[Z_1|W])}^2\big] \le \normP{Z_1}{2}^2
	\end{displaymath}
	and 
	\begin{displaymath}
		 \EP\big[  \norm{(Z_2-\EP[Z_2|W])}^2\big] \le \normP{Z_2}{2}^2. 
	\end{displaymath}
	In total, we thus have
	\begin{displaymath}
		\normbig{\EP\big[(Z_1-\EP[Z_1|W])(Z_2-\EP[Z_2|W])^T\big]}^2
		\le \normP{Z_1}{2}^2\normP{Z_2}{2}^2. 
	\end{displaymath}
\end{proof}

\begin{lemma}\label{lem:boundPnorm3}
	Consider a $t_1$-dimensional random variable $Z_1$ and a $t_2$-dimensional  random variable $Z_2$. 
	Denote the joint law of $Z_1$, $Z_2$, and $W$ by $\PP$. Then we have 
	\begin{displaymath}
		\normbig{\EP\big[(Z_1-\EP[Z_1|W])Z_2^T\big]}^2\le \normP{Z_1}{2}^2\normP{Z_2}{2}^2.
	\end{displaymath}
\end{lemma}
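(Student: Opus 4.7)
The plan is to reduce this lemma to the preceding Lemma~\ref{lem:boundPnorm4} by observing that centering the second factor is free inside the expectation. Concretely, I would first show that
\begin{displaymath}
	\EP\big[(Z_1-\EP[Z_1|W])\,\EP[Z_2^T|W]\big] = \bo,
\end{displaymath}
which follows by conditioning on $W$: $\EP[Z_1-\EP[Z_1|W]\mid W] = \bo$, so the tower property makes the whole expectation vanish.

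Subtracting this zero quantity from $\EP\big[(Z_1-\EP[Z_1|W])Z_2^T\big]$ yields the identity
\begin{displaymath}
	\EP\big[(Z_1-\EP[Z_1|W])Z_2^T\big] = \EP\big[(Z_1-\EP[Z_1|W])(Z_2-\EP[Z_2|W])^T\big].
\end{displaymath}
At this point the conclusion is immediate from Lemma~\ref{lem:boundPnorm4}, which bounds the operator norm of the right-hand side by $\normP{Z_1}{2}\normP{Z_2}{2}$, and squaring gives the claimed inequality.

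There is essentially no obstacle: the whole argument is a one-line tower-property observation followed by invoking an already-proved lemma. The only thing to be a little careful about is that the identity holds as a matrix identity (not merely entrywise up to sign), which is why I use $\EP[Z_2^T \mid W] = \EP[Z_2 \mid W]^T$ and the fact that the conditional expectation acts componentwise.
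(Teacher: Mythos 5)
Your proof is correct, but it takes a genuinely different route from the paper's. You exploit the orthogonality of the residual $Z_1-\EP[Z_1|W]$ to every $W$-measurable function: the tower property gives $\EP\big[(Z_1-\EP[Z_1|W])\,\EP[Z_2|W]^T\big]=\bo$, so centering $Z_2$ is free and the statement collapses onto Lemma~\ref{lem:boundPnorm4}. The paper instead argues directly: it applies the Cauchy--Schwarz inequality to the uncentered product, obtaining $\normbig{\EP\big[(Z_1-\EP[Z_1|W])Z_2^T\big]}^2\le \EP\big[\norm{Z_1-\EP[Z_1|W]}^2\big]\,\EP\big[\norm{Z_2}^2\big]$, and then bounds the first factor by $\normP{Z_1}{2}^2$ using the fact that the conditional expectation minimizes the mean squared error. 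Your argument is more conceptual and makes transparent that Lemmas~\ref{lem:boundPnorm4} and~\ref{lem:boundPnorm3} are really the same inequality in disguise; the paper's is self-contained and does not rely on the earlier lemma. Your one point of care --- that $\EP[Z_2\mid W]^T=\EP[Z_2^T\mid W]$ because conditional expectation acts componentwise --- is exactly the right thing to flag, and with it the identity holds as a matrix identity, so the reduction is airtight.
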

\begin{proof}[Proof of Lemma~\ref{lem:boundPnorm3}]
	By the  Cauchy--Schwarz inequality, we have 
	\begin{displaymath}
		\normbig{\EP\big[(Z_1-\EP[Z_1|W])Z_2^T\big]}^2
		\le \EP\big[  \norm{Z_1-\EP[Z_1|W]}^2\big]\EP\big[\norm{Z_2}^2 \big].
	\end{displaymath}
	Because the conditional expectation minimizes the mean squared error~\citep[Theorem 5.1.8]{Durrett2010}, we have
	\begin{displaymath}
		\EP\big[  \norm{Z_1-\EP[Z_1|W]}^2\big] \le \EP\big[  \norm{Z_1}^2\big] = \normP{Z_1}{2}^2. 
	\end{displaymath}
	Consequently, 
	\begin{displaymath}
		\normbig{\EP\big[(Z_1-\EP[Z_1|W])Z_2^T\big]}^2\le \normP{Z_1}{2}^2\normP{Z_2}{2}^2
	\end{displaymath}
	holds. 	
\end{proof}

\begin{lemma}\label{lem:squareBound}
	Let $a,b\in\R$ be two numbers. We have
	\begin{equation}\label{eq:squareBound}
		(a+b)^2\le 2a^2+2b^2.
	\end{equation}
\end{lemma}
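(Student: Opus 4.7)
The statement is the elementary inequality $(a+b)^2 \le 2a^2 + 2b^2$ for real numbers $a,b$. The plan is to reduce it to the nonnegativity of a square. Specifically, I would start from the obvious fact $(a-b)^2 \ge 0$, which upon expansion gives $2ab \le a^2 + b^2$.

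Next, I would expand the left-hand side of the target inequality as $(a+b)^2 = a^2 + 2ab + b^2$ and substitute the bound on $2ab$ to obtain $(a+b)^2 \le a^2 + (a^2 + b^2) + b^2 = 2a^2 + 2b^2$, which is the claim.

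Alternatively, one could invoke the Cauchy--Schwarz inequality applied to the vectors $(1,1)$ and $(a,b)$ in $\R^2$, giving $(a+b)^2 \le (1^2 + 1^2)(a^2 + b^2) = 2(a^2+b^2)$; or use convexity of $x \mapsto x^2$ with $(a+b)^2 = 4\bigl(\tfrac{a+b}{2}\bigr)^2 \le 4 \cdot \tfrac{a^2+b^2}{2} = 2a^2 + 2b^2$. Any of these one-line arguments suffices.

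There is no real obstacle here; the lemma is a standard auxiliary inequality, stated only for convenient reference in later estimates (presumably to split terms of the form $\|X+Y\|^2$ in $L^2$ bounds). The shortest complete proof is the $(a-b)^2 \ge 0$ route, which I would adopt for its transparency.
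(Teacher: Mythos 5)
Your proof is correct and takes essentially the same route as the paper, which also derives the inequality from the equivalence with $0\le(a-b)^2$. The alternative one-liners you mention are fine but unnecessary; the adopted argument matches the paper's.
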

\begin{proof}[Proof of Lemma~\ref{lem:squareBound}]
	The true statement $0\le(a-b)^2$ is equivalent to~\eqref{eq:squareBound}. 
\end{proof}

The following lemma proved in~\citet{Chernozhukov2018} states that conditional convergence in probability implies unconditional convergence in probability. 
\begin{lemma}\label{lem:ChernozhukovLemma}(Based on~\citet[Lemma 6.1]{Chernozhukov2018}.)
Let $\{X_t\}_{t\ge 1}$ and $\{Y_t\}_{t\ge 1}$ be sequences of random vectors and let ${\ttt}\ge 1$.
Consider a deterministic sequence $\{\eps_t\}_{t\ge 1}$ with $\eps_t\rightarrow 0$ as $t\rightarrow\infty$ such that we have $\E[\norm{X_t}^{\ttt} |Y_t]\le\eps_t^{\ttt}$. 
Then we have $\norm{X_t}=O_{\PP}(\eps_t)$ unconditionally, meaning that that for any sequence $\{\ell_t\}_{t\ge 1}$ with $\ell_t\rightarrow\infty$ as $t\rightarrow\infty$ we have $\PP(\norm{X_t}>\ell_t\eps_t)\rightarrow 0$.
\end{lemma}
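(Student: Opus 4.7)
The plan is to reduce the unconditional statement to a conditional Markov bound and then integrate out via the tower property. Fix an arbitrary sequence $\{\ell_t\}_{t\ge 1}$ with $\ell_t\to\infty$ as $t\to\infty$. First I would apply the conditional Markov inequality to the non-negative random variable $\norm{X_t}^{\ttt}$ given $Y_t$: for each $t$,
\begin{displaymath}
\PP\big(\norm{X_t}>\ell_t\eps_t\,\big|\,Y_t\big)
= \PP\big(\norm{X_t}^{\ttt} > (\ell_t\eps_t)^{\ttt}\,\big|\,Y_t\big)
\le \frac{\E\big[\norm{X_t}^{\ttt}\,\big|\,Y_t\big]}{(\ell_t\eps_t)^{\ttt}}.
\end{displaymath}

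Next, I would invoke the hypothesis $\E[\norm{X_t}^{\ttt}\mid Y_t]\le \eps_t^{\ttt}$ to obtain the almost sure bound
\begin{displaymath}
\PP\big(\norm{X_t}>\ell_t\eps_t\,\big|\,Y_t\big) \le \frac{\eps_t^{\ttt}}{(\ell_t\eps_t)^{\ttt}} = \ell_t^{-\ttt}.
\end{displaymath}
Taking the unconditional expectation on both sides and using the tower property then yields
\begin{displaymath}
\PP\big(\norm{X_t}>\ell_t\eps_t\big) = \E\Big[\PP\big(\norm{X_t}>\ell_t\eps_t\,\big|\,Y_t\big)\Big]\le \ell_t^{-\ttt}.
\end{displaymath}

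Finally, since $\ttt\ge 1$ and $\ell_t\to\infty$, we have $\ell_t^{-\ttt}\to 0$, so $\PP(\norm{X_t}>\ell_t\eps_t)\to 0$. Because $\{\ell_t\}$ was arbitrary, this is precisely the definition of $\norm{X_t}=O_{\PP}(\eps_t)$, completing the proof. There is no serious obstacle here; the whole argument is two lines of Markov plus tower, and the role of $\eps_t\to 0$ is only implicit (it is needed in applications but not in the statement itself). The only small point to be careful about is using the conditional form of Markov, which is legitimate because $\norm{X_t}^{\ttt}\ge 0$ and $(\ell_t\eps_t)^{\ttt}$ is a deterministic positive constant.
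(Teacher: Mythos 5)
Your proof is correct and follows essentially the same route as the paper: the paper likewise applies the conditional Markov inequality, integrates out $Y_t$ via the tower property, and bounds $\PP(\norm{X_t}>\ell_t\eps_t)$ by $\ell_t^{-\ttt}\rightarrow 0$. No gaps.
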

\begin{proof}[Proof of Lemma~\ref{lem:ChernozhukovLemma}]
	We have
	\begin{displaymath}
		\PP(\norm{X_t}>\ell_t\eps_t) 
		= \E[\PP(\norm{X_t}>\ell_t\eps_t|Y_t)]
		\le\frac{\E\big[\E[\norm{X_t}^{\ttt}|Y_t]\big]}{\ell_t^{\ttt}\eps_t^{\ttt}}
		\le \frac{1}{\ell_t^{\ttt}} \rightarrow 0 \quad (t\rightarrow\infty)
	\end{displaymath}
	by Markov's inequality.
\end{proof}

\begin{lemma}\label{lem:betaBound}
	There exists a finite real constant $\Cbetazero$ satisfying $\norm{\betazero}\le\Cbetazero$. 
\end{lemma}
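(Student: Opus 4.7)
My plan is to start from the closed-form representation of $\betazero$ given by Theorem~\ref{thm:identifiability}, namely
\[
\betazero = \Big(\EP[\Rx\Ra^T]\,\Jzerodp^{-1}\,\EP[\Ra\Rx^T]\Big)^{-1}\EP[\Rx\Ra^T]\,\Jzerodp^{-1}\,\EP[\Ra\Ry] = (\Jzerop)^{-1}\EP[\Rx\Ra^T]\,\Jzerodp^{-1}\,\EP[\Ra\Ry],
\]
and then bound each factor via submultiplicativity of the operator norm. The four factors split into two ``inverse'' factors and two ``cross-moment'' factors, and I will bound them using Assumptions~\ref{assumpt:DMLboth2} and~\ref{assumpt:DMLboth3} together with the cross-moment lemmas~\ref{lem:boundPnorm4} and~\ref{lem:boundPnorm3}.

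For the inverse factors, Assumption~\ref{assumpt:DMLboth3} states that the smallest singular values of the symmetric matrices $\Jzerop$ and $\Jzerodp$ are bounded away from zero by $\cone>0$, so $\norm{(\Jzerop)^{-1}}\le \cone^{-1}$ and $\norm{(\Jzerodp)^{-1}}\le \cone^{-1}$, both uniformly in $\PP\in\PcalN$. For the cross-moment factors, Lemma~\ref{lem:boundPnorm4} applied with $Z_1=X$, $Z_2=A$ yields $\norm{\EP[\Rx\Ra^T]}\le \normP{X}{2}\normP{A}{2}$, and the same lemma applied with $Z_1=A$, $Z_2=Y$ (together with the fact that $\Ry$ is scalar) yields $\norm{\EP[\Ra\Ry]}\le \normP{A}{2}\normP{Y}{2}$. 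By Assumption~\ref{assumpt:DMLboth2} and the monotonicity of $L^p$ norms on a probability space (which is available since $p>4>2$), each of $\normP{A}{2}$, $\normP{X}{2}$, $\normP{Y}{2}$ is at most $\CpnormRV$.

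Chaining the bounds via submultiplicativity gives
\[
\norm{\betazero} \;\le\; \norm{(\Jzerop)^{-1}}\cdot\norm{\EP[\Rx\Ra^T]}\cdot\norm{(\Jzerodp)^{-1}}\cdot\norm{\EP[\Ra\Ry]} \;\le\; \cone^{-2}\,\CpnormRV^{4},
\]
so setting $\Cbetazero := \cone^{-2}\CpnormRV^{4}$ works, uniformly over $\PP\in\PcalN$ and $\NN$. I do not foresee a real obstacle: the whole point is that everything needed is already packaged in Assumption~\ref{assumpt:DMLboth}, and the only care required is to remember to invoke the $L^p$--to--$L^2$ comparison (valid because $\PP$ is a probability measure) when converting the $\normP{\cdot}{p}$ hypothesis of Assumption~\ref{assumpt:DMLboth2} into the $\normP{\cdot}{2}$ bounds needed by Lemma~\ref{lem:boundPnorm4}.
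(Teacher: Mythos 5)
Your proof is correct and follows essentially the same route as the paper: the closed-form representation from Theorem~\ref{thm:identifiability}, submultiplicativity, the singular-value bounds of Assumption~\ref{assumpt:DMLboth3}, a cross-moment lemma (you use Lemma~\ref{lem:boundPnorm4} where the paper uses Lemma~\ref{lem:boundPnorm3} after replacing $\Ra$ by $A$ in the cross-moments, which is an immaterial difference), and the moment bound of Assumption~\ref{assumpt:DMLboth2}. If anything, your version is slightly cleaner in two respects: you correctly bound the inverses by $\cone^{-1}$ via the \emph{smallest}-singular-value lower bound (the paper writes $\ctwo^{-2}$, apparently conflating the upper and lower singular-value constants), and you make explicit the $L^p$-to-$L^2$ monotonicity step that the paper leaves implicit.
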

\begin{proof}[Proof of Lemma~\ref{lem:betaBound}]
Recall the matrices $\Jzerop$ and $\Jzerodp$ in Definition~\ref{def:lossFunc}. 
	We have
	\begin{displaymath}
		\begin{array}{rcl}
			\norm{\betazero} &\le& \normbig{(\Jzerop)^{-1}}\normbig{\EP\big[A(\Rx)^T\big]}\normbig{(\Jzerodp)^{-1}}\normbig{\EP\big[A\Ry\big]}\\
			&\le& \frac{1}{\ctwo^2}\normP{X}{2}\normP{Y}{2}\normP{A}{2}^2 
		\end{array}
	\end{displaymath}
	by submultiplicativity, Assumption~\ref{assumpt:DMLboth3}, 
	and Lemma~\ref{lem:boundPnorm3}. We hence infer
	\begin{displaymath}
		\norm{\betazero} \le \frac{1}{\ctwo^2} \CpnormRV^4 
	\end{displaymath}
	by Assumption~\ref{assumpt:DMLboth2}. 
\end{proof}

\begin{lemma}\label{lem:bgBound}
Let $\gamma\ge 0$.
	There exists a finite real constant $\Cbg$  satisfying $\norm{\bg}\le\Cbg$. 
\end{lemma}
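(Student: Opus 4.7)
The plan is to derive a closed-form expression for $\bg$ from the first-order optimality condition of the quadratic objective in~\eqref{eq:regularizedObjective}, and then to bound each factor using the integrability and invertibility assumptions of Assumption~\ref{assumpt:DMLboth}. Since the lemma is stated for fixed $\gamma \ge 0$, I will not attempt to control the dependence of $\Cbg$ on $\gamma$.

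First, I would observe that $\Pra$ is self-adjoint and idempotent on $L^2(\PP)$; this is a direct computation from the definition $\Pra Z = \EP[Z\Ra^T]\EP[\Ra\Ra^T]^{-1}\Ra$ together with invertibility of $\EP[\Ra\Ra^T]$ from Assumption~\ref{assumpt:DMLboth3}. The objective in~\eqref{eq:regularizedObjective} can therefore be rewritten as
\[
  \EP\Big[(\Ry-\Rx^T\beta)\big((\id-\Pra)+\gamma\Pra\big)(\Ry-\Rx^T\beta)\Big],
\]
which is a convex quadratic in $\beta$. Setting its gradient to zero, and using that $\EP[\Rx(\Pra\Rx)^T] = \EP[\Rx\Ra^T]\EP[\Ra\Ra^T]^{-1}\EP[\Ra\Rx^T] = \matB$ (with the analogous identity in which $\Rx$ is replaced by $\Ry$), the minimiser satisfies the normal equation
\[
  \big(\matA+(\gamma-1)\matB\big)\bg = \EP[\Rx\Ry] + (\gamma-1)\,\EP[\Rx\Ra^T]\EP[\Ra\Ra^T]^{-1}\EP[\Ra\Ry].
\]
By Assumption~\ref{assumpt:DMLboth4}, the matrix on the left-hand side is invertible with $\norm{(\matA+(\gamma-1)\matB)^{-1}} \le 1/\cthree$.

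Second, I would bound the right-hand side factorwise. Lemma~\ref{lem:boundPnorm3} combined with Assumption~\ref{assumpt:DMLboth2} yields that each of $\norm{\EP[\Rx\Ry]}$, $\norm{\EP[\Rx\Ra^T]}$, and $\norm{\EP[\Ra\Ry]}$ is at most $\CpnormRV^2$, while Assumption~\ref{assumpt:DMLboth3} gives $\norm{\EP[\Ra\Ra^T]^{-1}} \le 1/\cone$. Submultiplicativity and the triangle inequality then yield
\[
  \norm{\bg} \le \frac{1}{\cthree}\bigg(\CpnormRV^2 + \normone{\gamma-1}\,\frac{\CpnormRV^4}{\cone}\bigg) =: \Cbg,
\]
which is finite and proves the claim.

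I do not expect a serious obstacle; the argument is essentially algebraic once the self-adjoint structure of $\Pra$ is exploited to make the objective a symmetric quadratic form with Hessian $2(\matA+(\gamma-1)\matB)$. The only slightly delicate point is to match this Hessian with the precise matrix whose invertibility is posited in Assumption~\ref{assumpt:DMLboth4}; with that identification both the solvability of the normal equation and the finiteness of $\Cbg$ follow immediately from the already-established moment and eigenvalue bounds.
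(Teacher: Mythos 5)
Your proof is correct and follows essentially the same route as the paper's: a closed-form expression for $\bg$, invertibility of $\matA+(\gamma-1)\matB$ from Assumption~\ref{assumpt:DMLboth4}, the moment bounds $\norm{\EP[\Rx\Ry]},\ \norm{\EP[\Rx\Ra^T]},\ \norm{\EP[\Ra\Ry]}\le\CpnormRV^2$ via Lemma~\ref{lem:boundPnorm3} (the paper uses the nearly identical Lemma~\ref{lem:boundPnorm4}) and Assumption~\ref{assumpt:DMLboth2}, and submultiplicativity. The only differences are that you additionally derive the normal equation from the first-order optimality condition, which the paper takes as given, and that you correctly invoke the \emph{lower} singular-value bounds $\cthree$ and $\cone$ to control the two inverses, whereas the paper's write-up cites the upper bounds $\cfour$ and $\ctwo$ at those steps.
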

\begin{proof}[Proof of Lemma~\ref{lem:bgBound}]
	We have
	\begin{displaymath}
		\begin{array}{rcl}
		\norm{\bg}&\le&\normBig{\Big(\EP\big[\Rx\Rx^T\big]+(\gamma-1)\EP\big[\Rx\Ra^T\big]\EP\big[\Ra\Ra^T\big]^{-1}\EP\big[\Ra\Rx^T\big]\Big)^{-1}}\\
		&&\quad\cdot
		\normBig{\EP[\Rx\Ry]+(\gamma-1)\EP\big[\Rx\Ra^T\big]\EP\big[\Ra\Ra^T\big]^{-1}\EP[\Ra\Ry]}	
		\end{array}
	\end{displaymath}
	by submultiplicativity. By Assumption~\ref{assumpt:DMLboth4}, the largest singular value of the matrix 
	\begin{displaymath}
		\matA + (\gamma-1)\matB=\EP\big[\Rx\Rx^T\big]+(\gamma-1)\EP\big[\Rx\Ra^T\big]\EP\big[\Ra\Ra^T\big]^{-1} \EP\big[\Ra\Rx^T\big]
	\end{displaymath}
	is upper bounded by $0<\cfour<\infty$.  
	Thus, we have 
	\begin{displaymath}
		\norm{\bg}\le
		\frac{1}{\cfour} \Big( \norm{\EP[\Rx\Ry]} + \normone{\gamma-1}\normbig{\EP\big[\Rx\Ra^T\big]}\normBig{\EP\big[\Ra\Ra^T\big]^{-1}}\normbig{\EP\big[\Ra\Ry^T\big]} \Big)
	\end{displaymath}
	by the triangle inequality and submultiplicativity. 
	By Assumption~\ref{assumpt:DMLboth3}, the largest singular value of $\EP[\Ra\Ra^T]$ is upper bounded by $0<\ctwo<\infty$.
	By Lemma~\ref{lem:boundPnorm4} and Assumption~\ref{assumpt:DMLboth2}, we have
	\begin{displaymath}
		\begin{array}{l}
		\normbig{\EP\big[\Rx\Ry\big]} \le  \normP{X}{2}\normP{Y}{2}\le \CpnormRV^2,\\
		\normbig{\EP\big[\Rx\Ra^T\big]} \le  \normP{X}{2}\normP{A}{2}\le \CpnormRV^2,\\
		\normbig{\EP\big[\Ra\Ry^T\big]} \le  \normP{A}{2}\normP{Y}{2}\le \CpnormRV^2.
		\end{array}
	\end{displaymath}
	In total, we hence have
	\begin{displaymath}
		\norm{\bg}\le
		\frac{1}{\cfour} \bigg(\CpnormRV^2 + \normone{\gamma-1}\frac{\CpnormRV^4}{\ctwo}\bigg).
	\end{displaymath}
\end{proof}

\begin{lemma}\label{lem:statisticalRates}
Let $\gamma\ge 0$
	The statistical rates $\rNpnumber$ and $\lambdaNpnumber$ introduced in Definition~\ref{def:statisticalRates} satisfy
	$\rNpnumber^4  \lesssim \deltaNnumber$ and $\lambdaNpnumber \lesssim \frac{\deltaNnumber}{\sqrt{N}}$. 
\end{lemma}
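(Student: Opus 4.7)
The strategy is to verify each bound by expanding the relevant score as a low-degree polynomial in the nuisance components, then applying Cauchy--Schwarz in $L^2(\PP)$ and invoking the estimates supplied by Assumption~\ref{assumpt:DMLboth5} together with Lemma~\ref{lem:boundPnorm1}, Assumption~\ref{assumpt:DMLboth2}, and Lemmas~\ref{lem:betaBound}--\ref{lem:bgBound}.

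For $\rNpnumber^4 \lesssim \deltaNnumber$, I fix an admissible tuple $S=(U,V,W,Z)$ and $\bzero\in\{\bg,\betazero,\bo\}$, interpreting $\loss(S;\bzero,\eta)=(U-m_U(W))(Z-m_Z(W)-(V-m_V(W))^T\bzero)$ where $m_U,m_V,m_Z$ are the components of $\eta$ indexed by the variables $U,V,Z$. Setting $h_j:=m_j-m_j^0$ for $j\in\{U,V,Z\}$ and multiplying out the factors, the difference $\loss(S;\bzero,\eta)-\loss(S;\bzero,\etazero)$ unfolds to a sum of six terms, each of the form $R_1\,h_j$ or $h_i\,h_j$ (possibly with a $\bzero$ factor), where $R_1$ is a fixed centered residual built from $\etazero$. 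Applying Cauchy--Schwarz to each summand in $L^1(\PP)$, the residual norm $\|R_1\|_{P,2}$ is uniformly bounded via Lemma~\ref{lem:boundPnorm1} and Assumption~\ref{assumpt:DMLboth2}, the coefficient norm $\|\bzero\|$ is bounded via Lemmas~\ref{lem:betaBound}--\ref{lem:bgBound}, and $\|h_j\|_{P,2}\le\deltaNnumber$ follows from the unconditional bound $\|\etazero-\eta\|_{P,2}\le\deltaNnumber$ in Assumption~\ref{assumpt:DMLboth5}. Summing the six pieces and then taking the supremum over $\eta\in\TauN$ and the maximum over the finitely many choices of $S$ and $\bzero$ gives $\rNpnumber^4\lesssim\deltaNnumber$.

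For $\lambdaNpnumber\lesssim\deltaNnumber/\sqrt{\NN}$, I fix $\losstest\in\{\loss,\losstilde,\losstwo\}$, $\bzero$ and $\eta$, and set $\eta_r:=\etazero+r(\eta-\etazero)$, so $m_\cdot^{(r)}=m_\cdot^0+rh_\cdot$. Each of the three scores is a product of two factors that are affine in $r$ and therefore a quadratic polynomial in $r$; straightforward differentiation yields closed-form second derivatives independent of $r$, namely $\partial_r^2\loss(S;\bzero,\eta_r)=2h_A(h_X^T\bzero-h_Y)$, $\partial_r^2\losstilde(S;\bzero,\eta_r)=2h_X(h_X^T\bzero-h_Y)$, and $\partial_r^2\losstwo(S;\eta_r)=2h_Ah_A^T$. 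Taking expectations and applying Cauchy--Schwarz produces bounds of the form $\|h_A\|_{P,2}(\|h_Y\|_{P,2}+\|h_X\|_{P,2}\|\bzero\|)$, $\|h_X\|_{P,2}(\|h_Y\|_{P,2}+\|h_X\|_{P,2}\|\bzero\|)$, and $\|h_A\|_{P,2}^2$, respectively. Each of these bilinear products is at most $\deltaNnumber\NN^{-1/2}$ by the three product conditions in Assumption~\ref{assumpt:DMLboth5}, with the factor $\|\bzero\|$ absorbed into the implicit constant via Lemmas~\ref{lem:betaBound}--\ref{lem:bgBound}. The supremum over $r$ is trivial because the second derivative is $r$-independent, and the maximum over the remaining finitely many choices preserves the bound.

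The only real obstacle is bookkeeping across the finitely many combinations of $S$ and $\bzero$ in the definition of $\rNpnumber$ (and of $\losstest$ and $\bzero$ in the definition of $\lambdaNpnumber$); each case however reduces to the same bilinear template and is controlled by exactly one of the uniform product bounds supplied by Assumption~\ref{assumpt:DMLboth5}, so the argument is essentially mechanical once the expansions above are in place.
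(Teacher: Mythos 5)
Your proposal is correct and follows essentially the same route as the paper: for $\rNpnumber^4$ the paper also expands $\loss(S;\bzero,\eta)-\loss(S;\bzero,\etazero)$ into cross terms of the form (true residual)$\times h_j$ and $h_i\times h_j$, applies the triangle and H\"older/Cauchy--Schwarz inequalities, and invokes Lemma~\ref{lem:boundPnorm1}, Assumption~\ref{assumpt:DMLboth2}, Lemmas~\ref{lem:betaBound}--\ref{lem:bgBound}, and the bound $\normP{\eta-\etazero}{2}\le\deltaNnumber$; for $\lambdaNpnumber$ it likewise computes the ($r$-independent) second pathwise derivative explicitly and controls it by Cauchy--Schwarz together with the three product conditions of Assumption~\ref{assumpt:DMLboth5}. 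No gaps.
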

\begin{proof}[Proof of Lemma~\ref{lem:statisticalRates}]
	This proof is modified from~\citet{Chernozhukov2018}. 
	First, verify the bound on $\rNpnumber$. Let $S=(U,V,W,Z)\in\{A, X, Y\}^2\times\{W\}\times\{A,X,Y\}$, let $\eta=(\mU,\mV,\mZ)\in\TauN$, and let $\bzero\in\{\bg, \betazero, \bo\}$. 
	We have
	\begin{displaymath}
		\begin{array}{cl}
			&\loss(S;\bzero,\eta) - \loss(S;\bzero,\etazero)\\
			=& \big(U-\mU(W)\big)\Big(Z-\mZ(W) - \big(V-\mV(W)\big)^T\bzero\Big)^T\\
			&\quad-\big(U-\mU^0(W)\big)\Big(Z-\mZ^0(W) - \big(V-\mV^0(W)\big)^T\bzero\Big)^T\\
			=& \big(U-\mU^0(W)\big)\Big(\mZ^0(W)-\mZ(W) - \big(\mV^0(W)-\mV(W)\big)^T\bzero\Big)^T\\
			&\quad +\big(\mU^0(W)-\mU(W)\big)\Big(Z-\mZ^0(W) - \big(V-\mV^0(W)\big)^T\bzero\Big)^T \\
			&\quad +\big(\mU^0(W)-\mU(W)\big)\Big(\mZ^0(W)-\mZ(W) - \big(\mV^0(W)-\mV(W)\big)^T\bzero\Big)^T. 
		\end{array}
	\end{displaymath}
	By the triangle inequality and H\"older's inequality, we have 
	\begin{displaymath}
		\begin{array}{cl}
			&\EP[\norm{\loss(S;\bzero,\eta) - \loss(S;\bzero,\etazero)}]\\
			=& \normP{\loss(S;\bzero,\eta) - \loss(S;\bzero,\etazero)}{1}\\
			\le& \normP{U-\mU^0(W)}{2}\normPBig{\mZ^0(W)-\mZ(W) - \big(\mV^0(W)-\mV(W)\big)^T\bzero}{2} \\
			&\quad + \normP{\mU^0(W)-\mU(W)}{2}\normPBig{Z-\mZ^0(W) - \big(V-\mV^0(W)\big)^T\bzero}{2}\\
			&\quad + \normP{\mU^0(W)-\mU(W)}{2}\normPBig{\mZ^0(W)-\mZ(W) - \big(\mV^0(W)-\mV(W)\big)^T\bzero}{2}.
		\end{array}
	\end{displaymath}
	Observe that $\normP{U-\mU^0(W)}{2}\le 2\normP{U}{2}$, and $\normP{V-\mV^0(W)}{2}\le 2\normP{V}{2}$, and  $\normP{Z-\mZ^0(W)}{2}\le 2\normP{Z}{2}$ hold by Lemma~\ref{lem:boundPnorm1}.  
	We have $\normP{\eta-\etazero}{2}\le \deltaNnumber$ by Assumption~\ref{assumpt:DMLboth5}. 
	Therefore, we obtain the upper bound
	\begin{displaymath}
		\begin{array}{cl}
			&\EP[\norm{\loss(S;\bzero,\eta) - \loss(S;\bzero,\etazero)}]\\
			\le& 4\max\{1,\norm{\bzero}\}(\normP{U}{2} + \normP{V}{2} + \normP{Z}{2}) \deltaNnumber + 2\max\{1,\norm{\bzero}\}\deltaNnumber^2\\
			\lesssim& \deltaNnumber
		\end{array}
	\end{displaymath}
	 by the triangle inequality, Lemma~\ref{lem:betaBound}, Lemma~\ref{lem:bgBound}, and Assumptions~\ref{assumpt:DMLboth2} and~\ref{assumpt:DMLboth5}. 
	Because this upper bound is independent of $\eta$, we obtain our claimed bound on  $\rNpnumber^4$.
		
	Subsequently, we verify the bound on $\lambdaNpnumber$. 
	Consider $S=(A,X,W,Y)$, denote by  $U$ either $A$ or $X$, denote by $Z$ either $A$ or $Y$, and let $\losstest\in\{\loss,\losstilde,\losstwo\}$, where we interpret $\losstwo(S;b,\eta)=\losstwo(S;\eta)$. 
	We have
	\begin{displaymath}
	\begin{array}{cl}
		&\partial_r^2\EP\big[\loss\big(S;\bzero,\etazero+r(\eta-\etazero)\big)\big] \\
		=& 2\EP\bigg[\big(\mU(W)-\mU^0(W)\big)\Big(\mZ(W)-\mZ^0(W) - \big(\mX(W)-\mX^0(W)\big)^T\bzero\Big)^T\bigg]. 
	\end{array}
	\end{displaymath}
	Due to the  Cauchy--Schwarz  inequality, we  infer 
	\begin{displaymath}
		\begin{array}{cl}
			&\normbig{\partial_r^2\EP\big[\loss\big(S;\bzero,\etazero+r(\eta-\etazero)\big)\big] }\\
			\le& 2\normP{\mU(W)-\mU^0(W)}{2}\big(\normP{\mZ(W)-\mZ^0(W)}{2}  +\normP{\mX(W)-\mX^0(W)}{2} \norm{\bzero}\big)\\
			\le& 2\max\{1,\norm{\bzero}\}\normP{\mU(W)-\mU^0(W)}{2}\\
			&\quad\cdot\big(\normP{\mZ(W)-\mZ^0(W)}{2}  +\normP{\mX(W)-\mX^0(W)}{2} \big)\\
			\lesssim& \deltaNnumber\NN^{-\frac{1}{2}} 
		\end{array}
	\end{displaymath}
	by Lemma~\ref{lem:betaBound}, Lemma~\ref{lem:bgBound}, and Assumption~\ref{assumpt:DMLboth5}. 
	Consequently, we obtain our claimed bound on $\lambdaNpnumber$. 
\end{proof}

\begin{lemma}\label{lem:boundRN}
Let $\gamma\ge 0$. 
	Let $\kk\in\indset{\KK}$. Let furthermore 
	$\losstest\in\{\loss,\losstilde,\losstwo\}$ and $\bzero\in\{\bg,\betazero, \bo\}$. 
	We have 
	\begin{displaymath}
		\normbigg{\frac{1}{\sqrt{\nn}}\sum_{i\in\Ik}\losstest(S_i;\bzero,\hetaIkc)
		- \frac{1}{\sqrt{\nn}}\sum_{i\in\Ik}\losstest(S_i;\bzero,\etazero)}
		=O_{\PP}(\rhoN),
	\end{displaymath}
	where $\rhoN =\rNpnumber + \NN^{\frac{1}{2}}\lambdaNpnumber$ is as in Definition~\ref{def:asymptNormal} and satisfies $\rhoN\lesssim\deltaNnumber^{\frac{1}{4}}$, and where we interpret $\losstwo(S;b,\eta)=\losstwo(S;\eta)$. 
\end{lemma}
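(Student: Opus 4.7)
Write $\Delta_i := \losstest(S_i;\bzero,\hetaIkc) - \losstest(S_i;\bzero,\etazero)$ and split the statistic as
\[
\frac{1}{\sqrt{\nn}}\sum_{i\in\Ik}\Delta_i = T_1 + T_2,\quad T_1 := \frac{1}{\sqrt{\nn}}\sum_{i\in\Ik}\bigl(\Delta_i-\EP[\Delta_i\mid\hetaIkc]\bigr),\ T_2 := \sqrt{\nn}\,\EP[\Delta_1\mid\hetaIkc],
\]
so that the centered empirical process $T_1$ and the conditional bias $T_2$ can be bounded separately. Throughout I would work on the event $\EpsN=\{\hetaIkc\in\TauN\}$, which holds with $\PP$-probability at least $1-\DeltaN$ by Assumption~\ref{assumpt:DMLboth5}, and then pass to unconditional bounds via Lemma~\ref{lem:ChernozhukovLemma}.

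For $T_1$, sample splitting ensures that conditionally on $\hetaIkc$ the summands are i.i.d.\ and centered, so $\EP[\|T_1\|^2\mid\hetaIkc]\le \EP[\|\Delta_1\|^2\mid\hetaIkc]$. The main obstacle is that the statistical rate $\rNpnumber$ in Definition~\ref{def:statisticalRates} is expressed through an $L^1$ rather than an $L^2$ moment, as the paper's Assumption~\ref{assumpt:DMLboth5} was tailored to $L^2$-centred ML rates. I would close this gap with H\"older interpolation
\[
\EP[\|\Delta_1\|^2\mid\hetaIkc]\le \EP[\|\Delta_1\|\mid\hetaIkc]^{(p-2)/(p-1)}\,\EP[\|\Delta_1\|^p\mid\hetaIkc]^{1/(p-1)}.
\]
The first factor is at most $\rNpnumber^{4(p-2)/(p-1)}$ on $\EpsN$ by definition of $\rNpnumber$, and the second is $O(1)$ uniformly in $\hetaIkc\in\TauN$ since $\losstest(S;\bzero,\eta)-\losstest(S;\bzero,\etazero)$ is a finite sum of bilinear products whose $L^p$ norms are controlled by Assumptions~\ref{assumpt:DMLboth2} and~\ref{assumpt:DMLboth5}. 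Because $p>4$ gives $4(p-2)/(p-1)>2$, this yields $\EP[\|T_1\|^2\mid\hetaIkc]\lesssim \rNpnumber^2$ and hence $\|T_1\|=O_{\PP}(\rNpnumber)$ by Markov's inequality together with Lemma~\ref{lem:ChernozhukovLemma}.

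For $T_2$, the essential fact is a Neyman-orthogonality-type cancellation valid for every pair $(\losstest,\bzero)$ in the lemma: the first-order derivative $\partial_r|_{r=0}\EP[\losstest(S;\bzero,\etazero+r(\eta-\etazero))]$ vanishes for every $\eta\in\TauN$. For $\loss$ and $\losstilde$ this is an extension of the proof of Proposition~\ref{prop:Neyman_orth}, since the cancellations there rely solely on $\EP[\Ra\mid W]=\EP[\Rx\mid W]=\EP[\Ry\mid W]=0$ and not on the particular value of $\bzero$; for $\losstwo(S;\eta)=(A-\mA(W))(A-\mA(W))^T$ the same mean-zero property kills both linear-in-$r$ cross terms of the Taylor expansion. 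With the linear term eliminated, a second-order Taylor expansion together with the definition of $\lambdaNpnumber$ yields $\|\EP[\Delta_1\mid\hetaIkc]\|\le\tfrac{1}{2}\lambdaNpnumber$ on $\EpsN$, whence $\|T_2\|\lesssim \NN^{1/2}\lambdaNpnumber$. Summing the two contributions gives $\|T_1+T_2\|=O_{\PP}(\rhoN)$, and the auxiliary estimate $\rhoN\lesssim\deltaNnumber^{1/4}$ follows at once from Lemma~\ref{lem:statisticalRates} together with the standing assumption $\deltaNnumber^{1/4}\ge\NN^{-1/2}$.
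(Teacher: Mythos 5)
Your decomposition into the conditionally centered empirical process $T_1$ and the conditional bias $T_2$ is exactly the paper's (the terms $\Icalone$ and $\sqrt{\nn}\,\Icaltwo$ there), and your treatment of $T_2$ --- first-order cancellation via Neyman orthogonality, valid for all three scores and all admissible $\bzero$, followed by a second-order Taylor bound through $\lambdaNpnumber$ --- matches the paper's argument and is correct.

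The gap is in $T_1$. Your interpolation
\[
\EP\big[\norm{\Delta_1}^2\mid\hetaIkc\big]\le \EP\big[\norm{\Delta_1}\mid\hetaIkc\big]^{(p-2)/(p-1)}\,\EP\big[\norm{\Delta_1}^{p}\mid\hetaIkc\big]^{1/(p-1)}
\]
requires the $p$-th moment of $\Delta_1=\losstest(S_1;\bzero,\hetaIkc)-\losstest(S_1;\bzero,\etazero)$, and this is not controlled by the assumptions. Assumptions~\ref{assumpt:DMLboth2} and~\ref{assumpt:DMLboth5} bound the $L^{p}$ norms of the individual \emph{factors} (such as $A-\mA^0(W)$ and $\mA^0(W)-\mA(W)$); since $\Delta_1$ is a sum of bilinear products of two such factors, the Cauchy--Schwarz inequality only yields control of its $L^{p/2}$ norm --- this is precisely what Lemma~\ref{lem:mNboth} delivers for the scores themselves. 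Redoing your interpolation with $q=p/2$ in place of $p$ produces the exponent $4(q-2)/(q-1)=4(p-4)/(p-2)$ on $\rNpnumber$, which is at least $2$ only when $p\ge 6$; for $4<p<6$ you obtain only $\norm{T_1}=O_{\PP}\big(\rNpnumber^{2(p-4)/(p-2)}\big)$, strictly weaker than the claimed $O_{\PP}(\rNpnumber)$ since $\rNpnumber\to 0$. The paper circumvents this with a truncation at level one,
\[
\EP\big[\norm{\Delta_1}^2\big]\le \EP\big[\norm{\Delta_1}\big]+\EP\big[\norm{\Delta_1}^2\one_{\norm{\Delta_1}\ge 1}\big]
\le \rNpnumber^4+\sqrt{\EP\big[\norm{\Delta_1}^4\big]}\,\sqrt{\PP\big(\norm{\Delta_1}\ge 1\big)},
\]
combined with Markov's inequality, $\PP(\norm{\Delta_1}\ge 1)\le\EP[\norm{\Delta_1}]\le\rNpnumber^4$, so that the square root contributes exactly $\rNpnumber^2$ regardless of how close $p$ is to $4$. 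You would need to replace your interpolation step by this (or an equivalent) truncation device for the bound $\norm{T_1}=O_{\PP}(\rNpnumber)$ to hold over the full range $p>4$; the rest of your argument then goes through as written.
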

\begin{proof}[Proof of Lemma~\ref{lem:boundRN}]  	This proof is modified from~\citet{Chernozhukov2018}. 
 By the triangle inequality, we have
\begin{displaymath}
	\begin{array}{cl}
		&\normBig{\frac{1}{\sqrt{\nn}}\sum_{i\in\Ik}\losstest(S_i;\bzero,\hetaIkc)
		- \frac{1}{\sqrt{\nn}}\sum_{i\in\Ik}\losstest(S_i;\bzero,\etazero)}\\
		=& \Big\lVert\frac{1}{\sqrt{\nn}}\sum_{i\in\Ik}\big( \losstest(S_i;\bzero, \hetaIkc) -\int \losstest(s; \bzero,\hetaIkc)\mathrm{d}\PP(s) \big) 
		\\
		&\quad\quad - 
		\frac{1}{\sqrt{\nn}}\sum_{i\in\Ik}\big( \losstest(S_i;\bzero, \etazero) -\int \losstest(s;\bzero, \etazero)\mathrm{d}\PP(s) \big) \\
		&\quad\quad+ \sqrt{\nn}\int \big(\losstest(s;\bzero, \hetaIkc) - \losstest(s; \bzero,\etazero)\big)\mathrm{d}\PP(s)\Big\rVert\\
		\le &\Icalone + \sqrt{\nn}\Icaltwo, 
	\end{array}
\end{displaymath}
where $\Icalone:= \norm{M}$
for
\begin{displaymath}
\begin{array}{rcl}
	M &:=&\frac{1}{\sqrt{\nn}}\sum_{i\in\Ik}\bigg( \losstest(S_i;\bzero, \hetaIkc) -\int \losstest(s;\bzero, \hetaIkc)\mathrm{d}\PP(s) \bigg)  \\
	&&\quad- 
		\frac{1}{\sqrt{\nn}}\sum_{i\in\Ik}\bigg( \losstest(S_i;\bzero, \etazero) -\int \losstest(s;\bzero, \etazero)\mathrm{d}\PP(s) \bigg),
		\end{array}
\end{displaymath}
and where
\begin{displaymath}
	\Icaltwo:=\normbigg{\int \big(\losstest(s;\bzero, \hetaIkc) - \losstest(s;\bzero, \etazero)\big)\mathrm{d}\PP(s)}. 
\end{displaymath}
We bound the two terms $\Icalone$ and $\Icaltwo$ individually. 
First, we bound $\Icalone$. Because the dimensions $d$ and $q$ are fixed, it is sufficient to bound one entry of the matrix $M$. Let $l$ index the rows of $M$ and let $t$ index the columns of $M$ (we interpret vectors as matrices with one column). 
On the event $\EpsN$ the that holds with $\PP$-probability $1-\DeltaNnumber$, we have
\begin{equation}\label{eq:rNp1}
	\begin{array}{cl}
		&\EP\big[\norm{M_{l,t}}^2\big|\SIkc\big]\\
		=& \frac{1}{\nn}\sum_{i\in\Ik}\EP\big[\normone{\losstest_{l,t}(S_i;\bzero, \hetaIkc) -  \losstest_{l,t}(S_i;\bzero, \etazero)}^2\big|\SIkc\big] \\
		&\quad+ \frac{1}{\nn}\sum_{i,j\in\Ik,i\neq j}\EP\big[\big( \losstest_{l,t}(S_i;\bzero, \hetaIkc) -  \losstest_{l,t}(S_i;\bzero, \etazero)\big)\\
		&\quad\quad\quad\quad\quad\quad\quad\quad\quad\quad\quad\cdot\big(\losstest_{l,t}(S_j;\bzero, \hetaIkc) -  \losstest_{l,t}(S_j;\bzero, \etazero)\big)\big|\SIkc\big] \\
		&\quad - 2\sum_{i\in\Ik}\EP\big[ \losstest_{l,t}(S_i;\bzero, \hetaIkc) -  \losstest_{l,t}(S_i;\bzero, \etazero)\big|\SIkc\big]\\
		&\quad\quad\quad\quad\quad\quad\cdot\EP\big[\losstest_{l,t}(S;\bzero, \hetaIkc) - \losstest_{l,t}(S;\bzero, \etazero)\big|\SIkc\big]\\
		&\quad + \nn \normonebig{\EP\big[\losstest_{l,t}(S;\bzero, \hetaIkc) - \losstest_{l,t}(S;\bzero, \etazero)\big|\SIkc\big]}^2\\
		=& \EP\big[\normone{ \losstest_{l,t}(S; \bzero,\hetaIkc) -  \losstest_{l,t}(S;\bzero, \etazero)}^2\big|\SIkc\big] \\
		&\quad
		 		+ \big( \frac{\nn(\nn-1)}{\nn} - 2\nn+ \nn   \big) \normonebig{\EP\big[\losstest_{l,t}(S;\bzero, \hetaIkc) - \losstest_{l,t}(S;\bzero, \etazero)\big|\SIkc\big]}^2\\
		\le & \sup_{\eta\in\TauN}\EP\big[\norm{ \losstest(S;\bzero, \eta) -  \losstest(S; \bzero,\etazero)}^2\big].
	\end{array}
\end{equation}
Furthermore, for $\eta\in\TauN$, we have
\begin{equation}\label{eq:rNp2}
\begin{array}{cl}
	&\EP\big[\norm{ \losstest(S;\bzero, \eta) -  \losstest(S; \bzero,\etazero)}^2\big] \\
	\le &\EP[\norm{ \losstest(S;\bzero, \eta) -  \losstest(S; \bzero,\etazero)}] \\
	&\quad+ \EP\big[\norm{ \losstest(S;\bzero, \eta) -  \losstest(S; \bzero,\etazero)}^2\one_{\norm{ \losstest(S;\bzero, \eta) -  \losstest(S; \bzero,\etazero)}\ge 1}\big]
	\end{array}
\end{equation}
and we have
\begin{equation}\label{eq:rNp3}
	\begin{array}{cl}
		&\EP\big[\norm{ \losstest(S;\bzero, \eta) -  \losstest(S; \bzero,\etazero)}^2\one_{\norm{ \losstest(S;\bzero, \eta) -  \losstest(S; \bzero,\etazero)}\ge 1}\big]\\
		\le&\sqrt{ \EP\big[\norm{ \losstest(S;\bzero, \eta) -  \losstest(S; \bzero,\etazero)}^4\big]} \sqrt{\PP(\norm{ \losstest(S;\bzero, \eta) -  \losstest(S; \bzero,\etazero)}\ge 1)}
	\end{array}
\end{equation}
by H\"{o}lder's inequality. 
Observe that the term 
\begin{equation}\label{eq:rNp4}
	\sqrt{ \EP\big[\norm{ \losstest(S;\bzero, \eta) -  \losstest(S; \bzero,\etazero)}^4\big]}
\end{equation}
is upper bounded by Assumption~\ref{assumpt:DMLboth5}, Lemma~\ref{lem:betaBound} and Lemma~\ref{lem:bgBound}. 
By Markov's inequality, 
we have
\begin{equation}\label{eq:rNp5}
	\PP(\norm{ \losstest(S;\bzero, \eta) -  \losstest(S; \bzero,\etazero)}\ge 1) \le \EP[\norm{ \losstest(S;\bzero, \eta) -  \losstest(S; \bzero,\etazero)}] \le\rNpnumber^4.
\end{equation}
Therefore, we have	$\EP[\Icalone^2|\SIkc] \lesssim \rNpnumber^2$ due to~\mbox{\eqref{eq:rNp1}--\eqref{eq:rNp5}}. 
The statistical rate $\rNpnumber$ satisfies $\rNpnumber\lesssim\delta_{\NN}^{\frac{1}{4}}$ by Lemma~\ref{lem:statisticalRates}. 
Thus, we infer $\Icalone = O_{\PP}(\rNpnumber)$
by Lemma~\ref{lem:ChernozhukovLemma}.
Subsequently, we bound $\Icaltwo$. For $r\in [0,1]$, we introduce the function
\begin{displaymath}
	f_k(r) := \EP\big[\losstest\big(S;\bzero, \etazero + r(\hetaIkc-\etazero)\big)\big| \SIkc\big] - \EP[\losstest(S;\bzero, \etazero)].
\end{displaymath}
Observe that $\Icaltwo= \norm{f_k(1)}$ holds. 
We apply a Taylor expansion to this function and obtain
\begin{displaymath}
	f_k(1)=f_k(0)+f_k'(0) + \frac{1}{2}f_k''(\tilde r)
\end{displaymath}
for some $\tilde r\in (0,1)$. We have 
\begin{displaymath}
	f_k(0) = \EP\big[\losstest(S;\bzero, \etazero)\big|\SIkc\big] - \EP[\losstest(S; \bzero,\etazero)] = \bo. 
\end{displaymath}
Furthermore, the score $\losstest$ satisfies the Neyman orthogonality property $f'_k(0)=\bo$.
The proof of this claim is analogous to the proof of Proposition~\ref{prop:Neyman_orth} because the proof of Proposition~\ref{prop:Neyman_orth} does neither depend on the underlying model of the random variables nor on the value of $\beta$. Furthermore, we have
\begin{displaymath}
	f''_k(r) = 2\E\bigg[ \big(\mU(W)-\mU^0(W)\big)\Big( \mZ(W)-\mZ^0(W)-\big(\mX(W)-\mX^0(W)\big)^T\bzero \Big)^T \bigg]
\end{displaymath}
for $U\in\{A,X\}$ and $Z\in\{A,Y\}$. 
On the event $\EpsN$ that holds with $\PP$-probability $1-\DeltaN$, we  have
\begin{displaymath}
	\norm{f_k''(\tilde r)} \le \sup_{r\in (0,1)} \norm{f_k''(r)} \lesssim \lambdaNpnumber.
\end{displaymath}
We thus infer 
\begin{displaymath}
	\normbigg{\frac{1}{\sqrt{\nn}}\sum_{i\in\Ik}\losstest(S_i;\bzero,\hetaIkc)
		- \frac{1}{\sqrt{\nn}}\sum_{i\in\Ik}\losstest(S_i;\bzero,\etazero)}
		\le \Icalone + \sqrt{\nn}\Icaltwo
		= O_{\PP}(\rNpnumber + \NN^{\frac{1}{2}}\lambdaNpnumber). 
\end{displaymath}
Because $\rNpnumber\lesssim\deltaNnumber^{\frac{1}{4}}$ and $\lambdaNpnumber\lesssim\frac{\deltaNnumber}{\sqrt{\NN}}$ hold by Lemma~\ref{lem:statisticalRates} and because $\{\deltaNnumber\}_{\NN\ge\KK}$ converges to 0 by Assumption~\ref{assumpt:DMLboth}, we furthermore have
\begin{displaymath}
	\rhoN = \rNpnumber + \NN^{\frac{1}{2}}\lambdaNpnumber \lesssim\deltaNnumber^{\frac{1}{4}}. 
\end{displaymath}
\end{proof}

\begin{lemma}\label{lem:D1convInProb}
	Let $\kk\in\indset{\KK}$. Let furthermore $U,V\in\{A,X\}$ and $S=(U,V, W, Y)$. Let  $\losstest\in\{\lossone,\losstwo,\lossthree\}$. 
	We have 
	\begin{displaymath}
	\frac{1}{\nn}\sum_{i\in\Ik}\losstest(S_i;\hetaIkc) = \EP[\losstest(S;\etazero)] + O_{\PP}\big(\NN^{-\frac{1}{2}}(1+\rhoN)\big).
\end{displaymath}
\end{lemma}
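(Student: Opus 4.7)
The plan is to split the quantity into a nuisance plug-in error plus a centered sample mean. Write
\[
\frac{1}{\nn}\sum_{i\in\Ik}\losstest(S_i;\hetaIkc) - \EP[\losstest(S;\etazero)] = \Ione + \Itwo,
\]
where
\[
\Ione := \frac{1}{\nn}\sum_{i\in\Ik}\bigl(\losstest(S_i;\hetaIkc) - \losstest(S_i;\etazero)\bigr),\qquad \Itwo := \frac{1}{\nn}\sum_{i\in\Ik}\losstest(S_i;\etazero) - \EP[\losstest(S;\etazero)].
\]

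For $\Itwo$, I would observe that the $\nn = \NN/\KK$ summands are iid copies of $\losstest(S;\etazero)$, which in all three cases is a product of centered residuals of the form $(U-\mU^0(W))(V-\mV^0(W))^T$ with $U,V\in\{A,X\}$. Each such residual has finite $p$-th moment with $p>4$ by Assumption~\ref{assumpt:DMLboth2} and Lemma~\ref{lem:boundPnorm1}, so $\losstest(S;\etazero)$ has finite second moment by the Cauchy--Schwarz inequality. Chebyshev's inequality then yields $\norm{\Itwo}=O_\PP(\nn^{-1/2})=O_\PP(\NN^{-1/2})$.

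For $\Ione$, I would invoke the argument of Lemma~\ref{lem:boundRN}, which directly covers the case $\losstest=\losstwo$ and yields $\norm{\Ione}=O_\PP(\rhoN\,\NN^{-1/2})$. The cases $\losstest\in\{\lossone,\lossthree\}$ are not literally listed in Lemma~\ref{lem:boundRN}, but its proof transfers verbatim once two points are verified: (i) these scores share the product structure $(U-\mU(W))(V-\mV(W))^T$ with $U,V\in\{A,X\}$, so the $\Icalone$-type H\"older / truncation bound goes through with the statistical rate $\rNpnumber$ of Definition~\ref{def:statisticalRates}, which already takes the maximum over all such $(U,V)$ combinations; and (ii) the crucial Neyman-orthogonality step $f_k'(0)=\bo$ in the Taylor expansion still holds, since differentiating the perturbed expectation at $r=0$ yields
\[
-\EP\bigl[(\mU(W)-\mU^0(W))(V-\mV^0(W))^T\bigr] - \EP\bigl[(U-\mU^0(W))(\mV(W)-\mV^0(W))^T\bigr],
\]
and both terms vanish by the tower property together with $\EP[U-\mU^0(W)\mid W]=0$ and $\EP[V-\mV^0(W)\mid W]=0$. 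The quadratic Taylor remainder is controlled by $\lambdaNpnumber$ exactly as in Lemma~\ref{lem:boundRN}, so that $\norm{\Ione}=O_\PP(\rhoN\,\NN^{-1/2})$ in all three cases.

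Combining the two bounds gives $\norm{\Ione+\Itwo}=O_\PP(\NN^{-1/2}(1+\rhoN))$, which is the claimed rate. The only real obstacle is the extension of Lemma~\ref{lem:boundRN} to $\lossone$ and $\lossthree$; once the two observations above are recorded, the remainder of the argument is a standard sample-mean decomposition with no additional complications.
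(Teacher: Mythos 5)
Your proof is correct and follows essentially the same route as the paper: the identical two-term decomposition, with the plug-in error controlled by Lemma~\ref{lem:boundRN} and the centered sample mean shown to be $O_{\PP}(\NN^{-\frac{1}{2}})$ (the paper invokes the Lindeberg--Feller CLT with the Cramer--Wold device where you use Chebyshev, which is equally valid and more elementary since only a rate is needed). Your explicit verification that Lemma~\ref{lem:boundRN} extends to $\lossone$ and $\lossthree$ is a point the paper leaves implicit --- it silently treats these as instances of $\loss(\cdot;\bo,\cdot)$ with generic $(U,V,W,Z)$, exactly as your observations (i) and (ii) justify --- so your write-up is, if anything, slightly more careful than the original.
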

\begin{proof}[Proof of Lemma~\ref{lem:D1convInProb}]
Consider the decomposition
\begin{displaymath}
\begin{array}{cl}
	&\frac{1}{\nn}\sum_{i\in\Ik}\losstest(S_i;\hetaIkc) - \EP[\losstest(S;\etazero)] \\	
	=& \frac{1}{\nn}\sum_{i\in\Ik}\big(\losstest(S_i;\hetaIkc)-\losstest(S_i;\etazero)\big) 
		+ \frac{1}{\nn}\sum_{i\in\Ik}\big(\losstest(S_i;\etazero) - \EP[\losstest(S;\etazero)] \big) 
\end{array}
\end{displaymath}
The term $\frac{1}{\nn}\sum_{i\in\Ik}\big(\losstest(S_i;\hetaIkc)-\losstest(S_i;\etazero)\big)$ is of order $O_{\PP}(\NN^{-\frac{1}{2}}\rhoN)$
by Lemma~\ref{lem:boundRN}. The term $\frac{1}{\nn}\sum_{i\in\Ik}\big(\losstest(S_i;\etazero) - \EP[\losstest(S;\etazero)] \big)$ is of order $O_{\PP}(\NN^{-\frac{1}{2}})$ by the Lindeberg--Feller CLT and the Cramer--Wold device. Thus, we deduce the statement. 
\end{proof}

\begin{definition}
We denote by $\A^{\Ik}$ the row-wise concatenation of the observations $A_i$ for $i\in\Ik$. We denote similarly by $\X^{\Ik}$, $\W^{\Ik}$, $\Y^{\Ik}$, $\A^{\Ikc}$, $\X^{\Ikc}$, $\W^{\Ikc}$, and $\Y^{\Ikc}$ the row-wise concatenations of the respective observations. 
\end{definition}

\begin{proof}[Proof of Theorem~\ref{thm:asymptNormal}]
This proof is based on~\citet{Chernozhukov2018}. 
We show the stronger statement 
\begin{equation}\label{eq:asymptNormal}
\sqrt{\NN}\sigma^{-1}(\hbetaN-\betazero) = \frac{1}{\sqrt{\NN}}\sum_{i=1}^{\NN}\lossoverline( S_i;\betazero, \etazero) + O_{\PP}(\rhoN) \stackrel{d}{\rightarrow}\mathcal{N}(0,\one_{d\times d})\quad (\NN\rightarrow\infty),
\end{equation}
where $\hbetaN$ denotes the DML1 estimator $\hbetaNDMLone$ or the DML2 estimator $\hbetaNDMLtwo$, and where the rate $\rhoN$ is specified in Definition~\ref{def:asymptNormal}, and we show that this statement holds uniformly over laws $\PP$. 
We first consider $\hbetaNDMLtwo$. 
It suffices to show that~\eqref{eq:asymptNormal} holds uniformly over $\PP\in\PcalN$. 
Fix a sequence $\{\PPN\}_{\NN\ge 1}$ such that $\PPN\in\PcalN$ for all $\NN\ge 1$. 
Because this sequence is chosen arbitrarily, it suffices to show 
\begin{displaymath}
		\sqrt{\NN}\sigma^{-1}(\hbetaNDMLtwo-\betazero) = \frac{1}{\sqrt{\NN}}\sum_{i=1}^{\NN}\lossoverline(S_i;\betazero, \etazero) + O_{\PPN}(\rhoN) \stackrel{d}{\rightarrow}\mathcal{N}(0,\one_{d\times d})\quad (\NN\rightarrow\infty). 
	\end{displaymath}
We have
\begin{equation}\label{eq:betaHatNk} 
	\begin{array}{rcl}
		\hbetaNDMLtwo &=& \Big(\frac{1}{\KK}\sum_{\kk=1}^{\KK}\big(\X^{\Ik}-\hmX^{\Ikc}(\W^{\Ik})\big)^T\PiIkcIk\big(\X^{\Ik}-\hmX^{\Ikc}(\W^{\Ik})\big)\Big)^{-1}\\
		&&\quad\cdot \frac{1}{\KK}\sum_{\kk=1}^{\KK}\big(\X^{\Ik}-\hmX^{\Ikc}(\W^{\Ik})\big)^T\PiIkcIk\big(\Y^{\Ik}-\hmY^{\Ikc}(\W^{\Ik})\big)\\
		&=& \bigg(\frac{1}{\KK}\sum_{\kk=1}^{\KK}\frac{1}{\nn}\big(\X^{\Ik}-\hmX^{\Ikc}(\W^{\Ik})\big)^T\big(\A^{\Ik}-\hmA^{\Ikc}(\W^{\Ik})\big)\\
		&&\quad\quad\quad\quad\quad\cdot\Big(\frac{1}{\nn}\big(\A^{\Ik}-\hmA^{\Ikc}(\W^{\Ik})\big)^T(\A^{\Ik}-\hmA^{\Ikc}(\W^{\Ik})\Big)^{-1}\\
		&&\quad\quad\quad\quad\quad\cdot\frac{1}{\nn}\big(\A^{\Ik}-\hmA^{\Ikc}(\W^{\Ik})\big)^T\big(\X^{\Ik}-\hmX^{\Ikc}(\W^{\Ik})\big)
		\bigg)^{-1}\\
		&&\quad\cdot \frac{1}{\KK}\sum_{\kk=1}^{\KK}\frac{1}{\nn}\big(\X^{\Ik}-\hmX^{\Ikc}(\W^{\Ik})\big)^T\big(\A^{\Ik}-\hmA^{\Ikc}(\W^{\Ik})\big)\\
		&&\quad\quad\quad\quad\quad\cdot\Big(\frac{1}{\nn}\big(\A^{\Ik}-\hmA^{\Ikc}(\W^{\Ik})\big)^T\big(\A^{\Ik}-\hmA^{\Ikc}(\W^{\Ik})\big)\Big)^{-1}\\
		&&\quad\quad\quad\quad\quad\cdot\frac{1}{\nn}\big(\A^{\Ik}-\hmA^{\Ikc}(\W^{\Ik})\big)^T\big(\Y^{\Ik}-\hmY^{\Ikc}(\W^{\Ik})\big)
	\end{array}
\end{equation}
by~\eqref{eq:betaDMLtwo}. 
By Lemma~\ref{lem:D1convInProb}, we have
\begin{equation}\label{eq:D1inftyconvInProb:eq1}
	\begin{array}{cl}
	&\frac{1}{\nn} \big(\X^{\Ik}-\hmX^{\Ikc}(\W^{\Ik})\big)^T\big(\A^{\Ik}-\hmA^{\Ikc}(\W^{\Ik})\big)\\
	=& \EPN\Big[\big(X-\mX^0(W)\big)\big(A-\mA^0(W)\big)^T\Big] + O_{\PPN}\big(\NN^{-\frac{1}{2}}(1+\rhoN)\big)
	\end{array}
\end{equation}
and
\begin{equation}\label{eq:D1inftyconvInProb:eq2}
	\begin{array}{cl}
	&\frac{1}{\nn} \big(\A^{\Ik}-\hmA^{\Ikc}(\W^{\Ik})\big)^T\big(\A^{\Ik}-\hmA^{\Ikc}(\W^{\Ik})\big)\\
		=& \EPN\Big[\big(A-\mA^0(W)\big)\big(A-\mA^0(W)\big)^T\Big] + O_{\PPN}\big(\NN^{-\frac{1}{2}}(1+\rhoN)\big).
	\end{array}
\end{equation}
Recall the matrix $\Jzero$ introduced in Definition~\ref{def:lossFunc}. 
By Weyl's inequality and Slutsky's theorem, combining Equations~\mbox{\eqref{eq:betaHatNk}--\eqref{eq:D1inftyconvInProb:eq2}} gives
\begin{equation}\label{eq:asymptoticsDMLtwo}
	\begin{array}{cl}
		&\sqrt{\NN}(\hbetaNDMLtwo-\betazero)\\
			=& \bigg(\Big( \EPN\Big[\big(X-\mX^0(W)\big)\big(A-\mA^0(W)\big)^T\Big] 
			\EPN\Big[\big(A-\mA^0(W)\big)\big(A-\mA^0(W)\big)^T\Big]^{-1} \\
			&\quad\quad\quad\cdot \EPN\Big[\big(A-\mA^0(W)\big)\big(X-\mX^0(W)\big)^T\Big]  \Big)^{-1}\\
			&\quad\quad\cdot  \EPN\Big[\big(X-\mX^0(W)\big)\big(A-\mA^0(W)\big)^T\Big]  \EPN\Big[\big(A-\mA^0(W)\big)\big(A-\mA^0(W)\big)^T\Big]^{-1} \\
			&\quad\quad+O_{\PPN}\big(\NN^{-\frac{1}{2}}(1+\rhoN)\big)\bigg) \\
			&\quad\cdot \frac{1}{\sqrt{\KK}}\sum_{\kk=1}^{\KK}
			\frac{1}{\sqrt{\nn}}\Big(\big(\A^{\Ik}-\hmA^{\Ikc}(\W^{\Ik})\big)^T
			\big(\Y^{\Ik}-\hmY^{\Ikc}(\W^{\Ik})\big) \\
			& \quad\quad\quad\quad\quad\quad\quad\quad\quad- \big(\A^{\Ik}-\hmA^{\Ikc}(\W^{\Ik})\big)^T
			\big(\X^{\Ik}-\hmX^{\Ikc}(\W^{\Ik})\big)\betazero\Big)\\
			=& \big(\Jzero 
			+O_{\PPN}\big(\NN^{-\frac{1}{2}}(1+\rhoN)\big)\big) \\
			&\quad\cdot \frac{1}{\sqrt{\KK}}\sum_{\kk=1}^{\KK}
			\frac{1}{\sqrt{\nn}}\bigg(\big(\A^{\Ik}-\hmA^{\Ikc}(\W^{\Ik})\big)^T
			\Big(\Y^{\Ik}-\hmY^{\Ikc}(\W^{\Ik})  - 
			\big(\X^{\Ik}-\hmX^{\Ikc}(\W^{\Ik})\big)\betazero\Big)\bigg)
	\end{array}
\end{equation}
because $\KK$ is a constant independent of $\NN$ and because  $\NN=\nn\KK$ holds. Recall the linear score $\loss$ in~\eqref{eq:psi_linear}. 
We have
\begin{equation}\label{eq:inftyAsymptNormal}
		\sqrt{\NN}(\hbetaNDMLtwo-\betazero)
		= \Big(\Jzero + O_{\PPN}\big(\NN^{-\frac{1}{2}}(1+\rhoN)\big)\Big)\frac{1}{\sqrt{\KK}}\sum_{\kk=1}^{\KK} \frac{1}{\sqrt{\nn}}\sum_{i\in\Ik}\loss(S_i;\betazero,\hetaIkc).
\end{equation}
Let $\kk\in\indset{\KK}$. By Lemma~\ref{lem:boundRN}, we have
\begin{equation}\label{eq:inftyAsymptNormalOracle}
	\frac{1}{\sqrt{\nn}}\sum_{i\in\Ik}\loss(S_i;\betazero,\hetaIkc)
	= \frac{1}{\sqrt{\nn}}\sum_{i\in\Ik}\loss(S_i;\betazero,\etazero)+ O_{\PPN}(\rhoN).
\end{equation}
We combine~\eqref{eq:inftyAsymptNormal} and~\eqref{eq:inftyAsymptNormalOracle} to obtain
\begin{displaymath}
	\begin{array}{cl}
		&\sqrt{\NN}(\hbetaNDMLtwo-\betazero)\\
		=& \Big(\Jzero + O_{\PPN}\big(\NN^{-\frac{1}{2}}(1+\rhoN)\big)\Big)\frac{1}{\sqrt{\KK}}\sum_{\kk=1}^{\KK} \frac{1}{\sqrt{\nn}}\sum_{i\in\Ik}\loss(S_i;\betazero,\hetaIkc)\\
		=& \Big(\Jzero + O_{\PPN}\big(\NN^{-\frac{1}{2}}(1+\rhoN)\big)\Big)\frac{1}{\sqrt{\KK}}\sum_{\kk=1}^{\KK} \Big(\frac{1}{\sqrt{\nn}}\sum_{i\in\Ik} \loss(S_i;\betazero,\etazero) + O_{\PPN}(\rhoN)  \Big).
	\end{array}
\end{displaymath}
Recall that we have $\NN=\nn\KK$, that $\KK$ is a constant independent of $\NN$,  that the sets $\Ik$ for $\kk\in\indset{\KK}$ form a partition of $\indset{\NN}$,  that $\rhoN\lesssim\deltaN^{\frac{1}{4}}$ by Lemma~\ref{lem:boundRN}, and that $\deltaN$ converges to $0$ as $\NN\rightarrow\infty$ and that $\deltaN^{\frac{1}{4}}\ge \NN^{-\frac{1}{2}}$ holds by Assumption~\ref{assumpt:DMLboth}. 
Thus, we have
\begin{displaymath}
	\begin{array}{cl}
	&\sqrt{\NN}(\hbetaNDMLtwo-\betazero)\\
	=&\Big(\Jzero + O_{\PPN}\big(\NN^{-\frac{1}{2}}(1+\rhoN)\big)\Big) \frac{1}{\sqrt{\KK}}\sum_{\kk=1}^{\KK}\Big(\frac{1}{\sqrt{\nn}}\sum_{i\in\Ik} \loss(S_i;\betazero,\etazero) + O_{\PPN}(\rhoN)  \Big)\\
	=&\Big(\Jzero +O_{\PPN}\big(\NN^{-\frac{1}{2}}(1+\rhoN)\big)\Big) \frac{1}{\sqrt{\NN}}\sum_{i=1}^{\NN}\big( \loss(S_i;\betazero,\etazero) + O_{\PPN}(\rhoN)  \big)  \\
	=& \Jzero \cdot \frac{1}{\sqrt{\NN}}\sum_{i=1}^{\NN} \loss(S_i;\betazero,\etazero) + O_{\PPN}(\rhoN). 
	\end{array}
\end{displaymath}
We have $\EP[\loss(S;\betazero, \etazero)]=\bo$ due to the identifiability condition~\eqref{eq:identificationCondition}.
Therefore, we conclude the proof concerning the \DMLtwo\ method due to 
the  Lindeberg--Feller CLT and the Cramer--Wold device. 

Subsequently, we consider the \DMLone\ method. 
It suffices to show that~\eqref{eq:asymptNormal} holds uniformly over $\PP\in\PcalN$. 
Fix a sequence $\{\PPN\}_{\NN\ge 1}$ such that $\PPN\in\PcalN$ for all $\NN\ge 1$. 
Because this sequence is chosen arbitrarily, it suffices to show 
\begin{displaymath}
		\sqrt{\NN}\sigma^{-1}(\hbetaNDMLone-\betazero) = \frac{1}{\sqrt{\NN}}\sum_{i=1}^{\NN}\lossoverline(S_i;\betazero, \etazero) + O_{\PPN}(\rhoN) \stackrel{d}{\rightarrow}\mathcal{N}(0,\one_{d\times d})\quad (\NN\rightarrow\infty). 
	\end{displaymath}
We have
\begin{equation}\label{eq:betaOneHatNk} 
\begin{array}{rcl}
		\hbetaNkDMLone &=& \Big( \big(\X^{\Ik}-\hmX^{\Ikc}(\W^{\Ik})\big)^T
			\PiIkcIk
			\big(\X^{\Ik}- \hmX^{\Ikc}(\W^{\Ik})\big)\Big)^{-1}\\
			&&\quad\cdot 
			\big(\X^{\Ik}-\hmX^{\Ikc}(\W^{\Ik})\big)^T
			\PiIkcIk
			\big(\Y^{\Ik}-\hmY^{\Ikc}(\W^{\Ik})\big)\\
			&=& \bigg(\frac{1}{\nn} \big(\X^{\Ik}-\hmX^{\Ikc}(\W^{\Ik})\big)^T
			\big(\A^{\Ik}-\hmA^{\Ikc}(\W^{\Ik})\big) \\
			&&\quad\quad\cdot
			\Big(\frac{1}{\nn}\big(\A^{\Ik}-\hmA^{\Ikc}(\W^{\Ik})\big)^T\big(\A^{\Ik}-\hmA^{\Ikc}(\W^{\Ik})\big) \Big)^{-1}\\
			&&\quad\quad\cdot
			\frac{1}{\nn}\big(\A^{\Ik}-\hmA^{\Ikc}(\W^{\Ik})\big)^T
			\big(\X^{\Ik}- \hmX^{\Ikc}(\W^{\Ik})\big)\bigg)^{-1}\\
			&&\quad\cdot
			\frac{1}{\nn}\big(\X^{\Ik}-\hmX^{\Ikc}(\W^{\Ik})\big)^T
			\big(\A^{\Ik}-\hmA^{\Ikc}(\W^{\Ik})\big) \\
			&&\quad\quad\cdot
			\Big(\frac{1}{\nn}\big(\A^{\Ik}-\hmA^{\Ikc}(\W^{\Ik})\big)^T\big(\A^{\Ik}-\hmA^{\Ikc}(\W^{\Ik})\big) \Big)^{-1}\\
			&&\quad\quad\cdot
			\frac{1}{\nn}\big(\A^{\Ik}-\hmA^{\Ikc}(\W^{\Ik})\big)^T
			\big(\Y^{\Ik}-\hmY^{\Ikc}(\W^{\Ik})\big)
\end{array}
\end{equation}
by~\eqref{eq:hbetaNkDMLone}. 
Due to Weyl's inequality and Slutsky's theorem,~\eqref{eq:D1inftyconvInProb:eq1},~\eqref{eq:D1inftyconvInProb:eq2},  and~\eqref{eq:betaOneHatNk}, we obtain
\begin{equation}\label{eq:asymptoticsDMLone}
	\begin{array}{cl}
		&\sqrt{\NN}(\hbetaNDMLone-\betazero)\\
			=& \bigg(\Big( \EPN\Big[\big(X-\mX^0(W)\big)\big(A-\mA^0(W)\big)^T\Big] \EPN\Big[\big(A-\mA^0(W)\big)\big(A-\mA^0(W)\big)^T\Big]^{-1} \\
			&\quad\quad\quad\cdot \EPN\Big[\big(A-\mA^0(W)\big)\big(X-\mX^0(W)\big)^T\Big]  \Big)^{-1}\\
			&\quad\quad\cdot  \EPN\Big[\big(X-\mX^0(W)\big)\big(A-\mA^0(W)\big)^T\Big]  \EPN\Big[\big(A-\mA^0(W)\big)\big(A-\mA^0(W)\big)^T\Big]^{-1} \\
			&\quad\quad+O_{\PPN}\big(\NN^{-\frac{1}{2}}(1+\rhoN)\big)\bigg) \\
			&\quad\cdot\frac{1}{\sqrt{\KK}}\sum_{\kk=1}^{\KK}\Big( 
			\frac{1}{\sqrt{\nn}}\big(\A^{\Ik}-\hmA^{\Ikc}(\W^{\Ik})\big)^T
			\big(\Y^{\Ik}-\hmY^{\Ikc}(\W^{\Ik})\big)  \\
			&\quad\quad\quad\quad\quad\quad\quad- \frac{1}{\sqrt{\nn}}\big(\A^{\Ik}-\hmA^{\Ikc}(\W^{\Ik})\big)^T
			\big(\X^{\Ik}-\hmX^{\Ikc}(\W^{\Ik})\big)\betazero\Big)\\
			=&\Big(\Jzero +O_{\PPN}\big(\NN^{-\frac{1}{2}}(1+\rhoN)\big)\Big)\\
			&\quad\cdot\frac{1}{\sqrt{\KK}}\sum_{\kk=1}^{\KK}\bigg( 
			\frac{1}{\sqrt{\nn}}\big(\A^{\Ik}-\hmA^{\Ikc}(\W^{\Ik})\big)^T
			\Big(\Y^{\Ik}-\hmY^{\Ikc}(\W^{\Ik})-
			\big(\X^{\Ik}-\hmX^{\Ikc}(\W^{\Ik})\big)\betazero\Big)\bigg). 
	\end{array}
\end{equation}
Observe that the expression for $\sqrt{\NN}(\hbetaNDMLone-\betazero)$ given in~\eqref{eq:asymptoticsDMLone} coincides with the expression for $\sqrt{\NN}(\hbetaNDMLtwo-\betazero)$ given in~\eqref{eq:asymptoticsDMLtwo}. Thus, the asymptotic analysis of $\sqrt{\NN}(\hbetaNDMLone-\betazero)$ coincides with the asymptotic analysis of $\sqrt{\NN}(\hbetaNDMLtwo-\betazero)$ presented above. 
\end{proof}

\begin{lemma}\label{lem:mNboth}
	Let $\gamma\ge 0$. 
	Let $p>4$ be the $p$ from Assumption~\ref{assumpt:DMLboth},
	let $\bzero\in\{\betazero, \bg, \bo\}$, and 
	let  $S=(U,V,Z)\in\{A, X, Y\}^2\times\{W\}\times\{A,X,Y\}$. 	
	There exists a finite real constant $\CnormLossboth$  satisfying
	\begin{displaymath}
		\sup_{\eta\in\TauN} \EP\Big[\norm{\loss(S;\bzero,\eta)}^{\frac{p}{2}}\Big]^{\frac{2}{p}}\le \CnormLossboth.
	\end{displaymath}
\end{lemma}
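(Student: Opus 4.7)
The plan is to bound $\|\loss(S;\bzero,\eta)\|$ by a product of factors, apply H\"older's inequality in $L^{p/2}$, and then control each factor by a constant depending only on $\CpnormRV$, $\CpnormEta$, $\Cbetazero$, and $\Cbg$ from earlier lemmas and Assumption~\ref{assumpt:DMLboth}.

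First, I would start from the pointwise bound
\begin{displaymath}
\norm{\loss(S;\bzero,\eta)} \le \norm{U-\mU(W)}\Big(\norm{Z-\mZ(W)} + \norm{V-\mV(W)}\cdot\norm{\bzero}\Big),
\end{displaymath}
which follows from the definition of $\loss$ in~\eqref{eq:score_psi} together with the triangle and Cauchy--Schwarz inequalities. Raising both sides to the power $p/2$, taking $\EP[\cdot]^{2/p}$, and applying H\"older's inequality with exponents $p$ and $p$ (which satisfy $1/p+1/p=2/p$), I obtain
\begin{displaymath}
\EP\big[\norm{\loss(S;\bzero,\eta)}^{\frac{p}{2}}\big]^{\frac{2}{p}}
\le \normP{U-\mU(W)}{p}\Big(\normP{Z-\mZ(W)}{p} + \normP{V-\mV(W)}{p}\cdot\norm{\bzero}\Big).
\end{displaymath}

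Next, I would control each $L^p$ factor uniformly in $\eta\in\TauN$. For any $U\in\{A,X,Y\}$, the triangle inequality gives
\begin{displaymath}
\normP{U-\mU(W)}{p} \le \normP{U-\mU^0(W)}{p} + \normP{\mU^0(W)-\mU(W)}{p} \le 2\normP{U}{p} + \CpnormEta,
\end{displaymath}
where the first summand is bounded via Lemma~\ref{lem:boundPnorm1} and the second by Assumption~\ref{assumpt:DMLboth5}. The moment bound $\normP{U}{p}\le\CpnormRV$ follows from Assumption~\ref{assumpt:DMLboth2}. The same bound applies to the $V$ and $Z$ factors. Finally, $\norm{\bzero}$ is bounded by $\max\{1,\Cbetazero,\Cbg\}$ using Lemma~\ref{lem:betaBound} and Lemma~\ref{lem:bgBound} (with the case $\bzero=\bo$ being trivial).

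Combining these pieces yields a deterministic upper bound that is independent of $\eta\in\TauN$, so it suffices to define $\CnormLossboth$ as this bound. I expect no substantial obstacle here: the proof is a routine application of H\"older combined with the previously established moment and coefficient bounds, and the main care required is only to track the constants through the decomposition so that the resulting bound is visibly uniform in $\eta\in\TauN$.
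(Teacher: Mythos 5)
Your proposal is correct and follows essentially the same route as the paper's proof: bound the score by a product, apply H\"older's inequality in $L^{p/2}$, split each residual via the triangle inequality into $U-\mU^0(W)$ plus $\mU^0(W)-\mU(W)$, and invoke Lemma~\ref{lem:boundPnorm1}, Assumptions~\ref{assumpt:DMLboth2} and~\ref{assumpt:DMLboth5}, and Lemmas~\ref{lem:betaBound} and~\ref{lem:bgBound} to obtain a bound uniform in $\eta\in\TauN$. The only difference is the order in which you apply H\"older versus the triangle-inequality decomposition, which is immaterial.
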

\begin{proof}[Proof of Lemma~\ref{lem:mNboth}]
	Let $\eta=(\mU,\mV,\mZ)\in\TauN$. By H\"{o}lder's inequality and the triangle inequality, we have
	\begin{equation}\label{eq:lossLp1both}
		\begin{array}{cl}
			&\EP\Big[\norm{\loss(S;\bzero,\eta)}^{\frac{p}{2}}\Big]^{\frac{2}{p}} \\
			=& \normP{(U-\mU(W))\big(Z-\mZ(W)-(V-\mV(W))^T\bzero\big)}{\frac{2}{p}}\\
			\le& \big(\normP{U-\mU^0(W)}{p} + \normP{\mU^0(W)-\mU(W)}{p}\big)\\
			&\quad\cdot
			\big(
				\normP{Z-\mZ^0(W)}{p}+\normP{(V-\mV^0(W))^T\bzero}{p}\\
				&\quad\quad
				+ \normP{\mZ^0(W)-\mZ(W)}{p}+\normP{(\mV^0(W)-\mV(W))^T\bzero}{p}
			\big).
		\end{array}
	\end{equation}
	By the Cauchy--Schwarz inequality, we have
	\begin{equation}\label{eq:lossLp2both}
		\normPBig{\big(V-\mV^0(W)\big)^T\bzero}{p} 
		\le \EP\big[ \norm{V-\mV^0(W)}^p\norm{\bzero}^p \big]^{\frac{1}{p}} = \norm{\bzero} \normP{V-\mV^0(W)}{p}
	\end{equation}
	and analogously
	\begin{equation}\label{eq:lossLp3both}
		\normPBig{\big(\mV^0(W)-\mV(W)\big)^T\bzero}{p}
		\le \norm{\bzero}\normP{\mV^0(W)-\mV(W)}{p}. 
	\end{equation}
	Hence, we infer
	\begin{equation}\label{eq:lossLp4both}
			\EP\Big[\norm{\loss(S;\bzero,\eta)}^{\frac{p}{2}}\Big]^{\frac{2}{p}}
			\le (\normP{U}{p} + \CpnormEta) (\normP{Z}{p} +\normP{V}{p} +2\CpnormEta)\max\{1,\norm{\bzero}\}
	\end{equation}
	by~\eqref{eq:lossLp1both},~\eqref{eq:lossLp2both},~\eqref{eq:lossLp3both},
	Lemma~\ref{lem:boundPnorm1}, and Assumption~\ref{assumpt:DMLboth5}. 
	By Lemma~\ref{lem:betaBound}, 
	there exists a finite real constant $\Cbetazero$ that satisfies $\norm{\betazero}\le\Cbetazero$.  
	By Lemma~\ref{lem:bgBound}, 
	there exists a finite real constant $\Cbg$ that satisfies $\norm{\bg}\le\Cbg$. 
	These two bounds lead to $\norm{\bzero}\le\max\{\Cbetazero, \Cbg\}$. 
	By 
	Assumption~\ref{assumpt:DMLboth2}, 
	we have
	\begin{displaymath}
		\max\{\normP{U}{p},\normP{V}{p},\normP{Z}{p}\}\le \normP{U}{p}+\normP{V}{p}+\normP{Z}{p}\le 3\CpnormRV. 
	\end{displaymath}
	Due to~\eqref{eq:lossLp4both}, we therefore have 
	\begin{displaymath}
		\EP\Big[\norm{\loss(S;\bzero,\eta)}^{\frac{p}{2}}\Big]^{\frac{2}{p}}
		\le(3\CpnormRV + \CpnormEta)(6\CpnormRV +2\CpnormEta)\max\{1,\Cbetazero,\Cbg\}.
	\end{displaymath}
\end{proof}

\begin{lemma}\label{lem:helperVarianceConsistent}
Let $\gamma\ge 0$, 
and let $p$ be as in Assumption~\ref{assumpt:DMLboth}. 
Let the indices $\kk\in\indset{\KK}$ and $(j,l,t,r)\in\indset{\Lone}\times \indset{\Ltwo}\times \indset{\Lthree}\times \indset{\Lfour}$, where $\Lone$, $\Ltwo$, $\Lthree$, and $\Lfour$ are natural numbers representing the intended dimensions. Let $\hb\in\{\hbetaNDMLone, \hbetaNDMLtwo, \hbgDMLone, \hbgDMLtwo\}$ and consider the corresponding true unknown underlying parameter vector $\bzero\in\{\betazero, \bg\}$. Consider the corresponding score function combinations
\begin{displaymath}
	\begin{array}{rcl}
	\hlossA(\cdot)&\in&\{\losstilde_j(\cdot;\hb,\hetaIkc), \loss_j(\cdot;\hb,\hetaIkc), (\lossone(\cdot;\hetaIkc))_{j,l} ,(\losstwo(\cdot;\hetaIkc))_{j,l}\}, \\
	\hlossAfull(\cdot)&\in&\{\losstilde(\cdot;\hb,\hetaIkc), \loss(\cdot;\hb,\hetaIkc), \lossone(\cdot;\hetaIkc), \losstwo(\cdot;\hetaIkc)\}, \\
	\hlossB(\cdot)&\in&\{\losstilde_t(\cdot;\hb,\hetaIkc), \loss_t(\cdot;\hb,\hetaIkc), (\lossone(\cdot;\hetaIkc))_{t,r}, (\losstwo(\cdot;\hetaIkc))_{t,r}\},\\
	\hlossBfull(\cdot)&\in&\{\losstilde(\cdot;\hb,\hetaIkc), \loss(\cdot;\hb,\hetaIkc), \lossone(\cdot;\hetaIkc), \losstwo(\cdot;\hetaIkc)\},
	\end{array}
\end{displaymath}
and their respective nonestimated quantity 
\begin{displaymath}
	\begin{array}{rcl}
	\lossA(\cdot) &\in&\{\losstilde_j(\cdot;\bzero,\etazero), \loss_j(\cdot;\bzero,\etazero), (\lossone(\cdot;\etazero))_{j,l}, (\losstwo(\cdot;\etazero))_{j,l}\},\\
	\lossAfull(\cdot) &\in&\{\losstilde(\cdot;\bzero,\etazero), \loss(\cdot;\bzero,\etazero), \lossone(\cdot;\etazero), \losstwo(\cdot;\etazero)\},\\
	\lossB(\cdot)&\in&\{\losstilde_t(\cdot;\bzero,\etazero), \loss_t(\cdot;\bzero,\etazero), (\lossone(\cdot;\etazero))_{t,r}, (\losstwo(\cdot;\etazero))_{t,r}\},\\
	\lossBfull(\cdot)&\in&\{\losstilde(\cdot;\bzero,\etazero), \loss(\cdot;\bzero,\etazero), \lossone(\cdot;\etazero), \losstwo(\cdot;\etazero)\}. 
	\end{array}
\end{displaymath}
Then we have
\begin{displaymath}
	\Icalk := \normonebigg{\frac{1}{\nn}\sum_{i\in\Ik}
	\hlossA(S_i)\hlossB(S_i) - \EP\big[\lossA(S)\lossB(S)\big]} = O_{\PP}(\rhoNtilde),
\end{displaymath}
 where $\rhoNtilde = \NN^{\max\big\{\frac{4}{p}-1, -\frac{1}{2}\big\}}+\rNpnumber$ is as in Definition~\ref{def:asymptNormal}. 
\end{lemma}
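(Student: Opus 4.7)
By the triangle inequality, split
\begin{displaymath}
\Icalk \le \IcalkA + \IcalkB,
\end{displaymath}
where $\IcalkB := \normonebig{\tfrac{1}{\nn}\sum_{i\in\Ik}\lossA(S_i)\lossB(S_i) - \EP[\lossA(S)\lossB(S)]}$ is the oracle empirical-process fluctuation and $\IcalkA := \normonebig{\tfrac{1}{\nn}\sum_{i\in\Ik}\hlossA(S_i)\hlossB(S_i) - \tfrac{1}{\nn}\sum_{i\in\Ik}\lossA(S_i)\lossB(S_i)}$ is the plug-in error arising from replacing $(\bzero,\etazero)$ with $(\hb,\hetaIkc)$.

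For $\IcalkB$, the summands $\lossA(S_i)\lossB(S_i)$ are i.i.d.\ under $\PP$. Lemma~\ref{lem:mNboth} gives $\EP[\normone{\lossA}^{p/2}]^{2/p}\le\CnormLossboth$ and the analogous bound for $\lossB$, so by the Cauchy--Schwarz inequality the product $\lossA(S)\lossB(S)$ belongs to $L^{p/4}(\PP)$ with a bounded norm. Since $p>4$, a Marcinkiewicz--Zygmund inequality applied in the regime $p/4\in(1,2)$ and Chebyshev/Rosenthal in the regime $p/4\ge 2$ yield $\IcalkB = O_{\PP}(\NN^{\max\{4/p-1,-1/2\}})$.

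For $\IcalkA$, expand
\begin{displaymath}
\hlossA\hlossB - \lossA\lossB = (\hlossA-\lossA)\,\lossB + \lossA\,(\hlossB-\lossB) + (\hlossA-\lossA)(\hlossB-\lossB),
\end{displaymath}
and treat each empirical average by Cauchy--Schwarz in the empirical measure. The pure-$\lossA$ and pure-$\lossB$ factors contribute $O_{\PP}(1)$ via Markov's inequality applied to the $L^2(\PP)$ bound from Lemma~\ref{lem:mNboth} (valid because $p>4$ forces $\normP{\lossA}{2}+\normP{\lossB}{2}\le 2\CnormLossboth$). For the difference factors I would further decompose using linearity in $\beta$ from Equation~\eqref{eq:psi_linear},
\begin{displaymath}
\hlossA-\lossA = \big[\loss_{(\cdot)}(S;\hb,\hetaIkc)-\loss_{(\cdot)}(S;\bzero,\hetaIkc)\big] + \big[\loss_{(\cdot)}(S;\bzero,\hetaIkc)-\loss_{(\cdot)}(S;\bzero,\etazero)\big],
\end{displaymath}
with the obvious analogue for the $\losstilde$, $\lossone$, $\losstwo$ cases. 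The first bracket is linear in $\hb-\bzero$ with coefficient of bounded $L^2(\PP)$ norm (Lemma~\ref{lem:mNboth}), so Theorem~\ref{thm:asymptNormal} (respectively Theorem~\ref{thm:asymptNormalgamma}) contributes $O_{\PP}(\NN^{-1/2})$. For the second bracket, the conditional $L^2(\PP)$ norm given $\SIkc$ is $O(\rNpnumber)$ by reusing the truncation argument from the proof of Lemma~\ref{lem:boundRN} (Equations~\eqref{eq:rNp2}--\eqref{eq:rNp5}): the $L^1(\PP)$ bound $\EP[\norm{\loss(\cdot;\bzero,\hetaIkc)-\loss(\cdot;\bzero,\etazero)}]\le\rNpnumber^4$ coming from the definition of $\rNpnumber$ is combined with the $L^{p/2}(\PP)$ moment control from Lemma~\ref{lem:mNboth}, and Lemma~\ref{lem:ChernozhukovLemma} converts this into an unconditional $O_{\PP}(\rNpnumber)$ statement. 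The mixed term $(\hlossA-\lossA)(\hlossB-\lossB)$ is handled by the same Cauchy--Schwarz step on both factors and is of second order, $O_{\PP}(\rNpnumber^2+\NN^{-1})$, hence negligible. Assembling the three pieces yields $\IcalkA = O_{\PP}(\rNpnumber+\NN^{-1/2})$.

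Combining the bounds on $\IcalkA$ and $\IcalkB$ gives $\Icalk = O_{\PP}(\NN^{\max\{4/p-1,-1/2\}}+\rNpnumber) = O_{\PP}(\rhoNtilde)$, as desired. The main technical obstacle is the mismatch between the $L^1$-type definition of the statistical rate $\rNpnumber$ and the $L^2$-norm that Cauchy--Schwarz requires in the $\IcalkA$ bound; this is bridged exactly as in the proof of Lemma~\ref{lem:boundRN} by the truncation/H\"older argument using the $L^{p/2}$-moment control from Lemma~\ref{lem:mNboth}, together with Lemma~\ref{lem:ChernozhukovLemma} to transfer conditional-on-$\SIkc$ bounds to unconditional $O_{\PP}$ statements.
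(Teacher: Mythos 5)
Your proposal is correct and follows essentially the same route as the paper's proof: the same triangle-inequality split into the oracle fluctuation $\IcalkB$ (handled by a mean-zero moment inequality in the two regimes $p/4\in(1,2]$ and $p/4>2$, which is what the paper does via von Bahr--Esseen and a direct variance bound) and the plug-in error $\IcalkA$ (handled by the product expansion, empirical Cauchy--Schwarz, linearity in $\beta$ with Theorems~\ref{thm:asymptNormal}/\ref{thm:asymptNormalgamma} for the $\hb-\bzero$ part, and the truncation/H\"older bridge from the proof of Lemma~\ref{lem:boundRN} plus Lemma~\ref{lem:ChernozhukovLemma} for the nuisance part). You also correctly identified the one genuine subtlety, namely reconciling the $L^1$-type definition of $\rNpnumber$ with the $L^2$ bounds Cauchy--Schwarz requires, and resolved it the same way the paper does.
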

\begin{proof}[Proof of Lemma~\ref{lem:helperVarianceConsistent}]
This proof is modified from~\citet{Chernozhukov2018}. 
By the triangle inequality, we have
\begin{displaymath}
	\Icalk \le \IcalkA+\IcalkB,
\end{displaymath}
where
\begin{displaymath}
	\IcalkA := \normonebigg{\frac{1}{\nn}\sum_{i\in\Ik}\hlossA(S_i)\hlossB(S_i)- \frac{1}{\nn}\sum_{i\in\Ik}\lossA(S_i)\lossB(S_i)}
\end{displaymath}
and 
\begin{displaymath}
	\IcalkB := \normonebigg{ \frac{1}{\nn}\sum_{i\in\Ik}\lossA(S_i)\lossB(S_i) - \EP\big[\lossA(S)\lossB(S)\big]}.
\end{displaymath}
Subsequently, we bound the two terms $\IcalkA$ and $\IcalkB$ individually. 
First, we bound $\IcalkB$. 
We consider the case $p\le 8$. 
The von Bahr--Esseen inequality I~\citep[p. 650]{DasGupta2008} states that  for $1\le \ttt\le 2$ and for independent, real-valued, and mean $0$ variables $Z_1,\ldots, Z_n$, we have
\begin{displaymath}
	\E\bigg[ \normonebigg{\sum_{i=1}^n Z_i}^{\ttt} \bigg] \le \Big( 2-\frac{1}{n} \Big)\sum_{i=1}^n \E[\normone{X_i}^{\ttt}].
\end{displaymath}
The individual summands $\lossA(S_i)\lossB(S_i)-\EP[\lossA(S)\lossB(S)]$ for $i\in\Ik$ are independent and have mean $0$. Therefore,
\begin{displaymath}
	\begin{array}{rcl}
		\EP\Big[\IcalkB^{\frac{p}{4}}\Big] &=& \big(\frac{1}{\nn}\big)^{\frac{p}{4}} \EP\bigg[ \normoneBig{\sum_{i\in\Ik}\big(\lossA(S_i)\lossB(S_i)-\EP\big[\lossA(S)\lossB(S)\big]\big)}^{\frac{p}{4}}  \bigg]\\
		&\le& \big(\frac{1}{\nn}\big)^{-1+\frac{p}{4}} \Big( 2-\frac{1}{n} \Big) \frac{1}{\nn} \sum_{i\in\Ik}\EP\Big[ \normonebig{\lossA(S_i)\lossB(S_i)-\EP\big[\lossA(S)\lossB(S)\big]}^{\frac{p}{4}}  \Big]\\
		&=& \big(\frac{1}{\nn}\big)^{-1+\frac{p}{4}} \Big( 2-\frac{1}{n} \Big)\EP\Big[ \normonebig{\lossA(S)\lossB(S)-\EP\big[\lossA(S)\lossB(S)\big]}^{\frac{p}{4}}  \Big]
	\end{array}
\end{displaymath}
follows due to the von Bahr--Esseen inequality I because $1<\frac{p}{4}\le 2$ holds.
By H\"older's inequality, we have
\begin{displaymath}
	\begin{array}{rcl}
	\Big(\EP\Big[\normonebig{\lossA(S)}^{\frac{p}{4}}\normonebig{\lossB(S)}^{\frac{p}{4}}\Big]\Big)^{\frac{p}{4}}
	 &\le&   \EP\Big[\normonebig{\lossA(S)}^{\frac{p}{2}}\Big]^{\frac{2}{p}}\EP\Big[\normonebig{\lossB(S;\bg,\etazero)}^{\frac{p}{2}}\Big]^{\frac{2}{p}}\\
	 &\le& \normPbig{\lossAfull(S)}{\frac{p}{2}}\normPbig{\lossBfull(S)}{\frac{p}{2}}. 
	 \end{array}
\end{displaymath}
All the terms 
$\normP{\loss(S;\bzero,\etazero)}{\frac{p}{2}}$,
$\normP{\losstilde(S;\bzero,\etazero)}{\frac{p}{2}}$, 
$\normP{\lossone(S;\eta)}{\frac{p}{2}}$, and 
$\normP{\losstwo(S;\eta)}{\frac{p}{2}}$
are upper bounded by the finite real constant $\CnormLossboth$ by Lemma~\ref{lem:mNboth}. Thus, we have $\IcalkB=O_{\PP}(\NN^{\frac{p}{4}-1})$ by Lemma~\ref{lem:ChernozhukovLemma} because we have
\begin{displaymath}
	\begin{array}{cl}
		&\EP\Big[ \normonebig{\lossA(S)\lossB(S)-\EP\big[\lossA(S)\lossB(S)\big]}^{\frac{p}{4}}  \Big]^{\frac{4}{p}}\\
		=& \normP{\lossA(S)\lossB(S)-\EP\big[\lossA(S)\lossB(S)\big]}{\frac{p}{4}}\\
		\le& \normP{\lossA(S)\lossB(S)}{\frac{p}{4}} + \EP\big[\normone{\lossA(S)\lossB(S)}\big]\\
		\le& 2\normP{\lossA(S)\lossB(S)}{\frac{p}{4}} 
	\end{array}
\end{displaymath} 
by the triangle inequality, H\"older's inequality, and due to $\frac{p}{4}>1$. 

Next, consider the case $p>8$. 
Observe that
\begin{displaymath}
	\begin{array}{cl}
		&\EP\bigg[\Big( \frac{1}{\nn}\sum_{i\in\Ik}\lossA(S_i)\lossB(S_i) \Big)^2\bigg]\\
		=& \frac{1}{\nn}\EP\Big[\big(\lossA(S)\big)^2\big(\lossB(S)\big)^2\Big] 
			+ \frac{\nn(\nn-1)}{\nn^2}\EP\big[\lossA(S)\lossB(S)\big]^2
	\end{array}
\end{displaymath}
holds because the data sample is \iid. 
Thus, we infer
\begin{displaymath}
	\begin{array}{rcl}
		\EP[\IcalkB^2] &=& \EP\bigg[\Big( \frac{1}{\nn}\sum_{i\in\Ik}\lossA(S_i)\lossB(S_i) \Big)^2\bigg] + \EP\big[\lossA(S)\lossB(S) \big]^2\\
		&&\quad - 2 \EP\Big[\frac{1}{\nn}\sum_{i\in\Ik}\lossA(S_i)\lossB(S_i)\Big]\EP[\lossA(S)\lossB(S) ]\\
		&\le& \frac{1}{\nn}\EP\big[(\lossA(S))^2(\lossB(S))^2\big]. 
	\end{array}
\end{displaymath}
By the Cauchy--Schwarz inequality, we have
\begin{displaymath}
	\begin{array}{rcl}
	\frac{1}{\nn}\EP\Big[\big(\lossA(S))^2(\lossB(S)\big)^2\Big]
	 &\le&  \frac{1}{\nn} \sqrt{\EP\Big[\big(\lossA(S)\big)^4\Big]\EP\Big[\big(\lossB(S)\big)^4\Big]}\\
	 &\le& \frac{1}{\nn}\normPbig{\lossAfull(S)}{4}^2\normPbig{\lossBfull(S)}{4}^2. 
	 \end{array}
\end{displaymath}
All the terms 
$\normP{\loss(S;\bzero,\etazero)}{4}$ $\normP{\losstilde(S;\bzero,\etazero)}{4}$, 
$\normP{\lossone(S;\eta)}{4}$, and 
$\normP{\losstwo(S;\eta)}{4}$
are upper bounded by $\CnormLossboth$ by Lemma~\ref{lem:mNboth}. 
Thus, we have 
\begin{displaymath}
	\EP[\IcalkB^2] \le \frac{1}{\nn}\normPbig{\lossAfull(S)}{4}^2\normPbig{\lossBfull(S)}{4}^2
	\le  \frac{1}{\nn}(4\CnormLossboth)^4.
\end{displaymath}
We hence infer $\IcalkB=O_{\PP}(\NN^{-\frac{1}{2}})$ by Lemma~\ref{lem:ChernozhukovLemma}.

Second, we bound the term $\IcalkA$. 
For any real numbers $a_1$, $a_2$, $b_1$, and $b_2$ such that real numbers $c$ and $d$ exist that satisfy $\max\{|b_1|, |b_2|\}\le c$ and $\max\{|a_1-b_1|,|a_2-b_2|\}\le d$, we have $\normone{a_1a_2-b_1b_2}\le 2d(c+d)$. Indeed, we have
\begin{displaymath}
\begin{array}{rcl}
	\normone{a_1a_2-b_1b_2} &\le& |a_1-b_1|\cdot |a_2-b_2| +|b_1|\cdot |a_2-b_2| + |a_1-b_1|\cdot |b_2|\\
	&\le& d^2+cd+dc\\
	&\le& 2d(c+d)
	\end{array}
\end{displaymath}
 by the triangle inequality.

We apply this observation 
together with the triangle inequality and the Cauchy--Schwarz inequality to obtain
\begin{displaymath}
	\begin{array}{rcl}
		\IcalkA 
		&\le& \frac{1}{\nn}\sum_{i\in\Ik}\normonebig{\hlossA(S_i)\hlossB(S_i)
		- \lossA(S_i)\lossB(S_i)}\\
		&\le& \frac{2}{\nn}\sum_{i\in\Ik} \max\big\{\normonebig{\hlossA(S_i)-\lossA(S_i)},\normonebig{\hlossB(S_i)-\lossB(S_i)}\big\}\\
		&&\quad\cdot\Big(\max\big\{\normonebig{\lossA(S_i)},\normonebig{\lossB(S_i)}\big\}
		+  \max\big\{\normonebig{\hlossA(S_i)-\lossA(S_i)},\normonebig{\hlossB(S_i)-\lossB(S_i)}\big\}\Big)\\
		&\le& 2\Big(\frac{1}{\nn}\sum_{i\in\Ik} \max\Big\{\normonebig{\hlossA(S_i)-\lossA(S_i)}^2,\normonebig{\hlossB(S_i)-\lossB(S_i)}^2\Big\}\Big)^{\frac{1}{2}}\\
		&&\quad\cdot \Big( \frac{1}{\nn}\sum_{i\in\Ik}\big(\max\big\{\normonebig{\lossA(S_i)},\normonebig{\lossB(S_i)}\big\}\\
		&&\quad\quad\quad\quad\quad\quad\quad\quad
		+ \max\big\{\normonebig{\hlossA(S_i)-\lossA(S_i)},\normonebig{\hlossB(S_i)-\lossB(S_i)}\big\}\big)^2 \Big)^{\frac{1}{2}}. 
	\end{array}
\end{displaymath}
By the triangle inequality, we hence have 
\begin{equation}\label{boundIkl1g}
	\begin{array}{rcl}
		\IcalkA^2 
		&\le& 4 \RN \bigg(\frac{1}{\nn}\sum_{i\in\Ik}\Big(\normbig{\lossAfull(S_i)}^2+\normbig{\lossBfull(S_i)}^2\Big)+\RN\bigg)
	\end{array}
\end{equation}
by Lemma~\ref{lem:squareBound}, 
where
\begin{displaymath}
	\RN := \frac{1}{\nn}\sum_{i\in\Ik}\Big(\normbig{\hlossAfull(S_i) - \lossAfull(S_i)}^2+\normbig{\hlossBfull(S_i) - \lossBfull(S_i)}^2\Big). 
\end{displaymath}
Note that we have 
\begin{displaymath}
	\frac{1}{\nn}\sum_{i\in\Ik}\Big(\normbig{\lossAfull(S_i)}^2+\normbig{\lossBfull(S_i)}^2\Big)=O_{\PP}(1)
\end{displaymath}
by  Markov's inequality because  the terms 
$\normP{\loss(S;\bzero,\etazero)}{4}$ 
$\normP{\losstilde(S;\bzero,\etazero)}{4}$, 
$\normP{\lossone(S;\eta)}{4}$, and 
$\normP{\losstwo(S;\eta)}{4}$
are upper bounded by $\CnormLossboth$ by Lemma~\ref{lem:mNboth}. 
Thus, it suffices to bound the term $\RN$. 
To do this, we need to bound the four terms 
\begin{align}
	 &\frac{1}{\nn}\sum_{i\in\Ik}\norm{\loss(S_i;\hb,\hetaIkc) - \loss(S_i;\bzero,\etazero)}^2\label{eq:firstTermRNk},\\
	 &\frac{1}{\nn}\sum_{i\in\Ik}\norm{\losstilde(S_i;\hb,\hetaIkc)-\losstilde(S_i;\bzero,\etazero)}^2 \label{eq:secondTermRNk},\\
	 &\frac{1}{\nn}\sum_{i\in\Ik}\norm{\lossone(S_i;\hetaIkc)-\lossone(S_i;\etazero)}^2\label{eq:thirdTermRNk},\\
	 &\frac{1}{\nn}\sum_{i\in\Ik}\norm{\losstwo(S_i;\hetaIkc)-\losstwo(S_i;\etazero)}^2\label{eq:fourthTermRNk}.
\end{align}
First, we bound the two terms~\eqref{eq:firstTermRNk} and~\eqref{eq:secondTermRNk} simultaneously. Consider the random variable $U\in\{A,X\}$ and the quadruple $S=(U,X,W,Y)$.  
Because the score $\loss$ is linear in $\beta$, these two terms are upper bounded by
\begin{equation}\label{eq:boundRNboth}
	\begin{array}{cl}
		& \frac{1}{\nn}\sum_{i\in\Ik}\norm{-\loss^a(S_i;\hetaIkc)(\hb-\bzero) + \loss(S_i;\bzero,\hetaIkc)-\loss(S_i;\bzero,\etazero) }^2\\
		\le&\frac{2}{\nn}\sum_{i\in\Ik}\norm{\loss^a(S_i;\hetaIkc)(\hb-\bzero)}^2 +  \frac{2}{\nn}\sum_{i\in\Ik}\norm{\loss(S_i;\bzero,\hetaIkc)-\loss(S_i;\bzero,\etazero) }^2\\
	\end{array}
\end{equation}
due to the triangle inequality and Lemma~\ref{lem:squareBound}. 
Subsequently, we verify that 
\begin{displaymath}
	\frac{1}{\nn}\sum_{i\in\Ik}\norm{\loss^a(S_i;\hetaIkc)}^2=O_{\PP}(1)
\end{displaymath}
holds. 
Indeed, we have
\begin{equation}\label{eq:boundLossA}
	\begin{array}{rcl}
	\frac{1}{\nn}\sum_{i\in\Ik}\norm{\loss^a(S_i;\hetaIkc)}^2
	&=&\frac{1}{\nn}\sum_{i\in\Ik}\normBig{\big(U_i-\hmU^{\Ikc}(W_i)\big)\big(X_i-\hmX^{\Ikc}(W_i)\big)^T}^2\\
	&\le&\sqrt{\frac{1}{\nn}\sum_{i\in\Ik}\norm{U_i-\hmU^{\Ikc}(W_i)}^4}\sqrt{\frac{1}{\nn}\sum_{i\in\Ik}\norm{X_i-\hmX^{\Ikc}(W_i)}^4}
	\end{array}
\end{equation}
by the Cauchy--Schwarz inequality. 
We have
\begin{equation}\label{eq:RaisOone}
	\bigg( \frac{1}{\nn}\sum_{i\in\Ik}\norm{U_i-\mU^0(W_i)}^4 \bigg)^{\frac{1}{4}}
	= O_{\PP}(1)
\end{equation}
by  Markov's inequality because $\EP[\norm{U-\mU^0(W)}^4]$ is upper bounded by Lemma~\ref{lem:boundPnorm1} and Assumption~\ref{assumpt:DMLboth2}. 
 On the event $\EpsN$ that holds with $\PP$-probability $1-\DeltaN$, we have
\begin{equation}\label{eq:boundResEtaZero}
		 \EP\bigg[ \frac{1}{\nn}\sum_{i\in\Ik}\norm{\etazero(W_i)-\hetaIkc(W_i)}^4\Big|\SIkc  \bigg] 
		 = \EP\big[\norm{\etazero(W)-\hetaIkc(W)}^4|\SIkc\big]\\
		\le  \CpnormEta^4
\end{equation}
by Assumption~\ref{assumpt:DMLboth5}. 
We hence have $\frac{1}{\nn}\sum_{i\in\Ik}\norm{\etazero(W_i)-\hetaIkc(W_i)}=O_{\PP}(1)$ 
by Lemma~\ref{lem:ChernozhukovLemma}. 
Let us denote by $\norm{\cdot}_{\FIk, p}$ the $L^p$-norm with the empirical measure on the data indexed by $\Ik$. 
On the event $\EpsN$ that holds with $\PP$-probability $1-\DeltaN$, we have \begin{equation}\label{eq:Rahat}
	\begin{array}{rcl}
		\frac{1}{\nn}\sum_{i\in\Ik}\norm{U_i-\hmU^{\Ikc}(W_i)}^4
		&=&\norm{U-\hmU^{\Ikc}(W)}_{\FIk, 4}^4\\
		&\le& \big(\norm{U-\mU^0(W)}_{\FIk,4} + \norm{\mU^0(W)-\hmU^{\Ikc}(W)}_{\FIk,4}\big)^4\\
		&\le& \big(\norm{U-\mU^0(W)}_{\FIk,4} + \norm{\eta^0(W)-\hetaIkc(W)}_{\FIk,4}\big)^4\\
		&=& O_{\PP}(1)
	\end{array}
\end{equation}
by the triangle inequality,~\eqref{eq:RaisOone}, and~\eqref{eq:boundResEtaZero}. 
Analogous arguments lead to 
\begin{equation}\label{eq:Rxhat}
	\frac{1}{\nn}\sum_{i\in\Ik}\norm{X_i-\hmX^{\Ikc}(W_i)}^4 = O_{\PP}(1). 
\end{equation}
We combine~\eqref{eq:boundLossA},~\eqref{eq:Rahat}, and~\eqref{eq:Rxhat}  to obtain
\begin{equation}\label{eq:boundLossAO}
	\frac{1}{\nn}\sum_{i\in\Ik}\norm{\loss^a(S_i;\hetaIkc)}^2=O_{\PP}(1).
\end{equation}
Because $\norm{\hb-\bzero}^2=O_{\PP}(N^{-1})$ holds by Theorem~\ref{thm:asymptNormal} and Theorem~\ref{thm:asymptNormalgamma}, we can bound the first summand in~\eqref{eq:boundRNboth} by
\begin{equation}\label{eq:RNfirstTermboth}
	\frac{1}{\nn}\sum_{i\in\Ik}\norm{\loss^a(S_i;\hetaIkc)(\hb-\bzero)}^2  
	= O_{\PP}(1)O_{\PP}(N^{-1}) = O_{\PP}(\NN^{-1})
\end{equation}
due to the Cauchy--Schwarz inequality and~\eqref{eq:boundLossAO}. 
On the event $\EpsN$ that holds with $\PP$-probability $1-\DeltaNnumber$, the conditional expectation given $\SIkc$ of the second summand in~\eqref{eq:boundRNboth} is equal to
\begin{displaymath}
	\begin{array}{cl}
		&\EP\Big[ \frac{2}{\nn}\sum_{i\in\Ik}\norm{\loss(S_i;\bzero,\hetaIkc)-\loss(S_i;\bzero,\etazero)}^2 \Big|\SIkc\Big]\\
		=& 2\EP\big[\norm{\loss(S;\bzero,\hetaIkc)-\loss(S;\bzero,\etazero)}^2 \big|\SIkc\big]\\
		\le& 2\sup_{\eta\in\TauN}\EP\big[\norm{\loss(S;\bzero,\eta)-\loss(S;\bzero,\etazero)}^2\big]\\
		\lesssim & \rNpnumber^2
	\end{array}
\end{displaymath}
due to arguments that are analogous to~\mbox{\eqref{eq:rNp1}--\eqref{eq:rNp5}} presented in the proof of Lemma~\ref{lem:boundRN}.
Because the event $\EpsN$ holds with $\PP$-probability $1-\DeltaN=1-o(1)$, we infer 
\begin{displaymath}
	\frac{1}{\nn}\sum_{i\in\Ik}\norm{\loss^a(S_i;\hetaIkc)(\hb-\bzero) + \loss(S_i;\bzero,\hetaIkc)-\loss(S_i;\bzero,\etazero) }^2=O_{\PP}(\NN^{-1}+\rNpnumber^2)
\end{displaymath}
by Lemma~\ref{lem:ChernozhukovLemma}. 
Next, we bound the two terms given in~\eqref{eq:thirdTermRNk} and~\eqref{eq:fourthTermRNk}. We first consider the term given in~\eqref{eq:thirdTermRNk}.
On the event $\EpsN$, we have
\begin{displaymath}
	\begin{array}{cl}
		&\EP\Big[ \frac{1}{\nn}\sum_{i\in\Ik}\norm{\lossone(S_i;\hetaIkc)-\lossone(S_i;\etazero)}^2 \Big|\SIkc\Big]\\
		=& \EP\big[\norm{\lossone(S;\hetaIkc)-\lossone(S;\etazero)}^2 \big|\SIkc\big]\\
		\le& \sup_{\eta\in\TauN}\EP\big[\norm{\lossone(S;\eta)-\lossone(S;\etazero)}^2\big]\\
		\lesssim& \rNpnumber^2
	\end{array}
\end{displaymath}
due to arguments that are analogous to~\mbox{\eqref{eq:rNp1}--\eqref{eq:rNp5}} presented in the proof of Lemma~\ref{lem:boundRN}.
Because the event $\EpsN$ holds with probability $1-\DeltaN=1-o(1)$, we infer 
\begin{displaymath}
	\frac{1}{\nn}\sum_{i\in\Ik}\norm{ \lossone(S_i;\hetaIkc)-\lossone(S_i;\etazero) }^2=O_{\PP}(\rNpnumber^2)
\end{displaymath}
by Lemma~\ref{lem:ChernozhukovLemma}.
On the event $\EpsN$, the conditional expectation given $\SIkc$ of the term~\eqref{eq:fourthTermRNk} is given by
\begin{displaymath}
	\begin{array}{cl}
		&\EP\Big[ \frac{1}{\nn}\sum_{i\in\Ik}\norm{\losstwo(S_i;\hetaIkc)-\losstwo(S_i;\etazero)}^2 \Big|\SIkc\Big]\\
		=& \EP\big[\norm{\losstwo(S;\hetaIkc)-\losstwo(S;\etazero)}^2 \big|\SIkc\big]\\
		\le& \sup_{\eta\in\TauN}\EP\big[\norm{\losstwo(S;\eta)-\losstwo(S;\etazero)}^2\big]\\
		\lesssim& \rNpnumber^2
	\end{array}
\end{displaymath}
due to arguments that are analogous to~\mbox{\eqref{eq:rNp1}--\eqref{eq:rNp5}} presented in the proof of Lemma~\ref{lem:boundRN}.
Because the event $\EpsN$ holds with probability $1-\DeltaN=1-o(1)$, we infer 
\begin{displaymath}
	\frac{1}{\nn}\sum_{i\in\Ik}\norm{ \losstwo(S_i;\hetaIkc)-\losstwo(S_i;\etazero) }^2=O_{\PP}(\rNpnumber^2)
\end{displaymath}
by Lemma~\ref{lem:ChernozhukovLemma}.
Therefore, we have $\IcalkA=O_{\PP}(\NN^{-\frac{1}{2}}+\rNpnumber )$ by~\eqref{boundIkl1g}. In total, we thus have 
\begin{displaymath}
	\Icalk=O_{\PP}\Big(\NN^{\max\big\{\frac{4}{p}-1, -\frac{1}{2}\big\}}\Big)+O_{\PP}\big(\NN^{-\frac{1}{2}}+\rNpnumber\big)=O_{\PP}\Big(\NN^{\max\big\{\frac{4}{p}-1, -\frac{1}{2}\big\}}+\rNpnumber\Big).   
\end{displaymath}
\end{proof}

\begin{theorem}\label{thm:estSD}
Suppose Assumption~\ref{assumpt:DMLboth} holds. 
	Introduce the matrix
	\begin{displaymath}
	\begin{array}{rcl}
	\hJzerok &:=&\bigg(\frac{1}{\nn}\sum_{i\in\Ik}\hbRxki(\hRaki)^T\Big(\frac{1}{\nn}\sum_{i\in\Ik} \hRaki(\hRaki)^T\Big)^{-1} \frac{1}{\nn}\sum_{i\in\Ik}\hRaki(\hbRxki)^T\bigg)^{-1}\\
			&&\quad\cdot\frac{1}{\nn}\sum_{i\in\Ik}\hbRxki(\hRaki)^T \Big(\frac{1}{\nn}\sum_{i\in\Ik} \hRaki(\hRaki)^T\Big)^{-1}.
			\end{array}
	\end{displaymath}
	Let its average over $\kk\in\indset{\KK}$ be
	\begin{displaymath}
		\hJzero := \frac{1}{\KK}\sum_{\kk=1}^{\KK}\hJzerok. 
	\end{displaymath}
	Define further the estimator
	\begin{displaymath}
		\hsigma^2 := 
		\hJzero
		\Big( \frac{1}{\KK}\sum_{\kk=1}^{\KK} \frac{1}{\nn}\sum_{i\in\Ik}\loss(S_i;\hbetaN,\hetaIkc)\loss^T(S_i;\hbetaN,\hetaIkc) \Big)
		\hJzero^T
	\end{displaymath}
	of $\sigma^2$ from Theorem~\ref{thm:asymptNormal}, where $\hbetaN\in\{\hbetaNDMLone,\hbetaNDMLtwo\}$. We then have $\hsigma^2 = \sigma^2+O_{\PP}(\rhoNtilde)$, where $\rhoNtilde= \NN^{\max\big\{\frac{4}{p}-1, -\frac{1}{2}\big\}}+\rNpnumber$ is as in Definition~\ref{def:asymptNormal}.
\end{theorem}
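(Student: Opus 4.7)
The plan is to prove $\hat\sigma^2=\sigma^2+O_{\PP}(\rhoNtilde)$ by establishing separately that (i) $\hJzero=\Jzero+O_{\PP}(\rhoNtilde)$ and (ii) the empirical middle matrix
$$\hat M := \frac{1}{\KK}\sum_{\kk=1}^{\KK} \frac{1}{\nn}\sum_{i\in\Ik}\loss(S_i;\hbetaN,\hetaIkc)\loss^T(S_i;\hbetaN,\hetaIkc)$$
satisfies $\hat M=\tilJzero+O_{\PP}(\rhoNtilde)$, and then combining these via a telescoping identity. Since the dimensions $d,q$ are fixed and Assumption~\ref{assumpt:DMLboth} bounds the singular values of the relevant population matrices away from $0$ and $\infty$, products, inverses, and sums behave continuously (Weyl's inequality plus Slutsky), and the two component estimates can be assembled in a standard way.

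For step (i), I would first apply Lemma~\ref{lem:D1convInProb} (with $\losstest\in\{\lossone,\losstwo\}$ and the appropriate choices of $U,V$) to obtain, for each $\kk\in\indset{\KK}$,
\begin{align*}
\tfrac{1}{\nn}\sum_{i\in\Ik}\hbRxki(\hRaki)^T &= \EP[\Rx\Ra^T]+O_{\PP}\big(\NN^{-\frac{1}{2}}(1+\rhoN)\big),\\
\tfrac{1}{\nn}\sum_{i\in\Ik}\hRaki(\hRaki)^T &= \EP[\Ra\Ra^T]+O_{\PP}\big(\NN^{-\frac{1}{2}}(1+\rhoN)\big).
\end{align*}
Invoking Weyl's inequality on the symmetric invertible target matrices $\EP[\Ra\Ra^T]$ and $\Jzerop$ (both well-conditioned by~\ref{assumpt:DMLboth3}), inverting and multiplying yields $\hJzerok=\Jzero+O_{\PP}(\NN^{-\frac{1}{2}}(1+\rhoN))$. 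Averaging over $\kk$ preserves the rate, so $\hJzero=\Jzero+O_{\PP}(\NN^{-\frac{1}{2}}(1+\rhoN))=\Jzero+O_{\PP}(\rhoNtilde)$, because $\rhoN\lesssim\deltaN^{1/4}\to 0$ and $\NN^{-1/2}\le\rhoNtilde$ by definition.

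For step (ii), I would apply Lemma~\ref{lem:helperVarianceConsistent} entrywise with $\hat b=\hbetaN$, $\bzero=\betazero$, and the choice $\hlossA(S_i)=\loss_j(S_i;\hbetaN,\hetaIkc)$, $\hlossB(S_i)=\loss_t(S_i;\hbetaN,\hetaIkc)$ for each $(j,t)\in\indset{q}\times\indset{q}$. This gives
$$\Big|\tfrac{1}{\nn}\sum_{i\in\Ik}\loss_j(S_i;\hbetaN,\hetaIkc)\loss_t(S_i;\hbetaN,\hetaIkc)-\EP[\loss_j(S;\betazero,\etazero)\loss_t(S;\betazero,\etazero)]\Big|=O_{\PP}(\rhoNtilde)$$
for each $\kk$; averaging over $\kk\in\indset{\KK}$ and over the fixed number of entries $q^2$ yields $\hat M=\tilJzero+O_{\PP}(\rhoNtilde)$ in operator norm.

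For the assembly, I would use the telescoping identity
$$\hat\sigma^2-\sigma^2 = (\hJzero-\Jzero)\tilJzero\Jzero^T + \hJzero(\hat M-\tilJzero)\Jzero^T + \hJzero\hat M(\hJzero-\Jzero)^T,$$
apply submultiplicativity of the operator norm, and note that $\Jzero$, $\tilJzero$ are bounded (Assumptions~\ref{assumpt:DMLboth3}--\ref{assumpt:DMLboth4}) while $\hJzero$, $\hat M$ are bounded in probability by the two bullets above. Every term is therefore $O_{\PP}(\rhoNtilde)$, proving the claim. The main technical obstacle is already isolated inside Lemma~\ref{lem:helperVarianceConsistent}: controlling the plug-in error when both $\betazero$ and $\etazero$ are simultaneously replaced by estimates, which is what forces the appearance of the $\rNpnumber$ term (via a covering/orthogonality argument) and the $\NN^{4/p-1}$ term (via a moment-truncation argument using only $p>4$ finite moments); once that lemma is in hand, everything else is continuous-mapping bookkeeping.
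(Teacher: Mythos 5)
Your proposal is correct and follows essentially the same route as the paper: the paper likewise obtains $\hJzerok=\Jzero+O_{\PP}\big(\NN^{-\frac{1}{2}}(1+\rhoN)\big)$ from the moment-matrix convergences already established (via Lemma~\ref{lem:D1convInProb} and Weyl's inequality) in the proof of Theorem~\ref{thm:asymptNormal}, and then reduces the remaining claim to the entrywise statement of Lemma~\ref{lem:helperVarianceConsistent} for the middle matrix, exactly as you do. The only difference is cosmetic: you spell out the telescoping/submultiplicativity assembly that the paper leaves implicit.
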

\begin{proof}[Proof of Theorem~\ref{thm:estSD}]
We derived $\hJzerok=\Jzero+O_{\PP}\big(\NN^{-\frac{1}{2}}(1+\rhoN)\big)$ in the  proof of Theorem~\ref{thm:asymptNormal}. Thus,  $\hJzero=\Jzero + O_{\PP}\big(\NN^{-\frac{1}{2}}(1+\rhoN)\big)$ holds because $\KK$ is a fixed number independent of $\NN$. To conclude the proof, it  suffices to verify 
\begin{displaymath}
	\normbigg{\frac{1}{\nn}\sum_{i\in\Ik}\loss(S_i;\hbetaN,\hetaIkc)\loss^T(S_i;\hbetaN,\hetaIkc)
	- \EP\big[\loss(S;\betazero,\etazero)\loss^T(S;\betazero,\etazero)\big]}=O_{\PP}(\rhoNtilde).
\end{displaymath}
But this statement holds by Lemma~\ref{lem:helperVarianceConsistent} because the dimensions of $A$ and $X$ are fixed. 
\end{proof}

	\section{Proofs of Section~\ref{sect:regularizedDML}}\label{sect:proofsRegularizedDML}

\begin{definition}\label{def:asymptNormalgamma}
Let $\gamma\ge 0$ and recall the scalar
$\rhoN =\rNpnumber + \NN^{\frac{1}{2}}\lambdaNpnumber$ in Definition~\ref{def:asymptNormal}. 
Introduce the function
		\begin{displaymath}
			\begin{array}{rcl}
		\lossoverlinep(\cdot;\bg,\etazero)
		&:=&\losstilde(\cdot;\bg,\etazero)
				+ (\gamma-1)
				\matC\loss(\cdot;\bg,\etazero) \\
				&&\quad\quad + (\gamma-1)\big(\lossone(\cdot;\etazero)-\EP[\lossone(S;\etazero)]\big)\matE\\
				&&\quad\quad - (\gamma-1)\matC\big(\losstwo(\cdot;\etazero)-\EP[\losstwo(S;\etazero)]\big)\matE.
	\end{array}
	\end{displaymath}
	Let 
	\begin{displaymath}
		\matD := \EP\big[\lossoverlinep(S;\bg,\etazero)(\lossoverlinep(S;\bg,\etazero))^T  \big],
	\end{displaymath}
	and let the approximate variance
	\begin{displaymath}
		\sigma^2(\gamma) :=\big( \matA + (\gamma-1)\matB \big)^{-1}\matD \big( \matA^T + (\gamma-1)\matB^T \big)^{-1}. 
	\end{displaymath}
	Moreover, define the influence function
	\begin{displaymath}
		\lossoverline(\cdot;\bg,\etazero)
		:=\sigma^{-1}(\gamma) \big( \matA + (\gamma-1)\matB \big)^{-1}\lossoverlinep(\cdot;\bg,\etazero). 
	\end{displaymath}	
\end{definition}
	
\begin{proof}[Proof of Theorem~\ref{thm:asymptNormalgamma}]
This proof is based on~\citet{Chernozhukov2018}. 
The matrices $\matA+(\gamma-1)\matB$ and $\matD$ are invertible by Assumption~\ref{assumpt:DMLboth4}. 
Hence, $\sigma^2(\gamma)$ is invertible. 

Subsequently, we show the stronger statement
\begin{equation}\label{eq:asymptNormalgamma}
		\sqrt{\NN}\sigma^{-1}(\gamma)(\hbg-\bg) = \frac{1}{\sqrt{\NN}}\sum_{i=1}^{\NN}\lossoverline(S_i;\bg, \etazero) + O_{\PP}(\rhoN) \stackrel{d}{\rightarrow}\mathcal{N}(0,\one_{d\times d})\quad (\NN\rightarrow\infty),
	\end{equation}
where $\hbg$ denotes the DML2 estimator  $\hbgDMLtwo$ or its DML1 variant $\hbgDMLone$, and where $\lossoverline$ is as in Definition~\ref{def:asymptNormalgamma}. 
We first consider $\hbgDMLtwo$ and afterwards $\hbgDMLone$. 
Fix a sequence $\{\PPN\}_{\NN\ge 1}$ such that $\PPN\in\PcalN$ for all $\NN\ge 1$. 
Because this sequence is chosen arbitrarily, it suffices to show
\begin{displaymath}
		\sqrt{\NN}\sigma^{-1}(\gamma)(\hbgDMLtwo-\bg) 
		= \frac{1}{\sqrt{\NN}}\sum_{i=1}^{\NN}\lossoverline(S_i;\bg, \etazero) + O_{\PPN}(\rhoN) \stackrel{d}{\rightarrow}\mathcal{N}(0,\one_{d\times d})\quad (\NN\rightarrow\infty). 
	\end{displaymath}
We have
\begin{equation}\label{eq:hbgDMLtwoProof}
	\begin{array}{rcl}
		\hbgDMLtwo
		&=&\Big( \frac{1}{\KK}\sum_{\kk=1}^{\KK}\big(\hbRxk\big)^T\big(\one+(\gamma-1)\PiIkcIk\big)\hbRxk \Big)^{-1}\\
		&&\quad\cdot \frac{1}{\KK}\sum_{\kk=1}^{\KK}\big(\hbRxk\big)^T\big(\one+(\gamma-1)\PiIkcIk\big)\hbRyk\\
		&=&  \Bigg( \frac{1}{\KK}\sum_{\kk=1}^{\KK}\bigg(  
			\frac{1}{\nn}\big(\X^{\Ik}-\hmX^{\Ikc}(\W^{\Ik})\big)^T\big(\X^{\Ik}-\hmX^{\Ikc}(\W^{\Ik})\big) \\
			&&\quad\quad\quad\quad\quad\quad
			+(\gamma-1)\cdot\frac{1}{\nn}\big(\X^{\Ik}-\hmX^{\Ikc}(\W^{\Ik})\big)^T\big(\A^{\Ik}-\hmA^{\Ikc}(\W^{\Ik})\big)\\
		&&\quad\quad\quad\quad\quad\quad\quad\cdot\Big(\frac{1}{\nn}\big(\A^{\Ik}-\hmA^{\Ikc}(\W^{\Ik})\big)^T(\A^{\Ik}-\hmA^{\Ikc}(\W^{\Ik})\Big)^{-1}\\
		&&\quad\quad\quad\quad\quad\quad\quad\cdot\frac{1}{\nn}\big(\A^{\Ik}-\hmA^{\Ikc}(\W^{\Ik})\big)^T\big(\X^{\Ik}-\hmX^{\Ikc}(\W^{\Ik})\big)
			\bigg)\Bigg)^{-1}\\
		&&\quad\cdot \frac{1}{\KK}\sum_{\kk=1}^{\KK}
		\Big(  \frac{1}{\nn}\big(\X^{\Ik}-\hmX^{\Ikc}(\W^{\Ik})\big)^T\big(\Y^{\Ik}-\hmY^{\Ikc}(\W^{\Ik})\big) \\
		&&\quad\quad\quad\quad\quad\quad + 
		(\gamma-1)\cdot\frac{1}{\nn}\big(\X^{\Ik}-\hmX^{\Ikc}(\W^{\Ik})\big)^T(\A^{\Ik}-\hmA^{\Ikc}(\W^{\Ik})\big)\\
		&&\quad\quad\quad\quad\quad\quad\quad\cdot\Big(\frac{1}{\nn}\big(\A^{\Ik}-\hmA^{\Ikc}(\W^{\Ik})\big)^T\big(\A^{\Ik}-\hmA^{\Ikc}(\W^{\Ik})\big)\Big)^{-1}\\
		&&\quad\quad\quad\quad\quad\quad\quad\cdot\frac{1}{\nn}\big(\A^{\Ik}-\hmA^{\Ikc}(\W^{\Ik})\big)^T\big(\Y^{\Ik}-\hmY^{\Ikc}(\W^{\Ik})\big)
		\Big)
	\end{array}
\end{equation}
by~\eqref{eq:regularizedBetaDMLtwo}. 
By Lemma~\ref{lem:D1convInProb}, we have

\begin{displaymath}
\begin{array}{cl}
	&\frac{1}{\nn} \big(\X^{\Ik}-\hmX^{\Ikc}(\W^{\Ik})\big)^T\big(\A^{\Ik}-\hmA^{\Ikc}(\W^{\Ik})\big)\\
	=& \EPN\Big[\big(X-\mX^0(W)\big)\big(A-\mA^0(W)\big)^T\Big] + O_{\PPN}\big(\NN^{-\frac{1}{2}}(1+\rhoN)\big),
\end{array}
\end{displaymath}

\begin{displaymath}
\begin{array}{cl}
	&\frac{1}{\nn} \big(\A^{\Ik}-\hmA^{\Ikc}(\W^{\Ik})\big)^T\big(\A^{\Ik}-\hmA^{\Ikc}(\W^{\Ik})\big)\\
		=& \EPN\Big[\big(A-\mA^0(W)\big)\big(A-\mA^0(W)\big)^T\Big] + O_{\PPN}\big(\NN^{-\frac{1}{2}}(1+\rhoN)\big),
\end{array}
\end{displaymath}

\begin{displaymath}
\begin{array}{cl}
&\frac{1}{\nn} \big(\X^{\Ik}-\hmX^{\Ikc}(\W^{\Ik})\big)^T\big(\X^{\Ik}-\hmX^{\Ikc}(\W^{\Ik})\big)\\
	=& \EPN\Big[\big(X-\mX^0(W)\big)\big((X-\mX^0(W)\big)^T\Big] + O_{\PPN}\big(\NN^{-\frac{1}{2}}(1+\rhoN)\big).
\end{array}
\end{displaymath}
By Weyl's inequality and  Slutsky's theorem, we hence have
\begin{equation}\label{eq:combine1}
	\begin{array}{cl}
		&\sqrt{\NN}(\hbgDMLtwo-\bg)\\
		=& \Big(\big( \matA + (\gamma-1)\matB \big)^{-1} +O_{\PPN}\big(\NN^{-\frac{1}{2}}(1+\rhoN)\big)\Big) \\
		&\quad\cdot \frac{1}{\sqrt{\KK}}\sum_{\kk=1}^{\KK}
			\frac{1}{\sqrt{\nn}}\Big(
				\big(\X^{\Ik}-\hmX^{\Ikc}(\W^{\Ik})\big)^T\Big(\Y^{\Ik}-\hmY^{\Ikc}(\W^{\Ik}) - \big(\X^{\Ik}-\hmX^{\Ikc}(\W^{\Ik})\big)\bg\Big)\\
		&\quad\quad\quad + (\gamma-1)\cdot\frac{1}{\nn}\big(\X^{\Ik}-\hmX^{\Ikc}(\W^{\Ik})\big)^T\big(\A^{\Ik}-\hmA^{\Ikc}(\W^{\Ik})\big)\\
		&\quad\quad\quad\quad\quad\quad\quad\cdot\Big(\frac{1}{\nn}\big(\A^{\Ik}-\hmA^{\Ikc}(\W^{\Ik})\big)^T \big(\A^{\Ik}-\hmA^{\Ikc}(\W^{\Ik})\big)\Big)^{-1}\\
		&\quad\quad\quad\quad\quad\quad\quad\cdot\big(\A^{\Ik}-\hmA^{\Ikc}(\W^{\Ik})\big)^T\Big(\Y^{\Ik}-\hmY^{\Ikc}(\W^{\Ik}) - \big(\X^{\Ik}-\hmX^{\Ikc}(\W^{\Ik})\big)\bg\Big) 
		\Big)\\
		&= \Big(\big( \matA + (\gamma-1)\matB \big)^{-1} +O_{\PPN}\big(\NN^{-\frac{1}{2}}(1+\rhoN)\big)\Big) \\
		&\quad\cdot \frac{1}{\sqrt{\KK}}\sum_{\kk=1}^{\KK}
			\Big(\frac{1}{\sqrt{\nn}}\sum_{i\in\Ik}\losstilde(S_i;\bg,\hetaIkc)\\
			&\quad+ (\gamma-1)\cdot\frac{1}{\sqrt{\nn}}
			\sum_{i\in\Ik}\lossone(S_i;\hetaIkc)\cdot\big(\frac{1}{\nn}\sum_{i\in\Ik}\losstwo(S_i;\hetaIkc)\big)^{-1}\cdot
		\frac{1}{\nn}\sum_{i\in\Ik}\loss(S_i;\bg,\hetaIkc)
		\Big)
	\end{array}
\end{equation}
due to~\eqref{eq:hbgDMLtwoProof} because $\KK$ and $\gamma$ are constants independent of $\NN$ and because  $\NN=\nn\KK$ holds. Let $\kk\in\indset{\KK}$. 
Next, we analyze the individual factors of the last summand  in~\eqref{eq:combine1}. 
By Lemma~\ref{lem:boundRN}, 
we have
\begin{equation}\label{eq:combine1LastSummand1}
	\begin{array}{cl}
		&\frac{1}{\sqrt{\nn}}\sum_{i\in\Ik}\loss(S_i;\bg,\hetaIkc) \\
		=& \frac{1}{\sqrt{\nn}}\sum_{i\in\Ik}\loss(S_i;\bg,\etazero) + \Big(\frac{1}{\sqrt{\nn}}\sum_{i\in\Ik}\loss(S_i;\bg,\hetaIkc) - \frac{1}{\sqrt{\nn}}\sum_{i\in\Ik}\loss(S_i;\bg,\etazero)\Big)\\
		=& \frac{1}{\sqrt{\nn}}\sum_{i\in\Ik}\loss(S_i;\bg,\etazero) + O_{\PPN}(\rhoN),
	\end{array}
\end{equation}
and
\begin{equation}\label{eq:combine1LastSummand2}
	\begin{array}{cl}
		&\frac{1}{\sqrt{\nn}}\sum_{i\in\Ik}\losstilde(S_i;\bg,\hetaIkc) \\
		=& \frac{1}{\sqrt{\nn}}\sum_{i\in\Ik}\losstilde(S_i;\bg,\etazero) + \Big(\frac{1}{\sqrt{\nn}}\sum_{i\in\Ik}\losstilde(S_i;\bg,\hetaIkc) - \frac{1}{\sqrt{\nn}}\sum_{i\in\Ik}\losstilde(S_i;\bg,\etazero)\Big)\\
		=& \frac{1}{\sqrt{\nn}}\sum_{i\in\Ik}\losstilde(S_i;\bg,\etazero) +O_{\PPN}(\rhoN),
	\end{array}
\end{equation}
and
\begin{equation}\label{eq:combine1LastSummand3}
	\begin{array}{cl}
		&\frac{1}{\nn}\sum_{i\in\Ik}\lossone(S_i;\hetaIkc)\\
		=& \frac{1}{\nn}\sum_{i\in\Ik}(\lossone(S_i;\hetaIkc) - \lossone(S_i;\etazero)) 
		+  \frac{1}{\nn}\sum_{i\in\Ik}(\lossone(S_i;\etazero)-\EPN[\lossone(S;\etazero)]) \\
		&\quad+ \EPN[\lossone(S;\etazero)]\\ 
		=& O_{\PPN}(\NN^{-\frac{1}{2}}\rhoN) +  \frac{1}{\nn}\sum_{i\in\Ik}(\lossone(S_i;\etazero)-\EPN[\lossone(S;\etazero)])  + \EPN[\lossone(S;\etazero)].
	\end{array}
\end{equation}
We apply a series expansion to obtain 
\begin{equation}\label{eq:combine1LastSummand4}
	\begin{array}{cl}
		& \Big( \frac{1}{\nn}\sum_{i\in\Ik}\losstwo(S_i;\hetaIkc)\Big)^{-1}\\
		=&\Big( \EPN[\losstwo(S;\etazero)] + \frac{1}{\nn}\sum_{i\in\Ik}\big(\losstwo(S_i;\hetaIkc)- \losstwo(S_i;\etazero)\big) \\
		&\quad\quad+ \frac{1}{\nn}\sum_{i\in\Ik}\big(\losstwo(S_i;\etazero)-\EPN[\losstwo(S;\etazero)]\big) \Big)^{-1}\\
		=& \EPN[\losstwo(S;\etazero)]^{-1} 
		-  \EPN[\losstwo(S;\etazero)]^{-1}\frac{1}{\nn}\sum_{i\in\Ik}\big(\losstwo(S_i;\hetaIkc)- \losstwo(S_i;\etazero)\big)\EPN[\losstwo(S;\etazero)]^{-1} \\
		&\quad - \EPN[\losstwo(S;\etazero)]^{-1}
		\frac{1}{\nn}\sum_{i\in\Ik}\big( \losstwo(S_i;\etazero) - \EPN[\losstwo(S;\etazero)]\big)\EPN[\losstwo(S;\etazero)]^{-1} \\
		&\quad + O_{\PPN}\Big( \normBig{\frac{1}{\nn}\sum_{i\in\Ik}\big(\losstwo(S_i;\hetaIkc)- \losstwo(S_i;\etazero)\big)}^2 \\
		&\quad\quad\quad\quad\quad+ \normBig{\frac{1}{\nn}\sum_{i\in\Ik}\big( \losstwo(S_i;\etazero) - \EPN[\losstwo(S;\etazero)]\big)}^2 \Big)\\
		=& \EPN[\losstwo(S;\etazero)]^{-1} + O_{\PPN}\big(\NN^{-\frac{1}{2}}\rhoN\big) + O_{\PPN}\Big( O_{\PPN}\big(\NN^{-1}\rhoN^2\big) +  O_{\PPN}(\NN^{-1})\Big) \\
		&\quad- \EPN[\losstwo(S;\etazero)]^{-1}\frac{1}{\nn}\sum_{i\in\Ik}\big( \losstwo(S_i;\etazero) - \EPN[\losstwo(S;\etazero)]\big)\EPN[\losstwo(S;\etazero)]^{-1} \\
		=& \EPN[\losstwo(S;\etazero)]^{-1}+ O_{\PPN}\big(\NN^{-\frac{1}{2}}\rhoN\big)\\
		&\quad - \EPN[\losstwo(S;\etazero)]^{-1}\frac{1}{\nn}\sum_{i\in\Ik}\big( \losstwo(S_i;\etazero) - \EPN[\losstwo(S;\etazero)]\big)\EPN[\losstwo(S;\etazero)]^{-1}
	\end{array}
\end{equation}
due to Lemma~\ref{lem:boundRN},  the Lindeberg--Feller CLT,  the Cramer--Wold device, because $\rhoN\lesssim\deltaN^{\frac{1}{4}}$ holds by Lemma~\ref{lem:boundRN}, and because $\deltaN^{\frac{1}{4}}\ge\NN^{-\frac{1}{2}}$ holds by Assumption~\ref{assumpt:DMLboth}. 
Thus, the last summand  in~\eqref{eq:combine1} can be expressed as
\begin{equation}\label{eq:combine1LastSummand}
	\begin{array}{cl}
	&\frac{1}{\sqrt{\nn}}
			\sum_{i\in\Ik}\lossone(S_i;\hetaIkc)\cdot\big(\frac{1}{\nn}\sum_{i\in\Ik}\losstwo(S_i;\hetaIkc)\big)^{-1}\cdot
		\frac{1}{\nn}\sum_{i\in\Ik}\loss(S_i;\bg,\hetaIkc)\\
		=&\sqrt{\nn}\Big(O_{\PPN}\big(\NN^{-\frac{1}{2}}\rhoN\big) +  \frac{1}{\nn}\sum_{i\in\Ik}\big(\lossone(S_i;\etazero)-\EPN[\lossone(S;\etazero)]\big)  + \EPN[\lossone(S;\etazero)]\Big)\\
		&\quad\cdot\Big(  \EPN[\losstwo(S;\etazero)]^{-1}+ O_{\PPN}\big(\NN^{-\frac{1}{2}}\rhoN\big)\\
		&\quad\quad\quad - \EPN[\losstwo(S;\etazero)]^{-1}\frac{1}{\nn}\sum_{i\in\Ik}\big( \losstwo(S_i;\etazero) - \EPN[\losstwo(S;\etazero)]\big)\EPN[\losstwo(S;\etazero)]^{-1} \Big)\\
		&\quad\cdot \Big( \frac{1}{\nn}\sum_{i\in\Ik}\loss(S_i;\bg,\etazero) + O_{\PPN}\big(\NN^{-\frac{1}{2}}\rhoN\big) \Big)\\
		=&\frac{1}{\sqrt{\nn}}\sum_{i\in\Ik}\big(\lossone(S_i;\etazero)-\EPN[\lossone(S;\etazero)]\big)\EPN[\losstwo(S;\etazero)]^{-1}\EPN[\loss(S;\bg,\etazero)]\\
		& \quad + \EPN[\lossone(S;\etazero)]\EPN[\losstwo(S;\etazero)]^{-1}\frac{1}{\sqrt{\nn}}\sum_{i\in\Ik}\loss(S_i;\bg,\etazero)\\
		&\quad - \EPN[\lossone(S;\etazero)]\EPN[\losstwo(S;\etazero)]^{-1}\frac{1}{\sqrt{\nn}}\sum_{i\in\Ik}\big(\losstwo(S_i;\etazero)-\EPN[\losstwo(S;\etazero)]\big)\\
		&\quad\quad\cdot\EPN[\losstwo(S;\etazero)]^{-1}\EPN[\loss(S;\bg,\etazero)] + O_{\PPN}(\rhoN)
	\end{array}
\end{equation}
due to~\mbox{\eqref{eq:combine1LastSummand1}--\eqref{eq:combine1LastSummand4}}, the Lindeberg--Feller CLT and the Cramer--Wold device.

We combine~\eqref{eq:combine1} and~\eqref{eq:combine1LastSummand} and obtain
\begin{equation}\label{eq:finalCLTeq}
	\begin{array}{cl}
		&\sqrt{\NN}(\hbgDMLtwo-\bg)\\
			=& \Big(\big( \matA + (\gamma-1)\matB \big)^{-1} +O_{\PPN}\big(\NN^{-\frac{1}{2}}(1+\rhoN)\big)\Big) \\
			&\quad\cdot \frac{1}{\sqrt{\KK}}\sum_{\kk=1}^{\KK}
			\frac{1}{\sqrt{\nn}}\sum_{i\in\Ik}\Big(\losstilde(S_i;\bg,\etazero)
				+ (\gamma-1)
				\matC\loss(S_i;\bg,\etazero) \\
				& \quad\quad +(\gamma-1) \big(\lossone(S_i;\etazero)-\EPN[\lossone(S;\etazero)]\big)\matE\\
				&\quad\quad - (\gamma-1)\matC\big(\losstwo(S_i;\etazero)-\EPN[\losstwo(S;\etazero)]\big)\matE
			\Big) + O_{\PPN}(\rhoN)\\
		=& \Big(\big( \matA + (\gamma-1)\matB \big)^{-1}\Big) \\
			&\quad\cdot \frac{1}{\sqrt{\NN}}\sum_{i=1}^{\NN}
			\Big(\losstilde(S_i;\bg,\etazero)
				+ (\gamma-1)
				\matC\loss(S_i;\bg,\etazero) \\
				& \quad\quad + (\gamma-1)\big(\lossone(S_i;\etazero)-\EPN[\lossone(S;\etazero)]\big)\matE\\
				&\quad\quad
				- (\gamma-1)\matC\big(\losstwo(S_i;\etazero)-\EPN[\losstwo(S;\etazero)]\big)\matE
			\Big) + O_{\PPN}(\rhoN)
	\end{array}
\end{equation}
by the Lindeberg--Feller CLT and the Cramer--Wold device.
We conclude our proof for the \DMLtwo\ method
by the Lindeberg--Feller CLT and the Cramer--Wold device. 

Subsequently, we consider the \DMLone\ method. 
It suffices to show that~\eqref{eq:asymptNormalgamma} holds uniformly over $\PP\in\PcalN$. 
Fix a sequence $\{\PPN\}_{\NN\ge 1}$ such that $\PPN\in\PcalN$ for all $\NN\ge 1$. 
Because this sequence is chosen arbitrarily, it suffices to show 
\begin{displaymath}
		\sqrt{\NN}\sigma^{-1}(\gamma)(\hbgDMLone-\bg) = \frac{1}{\sqrt{\NN}}\sum_{i=1}^{\NN}\lossoverline( S_i;\bg, \etazero) + O_{\PPN}(\rhoN) \stackrel{d}{\rightarrow}\mathcal{N}(0,\one_{d\times d})\quad (\NN\rightarrow\infty). 
	\end{displaymath}
We have
\begin{equation}\label{eq:hbgDMLoneProof}
	\begin{array}{rcl}
		\hbgDMLone
		&=&\frac{1}{\KK}\sum_{\kk=1}^{\KK}\Big( \big(\hbRxk\big)^T\big(\one+(\gamma-1)\PiIkcIk\big)\hbRxk \Big)^{-1}\\
		&&\quad\quad\quad\quad\quad\cdot (\hbRxk)^T\big(\one+(\gamma-1)\PiIkcIk\big)\hbRyk\\
		&=&  \frac{1}{\KK}\sum_{\kk=1}^{\KK}\bigg( \Big(  
			\frac{1}{\nn}\big(\X^{\Ik}-\hmX^{\Ikc}(\W^{\Ik})\big)^T\big(\X^{\Ik}-\hmX^{\Ikc}(\W^{\Ik})\big) \\
			&&\quad\quad\quad\quad\quad\quad
			+(\gamma-1)\cdot\frac{1}{\nn}\big(\X^{\Ik}-\hmX^{\Ikc}(\W^{\Ik})\big)^T\big(\A^{\Ik}-\hmA^{\Ikc}(\W^{\Ik})\big)\\
		&&\quad\quad\quad\quad\quad\quad\quad\cdot\Big(\frac{1}{\nn}\big(\A^{\Ik}-\hmA^{\Ikc}(\W^{\Ik})\big)^T\big(\A^{\Ik}-\hmA^{\Ikc}(\W^{\Ik})\big)\Big)^{-1}\\
		&&\quad\quad\quad\quad\quad\quad\quad\cdot\frac{1}{\nn}\big(\A^{\Ik}-\hmA^{\Ikc}(\W^{\Ik})\big)^T\big(\X^{\Ik}-\hmX^{\Ikc}(\W^{\Ik})\big)
			\Big)\bigg)^{-1}\\
		&&\quad\quad\quad\quad\quad\cdot 
		\Big(  \frac{1}{\nn}\big(\X^{\Ik}-\hmX^{\Ikc}(\W^{\Ik})\big)^T\big(\Y^{\Ik}-\hmY^{\Ikc}(\W^{\Ik})\big) \\
		&&\quad\quad\quad\quad\quad\quad + 
		(\gamma-1)\cdot\frac{1}{\nn}\big(\X^{\Ik}-\hmX^{\Ikc}(\W^{\Ik})\big)^T\big(\A^{\Ik}-\hmA^{\Ikc}(\W^{\Ik})\big)\\
		&&\quad\quad\quad\quad\quad\quad\quad\cdot\Big(\frac{1}{\nn}\big(\A^{\Ik}-\hmA^{\Ikc}(\W^{\Ik})\big)^T\big(\A^{\Ik}-\hmA^{\Ikc}(\W^{\Ik})\big)\Big)^{-1}\\
		&&\quad\quad\quad\quad\quad\quad\quad\cdot\frac{1}{\nn}\big(\A^{\Ik}-\hmA^{\Ikc}(\W^{\Ik})\big)^T\big(\Y^{\Ik}-\hmY^{\Ikc}(\W^{\Ik})\big) 
		\Big)
	\end{array}
\end{equation}
by~\eqref{eq:hbg}. 
By Slutsky's theorem and Equation~\eqref{eq:hbgDMLoneProof}, we have
\begin{displaymath}
	\begin{array}{cl}
		&\sqrt{\NN}(\hbgDMLone-\bg)\\
			=& \Big(\big( \matA + (\gamma-1)\matB \big)^{-1} +O_{\PPN}\big(\NN^{-\frac{1}{2}}(1+\rhoN)\big)\Big) \\
		&\quad\cdot \frac{1}{\sqrt{\KK}}\sum_{\kk=1}^{\KK}
			\frac{1}{\sqrt{\nn}}\bigg(
				\big(\X^{\Ik}-\hmX^{\Ikc}(\W^{\Ik})\big)^T\Big(\Y^{\Ik}-\hmY^{\Ikc}(\W^{\Ik}) - \big(\X^{\Ik}-\hmX^{\Ikc}(\W^{\Ik})\big)^T\bg\Big)\\
		&\quad\quad\quad + (\gamma-1)\cdot\frac{1}{\nn}\big(\X^{\Ik}-\hmX^{\Ikc}(\W^{\Ik})\big)^T \big(\A^{\Ik}-\hmA^{\Ikc}(\W^{\Ik})\big)\\
		&\quad\quad\quad\quad\quad\quad\quad\cdot\Big(\frac{1}{\nn}\big(\A^{\Ik}-\hmA^{\Ikc}(\W^{\Ik})\big)^T \big(\A^{\Ik}-\hmA^{\Ikc}(\W^{\Ik})\big)\Big)^{-1}\\
		&\quad\quad\quad\quad\quad\quad\quad\cdot\big(\A^{\Ik}-\hmA^{\Ikc}(\W^{\Ik})\big)^T\Big(\Y^{\Ik}-\hmY^{\Ikc}(\W^{\Ik}) - \big(\X^{\Ik}-\hmX^{\Ikc}(\W^{\Ik})\big)^T\bg\Big) 
		\bigg)\\
		=& \Big(\big( \matA + (\gamma-1)\matB \big)^{-1} +O_{\PPN}\big(\NN^{-\frac{1}{2}}(1+\rhoN)\big)\Big) \\
		&\quad\cdot \frac{1}{\sqrt{\KK}}\sum_{\kk=1}^{\KK}
			\sqrt{\nn}\Big(\frac{1}{\nn}\sum_{i\in\Ik}\losstilde(S_i;\bg,\hetaIkc)\\
			&\quad+ (\gamma-1)\cdot\frac{1}{\nn}
			\sum_{i\in\Ik}\lossone(S_i;\hetaIkc)\cdot\big(\frac{1}{\nn}\sum_{i\in\Ik}\losstwo(S_i;\hetaIkc)\big)^{-1}\cdot\frac{1}{\nn}
		\sum_{i\in\Ikc}\loss(S_i;\bg,\hetaIkc)
		\Big)\\
	\end{array}
\end{displaymath}
The last expression above coincides with~\ref{eq:combine1}.
Consequently, the same asymptotic analysis conducted for $\hbgDMLtwo$ can also be employed in this case. 
\end{proof}

\begin{lemma}\label{lem:lossSquareRootBound}
Let $\gamma\ge 0$ and let $\losstest\in\{\loss,\losstilde\}$. We have
\begin{displaymath}
	\frac{1}{\nn}\sum_{i\in\Ik}\losstest(S_i;\hbg,\hetaIkc) = \EP[\losstest(S;\bg,\etazero)] +  O_{\PP}\big(\NN^{-\frac{1}{2}}(1+\rhoN)\big). 
\end{displaymath}
\end{lemma}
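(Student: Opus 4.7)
The plan is to exploit the linearity of both candidate score functions in the coefficient $\beta$, which lets me cleanly separate the coefficient-estimation error $\hbg-\bg$ from the nuisance-estimation error $\hetaIkc - \etazero$. Concretely, for $\losstest=\loss$ I will use the decomposition $\loss(S;\beta,\eta)=\loss^b(S;\eta)-\loss^a(S;\eta)\beta$ from~\eqref{eq:psi_linear}, and for $\losstest=\losstilde$ the analogous identity $\losstilde(S;\beta,\eta)=\losstilde(S;\bo,\eta)-\lossthree(S;\eta)\beta$ following directly from~\eqref{eq:score_psitilde}. In either case I can write
\begin{displaymath}
    \frac{1}{\nn}\sum_{i\in\Ik}\losstest(S_i;\hbg,\hetaIkc)
    = \frac{1}{\nn}\sum_{i\in\Ik}\losstest(S_i;\bg,\hetaIkc)
      - \Big(\frac{1}{\nn}\sum_{i\in\Ik}\losstest^a(S_i;\hetaIkc)\Big)(\hbg-\bg),
\end{displaymath}
where $\losstest^a\in\{\lossone^T,\lossthree\}$ is the appropriate ``$a$''-component.

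Next I will control each of the two pieces separately. For the cross term, Theorem~\ref{thm:asymptNormalgamma} gives $\hbg-\bg=O_{\PP}(\NN^{-1/2})$, and Lemma~\ref{lem:D1convInProb} applied to $\losstest^a\in\{\lossone,\lossthree\}$ gives $\frac{1}{\nn}\sum_{i\in\Ik}\losstest^a(S_i;\hetaIkc)=\EP[\losstest^a(S;\etazero)]+O_{\PP}(\NN^{-1/2}(1+\rhoN))=O_{\PP}(1)$, so this cross term is $O_{\PP}(\NN^{-1/2})$. For the leading piece, I split
\begin{displaymath}
    \frac{1}{\nn}\sum_{i\in\Ik}\losstest(S_i;\bg,\hetaIkc)
    = \frac{1}{\nn}\sum_{i\in\Ik}\big(\losstest(S_i;\bg,\hetaIkc)-\losstest(S_i;\bg,\etazero)\big)
      + \frac{1}{\nn}\sum_{i\in\Ik}\losstest(S_i;\bg,\etazero).
\end{displaymath}
The first summand is bounded by $\NN^{-1/2}$ times the quantity controlled in Lemma~\ref{lem:boundRN} (with $\bzero=\bg$), giving $O_{\PP}(\NN^{-1/2}\rhoN)$. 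The second summand, by the Lindeberg--Feller CLT and the Cramer--Wold device applied to the fixed-dimensional vector $\losstest(S_i;\bg,\etazero)$ whose second moment is uniformly bounded via Lemma~\ref{lem:mNboth}, equals $\EP[\losstest(S;\bg,\etazero)]+O_{\PP}(\NN^{-1/2})$.

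Assembling these pieces gives exactly $\EP[\losstest(S;\bg,\etazero)]+O_{\PP}(\NN^{-1/2}(1+\rhoN))$, as desired. There is no real obstacle here: the crucial structural ingredient is the linearity of both $\loss$ and $\losstilde$ in $\beta$, which reduces the claim to a direct combination of Theorem~\ref{thm:asymptNormalgamma}, Lemma~\ref{lem:boundRN}, and Lemma~\ref{lem:D1convInProb}. The only mild subtlety is verifying that the ``$a$''-decomposition for $\losstilde$ yields the matrix-valued average $\frac{1}{\nn}\sum\lossthree(S_i;\hetaIkc)$, which is covered by Lemma~\ref{lem:D1convInProb} since $\lossthree\in\{\lossone,\losstwo,\lossthree\}$ is in the list covered there.
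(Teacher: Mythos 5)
Your proof is correct and follows essentially the same route as the paper's: the paper's first difference term $\frac{1}{\nn}\sum_{i\in\Ik}\big(\loss(S_i;\hbg,\hetaIkc)-\loss(S_i;\bg,\hetaIkc)\big)$ is exactly your linearity-based cross term, bounded there too via Lemma~\ref{lem:D1convInProb} and $\norm{\hbg-\bg}=O_{\PP}(\NN^{-1/2})$ from Theorem~\ref{thm:asymptNormalgamma}, and the remaining two pieces are handled identically via Lemma~\ref{lem:boundRN} and the Lindeberg--Feller CLT. Your explicit identification of the ``$a$''-component $\lossthree$ for $\losstilde$ just makes precise what the paper dismisses with ``the case $\losstest=\losstilde$ can be analyzed analogously.''
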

\begin{proof}
We consider the case $\losstest=\loss$. 
We decompose
\begin{equation}\label{eq:decomposition}
	\begin{array}{cl}
		&\frac{1}{\nn}\sum_{i\in\Ik}\loss(S_i;\hbg,\hetaIkc) -\EP[\loss(S;\bg,\etazero)]\\
		=&\frac{1}{\nn}\sum_{i\in\Ik}\big(\loss(S_i;\hbg,\hetaIkc) -\loss(S_i;\bg,\hetaIkc) \big) + \frac{1}{\nn}\sum_{i\in\Ik}\big(\loss(S_i;\bg,\hetaIkc)-\loss(S_i;\bg,\etazero)\big)\\
		&\quad +  \frac{1}{\nn}\sum_{i\in\Ik}\big( \loss(S_i;\bg,\etazero)-\EP[\loss(S;\bg,\etazero)]\big).
	\end{array}
\end{equation}
Subsequently, we analyze the three terms in the above decomposition individually. 
We have
\begin{displaymath}
	\begin{array}{cl}
		&\normbig{\frac{1}{\nn}\sum_{i\in\Ik}\loss(S_i;\hbg,\hetaIkc) - \frac{1}{\nn}\sum_{i\in\Ik}\loss(S_i;\bg,\hetaIkc)}\\
		\le&  \normbig{\frac{1}{\nn}\sum_{i\in\Ik}(A_i-\hmA^{\Ikc}(W_i))(X_i-\hmX^{\Ikc}(W_i))^T}\norm{\hbg-\bg}\\
		=& \normbig{\frac{1}{\nn}\sum_{i\in\Ik}\lossone(S_i;\hetaIkc)}\norm{\hbg-\bg}\\
		=& \normbig{\EP[\lossone(S;\etazero)]+O_{\PP}\big(\NN^{-\frac{1}{2}}(1+\rhoN)\big)}\norm{\hbg-\bg}
	\end{array}
\end{displaymath}
by Lemma~\ref{lem:D1convInProb}. Because $\norm{\hbg-\bg}=O_{\PP}(\NN^{-\frac{1}{2}}\rhoN)$ holds by Theorem~\ref{thm:asymptNormalgamma}, we infer 
\begin{equation}\label{eq:decompositionFactor1}
	\normbigg{\frac{1}{\nn}\sum_{i\in\Ik}\loss(S_i;\hbg,\hetaIkc) - \frac{1}{\nn}\sum_{i\in\Ik}\loss(S_i;\bg,\hetaIkc)} = O_{\PP}\big(\NN^{-\frac{1}{2}}\rhoN\big). 
\end{equation}
Due to~\eqref{eq:combine1LastSummand1} that was established in the proof of Theorem~\ref{thm:asymptNormalgamma}, we have
\begin{equation}\label{eq:decompositionFactor2}
	\frac{1}{\nn}\sum_{i\in\Ik}\big(\loss(S_i;\bg,\hetaIkc) - \loss(S_i;\bg,\etazero) \big)=   O_{\PP}\big(\NN^{-\frac{1}{2}}\rhoN\big). 
\end{equation}
Due to the Lindeberg--Feller CLT and the Cramer--Wold device, we have
\begin{equation}\label{eq:decompositionFactor3}
	\frac{1}{\nn}\sum_{i\in\Ik}\big(\loss(S_i;\bg,\etazero) -\EP[\loss(S;\bg,\etazero)]\big) =O_{\PP}(\NN^{-\frac{1}{2}}).
\end{equation}
We combine~\eqref{eq:decomposition} and~\mbox{\eqref{eq:decompositionFactor1}--\eqref{eq:decompositionFactor3}} to infer the claim for $\losstest=\loss$. The case $\losstest=\losstilde$ can be analyzed analogously. 
\end{proof}

\begin{theorem}\label{thm:estSDgamma}
Suppose Assumption~\ref{assumpt:DMLboth} holds. 
	Recall the  score functions introduced in Definition~\ref{def:lossFunc}, and
	let $\hbg\in\{\hbgDMLone,\hbgDMLtwo\}$.
	Introduce the matrices
	\begin{displaymath}
		\begin{array}{rcl}
			\hmatAk &:=& \frac{1}{\nn}\sum_{i\in\Ik}\lossthree(S_i;\hetaIkc),\\ 			\hmatBk &:=& \frac{1}{\nn}\sum_{i\in\Ik}\lossone(S;\hetaIkc) \Big(\frac{1}{\nn}\sum_{i\in\Ik}\losstwo(S_i;\hetaIkc) \Big)^{-1}\frac{1}{\nn}\sum_{i\in\Ik} \lossone^T(S_i;\hetaIkc),\\ 
			\hmatCk &:=& \frac{1}{\nn}\sum_{i\in\Ik}\lossone(S_i;\hetaIkc)  \Big(\frac{1}{\nn}\sum_{i\in\Ik}\losstwo(S_i;\hetaIkc) \Big)^{-1},\\ 			
			\hmatEk&:=& \Big(\frac{1}{\nn}\sum_{i\in\Ik}\losstwo(S_i;\hetaIkc) \Big)^{-1} \frac{1}{\nn}\sum_{i\in\Ik} \loss(S_i;\hbg,\hetaIkc).\\ 		
			\end{array}
	\end{displaymath}
	 Let furthermore
	\begin{displaymath}
			\begin{array}{rcl}
		\hlossoverlinep(\cdot;\hbg,\hetaIkc)
		&:=&\losstilde(\cdot;\hbg,\hetaIkc)
				+ (\gamma-1)
				\hmatCk\loss(\cdot;\hbg,\hetaIkc) \\
				&&\quad\quad + (\gamma-1)\big(\lossone(\cdot;\hetaIkc)-\frac{1}{\nn}\sum_{i\in\Ik}\lossone(S_i;\hetaIkc)\big)\hmatEk\\
				&&\quad\quad - (\gamma-1)\hmatCk\big(\losstwo(\cdot;\hetaIkc)-\frac{1}{\nn}\sum_{i\in\Ik} \losstwo(S_i;\hetaIkc) \big)\hmatEk
	\end{array}
		\end{displaymath}
	and
	\begin{displaymath}
		\hmatDk := \frac{1}{\nn}\sum_{i\in\Ik} \hlossoverlinep(S_i;\hbg,\hetaIkc)\big(\hlossoverlinep(S_i;\hbg,\hetaIkc)\big)^T.
	\end{displaymath}
	Define the estimators 
		\begin{displaymath}
		\hmatA := \frac{1}{\KK}\sum_{\kk=1}^{\KK}\hmatAk,
		\quad
		\hmatB :=  \frac{1}{\KK}\sum_{\kk=1}^{\KK}\hmatBk,
		\quad\textrm{and}\quad
		\hmatD :=  \frac{1}{\KK}\sum_{\kk=1}^{\KK}\hmatDk.
	\end{displaymath}
	We estimate the asymptotic variance covariance matrix $\sigma^2(\gamma)$    in Theorem~\ref{thm:asymptNormalgamma} by
	\begin{displaymath}
		\hsigma^2(\gamma) := 
		\big(\hmatA + (\gamma-1)\hmatB\big)^{-1}
		\hmatD
		\big(\hmatA^T + (\gamma-1)\hmatB^T\big)^{-1}.
	\end{displaymath}
Then we have $\hsigma^2(\gamma) = \sigma^2(\gamma)+O_{\PP}\big(\rhoNtilde + \NN^{-\frac{1}{2}}(1+\rhoN)\big)$, where $\rhoNtilde= \NN^{\max\big\{\frac{4}{p}-1, -\frac{1}{2}\big\}}+\rNpnumber$ is as in Definition~\ref{def:asymptNormal}.  
\end{theorem}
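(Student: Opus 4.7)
The plan is to mirror the structure of the proof of Theorem~\ref{thm:estSD}: reduce $\hsigma^2(\gamma) - \sigma^2(\gamma)$ to convergence of the individual building blocks $\hmatA \to \matA$, $\hmatB \to \matB$, and $\hmatD \to \matD$, and then invoke Slutsky's theorem and Weyl's inequality (using that $\matA+(\gamma-1)\matB$ has smallest singular value bounded away from $0$ by Assumption~\ref{assumpt:DMLboth4}) to invert and combine. Since $\KK$ is a fixed constant, it suffices to prove each convergence at the level of a single fold $\kk$ and average.

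First I would dispose of the easier terms. By Lemma~\ref{lem:D1convInProb}, $\hmatAk = \matA + O_{\PP}(\NN^{-1/2}(1+\rhoN))$ directly. For $\hmatBk$, the same lemma gives
$\frac{1}{\nn}\sum_{i\in\Ik}\lossone(S_i;\hetaIkc) = \EP[\lossone(S;\etazero)] + O_{\PP}(\NN^{-1/2}(1+\rhoN))$ and the analogous statement for $\losstwo$; combined with Weyl's inequality (applied to bound the inverse of $\frac{1}{\nn}\sum_{i\in\Ik}\losstwo(S_i;\hetaIkc)$, which is close to the invertible $\EP[\losstwo(S;\etazero)]$) and Slutsky's theorem, this yields $\hmatBk = \matB + O_{\PP}(\NN^{-1/2}(1+\rhoN))$. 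The same machinery shows that $\hmatCk = \matC + O_{\PP}(\NN^{-1/2}(1+\rhoN))$. For $\hmatEk$ I additionally need Lemma~\ref{lem:lossSquareRootBound} to control $\frac{1}{\nn}\sum_{i\in\Ik}\loss(S_i;\hbg,\hetaIkc) = \EP[\loss(S;\bg,\etazero)] + O_{\PP}(\NN^{-1/2}(1+\rhoN))$, again combined with the inversion bound on the $\losstwo$ average.

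The main obstacle is $\hmatDk$. I would expand $\hlossoverlinep(S_i)\hlossoverlinep(S_i)^T$ as a finite sum of products of the building blocks $\losstilde(\cdot;\hbg,\hetaIkc)$, $\loss(\cdot;\hbg,\hetaIkc)$, $\lossone(\cdot;\hetaIkc)$, $\losstwo(\cdot;\hetaIkc)$, multiplied on the outside by the deterministic-looking random matrices $\hmatCk,\hmatEk$. Averaging each such cross-product over $i\in\Ik$, Lemma~\ref{lem:helperVarianceConsistent} gives the corresponding $\EP[\cdot]$-limit up to an additive error of order $O_{\PP}(\rhoNtilde)$. The outer matrices $\hmatCk,\hmatEk$ are then pulled out using submultiplicativity of the operator norm, and replaced by their limits $\matC,\matE$ at cost $O_{\PP}(\NN^{-1/2}(1+\rhoN))$; the centered terms like $\lossone(\cdot;\hetaIkc) - \frac{1}{\nn}\sum\lossone(\cdot;\hetaIkc)$ inside $\hlossoverlinep$ are handled by rewriting them as $\lossone(\cdot;\etazero) - \EP[\lossone(S;\etazero)]$ plus remainders of the same order via Lemma~\ref{lem:boundRN} and Lemma~\ref{lem:D1convInProb}. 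Summing over the $O(1)$ many cross-product terms and averaging over $\kk$ gives $\hmatD = \matD + O_{\PP}(\rhoNtilde + \NN^{-1/2}(1+\rhoN))$.

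Finally, with all three rates in hand, I apply Weyl's inequality to invert $\hmatA + (\gamma-1)\hmatB$ (using Assumption~\ref{assumpt:DMLboth4} to guarantee its singular values are bounded away from $0$ for large $\NN$) and combine the three pieces by Slutsky's theorem to conclude $\hsigma^2(\gamma) = \sigma^2(\gamma) + O_{\PP}(\rhoNtilde + \NN^{-1/2}(1+\rhoN))$. The one bookkeeping subtlety I anticipate is verifying that the product expansion of $\hlossoverlinep \hlossoverlinep^T$ produces only cross-products to which Lemma~\ref{lem:helperVarianceConsistent} applies, and ensuring that the substitution of $\hmatCk,\hmatEk$ by $\matC,\matE$ contributes at the stated rate rather than a slower one; this is purely a submultiplicativity and triangle-inequality computation and should not require any new probabilistic input.
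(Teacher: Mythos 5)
Your proposal is correct and follows essentially the same route as the paper: convergence of $\hmatA$, $\hmatB$, $\hmatCk$, $\hmatEk$ via Lemma~\ref{lem:D1convInProb}, Lemma~\ref{lem:lossSquareRootBound}, Weyl's inequality and Slutsky's theorem, followed by an expansion of $\hmatDk-\matD$ into finitely many cross-product terms each controlled by Lemma~\ref{lem:helperVarianceConsistent} together with submultiplicativity to pull out and replace the outer matrices. The ``bookkeeping subtlety'' you anticipate is exactly what the paper's proof carries out explicitly as a sixteen-term triangle-inequality decomposition, and no new probabilistic input is needed beyond what you cite.
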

\begin{proof}[Proof of Theorem~\ref{thm:estSDgamma}]
This proof is based on~\citet{Chernozhukov2018}.
We already verified 
\begin{displaymath}
	\hmatA = \matA + O_{\PPN}\big(\NN^{-\frac{1}{2}}(1+\rhoN)\big)
	\quad\textrm{and}\quad
	\hmatB = \matB + O_{\PPN}\big(\NN^{-\frac{1}{2}}(1+\rhoN)\big)
\end{displaymath}
in the proof of Theorem~\ref{thm:asymptNormalgamma}
because $\KK$ is a fixed number independent of $\NN$. 
Thus, we have
\begin{displaymath}
	\big(\hmatA + (\gamma-1)\hmatB\big)^{-1} = \big(\matA + (\gamma-1)\matB\big)^{-1} + O_{\PPN}\big(\NN^{-\frac{1}{2}}(1+\rhoN)\big)
\end{displaymath}
by Weyl's inequality. 
Moreover, we have $\hmatCk=\matC+O_{\PP}\big(\NN^{-\frac{1}{2}}(1+\rhoN)\big)$ by Lemma~\ref{lem:D1convInProb}. 

Subsequently, we argue that  $\hmatEk=\matE+O_{\PP}\big(\NN^{-\frac{1}{2}}(1+\rhoN)\big)$ holds. 
By Lemma~\ref{lem:D1convInProb} and Weyl's inequality, we have
\begin{displaymath}
	\frac{1}{\nn}\sum_{i\in\Ik}\lossone(S_i;\hetaIkc)
	=\EP[\lossone(S;\etazero)] + O_{\PP}\big(\NN^{-\frac{1}{2}}(1+\rhoN)\big)
\end{displaymath}
and 
\begin{equation}\label{eq:hmatEk1}
		 \Big( \frac{1}{\nn}\sum_{i\in\Ik}\losstwo(S_i;\hetaIkc)\Big)^{-1}
		= \EP[\losstwo(S;\etazero)]^{-1}+ O_{\PP}\big(\NN^{-\frac{1}{2}}(1+\rhoN)\big). 
\end{equation}
Due to~\eqref{eq:hmatEk1}, it suffices to show 
\begin{equation}\label{eq:hmatEk2}
	\frac{1}{\nn}\sum_{i\in\Ik}\loss(S_i;\hbg,\hetaIkc)
	=\EP[\loss(S;\bg,\etazero)] + O_{\PP}\big(\NN^{-\frac{1}{2}}(1+\rhoN)\big)
\end{equation}
to infer $\hmatEk=\matE+O_{\PP}\big(\NN^{-\frac{1}{2}}(1+\rhoN)\big)$. But~\eqref{eq:hmatEk2} holds due to Lemma~\ref{lem:lossSquareRootBound}. 
To conclude the theorem, it remains verify $\hmatDk=\matD+O_{\PP}(\rhoNtilde)$. 
We have
{\allowdisplaybreaks 
\begin{align*}
		&\norm{\hmatDk-\matD}\\
		\le& \normbigg{\frac{1}{\nn}\sum_{i\in\Ik} \losstilde(S_i;\hbg,\hetaIkc) \losstilde^T(S_i;\hbg,\hetaIkc)-\EP\big[\losstilde(S;\bg,\etazero) \losstilde^T(S;\bg,\etazero)\big] }\\
		& +(\gamma-1) \normbigg{\frac{1}{\nn}\sum_{i\in\Ik}\losstilde(S_i;\hbg,\hetaIkc)\loss^T(S_i;\hbg,\hetaIkc)\matC^T
			- \EP\big[\losstilde(S;\bg,\etazero)\loss^T(S;\bg,\etazero)\big]\matC^T}\\
		& + (\gamma-1)\normbigg{\frac{1}{\nn}\sum_{i\in\Ik} \matC\loss(S_i;\hbg,\hetaIkc)\losstilde^T(S_i;\hbg,\hetaIkc) 
			- \matC\EP\big[\loss(S;\bg,\etazero)\losstilde^T(S;\bg,\etazero)\big]}\\
		& + (\gamma-1)^2\normbigg{\frac{1}{\nn}\sum_{i\in\Ik}\matC\loss(S_i;\hbg,\hetaIkc)\loss^T(S_i;\hbg,\hetaIkc)\matC^T
			- \matC\EP\big[\loss(S;\bg,\etazero)\loss^T(S;\bg,\etazero)\big]\matC^T}\\
		& +  (\gamma-1)\bigg\lVert\frac{1}{\nn}\sum_{i\in\Ik}\losstilde(S_i;\hbg,\hetaIkc)\matE^T\big(\lossone(S_i;\hetaIkc) - \EP[\lossone(S;\etazero)]\big)^T\\
			&\quad\quad
			- \EP\Big[\losstilde(S;\bg,\etazero)\matE^T\big(\lossone(S;\etazero) - \EP[\lossone(S;\etazero)]\big)^T\Big]\bigg\rVert\\
		& + (\gamma-1) \bigg\lVert\frac{1}{\nn}\sum_{i\in\Ik} \big(\lossone(S_i;\hetaIkc)-\EP[\lossone(S;\etazero)]\big)\matE\losstilde^T(S_i;\hbg,\hetaIkc)\\
			&\quad\quad
			- \EP\Big[\big(\lossone(S;\etazero)-\EP[\lossone(S;\etazero)]\big)\matE\losstilde^T(S;\bg,\etazero)\Big]\bigg\rVert\\
		& +  (\gamma-1)^2\bigg\lVert \frac{1}{\nn}\sum_{i\in\Ik}\big(\lossone(S_i;\hetaIkc)-\EP[\lossone(S;\etazero)]\big)\matE\matE^T\big(\lossone(S_i;\hetaIkc)-\EP[\lossone(S;\etazero)]\big)^T\\
			&	\quad\quad- \EP\Big[\big(\lossone(S;\etazero)-\EP[\lossone(S;\etazero)]\big)\matE\matE^T\big(\lossone(S;\etazero)-\EP[\lossone(S;\etazero)]\big)^T\Big]
			\bigg\rVert\\
		&+(\gamma-1) \bigg\lVert \frac{1}{\nn}\sum_{i\in\Ik}\matC\big(\losstwo(S_i;\hetaIkc)-\EP[\losstwo(S;\etazero)]\big)\matE\losstilde^T(S_i;\hbg,\hetaIkc) \\
			& 	\quad\quad - \matC\EP\Big[\big(\losstwo(S;\etazero)-\EP[\losstwo(S;\etazero)]\big)\matE\losstilde^T(S;\bg,\etazero) \Big] \bigg\rVert\\
		& +(\gamma-1) \bigg\lVert \frac{1}{\nn}\sum_{i\in\Ik} \losstilde(S_i;\hbg,\hetaIkc)\matE^T \big(\losstwo(S_i;\hetaIkc)-\EP[\losstwo(S;\etazero)]\big)^T\matC^T\\
			&\quad\quad - \EP\Big[\losstilde(S;\bg,\etazero)\matE^T \big(\losstwo(S;\etazero)-\EP[\losstwo(S;\etazero)]\big)^T\Big]\matC^T\bigg\lVert\\
		& + (\gamma-1)^2 \bigg\lVert  \frac{1}{\nn}\sum_{i\in\Ik} \matC\loss(S_i;\hbg,\hetaIkc)\matE^T\big(\lossone(S_i;\hetaIkc)-\EP[\lossone(S;\etazero)]\big)^T\\
			&\quad\quad - \matC\EP\Big[\loss(S;\bg,\etazero)\matE^T\big(\lossone(S;\etazero)-\EP[\lossone(S;\etazero)]\big)^T\Big] \bigg\rVert\\
		& + (\gamma-1)^2\bigg\lVert  \frac{1}{\nn}\sum_{i\in\Ik}\big(\lossone(S_i;\hetaIkc)-\EP[\lossone(S;\etazero)]\big)\matE \loss^T(S_i;\hbg,\hetaIkc)\matC^T\\
			&\quad\quad - \EP\big[\big(\lossone(S;\etazero)-\EP[\lossone(S;\etazero)]\big)\matE \loss^T(S;\bg,\etazero)\big]\matC^T\bigg\lVert\\
		& +  (\gamma-1)^2\bigg\lVert  \frac{1}{\nn}\sum_{i\in\Ik} \big(\lossone(S_i;\hetaIkc)-\EP[\lossone(S;\etazero)]\big)\matE\matE^T\big(\losstwo(S_i;\hetaIkc)-\EP[\losstwo(S;\etazero)]\big)^T \matC^T \\
			&\quad\quad -\EP\Big[\big(\lossone(S;\etazero)-\EP[\lossone(S;\etazero)]\big)\matE\matE^T\big(\losstwo(S;\etazero)-\EP[\losstwo(S;\etazero)]\big)^T \Big]\matC^T\bigg\rVert\\
		& +  (\gamma-1)^2\bigg\lVert  \frac{1}{\nn}\sum_{i\in\Ik} \matC\loss(S_i;\hbg,\hetaIkc)\matE^T \big(\losstwo(S_i;\hetaIkc)-\EP[\losstwo(S;\etazero)]\big)^T\matC^T \\
			&\quad\quad - \matC\EP\Big[\loss(S_i;\bg,\etazero)\matE^T\big(\losstwo(S_i;\etazero)-\EP[\losstwo(S;\etazero)]\big)^T\Big]\matC^T \bigg\rVert\\
		& +(\gamma-1)^2 \bigg\lVert  \frac{1}{\nn}\sum_{i\in\Ik} \matC\big(\losstwo(S_i;\hetaIkc)-\EP[\losstwo(S;\etazero)]\big)\matE\loss^T(S_i;\hbg,\hetaIkc)\matC^T \\
			&\quad\quad - \matC\EP\big[\big(\losstwo(S;\etazero)-\EP[\losstwo(S;\etazero)]\big)\matE\loss^T(S;\bg,\etazero)\big]\matC^T \bigg\rVert\\
		& + (\gamma-1)^2 \bigg\lVert  \frac{1}{\nn}\sum_{i\in\Ik} \matC\big(\losstwo(S_i;\hetaIkc)-\EP[\losstwo(S;\etazero)]\big)\matE\matE^T\big(\lossone(S_i;\hetaIkc)-\EP[\lossone(S;\etazero)]\big)^T \\
			&\quad\quad - \matC\EP\Big[\big(\losstwo(S;\etazero)-\EP[\losstwo(S;\etazero)]\big)\matE\matE^T\big(\lossone(S;\etazero)-\EP[\lossone(S;\etazero)]\big)^T\Big] \bigg\rVert\\
		& +  (\gamma-1)^2\bigg\lVert  \frac{1}{\nn}\sum_{i\in\Ik}  \matC\big(\losstwo(S_i;\hetaIkc)-\EP[\losstwo(S;\etazero)]\big)\matE\matE^T\big(\losstwo(S_i;\hetaIkc)-\EP[\losstwo(S;\etazero)]\big)^T\matC^T\\
			&\quad\quad -  \matC\EP\Big[\big(\losstwo(S;\etazero)-\EP[\losstwo(S;\etazero)]\big)\matE\matE^T\big(\losstwo(S;\etazero)-\EP[\losstwo(S;\etazero)]\big)^T\Big]\matC^T \bigg\rVert \\
			& + O_{\PP}\big(\NN^{-\frac{1}{2}}(1+\rhoN)\big)\\
		=:& \sum_{i=1}^{16}\mathcal{I}_i+ O_{\PP}\big(\NN^{-\frac{1}{2}}(1+\rhoN)\big)
\end{align*}}

by the triangle inequality and the results derived so far. Subsequently, we bound the terms $\Ione,\ldots,\Isixteen$ individually. Because all these terms consist of norms of matrices of fixed size, it suffices to bound the individual matrix entries. 
Let $j, l, t, r$ be natural numbers not exceeding the dimensions of the respective object they index. 
By  Lemma~\ref{lem:helperVarianceConsistent}, we have
\begin{displaymath}
	\normonebigg{\frac{1}{\nn}\sum_{i\in\Ik} \losstilde_j(S_i;\hbg,\hetaIkc)\losstilde_l(S_i;\hbg,\hetaIkc) - \EP\big[\losstilde_j(S;\bg,\etazero)\losstilde_l(S;\bg,\etazero)\big]}
		=O_{\PP}(\rhoNtilde), 
\end{displaymath}
which implies $\Ione=O_{\PP}(\rhoNtilde)$. 
By  Lemma~\ref{lem:helperVarianceConsistent}, we have
\begin{displaymath}
	\normonebigg{\frac{1}{\nn}\sum_{i\in\Ik}\losstilde_j(S_i;\hbg,\hetaIkc)\loss_l(S_i;\hbg,\hetaIkc) - \EP\big[\losstilde_j(S;\bg\etazero)\loss_l(S;\betazero,\etazero)\big]}
		=O_{\PP}(\rhoNtilde),
\end{displaymath}
which implies $\Itwo=O_{\PP}(\rhoNtilde)=\Ithree$
due to
\begin{displaymath}
	\begin{array}{cl}
		&\normbig{\losstilde(S_i;\hbg,\hetaIkc)\loss^T(S_i;\hbg,\hetaIkc)\matC^T
			- \EP\big[\losstilde(S;\bg,\etazero)\loss^T(S;\bg,\etazero)\big]\matC^T}\\
		\le& \normbig{\frac{1}{\nn}\sum_{i\in\Ik}\losstilde(S_i;\hbg,\hetaIkc)\loss^T(S_i;\hbg,\hetaIkc)
			- \EP\big[\losstilde(S;\bg,\etazero)\loss^T(S;\bg,\etazero)\big]}\norm{\matC}
	\end{array}
\end{displaymath}
and 
\begin{displaymath}
	\begin{array}{cl}
		&\normbig{\frac{1}{\nn}\sum_{i\in\Ik} \matC\loss(S_i;\hbg,\hetaIkc)\losstilde^T(S_i;\hbg,\hetaIkc) 
			- \matC\EP\big[\loss(S;\bg,\etazero)\losstilde^T(S;\bg,\etazero)\big]}\\
		\le&\norm{\matC}\normBig{\frac{1}{\nn}\sum_{i\in\Ik} \loss(S_i;\hbg,\hetaIkc)\losstilde^T(S_i;\hbg,\hetaIkc) 
			- \EP\big[\loss(S;\bg,\etazero)\losstilde^T(S;\bg,\etazero)\big]}.
	\end{array}
\end{displaymath}
By  Lemma~\ref{lem:helperVarianceConsistent}, we have
\begin{displaymath}
	\normonebigg{\frac{1}{\nn}\sum_{i\in\Ik}\loss_j(S_i;\hbg,\hetaIkc)\loss_l(S_i;\hbg,\hetaIkc) - \EP[\loss_j(S;\betazero,\etazero)\loss_l(S;\betazero,\etazero)]}=O_{\PP}(\rhoNtilde),
\end{displaymath}
which implies $\Ifour=O_{\PP}(\rhoNtilde)$ due to
\begin{displaymath}
	\begin{array}{cl}
		&\normbig{\frac{1}{\nn}\sum_{i\in\Ik}\matC\loss(S_i;\hbg,\hetaIkc)\loss^T(S_i;\hbg,\hetaIkc)\matC^T
			- \matC\EP\big[\loss(S;\bg,\etazero)\loss^T(S;\bg,\etazero)\big]\matC^T}\\
		\le&\norm{\matC}^2\normbig{\frac{1}{\nn}\sum_{i\in\Ik}\loss(S_i;\hbg,\hetaIkc)\loss^T(S_i;\hbg,\hetaIkc)
			- \EP\big[\loss(S;\bg,\etazero)\loss^T(S;\bg,\etazero)\big]}.
	\end{array}
\end{displaymath}
By  Lemma~\ref{lem:helperVarianceConsistent}, we have
\begin{displaymath}
	\normonebigg{\frac{1}{\nn}\sum_{i\in\Ik}\losstilde_j(S_i;\hbg,\hetaIkc)\big(\lossone(S_i;\hetaIkc)\big)_{l,t} - \EP\Big[\losstilde_j(S;\bg,\etazero)\big(\lossone(S;\etazero)\big)_{l,t}\Big]}=O_{\PP}(\rhoNtilde),
\end{displaymath}
which implies $\Ifive=O_{\PP}(\rhoNtilde)$ because we have
\begin{displaymath}
	\begin{array}{cl}
		&\Big\lVert\frac{1}{\nn}\sum_{i\in\Ik}\losstilde(S_i;\hbg,\hetaIkc)\matE^T\big(\lossone(S_i;\hetaIkc) - \EP[\lossone(S;\etazero)]\big)^T\\
			&\quad\quad
			- \EP\Big[\losstilde(S;\bg,\etazero)\matE^T\big(\lossone(S;\etazero) - \EP[\lossone(S;\etazero)]\big)^T\Big]\Big\rVert\\
		\le& \normbig{\frac{1}{\nn}\sum_{i\in\Ik}\losstilde(S_i;\hbg,\hetaIkc)\matE^T\lossone^T(S_i;\hetaIkc)
				- \EP\big[\losstilde(S;\bg,\etazero)\matE^T\lossone^T(S;\etazero)\big]}\\
				&\quad + \normbig{\frac{1}{\nn}\sum_{i\in\Ik}\losstilde(S_i;\hbg,\hetaIkc)
				- \EP\big[\losstilde(S;\bg,\etazero)\big]}\norm{\matE}\norm{\EP[\lossone(S;\etazero)]}, 
	\end{array}
\end{displaymath}
where the last summand is $O_{P}\big(\NN^{-\frac{1}{2}}(1+\rhoN)\big)$ by Lemma~\ref{lem:lossSquareRootBound}, 
and we have
\begin{displaymath}
	\begin{array}{cl}
		&\normonebig{\frac{1}{\nn}\sum_{i\in\Ik}\big(\losstilde(S_i;\hbg,\hetaIkc)\matE^T\lossone^T(S_i;\hetaIkc)\big)_{j,l}
				- \big(\EP[\losstilde(S;\bg,\etazero)\matE^T\lossone^T(S;\etazero)]\big)_{j,l}}\\
		=&\normonebig{\frac{1}{\nn}\sum_{i\in\Ik}\matE^T \big(\lossone(S_i;\hetaIkc)\big)_{\cdot,l} \losstilde_j(S_i;\hbg,\hetaIkc) - \matE^T\EP\big[\big(\lossone(S;\etazero)\big)_{\cdot,l}\losstilde_j(S;\bg,\etazero)\big]}\\
		\le& \norm{\matE}\normbig{\frac{1}{\nn}\sum_{i\in\Ik} \big(\lossone(S_i;\hetaIkc)\big)_{\cdot,l} \losstilde_j(S_i;\hbg,\hetaIkc) - \EP\big[\big(\lossone(S;\etazero)\big)_{\cdot,l}\losstilde_j(S;\bg,\etazero)\big]}.
	\end{array}
\end{displaymath}
The term $\Isix$ can be bounded analogously to $\Ifive$. 
By  Lemma~\ref{lem:helperVarianceConsistent}, we have
\begin{displaymath}
	\normonebigg{\frac{1}{\nn}\sum_{i\in\Ik} \big(\lossone(S_i;\hetaIkc)\big)_{j,l} \big(\lossone(S_i;\hetaIkc)\big)_{t,r} - \EP\Big[\big(\lossone(S;\etazero)\big)_{j,l} \big(\lossone(S;\etazero)\big)_{l,t}\Big]}=O_{\PP}(\rhoNtilde),
\end{displaymath}
which implies $\Iseven=O_{\PP}(\rhoNtilde)$. Indeed, we have
\begin{displaymath}
	\begin{array}{cl}
		&\Big\lVert \frac{1}{\nn}\sum_{i\in\Ik}\big(\lossone(S_i;\hetaIkc)-\EP[\lossone(S;\etazero)]\big)\matE\matE^T\big(\lossone(S_i;\hetaIkc)-\EP[\lossone(S;\etazero)]\big)^T\\
			&	\quad\quad- \EP\Big[\big(\lossone(S;\etazero)-\EP[\lossone(S;\etazero)]\big)\matE\matE^T\big(\lossone(S;\etazero)-\EP[\lossone(S;\etazero)]\big)^T\Big]
			\Big\rVert\\
		\le& \normbig{ \frac{1}{\nn}\sum_{i\in\Ik}\lossone(S_i;\hetaIkc)\matE\matE^T\lossone^T(S_i;\hetaIkc) - \EP\big[\lossone(S;\etazero)\matE\matE^T\lossone^T(S;\etazero)\big]}\\
		&\quad + 2\normbig{ \frac{1}{\nn}\sum_{i\in\Ik}\lossone(S_i;\hetaIkc) - \EP[\lossone(S;\etazero)]}\norm{\matE}^2\norm{\EP[\lossone(S;\etazero)]}\\
		=& \normbig{ \frac{1}{\nn}\sum_{i\in\Ik}\lossone(S_i;\hetaIkc)\matE\matE^T\lossone^T(S_i;\hetaIkc) - \EP\big[\lossone(S;\etazero)\matE\matE^T\lossone^T(S;\etazero)\big]} \\
		&\quad+ O_{\PP}\big(\NN^{-\frac{1}{2}}(1+\rhoN)\big)
	\end{array}
\end{displaymath}
by Lemma~\ref{lem:D1convInProb}, and we have
\begin{displaymath}
	\begin{array}{cl}
		& \normoneBig{ \frac{1}{\nn}\sum_{i\in\Ik}\big(\lossone(S_i;\hetaIkc)\matE\matE^T\lossone^T(S_i;\hetaIkc)\big)_{j,r} - \big(\EP\big[\lossone(S;\etazero)\matE\matE^T\lossone^T(S;\etazero)\big]\big)_{j,r}}\\
		=&  \normoneBig{ \frac{1}{\nn}\sum_{i\in\Ik}\big(\lossone(S_i;\hetaIkc)\big)_{j,\cdot}\matE\matE^T(\lossone^T(S_i;\hetaIkc))_{\cdot,r} - \EP\Big[\big(\lossone(S;\etazero)\big)_{j,\cdot}\matE\matE^T\big(\lossone^T(S;\etazero)\big)_{\cdot,r}\Big]}\\
		=&  \normonebig{ \frac{1}{\nn}\sum_{i\in\Ik}\matE^T\big(\lossone^T(S_i;\hetaIkc)\big)_{\cdot,r}\big(\lossone(S_i;\hetaIkc)\big)_{j,\cdot}\matE - \EP\big[\matE^T\big(\lossone^T(S;\etazero)\big)_{\cdot,r}\big(\lossone(S;\etazero)\big)_{j,\cdot}\matE\big]}\\
		\le& \normBig{ \frac{1}{\nn}\sum_{i\in\Ik}\big(\lossone^T(S_i;\hetaIkc)\big)_{\cdot,r}\big(\lossone(S_i;\hetaIkc)\big)_{j,\cdot} - \EP\Big[\big(\lossone^T(S;\etazero)\big)_{\cdot,r}\big(\lossone(S;\etazero)\big)_{j,\cdot}\Big]}\norm{\matE}^2. 
	\end{array}
\end{displaymath}

Next, we bound $\Ieight$. 
By  Lemma~\ref{lem:helperVarianceConsistent}, we have
\begin{displaymath}
	\normonebigg{\frac{1}{\nn}\sum_{i\in\Ik}\losstilde_j(S_i;\hbg,\hetaIkc)\big(\losstwo(S_i;\hetaIkc)\big)_{l,t}
	- \EP\Big[\losstilde_j(S_i;\bg,\etazero)\big(\losstwo(S;\etazero)\big)_{l,t}\Big]}=O_{\PP}(\rhoNtilde),
\end{displaymath}
which implies $\Ieight=O_{\PPN}(\rhoNtilde)$. Indeed, we have
\begin{displaymath}
	\begin{array}{cl}
		&\big\lVert \frac{1}{\nn}\sum_{i\in\Ik}\matC\big(\losstwo(S_i;\hetaIkc)-\EP[\losstwo(S;\etazero)]\big)\matE\losstilde^T(S_i;\hbg,\hetaIkc) \\
			& 	\quad\quad - \matC\EP\big[\big(\losstwo(S;\etazero)-\EP[\losstwo(S;\etazero)]\big)\matE\losstilde^T(S;\bg,\etazero) \big] \big\rVert\\
		\le& \normBig{ \frac{1}{\nn}\sum_{i\in\Ik} \matC\losstwo(S_i;\hetaIkc)\matE\losstilde^T(S_i;\hbg,\hetaIkc) - \matC\EP\big[\losstwo(S;\etazero)\matE\losstilde^T(S;\bg,\etazero)\big]}\\
		&\quad + \normbig{ \frac{1}{\nn}\sum_{i\in\Ik}\matC\EP[\losstwo(S;\etazero)]\matE\losstilde^T(S_i;\hbg,\hetaIkc) - \matC\EP[\losstwo(S;\etazero)]\matE\EP\big[\losstilde^T(S;\bg,\etazero)\big]}\\
		\le& \norm{\matC}\normbig{ \frac{1}{\nn}\sum_{i\in\Ik} \losstwo(S_i;\hetaIkc)\matE\losstilde^T(S_i;\hbg,\hetaIkc) - \EP\big[\losstwo(S;\etazero)\matE\losstilde^T(S;\bg,\etazero)\big]}\\
		&\quad +  \norm{\matC}\norm{\EPN[\losstwo(S;\etazero)]}\norm{\matE}\normbig{ \frac{1}{\nn}\sum_{i\in\Ik}\losstilde^T(S_i;\hbg,\hetaIkc) - \EPN\big[\losstilde^T(S;\bg,\etazero)\big]}\\
		\le& \norm{\matC}\normbig{ \frac{1}{\nn}\sum_{i\in\Ik} \losstwo(S_i;\hetaIkc)\matE\losstilde^T(S_i;\hbg,\hetaIkc) - \EP\big[\losstwo(S;\etazero)\matE\losstilde^T(S;\bg,\etazero)\big]} \\
		&\quad + O_{\PP}\big(\NN^{-\frac{1}{2}}(1+\rhoN)\big)
	\end{array}
\end{displaymath}
by Lemma~\ref{lem:lossSquareRootBound}, and we have
\begin{displaymath}
	\begin{array}{cl}
		&\normonebig{ \frac{1}{\nn}\sum_{i\in\Ik} \big(\losstwo(S_i;\hetaIkc)\matE\losstilde^T(S_i;\hbg,\hetaIkc)\big)_{j,t} - \big(\EP\big[\losstwo(S;\etazero)\matE\losstilde^T(S;\bg,\etazero)\big]\big)_{j,t}}\\
		=& \normonebig{ \frac{1}{\nn}\sum_{i\in\Ik} \big(\losstwo(S_i;\hetaIkc)\big)_{j,\cdot}\matE\losstilde_t(S_i;\hbg,\hetaIkc) - \EP\big[\big(\losstwo(S;\etazero)\big)_{j,\cdot}\matE\losstilde_t(S;\bg,\etazero)\big]}\\
		=&  \normonebig{ \frac{1}{\nn}\sum_{i\in\Ik} \losstilde_t(S_i;\hbg,\hetaIkc)\big(\losstwo(S_i;\hetaIkc)\big)_{j,\cdot}\matE - \EP\big[\losstilde_t(S;\bg,\etazero)\big(\losstwo(S;\etazero)\big)_{j,\cdot}\matE\big]}\\
		\le& \normBig{ \frac{1}{\nn}\sum_{i\in\Ik} \losstilde_t(S_i;\hbg,\hetaIkc)\big(\losstwo(S_i;\hetaIkc)\big)_{j,\cdot} - \EP\Big[\losstilde_t(S;\bg,\etazero)\big(\losstwo(S;\etazero)\big)_{j,\cdot}\Big]}\norm{\matE}. 
	\end{array}
\end{displaymath}
The term $\Inine$ can be bounded analogously to $\Ieight$.
Next, we bound $\Iten$. 
By  Lemma~\ref{lem:helperVarianceConsistent}, we have
\begin{displaymath}
	\normonebigg{\frac{1}{\nn}\sum_{i\in\Ik}\loss_j(S_i;\hbg,\hetaIkc)\big(\lossone(S_i;\hetaIkc) \big)_{l,t} - \EP\Big[\loss_j(S;\bg,\etazero)\big(\lossone(S;\etazero)\big)_{l,t}\Big]}=O_{\PP}(\rhoNtilde),
\end{displaymath}
which implies $\Iten=O_{\PPN}(\rhoNtilde)$. Indeed, we have
\begin{displaymath}
	\begin{array}{cl}
		&\Big\lVert  \frac{1}{\nn}\sum_{i\in\Ik} \matC\loss(S_i;\hbg,\hetaIkc)\matE^T\big(\lossone(S_i;\hetaIkc)-\EP[\lossone(S;\etazero)]\big)^T\\
			&\quad\quad - \matC\EP\Big[\loss(S;\bg,\etazero)\matE^T\big(\lossone(S;\etazero)-\EP[\lossone(S;\etazero)]\big)^T\Big] \Big\rVert\\
		\le& \big\lVert  \frac{1}{\nn}\sum_{i\in\Ik} \matC\loss(S_i;\hbg,\hetaIkc)\matE^T\lossone^T(S_i;\hetaIkc) - \matC\EP\big[\loss(S;\bg,\etazero)\matE^T\lossone^T(S;\etazero)\big] \big\rVert\\
		&\quad
		 + \big\lVert  \frac{1}{\nn}\sum_{i\in\Ik} \matC\loss(S_i;\hbg,\hetaIkc)\matE^T\EPN[\lossone^T(S;\etazero)] - \matC\EP\big[\loss(S;\bg,\etazero)\matE^T\EP[\lossone^T(S;\etazero)\big] \big\rVert\\
		 \le& \norm{\matC}\big\lVert  \frac{1}{\nn}\sum_{i\in\Ik} \loss(S_i;\hbg,\hetaIkc)\matE^T\lossone^T(S_i;\hetaIkc) - \EP\big[\loss(S;\bg,\etazero)\matE^T\lossone^T(S;\etazero)\big] \big\rVert\\
		&\quad
		 + \norm{\matC}\big\lVert  \frac{1}{\nn}\sum_{i\in\Ik} \loss(S_i;\hbg,\hetaIkc) - \EP[\loss(S;\bg,\etazero) \big\rVert\norm{\matE}\norm{\EPN[\lossone(S;\etazero)]}\\
		 \le& \norm{\matC}\big\lVert  \frac{1}{\nn}\sum_{i\in\Ik} \loss(S_i;\hbg,\hetaIkc)\matE^T\lossone^T(S_i;\hetaIkc) - \EP\big[\loss(S;\bg,\etazero)\matE^T\lossone^T(S;\etazero)\big] \big\rVert \\
		 &\quad+ O_{\PP}\big(\NN^{-\frac{1}{2}}(1+\rhoN)\big)
	\end{array}
\end{displaymath}
by Lemma~\ref{lem:lossSquareRootBound}, and we have
\begin{displaymath}
	\begin{array}{cl}
		& \normonebig{  \frac{1}{\nn}\sum_{i\in\Ik} \big(\loss(S_i;\hbg,\hetaIkc)\matE^T\lossone^T(S_i;\hetaIkc)\big)_{j,t} - \big(\EP\big[\loss(S;\bg,\etazero)\matE^T\lossone^T(S;\etazero)\big]\big)_{j,t} }\\
		=& \normoneBig{ \frac{1}{\nn}\sum_{i\in\Ik} \loss_j(S_i;\hbg,\hetaIkc)\matE^T\big(\lossone^T(S_i;\hetaIkc)\big)_{\cdot,t} - \EP\Big[\loss_j(S;\bg,\etazero)\matE^T\big(\lossone^T(S;\etazero)\big)_{\cdot,t}\Big]}\\
		=& \normonebig{  \frac{1}{\nn}\sum_{i\in\Ik}\matE^T\big(\lossone^T(S_i;\hetaIkc)\big)_{\cdot,t} \loss_j(S_i;\hbg,\hetaIkc) - \EP\big[\matE^T\big(\lossone^T(S;\etazero)\big)_{\cdot,t}\loss_j(S;\bg,\etazero)\big] }\\
		\le& \norm{\matE}\lVert  \frac{1}{\nn}\sum_{i\in\Ik}\big(\lossone^T(S_i;\hetaIkc)\big)_{\cdot,t} \loss_j(S_i;\hbg,\hetaIkc) - \EP\big[\big(\lossone^T(S;\etazero)\big)_{\cdot,t}\loss_j(S;\bg,\etazero)\big] \big\rVert.
	\end{array}
\end{displaymath}

The term $\Ieleven$ can be bounded analogously to $\Iten$. 
Next, we bound $\Itwelve$. 
By  Lemma~\ref{lem:helperVarianceConsistent}, we have
\begin{displaymath}
	\normonebigg{\frac{1}{\nn}\sum_{i\in\Ik}\big(\lossone(S_i;\hetaIkc))_{j,l} (\losstwo(S_i;\hetaIkc)\big)_{t,r} - \EP\Big[\big(\lossone(S;\etazero)\big)_{j,l} \big(\losstwo(S;\etazero)\big)_{t,r}\Big]}=O_{\PP}(\rhoNtilde),
\end{displaymath}
which implies $\Itwelve=O_{\PPN}(\rhoNtilde)$. Indeed, we have
\begin{displaymath}
	\begin{array}{cl}
		&\big\lVert  \frac{1}{\nn}\sum_{i\in\Ik} \big(\lossone(S_i;\hetaIkc)-\EP[\lossone(S;\etazero)]\big)\matE\matE^T\big(\losstwo(S_i;\hetaIkc)-\EP[\losstwo(S;\etazero)]\big)\matC^T \\
			&\quad\quad -\EP\big[\big(\lossone(S;\etazero)-\EP[\lossone(S;\etazero)]\big)\matE\matE^T\big(\losstwo(S;\etazero)-\EP[\losstwo(S;\etazero)]\big)\big]\matC^T\big\rVert\\
		\le& \big\lVert  \frac{1}{\nn}\sum_{i\in\Ik} \lossone(S_i;\hetaIkc)\matE\matE^T\losstwo^T(S_i;\hetaIkc)\matC^T  -\EP\big[\lossone(S;\etazero)\matE\matE^T\losstwo^T(S;\etazero)\big]\matC^T\big\rVert\\
			&\quad + \big\lVert  \frac{1}{\nn}\sum_{i\in\Ik} \lossone(S_i;\hetaIkc)\matE\matE^T\EP[\losstwo^T(S;\etazero)]\matC^T -\EP\big[\lossone(S;\etazero)\matE\matE^T\EP[\losstwo^T(S;\etazero)]\big]\matC^T\big\rVert\\
			&\quad + \big\lVert  \frac{1}{\nn}\sum_{i\in\Ik} \EP[\lossone(S;\etazero)]\matE\matE^T\losstwo^T(S_i;\hetaIkc)\matC^T  -\EP\big[\EP[\lossone(S;\etazero)]\matE\matE^T\losstwo^T(S;\etazero)\big]\matC^T\big\rVert\\
	\le& \big\lVert  \frac{1}{\nn}\sum_{i\in\Ik} \lossone(S_i;\hetaIkc)\matE\matE^T\losstwo^T(S_i;\hetaIkc)  -\EP\big[\lossone(S;\etazero)\matE\matE^T\losstwo^T(S;\etazero)\big]\big\rVert\norm{\matC}\\
			&\quad + \big\lVert  \frac{1}{\nn}\sum_{i\in\Ik} \lossone(S_i;\hetaIkc) -\EP[\lossone(S;\etazero)]\big\rVert \norm{\matE}^2\norm{\EP[\losstwo(S;\etazero)]}\norm{\matC} \\
			&\quad + \norm{\EP[\lossone(S;\etazero)]}\norm{\matE}^2\norm{\matC}\big\lVert  \frac{1}{\nn}\sum_{i\in\Ik} \losstwo(S_i;\hetaIkc)  -\EP[\losstwo(S;\etazero)]\big\rVert\\
		\le& \big\lVert  \frac{1}{\nn}\sum_{i\in\Ik} \lossone(S_i;\hetaIkc)\matE\matE^T\losstwo^T(S_i;\hetaIkc)  -\EP[\lossone(S;\etazero)\matE\matE^T\losstwo^T(S;\etazero)]\big\rVert\norm{\matC} \\
		&\quad + O_{\PP}\big(\NN^{-\frac{1}{2}}(1+\rhoN)\big)
	\end{array}
\end{displaymath}
by Lemma~\ref{lem:D1convInProb}, and we have
\begin{displaymath}
	\begin{array}{cl}
		&\normonebig{  \frac{1}{\nn}\sum_{i\in\Ik} \big(\lossone(S_i;\hetaIkc)\matE\matE^T\losstwo^T(S_i;\hetaIkc)\big)_{j,r}  -\big(\EP\big[\lossone(S;\etazero)\matE\matE^T\losstwo^T(S;\etazero)\big]\big)_{j,r}}\\
		=& \normonebig{ \frac{1}{\nn}\sum_{i\in\Ik} \big(\lossone(S_i;\hetaIkc)\big)_{j,\cdot}\matE\matE^T\big(\losstwo^T(S_i;\hetaIkc)\big)_{\cdot,r}  -\EP\Big[\big(\lossone(S;\etazero)\big)_{j,\cdot}\matE\matE^T\big(\losstwo^T(S;\etazero)\big)_{\cdot,r}\Big]}\\
		=& \normonebig{  \frac{1}{\nn}\sum_{i\in\Ik} \matE^T\big(\losstwo^T(S_i;\hetaIkc)\big)_{\cdot,r}\big(\lossone(S_i;\hetaIkc)\big)_{j,\cdot}\matE  -\EP\big[\matE^T\big(\losstwo^T(S;\etazero)\big)_{\cdot,r}\big(\lossone(S;\etazero)\big)_{j,\cdot}\matE\big]}\\
		\le& \norm{\matE}^2\Big\lVert  \frac{1}{\nn}\sum_{i\in\Ik} \big(\losstwo^T(S_i;\hetaIkc)\big)_{\cdot,r}\big(\lossone(S_i;\hetaIkc)\big)_{j,\cdot}  -\EP\Big[\big(\losstwo^T(S;\etazero)\big)_{\cdot,r}\big(\lossone(S;\etazero)\big)_{j,\cdot}\Big]\Big\rVert. 
	\end{array}
\end{displaymath}

Next, we bound $\Ithirteen$. 
By  Lemma~\ref{lem:helperVarianceConsistent}, we have
\begin{displaymath}
	\normonebigg{\frac{1}{\nn}\sum_{i\in\Ik}\loss_j(S_i;\hbg, \hetaIkc) \big(\losstwo(S_i;\hetaIkc)\big)_{t,r} - \EP\Big[\loss_j(S;\bg, \etazero)\big(\losstwo(S;\etazero)\big)_{t,r}\Big]}=O_{\PP}(\rhoNtilde),
\end{displaymath}
which implies $\Ithirteen=O_{\PP}(\rhoNtilde)$. Indeed, we have
\begin{displaymath}
	\begin{array}{cl}
		&\big\lVert  \frac{1}{\nn}\sum_{i\in\Ik} \matC\loss(S_i;\hbg,\hetaIkc)\matE^T\big(\losstwo(S_i;\hetaIkc)-\EP[\losstwo(S;\etazero)]\big)^T\matC^T \\
			&\quad\quad - \matC\EP\Big[\loss(S;\bg,\etazero)\matE^T\big(\losstwo(S;\etazero)-\EP[\losstwo(S;\etazero)]\big)^T\Big]\matC^T \big\rVert\\
		\le& \norm{\matC}^2\big\lVert  \frac{1}{\nn}\sum_{i\in\Ik} \loss(S_i;\hbg,\hetaIkc)\matE^T\losstwo^T(S_i;\hetaIkc) - \EP\big[\loss(S;\bg,\etazero)\matE^T\losstwo^T(S;\etazero)\big] \big\rVert\\
		&\quad + \norm{\matC}^2\norm{\matE}\norm{\EP[\losstwo(S;\etazero)]}\big\lVert  \frac{1}{\nn}\sum_{i\in\Ik} \loss(S_i;\hbg,\hetaIkc)- \EP[\loss(S;\bg,\etazero)]\big\rVert\\
		=& \norm{\matC}^2\big\lVert  \frac{1}{\nn}\sum_{i\in\Ik} \loss(S_i;\hbg,\hetaIkc)\matE^T\losstwo^T(S_i;\hetaIkc) - \EP\big[\loss(S;\bg,\etazero)\matE^T\losstwo^T(S;\etazero)\big] \big\rVert\\
		&\quad
		+ O_{\PP}\big(\NN^{-\frac{1}{2}}(1+\rhoN)\big)
	\end{array}
\end{displaymath}
by Lemma~\ref{lem:lossSquareRootBound}, and we have
\begin{displaymath}
	\begin{array}{cl}
		&\normoneBig{ \frac{1}{\nn}\sum_{i\in\Ik} \big(\loss(S_i;\hbg,\hetaIkc)\matE^T\losstwo^T(S_i;\hetaIkc)\big)_{j,r} - \EP\Big[\big(\loss(S;\bg,\etazero)\matE^T\losstwo^T(S;\etazero)\big)_{j,r} \Big]}\\
		=&\normoneBig{ \frac{1}{\nn}\sum_{i\in\Ik} \loss_j(S_i;\hbg,\hetaIkc)\matE^T\big(\losstwo^T(S_i;\hetaIkc)\big)_{\cdot,r} - \EP\Big[\loss_j(S;\bg,\etazero)\matE^T\big(\losstwo^T(S;\etazero)\big)_{\cdot,r}\Big]}\\
		=&  \normonebig{\frac{1}{\nn}\sum_{i\in\Ik} \matE^T(\losstwo^T(S_i;\hetaIkc))_{\cdot,r}\loss_j(S_i;\hbg,\hetaIkc) - \EP\big[\matE^T(\losstwo^T\big(S;\etazero)\big)_{\cdot,r}\loss_j(S;\bg,\etazero)\big] }\\
		\le & \norm{\matE}\big\lVert  \frac{1}{\nn}\sum_{i\in\Ik} \big(\losstwo^T(S_i;\hetaIkc)\big)_{\cdot,r}\loss_j(S_i;\hbg,\hetaIkc) - \EP\big[\big(\losstwo^T(S;\etazero)\big)_{\cdot,r}\loss_j(S;\bg,\etazero)\big] \big\rVert.
	\end{array}
\end{displaymath}

The term $\Ifourteen$ can be bounded analogously to $\Ithirteen$. 
The term $\Ififteen$ can be bounded analogously to $\Itwelve$. 
Last, we bound the term $\Isixteen$. 
By  Lemma~\ref{lem:helperVarianceConsistent}, we have
\begin{displaymath}
	\normonebigg{\frac{1}{\nn}\sum_{i\in\Ik} \big(\losstwo^T(S_i;\hetaIkc)\big)_{t,r}\big(\losstwo(S_i;\hetaIkc)\big)_{j,l} - \EP\Big[\big(\losstwo^T(S;\etazero)\big)_{t,r}\big(\losstwo(S;\etazero)\big)_{j,l}\Big]}=O_{\PP}(\rhoNtilde),
\end{displaymath}
which implies $\Isixteen=O_{\PP}(\rhoNtilde)$. Indeed, we have
\begin{displaymath}
	\begin{array}{cl}
		&\big\lVert  \frac{1}{\nn}\sum_{i\in\Ik}  \matC\big(\losstwo(S_i;\hetaIkc)-\EP[\losstwo(S;\etazero)]\big)\matE\matE^T\big(\losstwo(S_i;\hetaIkc)-\EP[\losstwo(S;\etazero)]\big)^T\matC^T\\
			&\quad\quad -  \matC\EP\Big[\big(\losstwo(S;\etazero)-\EPN[\losstwo(S;\etazero)]\big)\matE\matE^T\big(\losstwo(S;\etazero)-\EP[\losstwo(S;\etazero)]\big)^T\Big]\matC^T \big\rVert \\
		\le& \norm{\matC}^2\big\lVert  \frac{1}{\nn}\sum_{i\in\Ik}  \losstwo(S_i;\hetaIkc)\matE\matE^T\losstwo^T(S_i;\hetaIkc) - \EP\big[\losstwo(S;\etazero)\matE\matE^T\losstwo^T(S;\etazero)\big] \big\rVert \\
		&\quad + 2\norm{\matC}^2\big\lVert  \frac{1}{\nn}\sum_{i\in\Ik}  \losstwo(S_i;\hetaIkc)\matE\matE^T\EPN\big[\losstwo^T(S;\etazero)\big] - \EP\big[\losstwo(S;\etazero)\matE\matE^T\EP[\losstwo^T(S;\etazero)]\big] \big\rVert \\
		\le& \norm{\matC}^2\big\lVert  \frac{1}{\nn}\sum_{i\in\Ik}  \losstwo(S_i;\hetaIkc)\matE\matE^T\losstwo^T(S_i;\hetaIkc) - \EP\big[\losstwo(S;\etazero)\matE\matE^T\losstwo^T(S;\etazero)\big] \big\rVert \\
		&\quad + 2\norm{\matC}^2\norm{\matE}^2\norm{\EP[\losstwo(S;\etazero)]}\big\lVert  \frac{1}{\nn}\sum_{i\in\Ik}  \losstwo(S_i;\hetaIkc)- \EP[\losstwo(S;\etazero)] \big\rVert \\
		=& \norm{\matC}^2\big\lVert  \frac{1}{\nn}\sum_{i\in\Ik}  \losstwo(S_i;\hetaIkc)\matE\matE^T\losstwo^T(S_i;\hetaIkc) - \EP\big[\losstwo(S;\etazero)\matE\matE^T\losstwo^T(S;\etazero)\big] \big\rVert\\
		&\quad
		+ O_{\PP}\big(\NN^{-\frac{1}{2}}(1+\rhoN)\big)
	\end{array}
\end{displaymath}
by Lemma~\ref{lem:D1convInProb}, and we have
\begin{displaymath}
	\begin{array}{cl}
		&\normonebig{ \frac{1}{\nn}\sum_{i\in\Ik} \big( \losstwo(S_i;\hetaIkc)\matE\matE^T\losstwo^T(S_i;\hetaIkc)\big)_{j,r} - \big(\EP\big[\losstwo(S;\etazero)\matE\matE^T\losstwo^T(S;\etazero)\big]\big)_{j,r}}\\
		=&\normoneBig{ \frac{1}{\nn}\sum_{i\in\Ik}  \big(\losstwo(S_i;\hetaIkc)\big)_{j,\cdot}\matE\matE^T\big(\losstwo^T(S_i;\hetaIkc)\big)_{\cdot,r} - \EP\Big[\big(\losstwo(S;\etazero)\big)_{j,\cdot}\matE\matE^T\big(\losstwo^T(S;\etazero)\big)_{\cdot,r}\Big] }\\
		=& \normoneBig{ \frac{1}{\nn}\sum_{i\in\Ik}  \matE^T\big(\losstwo^T(S_i;\hetaIkc)\big)_{\cdot,r}\big(\losstwo(S_i;\hetaIkc)\big)_{j,\cdot}\matE - \matE^T\EP\Big[\big(\losstwo^T(S;\etazero)\big)_{\cdot,r}(\losstwo(S;\etazero))_{j,\cdot}\Big]\matE }\\
		\le& \norm{\matE}^2\Big\lVert  \frac{1}{\nn}\sum_{i\in\Ik} \big(\losstwo^T(S_i;\hetaIkc)\big)_{\cdot,r} (\losstwo(S_i;\hetaIkc))_{j,\cdot} - \EP\Big[(\losstwo^T(S;\etazero)\big)_{\cdot,r}\big(\losstwo(S;\etazero)\big)_{j,\cdot} \Big]\Big\rVert. 
	\end{array}
\end{displaymath}
\end{proof}

\begin{proof}[Proof of Proposition~\ref{prop:populationBias}]
The statement of Proposition~\ref{prop:populationBias} can be reformulated as 
\begin{displaymath}
		\sqrt{\NN}\normone{\bgN-\betazero} \rightarrow \begin{cases}0, & \mbox{if }\gammaN = \Omega(\sqrt{\NN}) \mbox{ and } \gammaN\not \in \Theta(\sqrt{\NN})\\
		                                                                          C, & \mbox{if }\gammaN=\Theta(\sqrt{\NN})\\
		                                                                          \infty, &\mbox{if } \gammaN = o(\sqrt{\NN})\end{cases}
	\end{displaymath}
using the Bachmann--Landau notation. For instance, the Bachmann--Landau notation  is presented in~\citet{Lattimore2020}.

	Introduce the matrices
	\begin{eqnarray*}
			\Fmatone &:=& \EP[\Rx\Ry],\\
			\Fmattwo &:=& \EP\big[\Rx\Rx^T\big],\\
			\Gmatone &:=& \EP\big[\Rx\Ra^T\big]\EP\big[\Ra\Ra^T\big]^{-1}\EP[\Ra\Ry],\\
			\Gmattwo &:=& \EP\big[\Rx\Ra^T\big]\EP\big[\Ra\Ra^T\big]^{-1}\EP\big[\Ra\Rx^T\big].
	\end{eqnarray*}
	We have
	\begin{displaymath}
			\sqrt{\NN}\normone{\bgN-\betazero}
			= \sqrt{\NN}\normoneBig{\big(\Fmattwo + (\gammaN-1)\Gmattwo\big)^{-1}\big(\Fmatone + (\gammaN-1)\Gmatone\big) - \Gmattwo^{-1}\Gmatone}.
	\end{displaymath}
	First, we assume that the sequence  $\{\gammaN\}_{\NN\ge 1}$ diverges to $+\infty$ 	
	as $\NN\rightarrow\infty$, so that $\gammaN-1$ is bounded away from $0$ for $\NN$ large enough.
	By~\citet[Section 3]{Henderson1981}, we have
	\begin{displaymath}
		\big(\Fmattwo + (\gammaN-1)\Gmattwo\big)^{-1}
		= \frac{1}{\gammaN-1}\Gmattwo^{-1} -  \Big(\one +  \frac{1}{\gammaN-1}\Gmattwo^{-1}\Fmattwo\Big)^{-1}\frac{1}{\gammaN-1}\Gmattwo^{-1}\Fmattwo\frac{1}{\gammaN-1}\Gmattwo^{-1}. 
	\end{displaymath}
	Hence, we have
	\begin{displaymath}
		\begin{array}{rcl}
			\sqrt{\NN}\normone{\bgN-\betazero}
			&=& \frac{\sqrt{\NN}}{\gammaN-1}\Big| \Gmattwo^{-1}\Fmatone 
			- \big(\one +  \frac{1}{\gammaN-1}\Gmattwo^{-1}\Fmattwo\big)^{-1}\frac{1}{\gammaN-1}\Gmattwo^{-1}\Fmattwo\Gmattwo^{-1}\Fmatone \\
			&&\quad\quad
			- \big(\one +  \frac{1}{\gammaN-1}\Gmattwo^{-1}\Fmattwo\big)^{-1}\Gmattwo^{-1}\Fmattwo\Gmattwo^{-1}\Gmatone \Big|
		\end{array}
	\end{displaymath}
	and infer our claim because we have
	\begin{displaymath}
		\begin{array}{cl}
			&\Gmattwo^{-1}\Fmatone 
			- \big(\one +  \frac{1}{\gammaN-1}\Gmattwo^{-1}\Fmattwo\big)^{-1}\frac{1}{\gammaN-1}\Gmattwo^{-1}\Fmattwo\Gmattwo^{-1}\Fmatone \\
			&\quad\quad
			- \big(\one +  \frac{1}{\gammaN-1}\Gmattwo^{-1}\Fmattwo\big)^{-1}\Gmattwo^{-1}\Fmattwo\Gmattwo^{-1}\Gmatone \\
			=& O(1).
		\end{array}
	\end{displaymath} 
	Next, we assume that the sequence $\{\gammaN\}_{\NN\ge 1}$ 
	is bounded. We have 
	\begin{displaymath}
	    \normone{\bgN-\betazero}=
		\normoneBig{\big(\Fmattwo + (\gammaN-1)\Gmattwo\big)^{-1}\big(\Fmatone + (\gammaN-1)\Gmatone\big) - \Gmattwo^{-1}\Gmatone}
		= O(1),
    \end{displaymath}
	which concludes the proof.
\end{proof}

\begin{proof}[Proof of Theorem~\ref{thm:stochasticOrderGammaN}]
	We show that
	\begin{displaymath}
		\PP\big(\hsigma^2(\gammaN) + \NN(\hbgN-\hbetaN)^2\le \hsigma^2\big)\le \PP(|\GammaN|\ge \CN)
	\end{displaymath}
	holds for some random variable $\GammaN$ satisfying $\GammaN=O_{\PP}(1)$ and for some sequence $\{\CN\}_{\NN\ge 1}$  of non-negative numbers diverging to $+\infty$ as $\NN\rightarrow\infty$. 
	
	For real numbers $a$ and $b$, observe that we have
	\begin{displaymath}
		\sqrt{\normone{a}^2+\normone{b}^2}\ge\frac{1}{2}\normone{a}+\frac{1}{2}\normone{b}
	\end{displaymath}
	due to
	\begin{displaymath}
		\frac{3}{4}\Big( \normone{a}^2+\normone{b}^2-\frac{2}{3}\normone{a}\normone{b}\Big)
		\ge \frac{3}{4}(\normone{a}-\normone{b})^2\ge 0.
	\end{displaymath}
	Thus, we have
	\begin{displaymath}
		\begin{array}{rcl}
			\PP\big(\hsigma^2(\gammaN) + \NN(\hbgN-\hbetaN)^2\le \hsigma^2\big)
			&=& \PP\Big(\sqrt{\hsigma^2(\gammaN) + \NN(\hbgN-\hbetaN)^2}\le \hsigma\Big)\\
			&\le& \PP\big(\hsigma(\gammaN) + \sqrt{\NN}|\hbgN-\hbetaN|\le 2\hsigma\big).
		\end{array}
	\end{displaymath}
	By the reverse triangle inequality, we have
	\begin{displaymath}
		\begin{array}{rcl}
			|\hbgN-\hbetaN| &=& |\hbgN-\bgN + \bgN-\betazero+\betazero-\hbetaN|\\
			&\ge& |\bgN-\betazero| - |\hbgN-\bgN| - |\betazero-\hbetaN|. 
		\end{array}
	\end{displaymath}
	Thus, we have
	\begin{displaymath}
		\begin{array}{cl}
			&\PP\big(\hsigma^2(\gammaN) + \NN(\hbgN-\hbetaN)^2\le 2\hsigma^2\big)\\
			\le& \PP\big(\hsigma(\gammaN) + \sqrt{\NN} |\bgN-\betazero| - \sqrt{\NN} |\hbgN-\bgN| - \sqrt{\NN} |\betazero-\hbetaN|\le 2\hsigma\big)\\
			=& \PP\big( \sqrt{\NN} |\bgN-\betazero| \le 2\hsigma-\hsigma(\gammaN)+ \sqrt{\NN} |\hbgN-\bgN| + \sqrt{\NN} |\betazero-\hbetaN|\big)\\
			\le& \PP\big( \big|\hsigma(\gammaN) -2\hsigma- \sqrt{\NN} |\hbgN-\bgN| - \sqrt{\NN} |\betazero-\hbetaN|  \big|  \ge \sqrt{\NN} |\bgN-\betazero|  \big)\\
			\le&  \PP\big( |\hsigma(\gammaN) -2\hsigma- \sqrt{\NN} (\hbgN-\bgN) - \sqrt{\NN} (\betazero-\hbetaN)  |  \ge \sqrt{\NN} |\bgN-\betazero|  \big)\\
		\end{array}
	\end{displaymath}
	by the reverse triangle inequality. 
	Let us introduce the random variable
	\begin{displaymath}
		\GammaN := \hsigma(\gammaN) -2\hsigma- \sqrt{\NN} (\hbgN-\bgN) - \sqrt{\NN} (\betazero-\hbetaN)
	\end{displaymath}
	and the deterministic number $\CN:=\sqrt{\NN} |\bgN-\betazero|$. By Lemma~\ref{lem:XiO1}, we have $\GammaN=O_{\PP}(1)$. 
	Let $\eps>0$, and choose $\Ceps$ and $\Neps$ such that for all $\NN\ge\Neps$ the statement $\PP(|\GammaN|>\Ceps)<\eps$ holds. 
	By Proposition~\ref{prop:populationBias}, $\CN$ tends to infinity as $\NN\rightarrow\infty$ due to $\gammaN=o(\sqrt{\NN})$. 
	Hence, there exists  some $\Ntilde=\Ntilde(\Ceps)$ such that we have $\CN>\Ceps$ for all $\NN\ge\Ntilde$. This implies $\PP(|\GammaN|>\CN)\le\PP(|\GammaN|>\Ceps)$ for all $\NN\ge\Ntilde$. 
	
	Let $\Nbar:=\max\{\Neps, \Ntilde\}$. For all $\NN\ge\Nbar$, we therefore have $\PP(|\GammaN|>\CN)<\eps$. We conclude $\lim_{\NN\rightarrow\infty}\PP(|\GammaN|>\CN)=0$. 
\end{proof} 

\begin{lemma}\label{lemma:hbgammaNbounded}
	Let $\gammaN = o(\sqrt{\NN})$. We have $\sqrt{\NN}(\hbgN - \bgN)=O_{\PP}(1)$. 
\end{lemma}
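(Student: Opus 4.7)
The strategy is to revisit the linearization used in the proof of Theorem~\ref{thm:asymptNormalgamma}, verify that its ingredients go through when $\gamma$ is replaced by the deterministic sequence $\gammaN$, and then control the variance of the resulting leading term uniformly in $\gammaN$. Concretely, starting from the closed form~\eqref{eq:regularizedBetaDMLtwo} and repeating the manipulations~\eqref{eq:hbgDMLtwoProof}--\eqref{eq:finalCLTeq} with $\gamma$ replaced by $\gammaN$, one obtains a decomposition
\[
\sqrt{\NN}(\hbgN-\bgN) = \big(\matA+(\gammaN-1)\matB\big)^{-1}\frac{1}{\sqrt{\NN}}\sum_{i=1}^{\NN}\lossoverlinep(S_i;\bgN,\etazero) + R_{\NN},
\]
where $\lossoverlinep$ is formed exactly as in Definition~\ref{def:asymptNormalgamma} after substituting $\gammaN$ and $\bgN$ for $\gamma$ and $\bg$, and $R_{\NN}$ collects the various approximation errors.

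The first key step is to observe that $\EP[\lossoverlinep(S;\bgN,\etazero)]=\bo$. This follows from the population first-order condition defining $\bgN$: differentiating the objective in~\eqref{eq:regularizedObjective} with respect to $\beta$ and evaluating at $\bgN$ gives $\EP[\losstilde(S;\bgN,\etazero)]+(\gammaN-1)\matC\,\EP[\loss(S;\bgN,\etazero)]=\bo$, which is exactly the expectation of the two nonzero-mean pieces of $\lossoverlinep$. Hence the leading term has variance $\sigma^2(\gammaN)$. I would then argue that $\sigma^2(\gammaN)$ is bounded uniformly in $\NN$: using the identifiability condition~\eqref{eq:identificationCondition} one checks that $\EP[\loss(S;\bgN,\etazero)]=\EP[\lossone(S;\etazero)^T](\betazero-\bgN)$, which combined with Proposition~\ref{prop:populationBias} gives $\|(\gammaN-1)\matE\|=O(1)$; since $\matB=\Jzerop$ is invertible by Assumption~\ref{assumpt:DMLboth3}, the factor $(\matA+(\gammaN-1)\matB)^{-1}(\gammaN-1)$ tends to $\matB^{-1}$ as $\gammaN\to\infty$, so a direct calculation shows $\sigma^2(\gammaN)$ converges to the TSLS variance $\sigma^2$ of Theorem~\ref{thm:asymptNormal}, while Assumption~\ref{assumpt:DMLboth4} handles bounded $\gammaN$ directly. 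Chebyshev's inequality then controls the leading term in probability.

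The remaining task is to show $R_{\NN}=o_{\PP}(1)$. The approximation rates~\eqref{eq:combine1LastSummand1}--\eqref{eq:combine1LastSummand4} supplied by Lemmas~\ref{lem:boundRN} and~\ref{lem:D1convInProb} are $\gamma$-free: they are $O_{\PP}(\rhoN)$ or $O_{\PP}(\NN^{-1/2}(1+\rhoN))$. When assembled, they enter $R_{\NN}$ pre-multiplied by factors of the form $(\gammaN-1)^{j}(\matA+(\gammaN-1)\matB)^{-1}$ for $j\in\{0,1,2\}$. Using $\|(\matA+(\gammaN-1)\matB)^{-1}\|=O(1/\gammaN)$ for large $\gammaN$, each such factor is $O(\gammaN^{j-1})$; the most delicate remainder term, of size $O(\gammaN)\cdot O_{\PP}(\NN^{-1/2}(1+\rhoN))=O_{\PP}(\gammaN/\sqrt{\NN})$, is $o_{\PP}(1)$ precisely under the hypothesis $\gammaN=o(\sqrt{\NN})$, and all remaining pieces are either $O_{\PP}(\rhoN)=o_{\PP}(1)$ by Lemma~\ref{lem:boundRN} or dominated by this leading remainder.

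The main obstacle will be the careful bookkeeping of $\gammaN$-factors in each cross-term that appears when expanding the product in~\eqref{eq:combine1LastSummand}: several cancellations arising from the first-order condition for $\bgN$ and from $\EP[\loss(S;\bgN,\etazero)]=O(1/\gammaN)$ are needed to ensure that no term in $R_{\NN}$ grows faster than $\gammaN/\sqrt{\NN}$. Once this bookkeeping is verified, the conclusion $\sqrt{\NN}(\hbgN-\bgN)=O_{\PP}(1)$ follows immediately by combining the two preceding bounds.
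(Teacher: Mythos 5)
Your proposal is correct and shares its backbone with the paper's proof: both start from the closed form~\eqref{eq:regularizedBetaDMLtwo}, rerun the linearization of Theorem~\ref{thm:asymptNormalgamma} with $\gamma$ replaced by $\gammaN$, extract the factor $(\gammaN-1)^{-1}$ from $\big(\matA+(\gammaN-1)\matB\big)^{-1}$ in the diverging case, rely on $\normone{\bgN}\le C$ for large $\NN$ so that Lemmas~\ref{lem:boundRN} and~\ref{lem:D1convInProb} still apply, and treat bounded $\gammaN$ analogously. Where you genuinely diverge is the concluding device. The paper rescales the summands by $(\gammaN-1)^{-1}$, forms the triangular array $\widetilde X_i$, checks a Lyapunov condition, and invokes the Lindeberg--Feller CLT; you instead observe that $\EP[\lossoverlinep(S;\bgN,\etazero)]=\bo$ by the population first-order condition for $\bgN$, establish a uniform-in-$\NN$ bound on $\sigma^2(\gammaN)$ via $\EP[\loss(S;\bgN,\etazero)]=\EP\big[\lossone^T(S;\etazero)\big](\betazero-\bgN)$, Proposition~\ref{prop:populationBias}, and $(\gammaN-1)\big(\matA+(\gammaN-1)\matB\big)^{-1}\rightarrow\matB^{-1}$, and then apply Chebyshev. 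Your route is more elementary — an $O_{\PP}(1)$ conclusion needs only a second-moment bound, not asymptotic normality — and it makes explicit two facts the paper leaves implicit: the exact cancellation of the $O(\sqrt{\NN})$ deterministic parts of the two leading summands through the first-order condition, and the uniform boundedness of $\sigma^2(\gammaN)$ (indeed its convergence to the TSLS variance $\sigma^2$ of Theorem~\ref{thm:asymptNormal}). The remainder bookkeeping you defer is precisely where the paper is also terse; your accounting of the $(\gammaN-1)^{j}\big(\matA+(\gammaN-1)\matB\big)^{-1}$ factors is consistent with the paper's treatment, with the caveat that the worst cross-terms are already killed by the $O(1/\gammaN)$ operator norm of the inverse, so the hypothesis $\gammaN=o(\sqrt{\NN})$ plays a less central role in this particular lemma than your plan suggests.
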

\begin{proof}[Proof of Lemma~\ref{lemma:hbgammaNbounded}]
We already verified 
	$\hmatA = \matA + o_{\PP}(1)$
and	$\hmatB = \matB + o_{\PP}(1)$
in the proof of Theorem~\ref{thm:asymptNormalgamma}.
	Let us assume that $\gammaN$ diverges to $+\infty$ 
	as $\NN\rightarrow\infty$. 
	We then have
	\begin{displaymath}
		\begin{array}{rcl}
		\big(\hmatA + (\gammaN-1)\hmatB\big)^{-1}&=& \frac{1}{\gammaN-1}\Big(\frac{1}{\gammaN-1}\matA + \matB  + o_{\PP}(1) +\frac{1}{\gammaN-1}o_{\PP}(1) \Big)^{-1}\\
			&=& \frac{1}{\gammaN-1}\Big(  \Big(\frac{1}{\gammaN-1}\matA + \matB\Big)^{-1}+o_{\PP}(1) \Big)\\
			&=& \big(\matA + (\gammaN-1)\matB\big)^{-1}+o_{\PP}\big(\frac{1}{\gammaN-1}\big)
		\end{array}
	\end{displaymath}
	because $\frac{1}{\gammaN-1}=O(1)$ holds. Furthermore, we have
	\begin{displaymath}
		\begin{array}{cl}
			&\sqrt{\NN}(\hbgN-\bgN)\\
			=&\Big(\big(\matA + (\gammaN-1)\matB\big)^{-1}+o_{\PP}\big(\frac{1}{\gammaN-1}\big)\Big)\\
			&\quad\cdot \frac{1}{\sqrt{\KK}}\sum_{\kk=1}^{\KK}\frac{1}{\sqrt{\nn}}\sum_{i\in\Ik}\Big( \losstilde(S_i;\bgN,\hetaIkc)\\
			&\quad\quad\quad+(\gammaN-1)\frac{1}{\nn}\sum_{i\in\Ik}\lossone(S_i;\hetaIkc)\Big(\frac{1}{\nn}\sum_{i\in\Ik}\losstwo(S_i;\hetaIkc)\Big)^{-1}\loss(S_i;\bgN,\hetaIkc)\Big)
		\end{array}
	\end{displaymath}
	by~\eqref{eq:regularizedBetaDMLtwo}. 
	Lemma~\ref{lem:boundRN} states that  
	\begin{displaymath}
		\normbigg{\frac{1}{\sqrt{\nn}}\sum_{i\in\Ik}\losstest(S_i;\bzero,\hetaIkc)
		- \frac{1}{\sqrt{\nn}}\sum_{i\in\Ik}\losstest(S_i;\bzero,\etazero)}
		=O_{\PP}(\rhoN)
	\end{displaymath}
	holds for 
	$\kk\in\indset{\KK}$, $\losstest\in\{\loss,\losstilde,\losstwo\}$, and $\bzero\in\{\bg,\betazero, \bo\}$, and where $\rhoN =\rNpnumber + \NN^{\frac{1}{2}}\lambdaNpnumber$ is as in Definition~\ref{def:asymptNormal} and satisfies $\rhoN\lesssim\deltaNnumber^{\frac{1}{4}}$, and where we interpret $\losstwo(S;b,\eta)=\losstwo(S;\eta)$.
	 	 This statement remains valid in the present setting because there exists some finite real constant $C$ such that  we have $\normone{\bgN}\le C$ for $\NN$ large enough. Hence, we have
	\begin{displaymath}
		\begin{array}{cl}
			&\sqrt{\NN}(\hbgN-\bgN)\\
			=&\bigg(\Big(\frac{1}{\gammaN-1}\matA + \matB\Big)^{-1}+o_{\PP}(1)\bigg)\\
			&\quad\cdot \frac{1}{\sqrt{\KK}}\sum_{\kk=1}^{\KK}\bigg(\frac{1}{\sqrt{\nn}}\sum_{i\in\Ik}\Big( \frac{1}{\gammaN-1}\losstilde(S_i;\bgN,\etazero) + \matC\loss(S_i;\bgN,\etazero)\\
			&\quad\quad\quad+ \big(\lossone(S_i;\etazero)-\EP[\lossone(S;\etazero)]\big)\matE  \\
			&\quad\quad\quad - \matC\big(\losstwo(S_i;\etazero)-\EP[\losstwo(S;\etazero)]\big)\matE\Big) + o_{\PP}(1)\bigg)
		\end{array}
	\end{displaymath}
	by~\eqref{eq:finalCLTeq}. 
	Consider the random variables 
	\begin{displaymath}
		\begin{array}{rcl}
			\widetilde X_i &:=& \frac{1}{\gammaN-1}\losstilde(S_i;\bgN,\etazero) + \matC\loss(S_i;\bgN,\etazero)\\
			&&\quad + \big(\lossone(S_i;\etazero)-\EP[\lossone(S;\etazero)]\big)\matE - \matC\big(\losstwo(S_i;\etazero)-\EP[\losstwo(S;\etazero)]\big)\matE
		\end{array}
	\end{displaymath} 
	for $i\in\indset{\NN}$,
	and $S_{\nn}:=\sum_{i\in\Ik}\widetilde X_i$, and $V_{\nn}:=\sum_{i\in\Ik}\EP[\widetilde X_i^2]$, where $\nn=\frac{\NN}{\KK}$ denotes the size of $\Ik$. 
	The Lyapunov condition is satisfied for  $\delta=2>0$ because  
	\begin{displaymath}
		\frac{1}{\big(\sum_{i\in\Ik}\EP[\widetilde X_i^2]\big)^{2+\delta}}\sum_{i\in\Ik}\EP\big[|\widetilde X_i|^{2+\delta}\big] 
		= \frac{1}{(\EP[\widetilde X_1^2])^{2+\delta}}\cdot\frac{1}{\nn^{1+\delta}}\EP\big[|\widetilde X_1|^{2+\delta}\big] \rightarrow 0
	\end{displaymath}
	holds 
	as $\nn\rightarrow\infty$. Therefore, the Lindeberg--Feller condition is satisfied that implies $\frac{S_{\nn}}{V_{\nn}}\rightarrow\mathcal{N}(0,1)$ as $\nn\rightarrow\infty$. 
	
	The case where the sequence $\gammaN$ is bounded can be analyzed analogously.
\end{proof}

\begin{lemma}\label{lemma:hSigmaGammaNbounded}
	Let $\gammaN = o(\sqrt{\NN})$. We then have $\hsigma^2(\gammaN)=O_{\PP}(1)$. 
\end{lemma}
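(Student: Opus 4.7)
The plan is to show that the explicit matrix representation of $\hsigma^2(\gammaN)$ furnished by Theorem~\ref{thm:estSDgamma} remains bounded in operator norm even when $\gammaN$ is allowed to grow, by extracting the correct powers of $\gammaN-1$ from every factor and matching them against each other. Throughout I would write, for $\gammaN\neq 1$,
\begin{displaymath}
  \hsigma^2(\gammaN) = \tfrac{1}{(\gammaN-1)^2}\bigl(\tfrac{1}{\gammaN-1}\hmatA + \hmatB\bigr)^{-1}\hmatD\bigl(\tfrac{1}{\gammaN-1}\hmatA^{T} + \hmatB^{T}\bigr)^{-1},
\end{displaymath}
so that the scaling of each ingredient in $\gammaN$ becomes transparent, and split the argument into the bounded-$\gammaN$ case and the diverging-$\gammaN$ case.

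First I would recall the preliminary pointwise consistencies already established in the proofs of Theorem~\ref{thm:asymptNormalgamma} and Theorem~\ref{thm:estSDgamma}: namely $\hmatA=\matA+o_{\PP}(1)$, $\hmatB=\matB+o_{\PP}(1)$, $\hmatCk=\matC+o_{\PP}(1)$, and $\hmatEk=\matE+o_{\PP}(1)$ (these rates do not depend on $\gammaN$). Combined with Assumption~\ref{assumpt:DMLboth3}, which guarantees that $\matB$ has smallest singular value bounded away from zero by $\cone$, Weyl's inequality then shows that $(\tfrac{1}{\gammaN-1}\hmatA + \hmatB)^{-1}$ is $O_{\PP}(1)$ in operator norm, uniformly in the scaling parameter $(\gammaN-1)^{-1}\in[0,\infty)$. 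In the bounded-$\gammaN$ regime, I would instead simply invoke Theorem~\ref{thm:estSDgamma} pointwise along subsequences together with continuity of $\gamma\mapsto\sigma^2(\gamma)$ on compact sets to conclude directly.

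Second, the crucial term is $\hmatD$, which must be shown to be $O_{\PP}((\gammaN-1)^{2})$. Here I would use the polynomial-in-$(\gammaN-1)$ structure of $\hlossoverlinep$ implied by Definition~\ref{def:lossFunc} and Theorem~\ref{thm:estSDgamma}: writing $\hlossoverlinep(S_i;\hbg,\hetaIkc)=\losstilde(S_i;\hbg,\hetaIkc)+(\gammaN-1)Z_i$ with
\begin{displaymath}
  Z_i := \hmatCk\loss(S_i;\hbg,\hetaIkc) + \bigl(\lossone(S_i;\hetaIkc)-\tfrac{1}{\nn}\textstyle\sum_j\lossone(S_j;\hetaIkc)\bigr)\hmatEk - \hmatCk\bigl(\losstwo(S_i;\hetaIkc)-\tfrac{1}{\nn}\sum_j\losstwo(S_j;\hetaIkc)\bigr)\hmatEk,
\end{displaymath}
expand $\hmatD$ as $T_0+(\gammaN-1)T_1+(\gammaN-1)^2T_2$, and bound each empirical second-moment matrix $T_0,T_1,T_2$ by $O_{\PP}(1)$. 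For this last step I would apply Lemma~\ref{lem:helperVarianceConsistent} (and Markov's inequality for the purely second-moment pieces), using Lemma~\ref{lem:bgBound} and the fact that $\norm{\hbg-\bg}=o_{\PP}(1)$ uniformly in $\gammaN=o(\sqrt{\NN})$ (this was the content of Lemma~\ref{lemma:hbgammaNbounded}) to keep $\loss(\cdot;\hbg,\hetaIkc)$ integrable in the relevant sense. Multiplying through by the $(\gammaN-1)^{-2}$ prefactor and the two bounded inverse factors then delivers $\hsigma^2(\gammaN)=O_{\PP}(1)$.

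The main obstacle will be the uniformity in $\gammaN$, because the estimates in Theorem~\ref{thm:estSDgamma} carry explicit polynomial factors of $(\gamma-1)$ in front of $O_{\PP}(\rhoNtilde)$ terms, and plugging $\gamma=\gammaN=o(\sqrt{\NN})$ does not automatically yield a negligible error. I would circumvent this by not trying to prove $\hsigma^2(\gammaN)-\sigma^2(\gammaN)=o_{\PP}(1)$ but rather by bounding $\hsigma^2(\gammaN)$ directly, as above, so that each factor is controlled by its own scaling in $\gammaN$ and the cancellation $(\gammaN-1)^{-2}\cdot(\gammaN-1)^{2}=1$ occurs at the level of the ingredients rather than at the level of the error between $\hsigma^2$ and $\sigma^2$. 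A short bridging argument then separately handles the case $\gammaN$ bounded (where the expansion above is not needed and Theorem~\ref{thm:estSDgamma} can be applied along subsequences) from the case $\gammaN\to\infty$ (where the above scaling analysis is decisive).
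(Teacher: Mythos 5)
Your proposal is correct and follows essentially the same route as the paper's proof: split into the bounded and diverging $\gammaN$ regimes, extract the factors $(\gammaN-1)^{-1}$ from each inverse matrix, and show $(\gammaN-1)^{-2}\hmatD=O_{\PP}(1)$ so that the powers of $\gammaN-1$ cancel. Your explicit polynomial expansion $\hmatD=T_0+(\gammaN-1)T_1+(\gammaN-1)^2T_2$ merely spells out the step the paper leaves as "adapting the arguments of Theorem~\ref{thm:estSDgamma}."
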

\begin{proof}[Proof of Lemma~\ref{lemma:hSigmaGammaNbounded}]
	We have
	\begin{displaymath}
			\hsigma^2(\gammaN)
			= \big(\hmatA + (\gammaN-1)\hmatB\big)^{-1}
		\hmatD
		\big(\hmatA^T + (\gammaN-1)\hmatB^T\big)^{-1}. 
	\end{displaymath}
	As verified in the proof of Theorem~\ref{thm:asymptNormalgamma}, we have 
	$\hmatA = \matA + o_{\PP}(1)$
and	$\hmatB = \matB + o_{\PP}(1)$.
We established $\hmatDk=\matD+o_{\PP}(1)$ 
in the proof of Theorem~\ref{thm:estSDgamma} for fixed $\gamma$. 
Consequently, the claim follows if the sequence $\{\gammaN\}_{\NN\ge 1}$ is bounded. Next, assume that $\gammaN$ diverges to $+\infty$
as $\NN\rightarrow\infty$. 
We verified
\begin{displaymath}
		\big(\hmatA + (\gammaN-1)\hmatB\big)^{-1}= \big(\matA + (\gammaN-1)\matB\big)^{-1}+o_{\PP}\Big(\frac{1}{\gammaN-1}\Big)
	\end{displaymath}
in the proof of Lemma~\ref{lemma:hbgammaNbounded}. 
It can be shown that $\frac{1}{(\gammaN-1)^2}\hmatD$ is bounded in $\PP$-probability by adapting the arguments presented in the prof of Theorem~\ref{thm:estSDgamma} because there exists some finite real constant $C$ such that we have $\normone{\bgN}\le C$ for $\NN$ large enough. 
Therefore, 
\begin{displaymath}
\begin{array}{rcl}
			\hsigma^2(\gammaN)
			&=& \Big(\frac{1}{\gammaN-1}\matA + \matB+o_{\PP}(1)\Big)^{-1}
		\frac{1}{(\gammaN-1)^2}\hmatD
		\Big(\frac{1}{\gammaN-1}\matA^T + \matB^T +o_{\PP}(1)\Big)^{-1}
		\end{array}
	\end{displaymath}
	is bounded in $\PP$-probability. 
\end{proof}

\begin{lemma}\label{lem:XiO1}
	Let $\gamma = o(\sqrt{\NN})$. We then have
	\begin{displaymath}
		\GammaN := \hsigma(\gammaN) -2\hsigma- \sqrt{\NN} (\hbgN-\bgN) - \sqrt{\NN} (\betazero-\hbetaN)=O_{\PP}(1). 
	\end{displaymath}
\end{lemma}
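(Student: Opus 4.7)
The plan is to apply the triangle inequality to decompose $\GammaN$ into four summands and then verify that each is $O_{\PP}(1)$ individually, invoking the earlier results of the appendix. Concretely, we have
\begin{equation*}
    |\GammaN| \le \normone{\hsigma(\gammaN)} + 2\normone{\hsigma} + \sqrt{\NN}\normone{\hbgN-\bgN} + \sqrt{\NN}\normone{\hbetaN-\betazero},
\end{equation*}
and it suffices to show that each of the four terms on the right-hand side is bounded in $\PP$-probability.

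First, $\sqrt{\NN}\normone{\hbetaN-\betazero} = O_{\PP}(1)$: this is immediate from Theorem~\ref{thm:asymptNormal}, which asserts that $\sqrt{\NN}\sigma^{-1}(\hbetaN-\betazero)$ converges in distribution to a standard Gaussian, together with the fact that $\sigma$ is bounded by the singular value constraints in Assumption~\ref{assumpt:DMLboth4}. Second, $\sqrt{\NN}\normone{\hbgN-\bgN} = O_{\PP}(1)$ is precisely the conclusion of Lemma~\ref{lemma:hbgammaNbounded}, which handles the case $\gammaN = o(\sqrt{\NN})$ considered here.

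Third, $\hsigma(\gammaN) = O_{\PP}(1)$: Lemma~\ref{lemma:hSigmaGammaNbounded} yields $\hsigma^2(\gammaN) = O_{\PP}(1)$, and stochastic boundedness is preserved under the square root since $\sqrt{\cdot}$ is continuous and bounded on bounded intervals of $[0, \infty)$. Fourth, $\hsigma = O_{\PP}(1)$ follows from Theorem~\ref{thm:estSD}, which gives $\hsigma^2 = \sigma^2 + O_{\PP}(\rhoNtilde)$, combined with the fact that $\sigma^2$ is bounded (Assumption~\ref{assumpt:DMLboth4}) and $\rhoNtilde = o(1)$ (since $p > 4$ and $\rNpnumber \lesssim \deltaN^{1/4} \to 0$ by Lemma~\ref{lem:statisticalRates}), so $\hsigma^2 = O_{\PP}(1)$ and thus $\hsigma = O_{\PP}(1)$.

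Combining these four bounds via the triangle inequality yields $\GammaN = O_{\PP}(1)$, as required. There is no real obstacle here: the lemma is effectively a bookkeeping consequence of the prior asymptotic results, with the only delicate point being that $\hsigma^2(\gammaN)$ has to be controlled uniformly across the regime $\gammaN = o(\sqrt{\NN})$, which is covered separately by Lemma~\ref{lemma:hSigmaGammaNbounded} (that lemma handles both the bounded-$\gammaN$ case and the diverging case via the series expansion of $(\hmatA + (\gammaN - 1)\hmatB)^{-1}$).
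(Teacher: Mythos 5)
Your proof is correct and follows essentially the same route as the paper: decompose $\GammaN$ into its four summands and bound each via Theorem~\ref{thm:asymptNormal}, Theorem~\ref{thm:estSD}, Lemma~\ref{lemma:hbgammaNbounded}, and Lemma~\ref{lemma:hSigmaGammaNbounded}, respectively. The additional details you supply (continuity of the square root for passing from $\hsigma^2(\gammaN)=O_{\PP}(1)$ to $\hsigma(\gammaN)=O_{\PP}(1)$, and the boundedness of $\sigma$ needed to deduce $\sqrt{\NN}\normone{\hbetaN-\betazero}=O_{\PP}(1)$ from the normalized limit) are correct refinements of steps the paper treats as immediate.
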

\begin{proof}[Proof of Lemma~\ref{lem:XiO1}]
	By Theorem~\ref{thm:asymptNormal}, the term $\sqrt{\NN} (\betazero-\hbetaN)$ asymptotically follows a Gaussian distribution and is hence bounded in $\PP$-probability.
	By Theorem~\ref{thm:estSD}, the term $\hsigma^2$ converges in $\PP$-probability. Thus, $2\hsigma$  is  bounded in $\PP$-probability as well. 
	By Lemma~\ref{lemma:hbgammaNbounded}, we have $\sqrt{\NN}(\hbgN - \bgN)=O_{\PP}(1)$. By
	 Lemma~\ref{lemma:hSigmaGammaNbounded}, we have $\hsigma^2(\gammaN)=O_{\PP}(1)$. 
\end{proof}

\begin{proof}[Proof of Theorem~\ref{thm:gammaHatNormal}]
The fact the the statement holds uniformly for $P \in \PcalN$ can be derived using analogous arguments as used to prove Theorem~\ref{thm:asymptNormal} and~\ref{thm:asymptNormalgamma}. 
Theorem~\ref{thm:estSDgamma} in the appendix shows that $\hsigma(\gamma)$ consistently estimates $\sigma(\gamma)$ for fixed $\gamma$. Analogous arguments show that $\hsigma(\hgammap)$ consistently estimates $\sigma$ from Theorem~\ref{thm:asymptNormal}.
Let $\hmu:=\hgammap - 1$. 
We have 
\begin{displaymath}
	\begin{array}{cl}
		& \sqrt{\NN}(\hbhgp-\bhgp)\\
		=& \Big( \frac{1}{\KK}\sum_{\kk=1}^{\KK}\big(\hbRxk\big)^T\big(\frac{1}{\hmu}\one+\PiIkcIk\big)\hbRxk \Big)^{-1}
		\frac{1}{\KK}\sum_{\kk=1}^{\KK}\big(\hbRxk\big)^T\big(\frac{1}{\hmu}\one+\PiIkcIk\big)\big(\hbRyk - \hbRxk\bhgp \big).
	\end{array}
\end{displaymath}
Due to Theorem~\ref{thm:stochasticOrderGammaN}, we have $\frac{1}{\hmu}=\frac{1}{\sqrt{\NN}}o_{\PP}(1)$. Due to Proposition~\ref{prop:populationBias}, whose statements also hold stochastically for random $\gamma$, we have $\bhgp = \betazero + \frac{1}{\sqrt{\NN}}o_{\PP}(1)$. Therefore, we have
\begin{displaymath}
	\begin{array}{cl}
		& \sqrt{\NN}(\hbhgp-\bhgp)\\
		=& \Big( \frac{1}{\KK}\sum_{\kk=1}^{\KK}\big(\hbRxk\big)^T\PiIkcIk\hbRxk \Big)^{-1}
		\frac{1}{\KK}\sum_{\kk=1}^{\KK}\big(\hbRxk\big)^T\PiIkcIk\big(\hbRyk - \hbRxk\betazero \big) + o_{\PP}(1)\\
		=& \sqrt{\NN}(\hbetaN-\betazero) + o_{\PP}(1)
	\end{array}
\end{displaymath}
due to Slutsky's theorem and similar arguments as presented in the proofs of Theorem~\ref{thm:asymptNormal} and~\ref{thm:asymptNormalgamma}. 
\end{proof}

\section{Proof of Section~\ref{sect:exampleForest}}\label{appendix:proofRandomForestSEM}

We argue that $A_1$ and $A_2$ are independent of $H$ conditional on $W_1$ and $W_2$ in the SEM in Figure~\ref{fig:simulation2}. First, we consider $A_1$. All paths from $A_1$ to $H$ through $X$ or $Y$ are blocked by the empty set because either $X$ or $Y$ is a collider on these paths. The path $A_1\rightarrow A_2\rightarrow W_1\rightarrow H$ is blocked by $W_1$. 
Second, we consider $A_2$. All paths from $A_2$ to $H$ through $X$ or $Y$ are blocked by the empty set because either $X$ or $Y$ is a collider on these paths. The path $A_2\rightarrow W_1\rightarrow H$ is blocked by $W_1$.

\end{appendices}

\end{document}